\crefname{assumption}{Assumption}{Assumptions}
\definecolor{blue-violet}{rgb}{0.54, 0.17, 0.89}
\definecolor{antiquefuchsia}{rgb}{0.57, 0.36, 0.51}
\definecolor{amethyst}{rgb}{0.6, 0.4, 0.8}
\definecolor{ao}{rgb}{0.0, 0.5, 0.0}
\definecolor{blue(ncs)}{rgb}{0.0, 0.53, 0.74}
\definecolor{dgreen}{rgb}{0.12, 0.3, 0.17}
\definecolor{cadmiumgreen}{rgb}{0.0, 0.42, 0.24}
\definecolor{darkolivegreen}{rgb}{0.33, 0.42, 0.18}
\definecolor{dartmouthgreen}{rgb}{0.05, 0.5, 0.06}
\newcommand{\tild}{\raise.17ex\hbox{ $\scriptstyle\sim$ }}
\newcommand{\floor}[1]{\lfloor #1 \rfloor}
\newcommand{\ceil}[1]{\lceil #1 \rceil}
\newcommand{\T}{\top}
\DeclareMathOperator*{\TV}{\mathrm{d}_\text{TV}}
\newcommand{\iid}{\overset{\textsf{iid}}{\sim}}
\newcommand{\indep}{\mathrel{\text{\scalebox{1.07}{$\perp\mkern-10mu\perp$}}}}
\DeclareMathOperator{\var}{\mathrm{var}}
\DeclareMathOperator{\sd}{\mathrm{sd}}
\DeclareMathOperator{\cov}{\mathrm{cov}}
\DeclareMathOperator{\E}{\mathbb{E}}
\DeclareMathOperator{\pr}{\mathbb{P}}
\DeclareMathOperator*{\argmax}{arg\,max}
\newcommand*\dd{\mathop{}\!\mathrm{d}}
\newcommand{\I}{\mathbb{I}}
\newcommand{\R}{\mathbb{R}}
\DeclareMathOperator{\Id}{Id}
\newcommand{\JJ}{J}
\newcommand{\unif}{\mathrm{unif}}
\newcommand{\expo}{\mathsf{Exp}}
\newcommand{\ber}{\mathsf{Ber}}
\newcommand{\gam}{\mathsf{Ga}}
\newcommand{\N}{\mathcal{N}}
\newcommand{\bin}{\mathrm{binom}}
\newcommand{\pow}{\text{pow}}
\theoremstyle{plain}
\newtheorem{theorem}{Theorem}
\newtheorem{corollary}{Corollary}
\newtheorem{proposition}{Proposition}
\newtheorem{lemma}{Lemma}
\newtheorem{assumption}{Assumption}
\newtheorem{condition}{Condition}
\theoremstyle{definition}
\newtheorem{definition}{Definition}
\newtheorem{example}{Example}
\theoremstyle{remark}
\newtheorem{remark}{Remark}
\renewenvironment{proof}[1][\relax]{\par
  \pushQED{\qed}\normalfont \topsep6\p@\@plus6\p@\relax
  \trivlist
  \item[\hskip\labelsep\itshape
    \ifx#1\relax \proofname\else\proofname{} #1\fi\@addpunct{.}]\ignorespaces
}{\popQED\endtrivlist\@endpefalse
}
\let \hat \widehat
\let \tilde \widetilde
\let \epsilon \varepsilon
\newcommand{\leqcx}{\leq_{\mathrm{cx}}}
\newcommand{\g}{\mathcal{G}}
\title{Rank-transformed subsampling:\\ Inference for multiple data splitting and exchangeable p-values}
\date{\today}
\author[1]{F. Richard Guo \thanks{\texttt{ricguo@umich.edu}}}
\author[2]{Rajen D. Shah \thanks{\texttt{r.shah@statslab.cam.ac.uk}}}
\affil[1]{Department of Statistics, University of Michigan, Ann Arbor, USA}
\affil[2]{Statistical Laboratory, University of Cambridge, Cambridge, UK}
\begin{document}
\maketitle

\begin{abstract} 
Many testing problems are readily amenable to randomised tests such as those employing data splitting. However despite their usefulness in principle, randomised tests have obvious drawbacks. Firstly, two analyses of the same dataset may lead to different results. Secondly, the test typically loses power because it does not fully utilise the entire sample. As a remedy to these drawbacks, we study how to combine the test statistics or p-values resulting from multiple random realisations such as through random data splits. We develop rank-transformed subsampling as a general method for delivering large sample inference about the combined statistic or p-value under mild assumptions. We apply our methodology to a wide range of problems, including testing unimodality in high-dimensional data, testing goodness-of-fit of parametric quantile regression models, testing no direct effect in a sequentially randomised trial and calibrating cross-fit double machine learning confidence intervals. In contrast to existing p-value aggregation schemes that can be highly conservative, our method enjoys type-I error control that asymptotically approaches the nominal level. Moreover, compared to using the ordinary subsampling, we show that our rank transform can remove the first-order bias in approximating the null under alternatives and greatly improve power.
\end{abstract}

\begin{keywords}Data-splitting; Cross-fitting; Goodness-of-fit; Rejection sampling; Subsampling; Unimodality; Verma constraint. 
\end{keywords}

\section{Introduction} \label{sec:intro}
Many modern statistical procedures are randomised in the sense that the output is a random function of the data. A prominent class of such procedures are hypothesis testing methods that involve splitting the dataset into independent parts \citep{moran1973dividing,cox1975note}. These procedures randomly divide the data into two non-overlapping subsets, A and B, and then perform two steps which can be described as ``hunt and test'': first, sample A is used to choose one from among a collection of test statistics; next the chosen statistic is applied to sample B to produce the final test statistic. This approach is attractive because in the first stage an arbitrarily complicated procedure may be employed to \emph{hunt} for an appropriate test. Clearly, were we to ignore the fact that our test statistic was selected from data and simply apply it to sample A, we would fail to control the type-I error,
a phenomenon sometimes referred to as ``double dipping'' or ``data snooping''. 
However, because the data in A and B are independent, we can, in the second stage, effectively forget that the test statistic was chosen in a data-driven way, which permits straightforward calibration even when using a complicated ``hunting'' procedure. This strategy is particularly useful in settings with complex alternatives, as the test may be chosen to target the particular alternative from which the data appear to have arisen. This approach has been used for a variety of problems, such as testing the location of multiple samples \citep{cox1975note}, constructing conformal prediction intervals \citep{lei2018distribution, solari2022multi}, goodness-of-fit testing \citep{jankova2020goodness}, conditional (mean) independence testing \citep{scheidegger2021weighted,lundborg2022projected},
and conducting inference that is agnostic to the asymptotic regime \citep{kim2020dimension}, to list just a few. As we show in this work (see \cref{sec:unimodal}), a hunt and test approach can also be used to test for a clustering structure, i.e., for testing the null of unimodality, in high-dimensional data.
	
Another use of data splitting is related to nonparametric or semiparametric methods where the estimator for the parameter of interest depends on nuisance parameters that must also be estimated. To ensure proper asymptotic behaviour of the final estimator, the bias from nuisance parameter estimation needs to be controlled,
and this may be achieved by employing a form of sample splitting known as cross-fitting. Here, the data are first split into folds (i.e., parts) of roughly equal size, and then estimators are computed on each fold using nuisance parameters estimated from out-of-fold data. The per-fold estimates are then combined to produce the final estimate.
The independence afforded by data splitting permits the use of flexible machine learning methods to estimate these parameters, as adopted by targeted estimation or double/debiased machine learning methods. This second use of sample splitting has recently become popular in practice, although the idea has a long history; see also \citet{newey2018cross,chernozhukov2018dml} and references therein.
	
Randomised procedures also arise naturally in settings where a null hypothesis one wishes to test may more easily be stated in terms of a reweighted distribution. \citet{thams2023statistical} show how a wide range of problems may be cast in this framework,
including testing properties of a new policy in a contextual bandits setting, model selection after covariate shift and testing so-called generalised conditional independencies (also known as dormant independencies, or Verma constraints; see Section~\ref{sec:verma} for further discussion). In this work we focus on the latter, which after appropriate reweighting may be reduced to simpler independence testing problems. \citet{thams2023statistical} further propose to use resampling or rejection sampling to select from the original set of observations, a random subset that behaves like a sample from the reweighted distribution, to which an off-the-shelf testing procedure may then be applied. 

Despite its simplicity and broad applicability, as pointed out by \citet{cox1975note}, randomised procedures have obvious drawbacks. Firstly, the extra randomness hinders replication of the analysis, an issue that is particularly concerning in view of today's ``replication crisis'' in many scientific disciplines \citep{ioannidis2005most,open2015estimating,baker2016lid}. Although one may insist that the random seed used in a randomised algorithm should be part of the replication, when the substantive result of an analysis hinges on a particular seed, considering this as replicable is questionable.
Moreover, although it is sometimes argued that replicability is less of an issue when the sample size is large, as we show in \cref{ex:gauss-loc} below, it remains a problem when the effect size is moderate.

A second major issue in the context of hypothesis testing is that when sampling or data-splitting is used in the construction of a test, we may expect a loss of power due to the sample not being fully utilised; see, e.g., \citet[Theorem 2.6]{kim2020dimension} for a concrete case.
In the context of debiased machine learning, cross-fitting can sometimes be applied to alleviate the two drawbacks, to a certain extent. Ideally, the per-fold estimates are asymptotically independent and jointly Gaussian, thus giving an approximately Gaussian final averaged statistic. However, in finite samples or less ideal settings, these estimates are correlated and can  result in under-coverage of standard confidence intervals; we discuss this further in \cref{sec:dml}.

To illustrate the two main drawbacks mentioned above, consider the following toy example. 

\begin{example}[Gaussian location experiment] \label{ex:gauss-loc}
Let $T^{(1)}, T^{(2)}, \dots $ be test statistics resulting from repeatedly applying a randomised (e.g., data splitting) procedure to a given dataset.
Because $T^{(1)}, T^{(2)}, \dots$ are iid conditional on the data, \emph{unconditionally} they are exchangeable
in the sense that the joint distribution is invariant under any finite permutation \citep[\S1.1]{kallenberg2005probabilistic}. Suppose the statistic is
constructed such that the location of $T^{(1)}$ captures the signal. In particular, suppose $T^{(1)} \sim \N(\mu, 1)$ marginally and we are interested in testing $H_0:\mu = 0$ against $H_1: \mu > 0$.
The corresponding $\alpha$-level test is to reject $H_0$ when $T^{(1)} > z_{\alpha} := \Phi^{-1}(1-\alpha)$, which we refer to as the single-split test. 
We compare this to a test that aggregates the values from $L$ realizations (e.g., random splits of the data). In this case, it is natural to aggregate by taking the average $\bar{T}_{1:L} := (T^{(1)} + \dots + T^{(L)})/L$. The aggregated test rejects $H_0$ when $\bar{T}_{1:L} > \bar{t}_{\alpha}$, where the critical value $\bar{t}_{\alpha}$ is determined by the null distribution of $\bar{T}_{1:L}$.

In order to permit closed form expressions for the distribution of $\bar{T}_{1:L}$ and hence $\bar{t}_{\alpha}$, let us assume $T^{(1)}, T^{(2)}, \dots$ are jointly normal, i.e., follow a Gaussian process. One can show that the likelihood ratio test against any fixed $\mu > 0$ is monotone in $\bar{T}_{1:L}$. Due to the exchangeability, the distribution is parametrised by $(\mu, \rho)$, where $\rho \in [0,1)$ is the pairwise correlation between $T^{(i)}$ and $T^{(j)}$ for every $i \neq j$. In practice, we expect $\rho > 0$. 
	
	 Let $\phi_{T^{(l)}} := \I\{T^{(l)} > z_{\alpha} \}$ be the test corresponding to the $l$-th randomised test statistic. For each $\mu$ and $\rho$, we may compute the probability
	 that the test cannot be replicated by another application of the same procedure on the same dataset: $\pr(\phi_{T^{(1)}} \neq \phi_{T^{(2)}})$. Similarly, the probability that the aggregated test does not replicate is $\pr(\phi_{\bar{T}_{1:L}} \neq \phi_{\bar{T}_{(L+1):2L}})$, where $\phi_{\bar{T}_{1:L}} := \I\{\bar{T}_{1:L} > \bar{t}_{\alpha} \}$. 
	 The two probabilities (see \cref{app:gauss-loc} for their expressions) are compared in \cref{fig:gauss-loc}. Note that the probability of non-replication for $\phi_{T^{(1)}}$ can be quite high when the effect size (relative to the sample size for constructing $\phi_{T^{(1)}}$) is neither too weak nor too strong. The probability is significantly reduced by aggregating $L=200$ realizations and approaches zero as $L$ tends to infinity.
	 
	 Moreover, the aggregation also boosts the power as evidenced by \cref{fig:gauss-loc}. One can show that the power of the aggregated test is given by
	 \[ \E \phi_{\bar{T}_{1:L}} = \Phi\left( \frac{\mu}{\sqrt{1/L + \rho(L-1)/L}} - z_{\alpha} \right),\]
	 where $L=1$ corresponds to the power of the single-split test. For a small $\mu$ and a large $L$, we have
	 \[ \E \phi_{\bar{T}^{(1)}} \approx \alpha + \phi(z_{\alpha}) \mu, \quad \E \phi_{\bar{T}_{1:L}} \approx \alpha + \phi(z_{\alpha}) \mu / \sqrt{\rho}, \]
	 so the slope of local power is improved by a factor of $1 / \sqrt{\rho}$. This improvement can be particularly significant when $\rho$ takes a small positive value, which is not uncommon for the settings considered in the paper such as statistics resulting from two random splits of data. 

\begin{figure}[!htb]
\centering
\includegraphics[width=1.\textwidth]{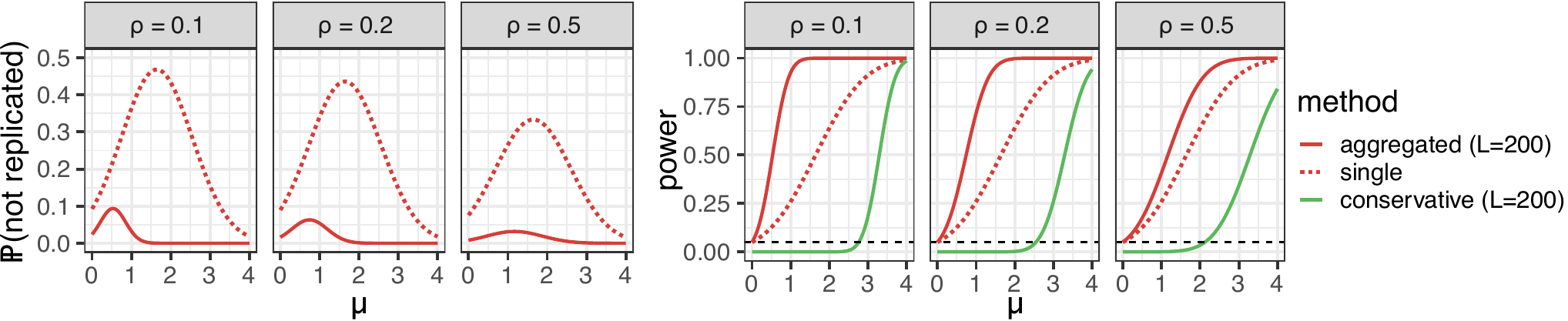}
\caption{Probability of non-replication (left) and power (right) for the Gaussian location experiment in \cref{ex:gauss-loc} ($\alpha = 0.05$, $L = 200$). The conservative aggregation rule rejects $H_0$ when $\bar{T}_{1:L} > 2 z_{\alpha}$ (see \cref{app:aggregation}).} 
\label{fig:gauss-loc}
\end{figure} 
\end{example}
From the example above, one may be tempted to conclude that to perform the aggregated test, we only need to estimate the correlation $\rho$ (e.g., through the empirical variance computed from $T^{(1)}, T^{(2)}, \dots$). 
However, the joint normality assumption on the statistics $T^{(1)}, T^{(2)}, \dots$, which further implies the normality of $\bar{T}_{1:L}$, need not hold in practice. In other words, even if every $T^{(l)}$ is marginally normal or asymptotically normal, the \emph{dependence} among the statistics need not be a normal copula even in large samples. Consider the following example due to \citet{kim2020dimension}.

\begin{example} \label{ex:kim-ramdas}
Let $X_1, \dots, X_n$ be iid random vectors in $\mathbb{R}^p$ with mean $\mu$ and covariance $\Sigma$. We are interested in testing $H_0: \mu = \bm{0}$ against $H_1: \mu \neq \bm{0}$.
Let $I_n$ be a random subset of $\{1,\dots,n\}$ of size $n_1$ and let $n_2 := n - n_1$. Let $\hat{\mu}_1 := n_1^{-1} \sum_{i \in I_n} X_i$ and $\hat{\mu}_2 := n_2^{-1} \sum_{i \in I_n^c} X_i$. We expect
$ \hat{\mu}_1^\top  \hat{\mu}_2  > 0$
 if $\mu$ deviates from zero.
Treating $\hat{\mu}_1$ as a fixed vector we see that $ \hat{\mu}_1^\top  \hat{\mu}_2 = n_2^{-1}\sum_{i \in I_n^c} X_i^\top \hat{\mu}_1$ is simply a sample mean computed from $I_n^c$. 
By studentising this quantity, we may obtain via a central limit theorem (under appropriate conditions), a test statistic
\[
T_n(X_1, \dots, X_n; I_n) := \frac{\sqrt{n_2} \hat{\mu}_1^\top  \hat{\mu}_2}{{\sqrt{\hat{\mu}_1^{\T} \hat{\Sigma}_2 \hat{\mu}_1}}} \rightarrow_{d} \N(0, 1) \quad \text{under $H_0$},
\]
as $n \rightarrow \infty$ and $n_2 / n \rightarrow q \in (0,1)$, where $\hat{\Sigma}_2$ is the empirical covariance computed from $I_n^c$. We reject $H_0$ when $T_n$ is large compared to $\N(0,1)$. One benefit of using such a test statistic is that unlike for example a norm of the empirical mean from the whole sample, the limit distribution does not depend delicately on the asymptotic limit of $p/n$. 

As indicated earlier, however, a disadvantage is that the approach does not fully utilise the information in the sample. One might hope that this can be alleviated by considering the cross-fitted statistic $[T_n(X_1, \dots, X_n; I_n) + T_n(X_1, \dots, X_n; I_n^c)]/2$.
However, \citet[Proposition A.1]{kim2020dimension} showed that this does not have a normal limit, rendering calibration challenging.
Further, let $I_n^{(l)}$ for $l=1,\dots,L$ be independent random subsets of size $\floor{q n}$ and define
$T_n^{(l)} :=  T_n(X_1, \dots, X_n; I_n^{(l)})$ and  the aggregated statistic $S_n:=\sum_{l=1}^L T_n^{(l)}/L$.
We see in \cref{fig:sampling-KR} that its sampling distribution under $X_1, \dots, X_n \iid \N(\mathbf{0}, \Sigma)$ is clearly non-normal even in large sample. 
\begin{figure}[!htb]
\centering
\includegraphics[width=0.8\textwidth]{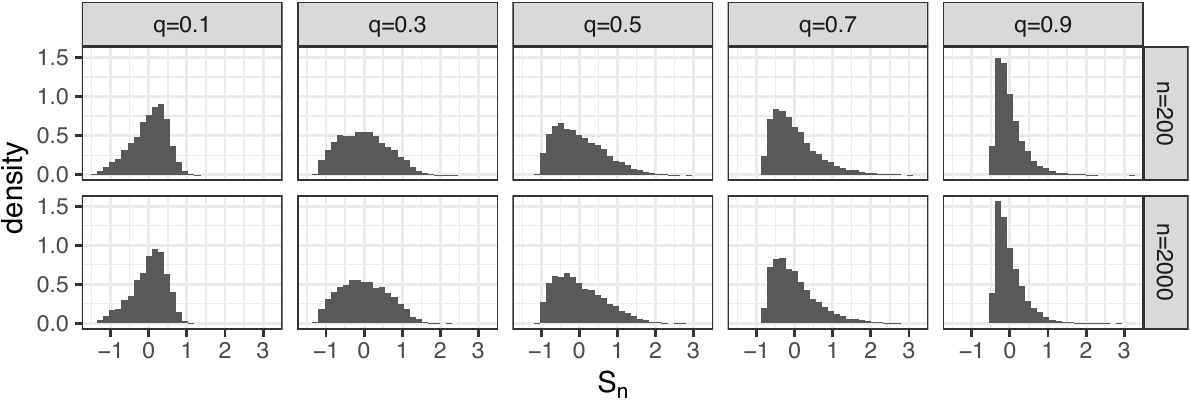}
\caption{Non-normal null distribution of the aggregated statistic $S_n$ in \cref{ex:kim-ramdas} under $X_1, \dots, X_n \iid \N(\mathbf{0}, \Sigma)$ with $L=200$, $p=3$ and $\Sigma_{ij} = 2^{-|i-j|}$. } 
\label{fig:sampling-KR}
\end{figure}
\end{example}

Using an aggregate $S_n$ of the exchangeable test statistics $T_n^{(1)},  \dots, T_n^{(L)}$ is more sensitive to departures from the null as it makes better use of the full data. In addition, because the conditional variability in $S_n$ given the data decreases in $L$ (e.g., like $1/L$ when $S_n$ is the average), by taking a relatively large $L$, $S_n$ is effectively derandomised; we will revisit \cref{ex:kim-ramdas} in \cref{sec:revisit,app:non-rep} to demonstrate the improvement from using $S_n$. However it should be clear from this very simple example that the main challenge in general lies in calibrating $S_n$. In particular, we need to handle the unknown, potentially complicated dependence among $T_n^{(1)}, \dots, T_n^{(L)}$, which typically causes $S_n$ to not follow any textbook distribution. 
To solve this problem, we develop a data-driven calibration scheme based on subsampling to obtain tests with size that under mild assumptions will approach the nominal level. Further, we will demonstrate that through the rank transform we introduce, our method is able to accurately approximate the null distribution even when data is drawn from a local alternative, and this leads to a power almost as good as an oracle procedure.
Our approach is applicable to all of the randomised tests mentioned above, and as we will see, by inverting particular aggregated hypothesis tests, we can also obtain confidence intervals for cross-fitted double machine learning that can reliably deliver coverage when the standard confidence intervals cannot. 
Before introducing our method, we briefly summarise some existing proposals in the literature.

\subsection{Existing proposals and related literature} \label{sec:lit}
There is a long line of work that has considered how to combine multiple test statistics. Most approaches either aim to control the type-I error under an arbitrary dependence of the test statistics, or make some specific assumptions about their dependence, such as a Gaussian copula.

Most approaches of the first type consider aggregating p-values, i.e., when $T^{(1)}$ is marginally (super-)uniformly distributed under the null.
Classical examples include Bonferroni correction, i.e., taking the minimum p-value and multiplying by the total number of p-values, and the average of the p-values multiplied by $2$ \citep{ruschendorf1982random,meng1994posterior}. Related results have been shown for quantiles \citep{meinshausen2009p,diciccio2020exact} and generalised means \citep{vovk2020combining}. Other aggregation rules include taking a weighted sum of Cauchy transformations \citep{liu2020cauchy}, those developed through concentration inequalities \citep{diciccio2020exact} and those involving converting p-values to the so-called e-values \citep{vovk2021values}, to name a few.

While these methods have the attractive guarantee of giving finite-sample valid p-values, as they necessarily must cater for the worst-case dependence, this benefit comes with the downside of conservativeness. Indeed, as pointed out by \citet{diciccio2020exact}, when used with p-values produced through sample-splitting, these methods can sometimes be outperformed by using a single-split test; see \cref{app:comp-conservative} for several such numerical examples from applications considered in this paper. For combining $Z$-statistics, the right panel of \cref{fig:gauss-loc} illustrates a similar phenomenon in the context of \cref{ex:gauss-loc} with the `conservative' rule that rejects when $\bar{T}_{1:L} > 2 z_{\alpha}$, where the extra factor of $2$ guarantees its validity; see \cref{thm:avg-Z} in the supplementary material. Indeed, as we show in \cref{prop:2-alpha}, directly comparing $\bar{T}_{1:L}$ to a standard normal can lead to a size of $2\alpha$, even when enforcing exchangeability of the underlying $Z$-statistics. This may come as a surprise given that by Jensen's inequality, $\bar{T}_{1:L}$ has variance at most $1$ for example. An analogous negative result for averaging p-values, which shows that the doubling rule cannot be improved under exchangeability,
is proved by \citet{choi2022avg}.

In view of this, other work has considered approaches relying on asymptotic joint normality of the underlying test statistics $T_n^{(1)}, T_n^{(2)}, \dots$; see, e.g., \citet[Theorem 4.1]{romano2019multiple}, \citet[Theorem 3.2]{diciccio2018hypothesis}, \citet{tian2021large} and \citet{liu2022multiple}. However, as evidenced by \cref{ex:kim-ramdas}, such a dependence structure among the test statistics is unlikely to hold in practice, especially for data splitting.

Our work also connects more broadly to a body of literature on subsampling; see \citet{politis1999subsampling} for a monograph on the topic. \citet{berg2010subsampling,mcmurry2012subsampling} consider subsampling for hypothesis testing and use the data to centre the estimated null distribution to improve power; this is in similar spirit to our rank transform introduced in \cref{sec:rank-transformed-sub}, which is a more aggressive form of centring that enforces the mean and all other moments of the null.
Subsampling has also been used to reduce the variance of an estimator through what is known as `subagging' \citep{buhlmann2002analyzing}. Stability selection procedures exploit this for variable selection \citep{meinshausen2010stability,shah2013variable} and can provide a form of finite-sample error control for a user-chosen variable selection method. More broadly, in the literature of resampling-based inference, ranks from bootstrap can be used to ``prepivot'' \citep{beran1987prepivoting,beran1988prepivoting} a statistic to improve error control under the null, though this is somewhat different from our use of ranks to improve power; this is discussed further in \cref{app:prepivot}.

A large number of randomised algorithms have been developed for various testing problems to which our method is applicable. Some of these have been mentioned in the introduction under the umbrellas of ``hunt and test'' and reweighting. Other methods that do not explicitly fall within these categories include approximate co-sufficient sampling \citep{barber2022testing} for goodness-of-fit testing and several approaches for assessing variable importance nonparametrically \citep{cai2022model,williamson2020unified,dai2022significance,tansey2022holdout}.

\subsection{Main contributions and organisation of the paper} \label{sec:contrib}
We develop a general framework for hypothesis testing with an aggregated statistic $S_n := S(T_n^{(1)},\dots, T_n^{(L)})$, where $S(\cdot)$ is a user-specified, symmetric aggregation function such as the arithmetic mean.
To fully exploit the signal in $S_n$, it is essential that the aggregated test is not conservative. Therefore, we construct a test from $S_n$ whose type-I error asymptotically approaches the nominal level $\alpha$, and which outperforms existing conservative aggregation rules. 

To achieve this, we employ subsampling \citep{politis1994large,politis1999subsampling}, a generic tool for approximating sampling distributions under minimal assumptions. 
However, using subsampling alone is not enough, because it not only approximates the sampling distribution under the null, but under alternatives  also picks up a visible, finite-sample bias from the sampling distribution (we formalise this in a first-order asymptotic analysis of subsampling in \cref{app:sub-first-order}). Hence, naively comparing $S_n$ to its subsampling critical value leads to a test with suboptimal power. To fix this crucial issue, we exploit the fact that the asymptotic null distribution of $T_n^{(1)}$ is typically known, e.g., $\unif(0,1)$ for a p-value or $\N(0,1)$ for a $Z$-statistic. We introduce a rank transform that we apply to the subsampled statistics to enforce the known null marginal distribution of $(T_n^{(1)}, \dots, T_n^{(L)})$. In other words, to approximate the null distribution of $(T_n^{(1)}, \dots, T_n^{(L)})$ and hence the aggregated $S_n$, we effectively combine the known \emph{marginal} null distribution with the unknown \emph{copula} estimated from subsampling. We demonstrate favourable performance of our method with
 three types of applications.
 \begin{enumerate}[(i)]
 	\item ``Hunt-and-test'' procedures: we use our framework to develop new tests for the goodness of fit of parametric quantile regression models, and for testing unimodality in high-dimensional data. We illustrate the latter on cancer gene expression data to detect the presence of cancer subtypes.
 	\item Testing hypotheses under reweighting or ``distributional shift'' with resampling or rejection sampling: specifically we study testing the sharp null of no direct treatment effect in the context of a sequentially randomised trial.
 	\item Calibrating confidence intervals for cross-fitted, double/debiased machine learning estimators: in our simulations we look in particular at confidence intervals for partially linear models, though the methodology we develop is applicable more broadly.
 \end{enumerate}
The rest of this paper is organised as follows. In \cref{sec:method},
we introduce our rank-transformed subsampling procedure. We present an aggregated, multiple-split test (\cref{alg:agg-test}) and a variant (\cref{alg:agg-test-adaptive}) that allows for several user-specified aggregation functions and adapts to the best one.
In \cref{sec:theory}, we study the theoretical properties of our method. We show that these algorithms
give tests that asymptotically have size equal to a given significance level $\alpha$.
Further, we show that if the test statistic and the aggregated statistic converge uniformly under the null, our procedures inherit such uniformity in terms of type-I error control. In terms of power, we show
under mild conditions that
our test is as powerful as an oracle test that has access to the null distribution of the aggregated statistic. Moreover, we show that the power gap between the oracle test and our test is smaller than the gap between the oracle test and a test based on ordinary subsampling (i.e., without the rank transform) by an asymptotic order, and this leads to a significant power improvement in practice.
We establish this result under general conditions that go beyond the settings where the Edgeworth expansion, the standard tool for higher-order asymptotic analysis in the literature, can be applied. 
In \cref{sec:application}, we demonstrate our method with a variety of applications as mentioned above. 
Finally, we conclude with a discussion in \cref{sec:discuss} outlining possible directions for future research. The supplementary material contains all proofs, additional theoretical and numerical results; all the appendices can be found there.  An R package \texttt{MultiSplit} implementing our method and scripts for reproducing numerical results are available from \url{https://github.com/richardkwo/MultiSplit}.
\section{Method} \label{sec:method}
\subsection{Setup} \label{sec:setup}
Let $X_1, \dots, X_n \in \mathcal{X}$ be data points drawn iid from an underlying distribution $P$. We are interested in testing 
\[ H_0: P \in \mathcal{P}_0 \quad \text{vs} \quad H_1: P \in \mathcal{P} \setminus \mathcal{P}_0, \]
where $\mathcal{P}$ is the set of relevant laws of $X$ that includes both the null and the alternative. 
Let $T_n^{(1)}, \dots, T_n^{(L)}$ be test statistics that can be computed from sample $X := (X_1, \dots, X_n)$ and a piece of external randomness $\Omega$ generated by the analyst. Throughout, we assume that the random vector
\begin{equation} \label{eqs:T-exch}
(T_n^{(1)}, \dots, T_n^{(L)}) \quad \text{is exchangeable}
\end{equation}
under $(X, \Omega) \sim P^n \times P_{\Omega}$ for every $P \in \mathcal{P}$, where $P_\Omega$ is the distribution of $\Omega$. 

Often, such $T_n^{(1)}, \dots, T_n^{(L)}$ are obtained by applying the same randomised procedure $L$ times on $X$. That is, we have
\begin{equation} \label{eqs:T-random}
T_n^{(l)} = T_n(X_1, \dots, X_n; \Omega^{(l)}), \quad l=1,\dots,L.
\end{equation}
where, without loss of generality, we assume $\Omega^{(l)} \iid \unif(0,1)$ independently from $X$. Formally, for $n=1,2,\dots$, the randomised test is a measurable map $T_n: \mathcal{X}^n \times [0,1] \rightarrow \mathbb{R}$.
For example, $\Omega^{(l)}$ can be used by $T_n$ to realise a random data split or a sequence of $\unif(0,1)$ random variables for acceptance-rejection sampling (e.g., by splitting the bits in a binary expansion). In this case, $\Omega = (\Omega^{(1)}, \dots, \Omega^{(L)})$. 

Alternatively, it can be  that every $T_n^{(l)}$ is a deterministic function of $X_1, \dots, X_n$. This can happen, for example, when there are $L$ pre-specified ways of splitting the full sample and every such way looks no different from any other way.  We will study this case in the context of cross-fitting in \cref{sec:dml}.

\medskip Throughout, we require that under the null, $T_n^{(1)}$ converges to a known, continuous distribution $F_0$, such as $\unif(0,1)$ or $\N(0,1)$. 
Without loss of generality, we assume $H_0$ is rejected for large values of $T_n^{(1)}$; other cases can be handled by redefining $T_n^{(1)}$, e.g., replacing $T_n^{(1)}$ with $|T_n^{(1)}|$ for a two-sided test, or with $1 - T_n^{(1)}$ for a p-value. 
To abuse the term slightly, we call
the test that rejects when $T_n^{(1)} > F_0^{-1}(1-\alpha)$
the ``single-split'' test, even though $T_n^{(1)}$ itself may not be constructed with data splitting. Consider the aggregated, ``multiple-split'' statistic
\[ S_n := S(T_n^{(1)}, \dots, T_n^{(L)}) \]
constructed with a symmetric, continuous aggregation function $S: \mathbb{R}^L \rightarrow \mathbb{R}$, such as the arithmetic mean or the maximum. By taking $L$ reasonably large, we can expect that the conditional variance of $S_n$ given $X_1, \dots, X_n$ is small enough such that the aggregated test statistic is effectively derandomised. Note the restriction that $S$ is symmetric is rather reasonable: it follows from the Neyman--Pearson lemma that a most powerful test (for a simple null against a simple alternative) necessarily combines the exchangeable test statistics in some symmetric fashion; see \cref{prop:most-powerful} in the supplementary material.

We will make the mild assumption (see \cref{assump:stable-G} and the following discussion) that $S_n$ converges to \emph{some} distribution $G_P$ under the null; in practice the limit $G_P$ is typically an unknown and often non-Gaussian continuous distribution that depends on $P \in \mathcal{P}_0$.
Our aggregated test rejects $H_0$ for large values of $S_n$, and under the null aims to mimic an oracle procedure that rejects whenever $S_n$ exceeds the unknown upper $\alpha$ quantile of $G_P$. To do this, we use subsampling to compute $\tilde{G}_{n}$, an approximation to $G_P$, and use its quantile to determine the critical values for $S_n$. As mentioned earlier in \cref{sec:contrib}, to have good power, however, $\tilde{G}_{n}$ must continue to closely mimic the null sampling distribution of $S_n$ even when data is generated under alternatives.
For example, when $T_n^{(1)} \sim \N(0,1)$ under $H_0$ and $S_n := (T_n^{(1)} + \dots + T_n^{(L)}) / L$, under alternatives, $\tilde{G}_{n}$, as expected from an oracle procedure, should maintain zero mean even when $S_n$ takes a positive mean. This is achieved by the rank transform introduced below.

\subsection{Rank-transformed subsampling} \label{sec:rank-transformed-sub}
In this section, we describe our procedure when using a single aggregation function $S$. We first introduce some notation relating to distribution functions and then set out our subsampling scheme.

\paragraph{Notation} Given a set of points $\{x_i\}$ on the real line, we use $\mathbb{F}_{\{x_i\}}$ to denote their empirical distribution function. For a real-valued function $F$, let $\|F\|_{\infty} := \sup_{x} |F(x)|$. For a distribution function $F$, its upper $\alpha$ quantile is defined as $F^{-1}(1-\alpha) := \inf\{x: F(x) \geq 1-\alpha\}$. 

\paragraph{Subsampling}
Our method is based on subsampling, which ensures type-I error control under minimal assumptions. In general, subsampling cannot be replaced by the bootstrap without sacrificing the wide applicability of our method; we explain this in \cref{app:bootstrap}. 
Let $m < n$ be a user-chosen subsample size. Throughout the paper, we require $m \rightarrow \infty$ and $m / n \rightarrow 0$; for the description of the algorithms and all the numerical experiments in this paper, we use $m = \floor{n / \log n}$ (see \cref{sec:discuss} for a discussion). We randomly select a total of $B$ sets of indices, each of size $m$, such that there is a sufficiently low degree of overlap among the sets. To do this, we first choose a positive integer (e.g., $\JJ=100$) and let $B := \JJ \floor{n / m}$. Then our collection of sets of indices $\mathcal{B} := \{(i_{1,b}, \dots, i_{m,b}): b=1,\dots,B\}$ is formed using \cref{alg:gen-tuple}.

\begin{algorithm}[!htb] \raggedright  \caption{Generate ordered tuples} \label{alg:gen-tuple}
	\textbf{Input}: Sample size $n$, subsample size $m$, positive integer $\JJ$. \\
	\vskip .3em
	\begin{algorithmic}[1]
		\State $\mathcal{B} \leftarrow \emptyset$. 
		\For {$j=1,\dots,\JJ$}
			\State $\pi \gets$ a random permutation of $\{1,\dots,n\}$.
			\State $\mathcal{B} \gets \mathcal{B} \cup \left\{(\pi_1, \dots, \pi_{m}),\, (\pi_{m+1}, \dots, \pi_{2m}),\, \dots, \, (\pi_{(\floor{n / m} - 1)m + 1}, \dots, \pi_{\floor{n / m} m}) \right\}$.
		\EndFor
		\State \Return $\mathcal{B}$
	\end{algorithmic}
\end{algorithm}
Note that the construction guarantees that $\mathcal{B}$ contains $\JJ$ collections of $\floor{n/m}$ sets that are non-overlapping, and so statistics computed on these subsamples are independent. This will allow us to obtain guarantees for subsampling that do not rely on approximating a scheme (e.g., \citealp[\S2.4]{politis1999subsampling}) where statistics on every possible subsample of size $m$ are evaluated.

Let $\hat{\mathbf{H}} = (\hat{H}_{b,l})$ be a $B \times L$ matrix consisting of rows 
\begin{equation} \label{eqs:H-mat}
\hat{\mathbf{H}}_{b,\cdot} := \left( T_m^{(1)}(X_{i_{1,b}}, \dots, X_{i_{m,b}}), \quad \dots \quad , T_m^{(L)}(X_{i_{1,b}}, \dots, X_{i_{m,b}}) \right), \quad b=1,\dots,B,
\end{equation}
i.e., by computing $(T_m^{(1)}, \dots, T_m^{(L)})$ on each subsample listed in $\mathcal{B}$.
When the statistic is a randomised test in the form of \cref{eqs:T-random}, the external random number is regenerated for every entry of $\hat{\mathbf{H}}$; that is, 
\[ \hat{H}_{b,l} = T_m(X_{i_{1,b}}, \dots, X_{i_{m,b}}; \Omega^{(b,l)}), \quad \Omega^{(b,l)} \iid \unif(0,1), \quad b=1,\dots,B,\: l=1,\dots,L. \]
Note that although we have arranged the subsampled test statistics into a matrix, entries in the same column but different rows do not correspond directly to one another, that is, $\hat{H}_{b,l}$ is no more related to $\hat{H}_{b',l}$ than $\hat{H}_{b',l'}$ for $b' \neq b$ and $l' \neq l$.

Now if we were to apply the aggregation function $S$ to each row of $\mathbf{H}$, we would obtain
\[ \hat{S}_{b} := S(\hat{H}_{b,1}, \dots, \hat{H}_{b,L}), \quad b=1,\dots,B, \]
whose empirical distribution function $\hat{G}_{n}(x) := \mathbb{F}_{\{\hat{S}_{b}\}}(x)$
is the natural subsampling estimate for $G_P(x)$. By the standard consistency result of subsampling (\citealp{politis1999subsampling}; see also \cref{sec:app:subsample}), we have $\|\hat{G}_{n} - G_{P}\|_{\infty} \rightarrow_p 0$ under $P \in \mathcal{P}_0$. However, directly using $\hat{G}_n$ to construct the test is suboptimal because under a sequence of local alternatives, $\hat{G}_n$ contains an upward bias from sampling under alternative. Although such a bias may vanish asymptotically, the rate at which it vanishes can be rather slow and this can severely reduce power; this is illustrated in the bottom left panel of \cref{fig:schematic-transform}. In the supplementary material, we formalise this point in \cref{thm:crit-ordinary} and numerically demonstrate the bias in \cref{app:power-numerical}. Therefore, instead of using $\hat{G}_n$ to calibrate our test statistic $S_n$, we perform the rank transform introduced below.

\paragraph{Rank transform} Using exchangeability, we can pool the entries of $\hat{\mathbf{H}}$ and let $\mathbb{F}_{\hat{\mathbf{H}}}$ be the resulting empirical distribution function.
With this, we form a rank-transformed version of $\hat{\mathbf{H}}$, denoted by $\tilde{\mathbf{H}} = (\tilde{H}_{b,l})$, filled with entries
\begin{equation} \label{eqs:rank}
\begin{split}
\tilde{H}_{b,l} &:= F_0^{-1}\left(\mathbb{F}_{\hat{\mathbf{H}}}(\hat{H}_{b,l}) - 1 / (2B\, L) \right) \\
&= F_0^{-1}\left(\frac{(\text{rank of $\hat{H}_{b,l}$ among entries in $\hat{\mathbf{H}}$}) - 1/2}{B\, L} \right),
\end{split}
\end{equation}
where the subtraction of $1/2$ from the ranks is simply a finite sample correction to prevent infinity being produced when applying $F_0^{-1}$.
We then compute the aggregated statistics
\[ \tilde{S}_b := S(\tilde{H}_{b,1}, \dots, \tilde{H}_{b,L}), \quad b=1,\dots,B \]
and their resulting empirical distribution function $\tilde{G}_{n} := \mathbb{F}_{\{\tilde{S}_b\}}(x)$, which we then use to determine the critical value $\tilde{G}_n^{-1}(1-\alpha)$ for $S_n$. The full procedure is given in~\cref{alg:agg-test}.

\begin{figure}[!htb]
\hspace{3em}
\includegraphics[width=.7\textwidth]{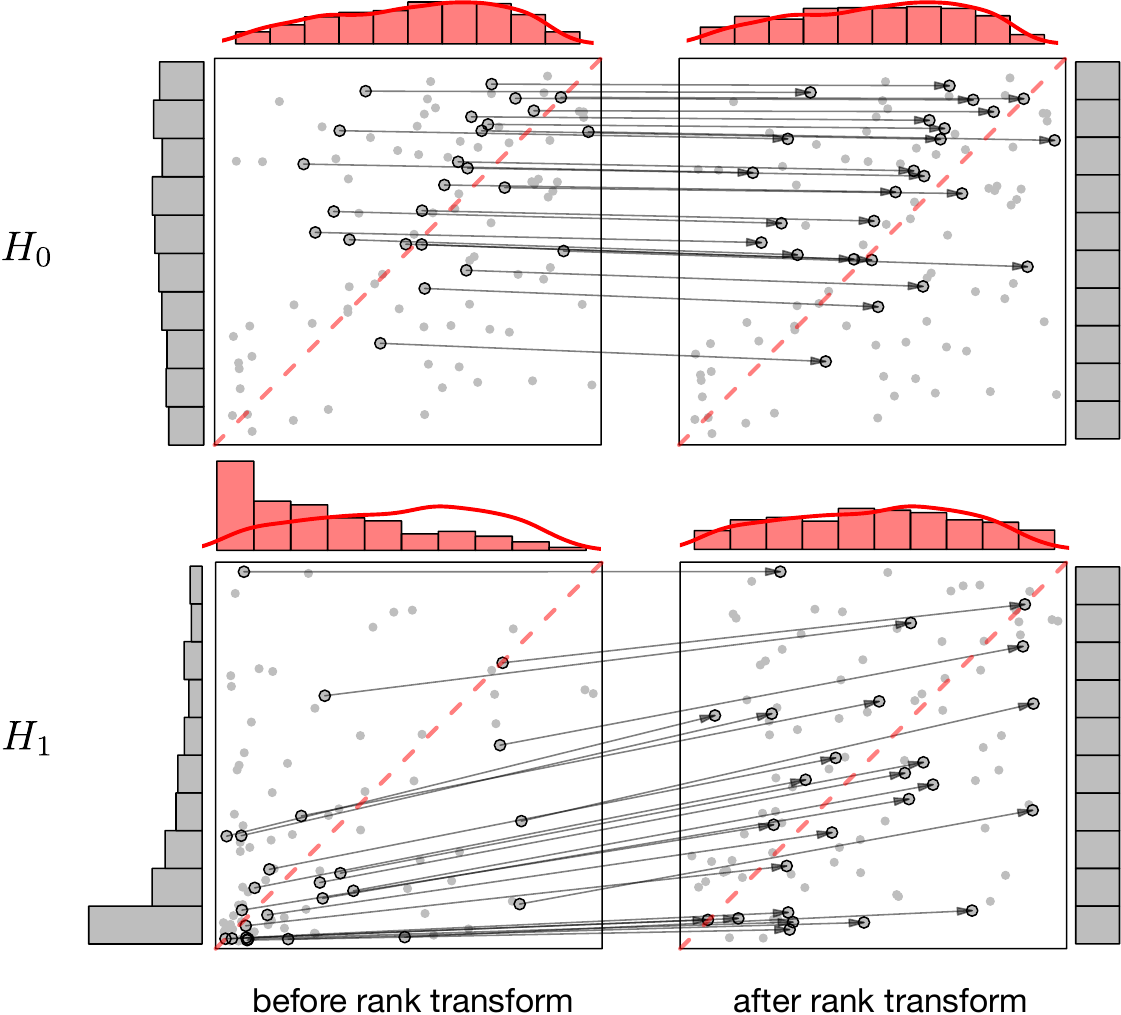}
\caption{Illustration of the rank transform. Here $L=2$ and the rows of $\hat{\mathbf{H}}$ and $\tilde{\mathbf{H}}$ are plotted as points in two dimensions in the left and right panels respectively. Arrows indicate for certain points their image after the rank transform. We consider aggregation function $S$ as the arithmetic mean which may be visualised as projection onto the red dashed line. Top (red) histograms: distributions $\hat{G}_n$ and $\tilde{G}_n$ (curve: null density of $S_n$); side (grey) histograms: marginal distributions $\mathbb{F}_{\hat{\mathbf{H}}}$ and $\mathbb{F}_{\tilde{\mathbf{H}}}$ ($F_0 = \unif(0,1)$).}
\label{fig:schematic-transform}
\end{figure}

We provide some intuition on why the rank transform works. 
Consider first the null case where $P \in \mathcal{P}_0$. As $n \rightarrow \infty$ (and hence $n/m \rightarrow \infty$, $B \rightarrow \infty$), by consistency of subsampling and exchangeability of $T_m^{(1)}, \dots, T_m^{(L)}$ (so they share the same marginal distribution), we expect $\mathbb{F}_{\hat{\mathbf{H}}}(\cdot) - 1 / (2BL) \approx F_0(\cdot)$ in \cref{eqs:rank}. Therefore, $\tilde{\mathbf{H}} \approx \hat{\mathbf{H}}$ under the null. Because $\tilde{G}_n$ is computed from $\tilde{\mathbf{H}}$ in the same way as $\hat{G}_n$ is computed from $\hat{\mathbf{H}}$, we can expect that $\tilde{G}_n \approx \hat{G}_n \approx G_P$ with high probability under the null. This is formalised in~\cref{thm:main-null} and illustrated in the top panel of \cref{fig:schematic-transform}, from which we see that under the null the rank-transform leaves the points almost unchanged, and both $\hat{G}_n$ and $\tilde{G}_n$ well-approximate the sampling distribution of $S_n$.

Under an alternative $P \in \mathcal{P} \setminus \mathcal{P}_0$, we expect that $\mathbb{F}_{\hat{\mathbf{H}}}(\cdot) - 1 / (2BL) \approx F_P(\cdot)$, with $F_P$ the distribution function of the test statistic $T_m^{(1)}$ corresponding to the subsample size $m$. Thus, from \cref{eqs:rank} we have
	\[\tilde{H}_{b,l} = F_0^{-1}\left(\mathbb{F}_{\hat{\mathbf{H}}}(\hat{H}_{b,l}) - 1 / (2 B L) \right) \approx F_0^{-1}\left(F_P(\hat{H}_{b,l})\right) \approx F_0^{-1}\left(U_{b,l}\right)\]
	for some $U_{b,l} \sim \unif(0, 1)$. In this way, the rank transform enforces the marginal distribution of $\tilde{H}_{b,l}$ to be $F_0$, the asymptotic null distribution of $\hat{H}_{b, l}$, as we can observe from the side histograms in \cref{fig:schematic-transform}.
	The dependency among the $\hat{H}_{b,1}, \dots, \hat{H}_{b,L}$ in contrast is left unchanged. However, particularly under local alternatives where $T_n^{(1)}$ contains just enough information to detect deviation from the null, we would certainly expect the \emph{dependency} among test statistics constructed from smaller subsampled data to be indistinguishable from that under the null; we will formalise this notion in \cref{def:copula-conv}. In sum then, $\tilde{H}_{b, 1}, \ldots, \tilde{H}_{b, L}$  should  
	continue approximating the null distribution of $T_n^{(1)}, \dots, T_n^{(L)}$, as we desire.  Further, such an approximation should be more accurate than directly using $\hat{H}_{b, 1}, \ldots, \hat{H}_{b, L}$, because any bias stemming from the difference in the marginal distribution has already been removed. Indeed, this underlies the effectiveness of the rank transform in restoring null-like behaviour under the alternative, as demonstrated in the bottom panel of \cref{fig:schematic-transform} and formalised by our theory in \cref{sec:power}.

\begin{algorithm}[!htb] \raggedright  \caption{Aggregated multiple-split test} \label{alg:agg-test}
	\textbf{Input}: Data $(X_1,\dots,X_n)$, exchangeable single-split test statistics $(T_n^{(1)}, \dots, T_n^{(L)})$, asymptotic null distribution function $F_0$, aggregation function $S$, significance level $\alpha \in (0,1)$, positive integer $\JJ$. \\
	\vskip .3em
	\begin{algorithmic}[1]
		\State $m \gets \floor{n / \log n}$, $B \gets \JJ \floor{n / m}$.
		\State Run \cref{alg:gen-tuple} to obtain $\mathcal{B} = \{(i_{1,b}, \dots, i_{m,b}): b=1,\dots B\}$.
		\State Initialise $B \times L$ matrices $\hat{\mathbf{H}}$, $\tilde{\mathbf{H}}$ and $B$-dimensional vector $\tilde{S}$. 
		\For {$b=1,\dots,B$}
			\State $\hat{\mathbf{H}}_{b,\cdot} \gets (T_m^{(1)}(X_{i_{1,b}}, \dots, X_{i_{m,b}}), \dots, T_m^{(L)}(X_{i_{1,b}}, \dots, X_{i_{m,b}}))$
		\EndFor
		\For {$b=1,\dots,B$}
			\For {$l=1,\dots,L$}
				\State $\tilde{H}_{b,l} \gets F_0^{-1}\left( (\{\text{rank of $\hat{H}_{b,l}$ in $\hat{\mathbf{H}}$}\} - 1/2) / B \, L \right)$
			\EndFor
			\State $\tilde{S}_b \gets S(\tilde{H}_{b,1}, \dots, \tilde{H}_{b,L})$
		\EndFor
		\State $\tilde{G}_{n} \gets \mathbb{F}_{\{\tilde{S}_b\}}$
		\State Compute $S_n \gets S(T_n^{(1)}, \dots, T_n^{(L)})$ from $X_1,\dots,X_n$.
		\State Reject $H_0$ if $S_n > \tilde{G}^{-1}_{n}(1-\alpha)$ and report p-value $1 - \tilde{G}_{n}(S_n)$.
	\end{algorithmic}
\end{algorithm}

\subsection{Adapting to the best aggregation function} \label{sec:adaptive}
We can further improve power by choosing a good aggregation function $S$.
The performance of an aggregation function, however, depends on the joint behaviour of the single-split tests under the alternative of interest, which is usually unknown.
For example, we expect $S = (T^{(1)}_n + \dots + T^{(L)}_n) / L$ to work particularly well if most of the single-split statistics are large under the alternative; in contrast, $S = \max(T^{(1)}_n, \dots, T^{(L)}_n)$ should perform better if only a few of them are large under the alternative. This motivates us to allow the user to specify multiple candidate aggregation functions $S^{1}, \dots, S^{W}$, which are expected to accommodate different cases. In \cref{alg:agg-test-adaptive}, we present a variant of our procedure that aims to adapt to the best aggregation function among $S^1, \dots, S^W$.

The algorithm rejects for large values of
\begin{equation} \label{eqs:R_n}
R_n := \max\left(\tilde{G}_{n}^1(S_n^{1}), \dots, \tilde{G}_{n}^W(S_n^{W}) \right),
\end{equation}
where $\tilde{G}_{n}^w$ and $S_n^w$ respectively are the counterparts of $\tilde{G}_n$ and $S_n$ in \cref{alg:agg-test} but relate to the $w$-th aggregation function. The quantity $R_n$ in \cref{eqs:R_n} is therefore one minus the minimum p-value corresponding to each of the aggregation functions. Thus if any one of the aggregation functions yields good power, we should expect $R_n$ to be large: in this way, the test statistic aims to achieve power close to that of the best $S^w$ under consideration.

We could calibrate $R_n$ using a Bonferroni correction, but this would give a conservative test potentially sacrificing any power we might have gained in using multiple aggregation functions. Instead, we can \emph{reuse} our subsampling aggregate statistics $\tilde{S}_b^w$ to approximate the sampling distribution of $R_n$ under the null; the subsampled versions of $R_n$ used for this are computed in lines 9--11 of \cref{alg:agg-test-adaptive}. The advantage of this approach is that it properly takes account of the dependence among $\tilde{G}_{n}^1(S_n^{1}), \dots, \tilde{G}_{n}^W(S_n^{W})$ involved in the construction of $R_n$. As a consequence, the resulting test has asymptotic size equal to its prescribed level (\cref{thm:main-null-adaptive}), and so in this sense no power is lost.

\begin{algorithm}[!htb] \raggedright  \caption{Aggregated multiple-split test that adapts to the best aggregation function} \label{alg:agg-test-adaptive}
	\textbf{Input}: As in \cref{alg:agg-test} but with the single aggregation function $S$ replaced by a collection $S^1, \dots, S^W$. \\
	\vskip .3em
	\begin{algorithmic}[1]
		\State Run up to line 10 of \cref{alg:agg-test} to obtain $\tilde{H}_{b,l}$ for $b=1,\ldots,B$ and $l=1,\ldots,L$.
		\State Initialise $B$-dimensional vectors $\tilde{R}$, $\tilde{S}^1, \dots, \tilde{S}^W$.
		\For {$w=1,\dots,W$}
		\For {$b=1,\dots,B$}
		\State $\tilde{S}_b^{w} \gets S^w(\tilde{H}_{b,1}, \dots, \tilde{H}_{b,L})$
		\EndFor
		\State $\tilde{G}_{n}^w \gets \mathbb{F}_{\{\tilde{S}^w_b:\, b=1,\dots,B\}}$
		\EndFor
		\For {$b=1,\dots,B$}
		\State $\tilde{R}_b \gets \max\left(\tilde{G}_{n}^1(\tilde{S}^1_b), \dots, \tilde{G}_{n}^W(\tilde{S}^W_b) \right)$
		\EndFor
		\State $\tilde{Q}_{n} \gets \mathbb{F}_{\{\tilde{R}_b\}}$
		\For {$w=1,\dots,W$}
		\State Compute $S_n^w \gets S^w(T_n^{(1)}, \dots, T_n^{(L)})$ from $X_1,\dots,X_n$.
		\EndFor
		\State $R_n \gets \max\left(\tilde{G}_{n}^1(S_n^{1}), \dots, \tilde{G}_{n}^W(S_n^{W}) \right)$
		\State Reject $H_0$ if $R_n > \tilde{Q}_{n}^{-1}(1-\alpha)$ and report p-value $1 - \tilde{Q}_{n}(R_n)$.
	\end{algorithmic}
\end{algorithm}

\section{Theory} \label{sec:theory}
\subsection{Behaviour under the null} \label{sec:theory-null}
In this section, we establish that our algorithms lead to asymptotically level $\alpha$ tests under a set of mild assumptions, which ensure the consistency of subsampling and the validity of rank transform. 
\begin{condition}[Asymptotic pivotal null] \label{cond:pivotal}
For every $P \in \mathcal{P}_0$, under $(X, \Omega) \sim P^n \times P_{\Omega}$, as $n \rightarrow \infty$ it holds that
\[ T_n^{(1)} \rightarrow_{d} F_0,\]
where $F_0$ is a known, continuous distribution function.
\end{condition}
Note that when $T_n^{(l)}$ is a randomised test of the form in \cref{eqs:T-random}, a sufficient condition for the above is that for every fixed $\omega \in (0,1)$ and every $P \in \mathcal{P}_0$, it holds that $T_n(X_1, \dots, X_n; \omega) \rightarrow_d F_0$.

We study the tests constructed from rank-transformed subsampling under the null for two leading cases, when (i) $F_0$ is $\unif(0,1)$ (one minus p-value), (ii) $F_0 = \Phi$ ($Z$-statistic).
For other null distributions, probability integral transform and its inverse can be applied to convert the statistic to one of these cases.

First, we show that \cref{alg:agg-test} is an asymptotic level $\alpha$ test under mild assumptions. Further, when $T_n^{(1)}$ and $S_n$ converge uniformly to their limiting distributions over the null, under a mild finite density condition, we show that the test also controls size below $\alpha$ \emph{uniformly} over the null. It can be argued that uniform asymptotic size control, as opposed to pointwise asymptotic size control, is more relevant to practice because in contrast to the latter, it ensures that the sample size required to control the actual type-I error below, say 0.051, does not depend on the underlying $P \in \mathcal{P}_0$; see
\citet[\S11.1]{lehmann2005testing}. As uniform size control involves consideration of the behaviour of random variables under different $P$ (rather than a single $P$ in pointwise asymptotics), in the below, we will use a subscript in $\pr_P(\cdot)$ to denote that $(X, \Omega) \sim P^n \times P_\Omega$.

Recall that \cref{alg:agg-test} rejects $H_0$ whenever $S_n$ exceeds the upper $\alpha$ quantile of the rank-transformed subsampling distribution $\tilde{G}_{n}$. Control of the size is therefore intimately linked to the behaviour of $\tilde{G}_{n}$, for which we will require the following. Note that in the below, all densities are with respect to the Lebesgue measure.

\begin{condition}[Lipschitz aggregation] \label{cond:lip}
The aggregation function $S$ is Lipschitz continuous in $\|\cdot\|_{\infty}$ with Lipschitz constant 1.
\end{condition}
\noindent Examples include $S_n = (T_n^{(1)} + \dots + T_n^{(L)}) / L$ and $S_n = \max(T_n^{(1)}, \dots, T_n^{(L)})$. Note that given any Lipschitz continuous $S$, by scaling, the Lipschitz constant $1$ above is not a restriction; however such a scaling affects the value of $g_{P,\max}$ in \cref{assump:stable-G} below, which we require to be finite.
\begin{assumption}[Stability of $S_n$] \label{assump:stable-G}
For every $P \in \mathcal{P}_0$, under $(X, \Omega) \sim P^n \times P_{\Omega}$, it holds that 
\[ S_n \rightarrow_d G_P, \]
where $G_P$ is a continuous distribution that can depend on $P$ and can be unknown. Further, $G_P$
has a density function $g_P$ such that $g_{P,\max} := \sup_x g_P(x) < \infty$. 
\end{assumption}

If $(T_n^{(1)}, \dots, T_n^{(L)})$ has a limiting joint distribution under every $P \in \mathcal{P}_0$, then the first part of \cref{assump:stable-G} holds by definition of $S_n$. Given that $(T_n^{(1)}, \dots, T_n^{(L)})$ is exchangeable with a limit marginal law (\cref{cond:pivotal}), we typically expect the joint distribution to be stable as well (see \cref{fig:sampling-KR} for an example). In fact, for a given $P$, the sequence $(S_n)_{n=1}^\infty$ is uniformly tight, and so by Prohorov's theorem \citep[Thm.~2.4]{van2000asymptotic}, there always exists a subsequence that converges in distribution; see \cref{prop:tight} in the supplementary material. Thus the only way the stability assumption can fail is when the copula  $(F_{n,P}(T_n^{(1)}), \dots, F_{n,P}(T_n^{(L)}))$ in some sense ``oscillates'' as $n \to \infty$, where $F_{n,P}$ is the distribution function of $T_n^{(1)}$.

Below we present results on both pointwise and uniform asymptotic size control for the test in 
\cref{alg:agg-test}.

\begin{theorem}[Validity of \cref{alg:agg-test}] \label{thm:main-null}
Let $(X,\Omega) \sim P^n \times P_{\Omega}$ for $P \in \mathcal{P}_0$.
Suppose $T_n^{(1)}, \dots, T_n^{(L)}$ are exchangeable and
\cref{cond:pivotal} holds with $F_0 = \unif(0,1)$ or $F_0 = \N(0,1)$.
Suppose $S$ is chosen such that \cref{cond:lip} holds.
Then for all $\alpha \in (0, 1)$, the following hold:
\begin{enumerate}[(i)]
	\item Under \cref{assump:stable-G}, the test in \cref{alg:agg-test} is pointwise asymptotically level $\alpha$.
	\item Suppose $T_n^{(1)}$ and $S_n$ converge to their respective limit distributions uniformly over the null, i.e., for every $x \in \R$
	\[ \sup_{P \in \mathcal{P}_0} |F_{n,P}(x) - F_0(x) | \rightarrow 0, \quad \sup_{P \in \mathcal{P}_0} |G_{n,P}(x) - G_P(x) | \rightarrow 0, \]
where $F_{n,P}$ and $G_{n,P}$ respectively denote the distribution function of $T_n^{(1)}$ and $S_n$.
Also, suppose $\sup_{P \in \mathcal{P}_0} g_{\max,P} < \infty$. Then, the test in \cref{alg:agg-test} is uniformly asymptotically level $\alpha$, i.e.,
	\[ \sup_{P \in \mathcal{P}_0} \left| \mathbb{P}_{P} \left\{S_n > \tilde{G}_{n}^{-1}(1-\alpha) \right\} - \alpha \right| \rightarrow 0. \]
\end{enumerate}
\end{theorem}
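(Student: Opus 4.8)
The plan is to control the size of the test by showing that the rank-transformed subsampling distribution $\tilde{G}_n$ is uniformly close (over $\mathcal{P}_0$, and with probability tending to one) to the true limit $G_P$, and then to convert this into a statement about the rejection probability using the fact that $G_P$ has a bounded density. For part (i), I would proceed in three steps. \emph{Step 1:} Using the non-overlapping structure guaranteed by \cref{alg:gen-tuple} (each of the $\JJ$ permutations yields $\floor{n/m}$ disjoint index sets), the $\hat{S}_b$ coming from a single permutation are i.i.d.\ copies of $S_m$, so a Glivenko--Cantelli argument for triangular arrays together with \cref{assump:stable-G} gives $\|\hat{G}_n - G_P\|_\infty \to_p 0$. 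Similarly, pooling all $BL$ entries of $\hat{\mathbf{H}}$ (which are exchangeable with marginal law converging to $F_0$ by \cref{cond:pivotal}), $\|\mathbb{F}_{\hat{\mathbf{H}}} - F_0\|_\infty \to_p 0$. \emph{Step 2:} Show the rank transform is asymptotically the identity under the null: since $\tilde{H}_{b,l} = F_0^{-1}(\mathbb{F}_{\hat{\mathbf{H}}}(\hat{H}_{b,l}) - 1/(2BL))$ and $\mathbb{F}_{\hat{\mathbf{H}}} \approx F_0$, uniform continuity of $F_0^{-1}$ on compact sub-intervals of $(0,1)$ (handling the tails via \cref{assump:stable-G}'s tightness and the finite-sample $1/(2BL)$ correction) gives $\max_{b,l}|\tilde{H}_{b,l} - \hat{H}_{b,l}| \to_p 0$. \emph{Step 3:} Apply \cref{cond:lip}: $|\tilde{S}_b - \hat{S}_b| \le \max_l |\tilde{H}_{b,l} - \hat{H}_{b,l}| \to_p 0$ uniformly in $b$, hence $\|\tilde{G}_n - \hat{G}_n\|_\infty \to_p 0$ (a small sup-norm shift of a near-continuous CDF), and combining with Step 1, $\|\tilde{G}_n - G_P\|_\infty \to_p 0$. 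Then $\tilde{G}_n^{-1}(1-\alpha) \to_p G_P^{-1}(1-\alpha)$ because $g_P = G_P'$ is bounded hence $G_P$ is strictly increasing near its $(1-\alpha)$-quantile---actually boundedness of the density gives continuity of $G_P$ but not strict monotonicity; I would instead argue directly that for any $\epsilon>0$, with probability tending to one $G_P^{-1}(1-\alpha-\epsilon) \le \tilde{G}_n^{-1}(1-\alpha) \le G_P^{-1}(1-\alpha+\epsilon)$, and use $S_n \to_d G_P$ together with the bounded-density bound $\pr_P\{S_n \in (a,b]\} \to G_P(b) - G_P(a) \le g_{P,\max}(b-a)$ to conclude $\pr_P\{S_n > \tilde{G}_n^{-1}(1-\alpha)\} \to \alpha$.

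For part (ii), the structure is identical but every $\to_p$ and every weak-convergence step must be made uniform over $P \in \mathcal{P}_0$. The uniform Glivenko--Cantelli / subsampling consistency should follow from the assumed uniform convergence $\sup_P \|G_{n,P} - G_P\|_\infty \to 0$ combined with a uniform law of large numbers for the within-permutation i.i.d.\ averages (the DKW inequality gives a $P$-free rate for the empirical-process part, so only the bias term $\|G_{m,P} - G_P\|_\infty$ needs the uniform hypothesis). Likewise $\sup_P \|\mathbb{F}_{\hat{\mathbf{H}}} - F_0\|_\infty \to_p 0$ uniformly follows from $\sup_P\|F_{n,P} - F_0\|_\infty \to 0$ plus DKW. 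The quantile-convergence and bounded-density steps then go through with $\sup_P g_{\max,P} < \infty$ replacing $g_{P,\max}$, and the conclusion $\sup_P |\pr_P\{S_n > \tilde{G}_n^{-1}(1-\alpha)\} - \alpha| \to 0$ drops out.

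The main obstacle I anticipate is Step 2---controlling the rank transform uniformly, particularly in the tails. The map $F_0^{-1}$ is not uniformly continuous on $(0,1)$ (it blows up at the endpoints when $F_0 = \N(0,1)$), so a naive sup over all entries $\hat{H}_{b,l}$ fails: the largest entry gets mapped to $F_0^{-1}(1 - 1/(2BL))$ which diverges. The fix is to note that only the rank-transformed values that actually influence $\tilde{S}_b$ for $b$ in the relevant part of the distribution matter, and to truncate: on the event that $\mathbb{F}_{\hat{\mathbf{H}}}$ is within $\delta$ of $F_0$ in sup-norm, for any fixed compact $[-M,M]$ the transform moves points by at most the modulus of continuity of $F_0^{-1}$ restricted to $[F_0(-M)-\delta, F_0(M)+\delta] \subset (0,1)$, which $\to 0$; the contribution of the (few) entries outside $[-M,M]$ to the empirical CDF $\tilde{G}_n$ near its $(1-\alpha)$-quantile is $O(\pr_{F_0}(|Z|>M))$ which is made small by taking $M$ large. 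Interleaving the $\delta \to 0$ and $M \to \infty$ limits carefully---and doing so uniformly over $P$ in part (ii), which is fine because $F_0$ is fixed---is the delicate bookkeeping. Everything else (Glivenko--Cantelli, Lipschitz bound, bounded-density quantile argument) is routine.
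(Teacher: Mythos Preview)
Your proposal is correct and follows essentially the same three-step architecture as the paper: DKW-type subsampling consistency for $\hat G_n$ and $\mathbb F_{\hat{\mathbf H}}$ exploiting the disjoint blocks from \cref{alg:gen-tuple}; showing the rank transform is a small perturbation under the null (with separate tail handling for $F_0=\N(0,1)$); then passing through \cref{cond:lip} and the bounded-density bound to get $\|\tilde G_n-G_P\|_\infty\to_p 0$ and the size statement. Two minor points where the paper differs in execution: (a) for the normal case the paper Taylor-expands $\Phi^{-1}$ and uses Mills' ratio to obtain the explicit bound $|\tilde H_{b,l}-\hat H_{b,l}|\le 1/|\Phi^{-1}(\epsilon_n^{F,P})|$ for all but $O(\epsilon_n^{F,P}B_nL)$ entries identified by their \emph{ranks}, which avoids your $M,\delta$ interleaving; (b) the hypothesis of part (ii) is only $\sup_P|G_{n,P}(x)-G_P(x)|\to0$ pointwise in $x$, so a P\'olya-type upgrade to $\sup_P\|G_{n,P}-G_P\|_\infty\to0$ (using $\sup_P g_{\max,P}<\infty$ and tightness of $\{G_P\}$) is needed before your uniform DKW argument goes through---you implicitly assumed this.
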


Now we establish similar results for the adaptive test in \cref{alg:agg-test-adaptive}, under the following joint stability assumption on the chosen aggregation functions $(S^{1}, \dots, S^{W})$.

\begin{assumption}[Joint stability of multiple aggregation functions] \label{assump:stable-joint}
For every $P \in \mathcal{P}_0$, under $(X, \Omega) \sim P^n \times P_{\Omega}$, it holds that 
\[ (S_n^{1}, \dots, S_n^{W}) \rightarrow_d (S^1, \dots, S^W), \]
where every $S^w$ has a continuous distribution function $G_P^w$ that can depend on $P$. Further, suppose $G_P^w$ permits a density $g_P^w$ such that $g_{P,\max}^{w} := \sup_x g_P^{w}(x) < \infty$ for $w=1,\dots,W$. \end{assumption}

\begin{theorem}[Validity of \cref{alg:agg-test-adaptive}] \label{thm:main-null-adaptive}
Let $(X,\Omega) \sim P^n \times P_{\Omega}$ for $P \in \mathcal{P}_0$.
Suppose $T_n^{(1)}, \dots, T_n^{(L)}$ are exchangeable and
\cref{cond:pivotal} holds with $F_0 = \unif(0,1)$ or $F_0 = \N(0,1)$.
Also, suppose \cref{cond:lip} holds for every $S^w$ ($w=1,\dots,W)$.
Then for every $\alpha \in (0, 1)$ the following hold:
\begin{enumerate}[(i)]
	\item Under \cref{assump:stable-joint}, the test in \cref{alg:agg-test-adaptive} has pointwise asymptotic level $\alpha$.
	\item Suppose $T_n^{(1)}, S_n^1, \dots, S_n^W$ and $\max\{G_P^1(S_n^1), \dots, G_P^W(S_n^W)\}$ converge to their respective limit distributions uniformly over the null, i.e., for every $x \in\R$ and each $w=1,\ldots,W$,
	\begin{align*}
		\sup_{P \in \mathcal{P}_0} |F_{n,P}(x) - F_0(x) | \rightarrow 0, \quad \sup_{P \in \mathcal{P}_0} |G^{w}_{n,P}(x) - G^{w}_P(x)| \rightarrow 0 , \quad \sup_{P \in \mathcal{P}_0} |Q_{n,P}(x) - Q_P(x) | \rightarrow 0,
	\end{align*}
	where $Q_{n,P}$ is the distribution function of $\max\{G_P^1(S_n^1), \dots, G_P^W(S_n^W)\}$, and $G^{w}_{n,P}$ is the distribution function of $S_n^w$.
	Also, suppose $\max_w \sup_{P \in \mathcal{P}_0} g^w_{\max,P} < \infty$.Then, the test in \cref{alg:agg-test-adaptive} is uniformly asymptotically level $\alpha$, i.e.,
	\[ \sup_{P \in \mathcal{P}_0} \left| \mathbb{P}_P \left\{R_n > \tilde{Q}_{n}^{-1}(1-\alpha) \right\}- \alpha\right| \to 0. \]
\end{enumerate}
\end{theorem}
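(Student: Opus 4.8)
The plan is to piggyback on the analysis of \cref{alg:agg-test} in \cref{thm:main-null}. For each aggregation function $S^w$, the object $\tilde G_n^w$ is precisely the rank-transformed subsampling distribution that \cref{thm:main-null} studies; applying that result with $S=S^w$ (its hypotheses hold coordinatewise: \cref{cond:pivotal}, \cref{cond:lip} for $S^w$, and the marginal part of \cref{assump:stable-joint}, whose density bound $g_{P,\max}^{w}<\infty$ is exactly what controls the rank-transform perturbation of $\tilde G_n^w$) gives $\|\tilde G_n^w - G_P^w\|_\infty \rightarrow_p 0$ under every $P\in\mathcal P_0$, and its uniform analogue under the extra hypotheses of part (ii). The first genuinely new step is to upgrade these marginal statements to a joint one. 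Under the null the rank transform is asymptotically the identity, $\max_{b,l}|\tilde H_{b,l}-\hat H_{b,l}|\rightarrow_p 0$, so by \cref{cond:lip} the rows $(\tilde S_b^1,\dots,\tilde S_b^W)$ differ from $(S^1(\hat{\mathbf{H}}_{b,\cdot}),\dots,S^W(\hat{\mathbf{H}}_{b,\cdot}))$ by $o_p(1)$ uniformly in $b$; and the consistency of subsampling, established in the proof of \cref{thm:main-null} and applied here to the $\mathbb{R}^W$-valued statistic $(S_n^1,\dots,S_n^W)$ (see \cref{sec:app:subsample}), combined with \cref{assump:stable-joint}, shows the empirical distribution over $b$ of $(\tilde S_b^1,\dots,\tilde S_b^W)$ converges weakly to the joint limit law $\mathcal{L}_P(S^1,\dots,S^W)$.

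Next I would push this forward through the map $h(s_1,\dots,s_W):=\max_w G_P^w(s_w)$, which is continuous because each $G_P^w$ is a continuous distribution function; thus the empirical distribution over $b$ of $h(\tilde S_b^1,\dots,\tilde S_b^W)$ converges weakly to the law $Q_P$ of $R:=\max_w G_P^w(S^w)$. Two observations keep this clean. First, replacing each $G_P^w$ by $\tilde G_n^w$ (so that $h(\tilde S_b^1,\dots,\tilde S_b^W)$ becomes the quantity $\tilde R_b$ actually computed in \cref{alg:agg-test-adaptive}) costs at most $\max_w\|\tilde G_n^w-G_P^w\|_\infty\rightarrow_p 0$ uniformly in $b$, which perturbs the resulting empirical distribution function uniformly by a vanishing amount. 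Second, since each $G_P^w(S^w)\sim\unif(0,1)$, a union bound gives $Q_P(t)-Q_P(s)\le W(t-s)$ for $0\le s\le t\le 1$, so $Q_P$ is continuous (indeed $W$-Lipschitz on $[0,1]$, which also contains the support of $R$, $R_n$ and the $\tilde R_b$), and a P\'olya-type argument upgrades the weak convergence to $\|\tilde Q_n-Q_P\|_\infty\rightarrow_p 0$. In parallel, \cref{assump:stable-joint} and the continuous mapping theorem give $\max_w G_P^w(S_n^w)\rightarrow_d R$, and replacing $G_P^w$ by $\tilde G_n^w$ yields $R_n\rightarrow_d R$. A Slutsky/continuous-mapping argument identical to the one used for \cref{alg:agg-test} in \cref{thm:main-null}(i), now with $(Q_P,R_n,\tilde Q_n)$ in the roles of $(G_P,S_n,\tilde G_n)$, then gives $\mathbb{P}_P\{R_n>\tilde Q_n^{-1}(1-\alpha)\}\to 1-Q_P(q)=\alpha$, where $q:=Q_P^{-1}(1-\alpha)$ and the last equality uses continuity of $Q_P$; this is part (i).

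For part (ii) I would rerun the argument tracking uniformity over $P\in\mathcal P_0$. The per-$w$ uniform bound $\sup_{P\in\mathcal P_0}\mathbb{P}_P\{\|\tilde G_n^w-G_P^w\|_\infty>\delta\}\to 0$ comes from \cref{thm:main-null}(ii) with $S=S^w$, using the assumed uniform convergence of $F_{n,P}$ and $G^w_{n,P}$ and the bound $\sup_{P}g^w_{P,\max}<\infty$; the uniform version of vector subsampling consistency gives the analogous uniform weak convergence of the empirical law of $(\tilde S_b^1,\dots,\tilde S_b^W)$; and because $\max_w\sup_{P}g^w_{P,\max}<\infty$, the map $h$ above is Lipschitz with constant independent of $P$, so composition preserves uniformity and $\sup_{P}\mathbb{P}_P\{\|\tilde Q_n-Q_P\|_\infty>\delta\}\to 0$. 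On the observed side, $R_n=\max_w G_P^w(S_n^w)+o_p(1)$ uniformly, and $\max_w G_P^w(S_n^w)$ has distribution function $Q_{n,P}$, which converges uniformly to $Q_P$ by hypothesis. I would then argue by contradiction: if the conclusion failed uniformly there would be $\epsilon>0$ and $P_k\in\mathcal P_0$, $n_k\to\infty$ with $|\mathbb{P}_{P_k}\{R_{n_k}>\tilde Q_{n_k}^{-1}(1-\alpha)\}-\alpha|\ge\epsilon$; since every $Q_{P_k}$ is a $W$-Lipschitz distribution function supported on the compact set $[0,1]$, Helly's selection theorem extracts a subsequence along which $Q_{P_k}\to Q_\infty$ for a continuous limit $Q_\infty$, and the pointwise argument above (with $Q_\infty$ in place of $Q_P$) applies along that subsequence to force $\mathbb{P}_{P_k}\{R_{n_k}>\tilde Q_{n_k}^{-1}(1-\alpha)\}\to\alpha$, a contradiction.

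The step I expect to be the main obstacle is bookkeeping the two-layer structure rather than any single estimate: the empirical measure $\tilde Q_n$ is built from functionals $\max_w\tilde G_n^w(\cdot)$ of the \emph{same} rank-transformed matrix $\tilde{\mathbf{H}}$ that also defines the inner estimates $\tilde G_n^w$, so $\tilde Q_n$ and the $\tilde G_n^w$ are dependent and the $\tilde G_n^w$ cannot be treated as fixed. The organising trick is to first ``freeze'' all inner estimates at their deterministic limits $G_P^w$ at uniform cost $o_p(1)$, and only afterwards invoke vector subsampling consistency and continuous mapping; the residual dependence between $R_n$ and $\tilde Q_n^{-1}(1-\alpha)$ is then harmless for the final Slutsky step because the latter converges to a constant. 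A minor subtlety, handled exactly as in \cref{thm:main-null}, is ensuring the $F_0^{-1}$ in the rank transform stays bounded, which is guaranteed by the $-1/(2BL)$ correction together with local control of $\Phi^{-1}$ away from $0$ and $1$ in the Gaussian case (and is trivial when $F_0=\unif(0,1)$).
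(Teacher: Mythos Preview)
Your argument for part~(i) is essentially the paper's: both freeze the inner estimates $\tilde G_n^w$ at $G_P^w$ (at cost $\max_w\|\tilde G_n^w-G_P^w\|_\infty$), replace the rank-transformed $\tilde S_b^w$ by the untransformed $\hat S_b^w$, observe via a union bound that $Q_P$ is $W$-Lipschitz, and invoke subsampling consistency. The only organisational difference is that you apply subsampling to the $\mathbb{R}^W$-valued statistic and push through the continuous map $h$, whereas the paper applies it directly to the scalar $\hat R_b:=\max_w G_P^w(\hat S_b^w)$; either works. One imprecision worth fixing: your claim $\max_{b,l}|\tilde H_{b,l}-\hat H_{b,l}|\rightarrow_p 0$ is correct for $F_0=\unif(0,1)$ but not established (and likely false without extra tail control) for $F_0=\N(0,1)$, since the extreme ranks send $\Phi^{-1}$ to infinity while $F_{m,P}$ need not match $\Phi$ in the tails. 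The paper, and your final paragraph, handles this by bounding $|\tilde H_{b,l}-\hat H_{b,l}|$ only on the complement of a set $\Gamma_\epsilon$ of $O(\epsilon_n^{F,P}BL)$ extreme indices, which alters at most an $O(L\epsilon_n^{F,P})$ fraction of the rows $b$; you should state it this way rather than as a uniform max.

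For part~(ii) your route genuinely differs. The paper reuses its explicit high-probability bounds---whose probabilities in the Hoeffding/subsampling lemmas do not depend on $P$---to get $\sup_{P}\mathbb P_P\{\|\tilde Q_n-Q_P\|_\infty>\lambda_n\}\to 0$ and $\sup_P\mathbb P_P\{|R_n-R_n^0|>\delta_n\}\to 0$ for deterministic $\lambda_n,\delta_n\to 0$, and then repeats the two-sided sandwich of \cref{thm:main-null}(ii) with $(R_n,\tilde Q_n,Q_P)$ in place of $(S_n,\tilde G_n,G_P)$. Your Arzel\`a--Ascoli/subsequence argument can be made to work (the uniform $W$-Lipschitz bound supplies equicontinuity, and the paper's sandwich argument, which avoids any strict-monotonicity assumption on the limit $Q_\infty$, substitutes for your Slutsky step), but it still requires the uniform-in-$P$ bound on $\|\tilde Q_n-Q_P\|_\infty$ as input, at which point the subsequence machinery is largely redundant compared to the paper's direct quantitative approach.
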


\subsection{Power} \label{sec:power}
In this section, we study the power of rank-transformed subsampling and establish its advantage over ordinary subsampling, i.e., subsampling without the rank transform.
We will analyse power under a sequence of local alternatives that converge ``in copula'' to a null case --- such a null is typically also the limit that the sequence of local alternatives weakly converges to. For any sequence $P_n \in \mathcal{P}$, let $F_{m,P_n}$ be the distribution function of $T_m^{(1)}(X_1, \dots, X_m)$ under $(X, \Omega) \sim P_n^m \times P_{\Omega}$ and let $U_m$ denote the copula:
\begin{equation} \label{eqs:copula-m}
 U_m := (U_m^{(1)}, \dots, U_m^{(L)}) := \left(F_{m,P_n}(T_m^{(1)}), \dots, F_{m,P_n}(T_m^{(L)}) \right).
\end{equation}
\begin{definition}[Convergence in copula] \label{def:copula-conv}
Let $U_m$ be the copula of $(T_m^{(1)}, \dots, T_m^{(L)})$ under $P_n^m \times P_{\Omega}$ given by \cref{eqs:copula-m}. We say $P_n$ converges in copula to $P_0$, if there exists some $P_0 \in \mathcal{P}_0$ such that under $P_0^n \times P_{\Omega}$, $(T_n^{(1)}, \dots, T_n^{(L)})$ converges to a limit distribution with copula $C=(C_1, \dots, C_L)$ satisfying $U_m \rightarrow_{d} C$. 
\end{definition}
Convergence in copula is a rather weak notion of convergence for two reasons.  Firstly, it involves the lower sample size $m$ (recall $m = o(n)$) rather than $n$.  Consider a sequence of alternatives $P_n$ that are only just distinguishable from the null at sample size $n$. At sample size $m$, the null and $P_n$ should be indistinguishable; that is, the behaviour of the entire vector of test statistics $(T_m^{(1)}, \dots, T_m^{(L)})$  under $P_n^m \times P_{\Omega}$ and $P_0^m \times P_{\Omega}$ should be asymptotically identical, and in particular convergence in copula would hold.
Secondly, \cref{def:copula-conv} is completely insensitive to the marginal distribution of the test statistics, and so in fact we can even expect a stronger version of the convergence above to hold with $m$ replaced by $n$. In particular, when $P_n$ is a sequence of local alternatives that converges to $P_0 \in \mathcal{P}_0$ in a way such that under $P_0^n \times P_{\Omega}$, $(T_n^{(1)}, \dots, T_n^{(L)})$ and the log-likelihood ratio $\log (\dd P_n^n / \dd P_0^n)$ jointly converge to a normal limit, then by Le Cam's third lemma \citep[Example 6.7]{van2000asymptotic}, $(T_n^{(1)}, \dots, T_n^{(L)})$ under $P_n^n \times P_{\Omega}$ must also converge to a normal limit with the same covariance, and so the same copula, as its null limit. 

Our next result shows that under a sequence of local alternatives that converge in copula to a null $P_0 \in \mathcal{P}_0$, the test in \cref{alg:agg-test} asymptotically has the same critical value and hence achieves the same power as an oracle test that has access to the asymptotic null distribution of $S_n$ under $P_0$. In stating \cref{thm:stable-copula,thm:rank-asymp} below, we use $G_{P_0}$ to denote the limit null distribution function of $S_n := S(T_n^{(1)}, \dots, T_n^{(L)})$ under $P_0^n \times P_{\Omega}$.

\begin{theorem}[Zeroth-order behaviour of rank-transformed subsampling] \label{thm:stable-copula}
Let $(T_n^{(1)}, \dots, T_n^{(L)})$ be exchangeable. 
Suppose \cref{cond:pivotal} holds with $F_0 = \unif(0,1)$ or $F_0 = \N(0,1)$ and
$S$ is chosen to satisfy \cref{cond:lip}. 
Consider a sequence $P_n \in \mathcal{P}$ that converges in copula to some $P_0 \in \mathcal{P}_0$. Under \cref{assump:stable-G}, for any $\alpha \in (0,1)$, we have 
\[
\tilde{G}^{-1}_n(1-\alpha) \rightarrow_{p} G_{P_0}^{-1}(1-\alpha),
\]
where $\tilde{G}_n$ is computed using \cref{alg:agg-test}.
\end{theorem}

In fact, when $P_n$ is a sequence of local alternatives that converges to $P_0 \in \mathcal{P}_0$ such that $P_n^n$ is contiguous to $P_0^n$ (i.e., absolutely continuous asymptotically; see \citealp[Ch.~6]{van2000asymptotic}), we also expect $\hat{G}_n^{-1}(1-\alpha) \rightarrow_p G_{P_0}^{-1}(1-\alpha)$ for $\hat{G}_n$ obtained from ordinary subsampling \citep[Theorem 2.6.1]{politis1999subsampling}. Hence, to capture the power improvement from the rank transform, we need a finer analysis. To this end, we characterise the first-order asymptotic behaviour of rank-transformed subsampling in the next theorem, of which the full statement can be found in \cref{app:rank-first-order}. In the below, $\TV(X,Y)$ denotes the total variation distance between $X$'s distribution and $Y$'s distribution. The regularity condition would require $S$ to be non-decreasing in each coordinate, a condition that holds for $S = \text{avg}$, $S = \max$ and other reasonable choices. 

\begin{theorem}[First-order behaviour of rank-transformed subsampling] \label{thm:rank-asymp}
Suppose \cref{cond:pivotal} holds and $(T_n^{(1)}, \dots, T_n^{(L)})$ is exchangeable. 
Consider a sequence $P_n \in \mathcal{P}$ that converges in copula to some $P_0 \in \mathcal{P}_0$ in the sense of \cref{def:copula-conv} such that 
\begin{equation*} 
\TV(U_m, C) = o(\sqrt{m/n}).
\end{equation*}
Suppose the distribution of $C$ is absolutely continuous with respect to the Lebesgue measure. 
Let $\tilde{G}_n$ denote the rank-transformed subsampling distribution function obtained with a variant of \cref{alg:agg-test} that uses two independent copies of the data under $(X,X',\Omega) \sim P_n^n \times P_n^n \times P_{\Omega}$ (see \cref{app:rank-first-order}).

Suppose \cref{assump:stable-G} holds and fix $\alpha \in (0,1)$ such that the density $G'_{P_0}$  is strictly positive and continuous in a neighbourhood of $G_{P_0}^{-1}(1-\alpha)$.
Then, under regularity conditions posed on the copula and $S$ (see \cref{app:rank-first-order}), for any $M>0$, we have
\begin{equation*} 
\E \left[-M \vee \sqrt{n/m} \, \left(\tilde{G}_n^{-1}(1-\alpha) - G_{P_0}^{-1}(1-\alpha) \right) \wedge M \right] \rightarrow 0.
\end{equation*}

Further, let $G_{n,P_n}$ be the distribution function of $S(T_n^{(1)}, \dots, T_n^{(L)})$ under $P_n^n \times P_{\Omega}$. Suppose additionally that $\|G_{n,P_n} - G_{\text{alt}}\|_{\infty} = o(\sqrt{m/n})$ holds for some distribution function $G_{\text{alt}}$ that is differentiable at $G_{P_0}^{-1}(1-\alpha)$. Then, for any $M>0$, we also have 
\begin{equation*} 
\E \left[-M \vee \sqrt{n/m} \, \left(G_{n,P_n}(\tilde{G}_n^{-1}(1-\alpha)) - G_{n,P_n}(G_{P_0}^{-1}(1-\alpha)) \right) \wedge M \right] \rightarrow 0.
\end{equation*}
\end{theorem}

 To interpret this result, let us take $M$ to be a large constant and choose $m = \floor{n / \log n}$. Then, the first statement above says that up to the first order (with scaling factor $\sqrt{\log n}$), the rank-transformed subsampling delivers an approximation to the oracle critical value that is asymptotically unbiased. In contrast, in \cref{thm:crit-ordinary} (see the supplementary material) we show that the ordinary subsampling approximation to the oracle critical value is biased upwards, and typically the bias grows with the effect size of the alternative. This formalises our observation from \cref{fig:schematic-transform}: under $H_1$, before applying the rank transform, subsampling is biased towards the alternative sampling distribution and we can see a clear discrepancy between the subsampling distribution (red histogram) and the desired null distribution (red curve). Along a sequence of contiguous local alternatives, although this discrepancy vanishes asymptotically, this occurs rather slowly ($1 / \sqrt{\log n} \approx 1/4$ when $n=10^7$) and can result in a significant loss of power in practice. 

Recall that $G_{n,P_n}$ denotes the distribution function of $S_n $ under $P_n^n \times P_{\Omega}$. The power of the oracle test is $\pr(S_n > G_{P_0}^{-1}(1-\alpha)) = 1 - G_{n,P_n}(G_{P_0}^{-1}(1-\alpha))$. If we ignore the dependence between the estimated critical value and the test statistic, the power of our test can be written similarly as
\[ \pr(S_n > \tilde{G}_n^{-1}(1-\alpha)) \approx 1 -  \E G_{n,P_n}(\tilde{G}_n^{-1}(1-\alpha)). \]
Consequently, the second statement of \cref{thm:rank-asymp} implies 
\begin{equation} \label{eqs:pow-rank-oracle}
\pow(\text{rank-transformed subsampling}) \approx \pow(\text{oracle}) - o(1 / \sqrt{\log n}). 
\end{equation}
Meanwhile, with $G_{m,P_n}$ denoting the distribution function of $S_m$ under $P_n^m \times P_{\Omega}$, suppose $(n/m)^{\beta} (G_{m,P_n}^{-1}(1-\alpha) - G_{P_0}^{-1}(1-\alpha))$ converges to $\tau > 0$ that measures the effect size. Then, in contrast to the above, \cref{thm:crit-ordinary} implies 
\begin{equation} \label{eqs:pow-ordinary-oracle}
\pow(\text{ordinary subsampling}) \approx \pow(\text{oracle}) - \kappa_{\alpha,\tau}\, \tau \,/ (\log n)^{\beta}, \quad 0 < \beta \leq 1/2
\end{equation}
for some $\kappa_{\alpha,\tau} > 0$. Typically, we expect $\kappa_{\alpha,\tau}\, \tau$ to grow as $\tau$ increases from zero up to a certain value; see \cref{app:power-numerical} for a concrete example. 

For example, for testing a hypothesis of a regular parameter, under a $\sqrt{n}$ local alternative, we may expect $G_{n,P_n}^{-1}(1-\alpha) - G_{P_0}^{-1}(1-\alpha) \rightarrow \tau$ and hence $\sqrt{m/n}(G_{m,P_n}^{-1}(1-\alpha) - G_{P_0}^{-1}(1-\alpha)) \rightarrow \tau$. 
We illustrate such a case using numerical results for \cref{ex:kim-ramdas}, which considers testing the mean of a random vector. 
The details of the simulation study will be described in \cref{sec:revisit}. For now, let us focus on \cref{fig:power-KR}, which shows the power of several aggregated, multiple-split tests against the effect size of local alternatives. 
Indeed, we can see that the rank-transform (\cref{alg:agg-test}) has a clear power advantage over ordinary (`\texttt{no rank}') subsampling, and this advantage enlarges with the effect size, exactly as we expect from comparing \cref{eqs:pow-rank-oracle,eqs:pow-ordinary-oracle} under $\beta=1/2$. 
Meanwhile, in every setting, the power of the rank-transformed subsampling closely tracks that of the oracle test regardless of the effect size, confirming \cref{eqs:pow-rank-oracle}. 

We prove \cref{thm:rank-asymp} in \cref{app:power-1st-order} using the functional delta method, where a major technical challenge is a certain Hadamard differentiability we establish for handling the errors introduced by the rank transform. For technical reasons, \cref{thm:rank-asymp} is proved for a variant of \cref{alg:agg-test} that has access to two independent copies of the data, but we expect a similar result to hold for the original algorithm as well; see \cref{app:power-numerical} for a concrete example with supporting numerical results. Further, in \cref{app:check-assump}, we show that when $C$ follows a Gaussian copula, the regularity conditions in \cref{thm:rank-asymp} are satisfied by choices of $F_0$ and $S$ considered in this paper. 

\section{Applications} \label{sec:application}
We illustrate our method with three types of applications. First, we study data-splitting, hunt-and-test procedures: specifically, we revisit \cref{ex:kim-ramdas} for testing the zero mean of a high-dimensional random vector; we develop a new test for unimodality in high dimensions; and introduce a simple, flexible approach for goodness-of-fit testing of parametric regression models such as parametric quantile regression. Next, we consider
using the data from a distribution $P$ to test a property of a different distribution $Q$, where $Q$ is related to $P$ through reweighting. We study this in the context of causal inference, where $P$ is the observational distribution and $Q$ is an intervened distribution.
Finally, we study the inference of cross-fitted, double/debiased machine learning estimators. We show that the cross-fold dependence in these estimators, though often argued to be asymptotically negligible in standard well-specified, low-dimensional settings, can, in finite sample or under misspecification, lead to under-coverage of confidence intervals (see \citealp{jiang2022new} for a high-dimensional setting not considered in this paper where this issue also arises). We present an alternative construction of confidence intervals using our method that captures such dependence and restores the desired coverage. These confidence intervals can even maintain coverage when the model for a nuisance parameter in doubly robust estimation is misspecified, which we illustrate in \cref{app:dml}.

For each application, we present numerical results to illustrate and benchmark the new methods we develop. Additional numerical results, including the performance of conservative aggregation rules mentioned in \cref{sec:lit}, can be found in \cref{app:num-extra}.

\subsection{Hunt and test}
In this section, we consider testing a hypothesis that can be expressed as a conjunction of simpler hypotheses 
\begin{equation} \label{eqs:intersect-null}
 H_0 = \bigcap_{\delta \in \mathcal{D}} H_0(\delta),
\end{equation}
where we already have an off-the-shelf test for each $H_0(\delta)$.
As explained in the introduction, such null hypotheses are amenable to a hunt-and-test approach that employs data splitting, where one part of the data is used to find an appropriate $\hat{\delta}$ and the remaining data is used to test $H_0(\hat{\delta})$.
Here is another perspective due to \citet{moran1973dividing}. Consider testing 
\begin{equation} \label{eqs:composite-alt}
H_0: \theta \in \Theta_0 \quad \text{vs} \quad H_1: \theta \in \Theta_1,
\end{equation}
where $\Theta_1$ does not contain $\theta_0$. The alternative parameter space $\Theta_1$ might be so large or heterogeneous that a reasonable test for $H_0$ only has power against certain alternatives in $\Theta_1$. Again, we can split our data and perform hunt and test: use the first part to estimate $\hat{\theta}_1 \in \Theta_1$ and then use the second part to test $\theta \in \Theta_0$ versus $\theta = \hat{\theta}_1$. 

\subsubsection{Revisiting \cref{ex:kim-ramdas}} \label{sec:revisit}
In \cref{ex:kim-ramdas}, we considered a hunt-and-test approach for testing $H_0: \mu = \mathbf{0}$ versus $H_1: \mu \neq \mathbf{0}$ with iid random vectors.
Clearly, the problem is an instance of \cref{eqs:composite-alt}; it can also be viewed as an instance of \cref{eqs:intersect-null}, namely $H_0 = \cap_{\delta \in \mathbb{R}^p} \{\mu: \mu^{\top} \delta = 0\}$.
\cref{fig:power-KR} compares the performance of various tests based on aggregated $S_n = (T_n^{(1)} + \dots + T_n^{(L)}) / L$ with $L=200$ and the single-split test based on $T_n^{(1)}$ alone. In our simulation, we draw $X_1, \dots, X_n \iid \N(\mu, \Sigma)$ where $\Sigma \in \R^{3 \times 3}$ has entries given by $\Sigma_{ij} = 2^{-|i-j|}$ and $\mu = \tau n^{-1/2} v_1$, where $v_1$ is the normalised principal eigenvector of $\Sigma$. When $\tau=0$, we see that our method (\cref{alg:agg-test}) controls the type-I error at the nominal level. Further, as $\tau$ grows, its power clearly dominates both the single-split test and the ordinary subsampling test, while closely tracking the power of the oracle test in all regimes. These observations align with our theory presented in \cref{sec:power}. Also, note that the power of our test is insensitive to the split ratio $q$.

Meanwhile, our aggregated test significantly reduces the chance of non-replication. For example, when $q=0.5$, for a random dataset, there is less than 5\% chance that two applications of \cref{alg:agg-test} will give contradicting results, while the probability can be as large as 30\% for the single-split test; see \cref{app:non-rep}.

\begin{figure}[!htb]
\centering
\includegraphics[width=0.95\textwidth]{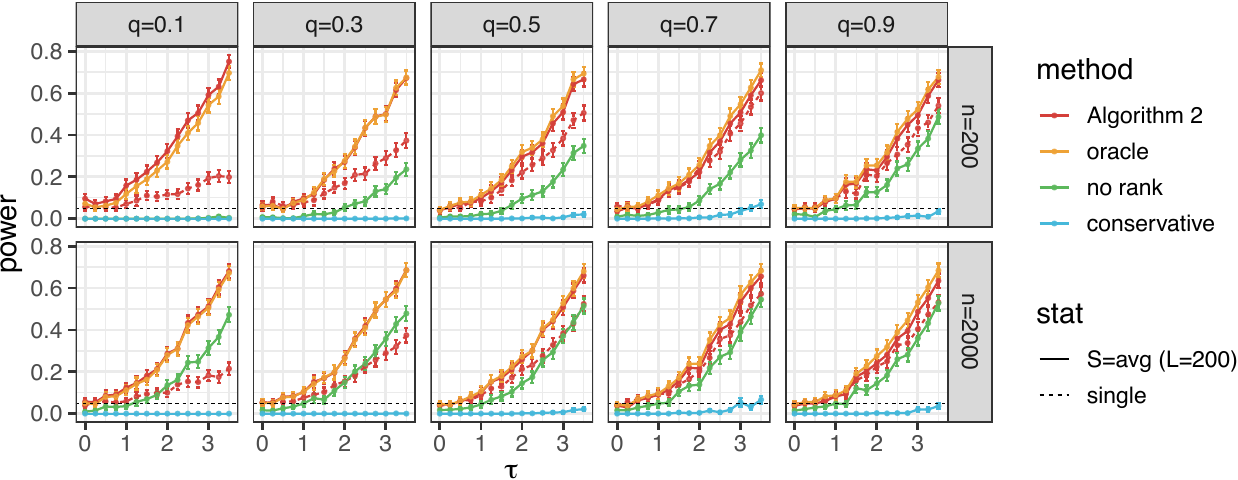}
\caption{Testing $\mu = \mathbf{0}$ in \cref{ex:kim-ramdas}: power (95\% CI) at level $\alpha=0.05$ (dashed horizontal). Location is $\mu = \tau n^{-1/2} v_1$ with $v_1$ the normalised principal eigenvector of $\Sigma$; $q$ is the proportion of the test sample. We compare the single-split test (\texttt{single}) with tests based on $S_n$ being the arithmetic mean (\texttt{S=avg}): Algorithm 2 is our method; `\texttt{oracle}' is an oracle test that compares $S_n$ to its null distribution; `\texttt{no rank}' uses the ordinary subsampling distribution $\hat{G}_n$ to determine the critical value; `\texttt{conservative}' compares $S_n / 2$ to a standard normal, where division by two ensures validity when the statistics are not jointly normal as evidenced by \cref{fig:sampling-KR} (see also \cref{app:aggregation}). Observe that Algorithm 2 has a power advantage over `\texttt{no rank}' and this advantage grows with $\tau$; meanwhile, Algorithm 2 closely tracks  `\texttt{oracle}' regardless of $\tau$. These observations match our first-order theory presented in \cref{sec:power}.} 
\label{fig:power-KR}
\end{figure}

\subsubsection{Testing multivariate unimodality} \label{sec:unimodal}
Testing non-trivial clustering structure of a high-dimensional dataset is a long-standing problem. The problem cannot be directly answered by clustering algorithms, because typically these (e.g., k-means or hierarchical clustering) always return clusters even when the data comes from a homogenous population \citep{huang2015significance}.  This problem is closely connected to selecting the number of clusters as a trivial clustering structure corresponds to the true number being one. 
Here, we work with Euclidean data and we take the perspective that there is only one cluster if the population distribution is unimodal. 

A univariate distribution is called unimodal if there is a point $a$ such that the distribution function is convex on $(-\infty, a)$ and concave on $(a, +\infty)$ \citep{khintchine1938unimodal}. While there are different notions of multivariate unimodality, we take \emph{linear unimodality} as our definition: we say a random vector $(X_1, \dots, X_p)$ is unimodal if $\sum_i a_i X_i$ is unimodal for every non-zero coefficient vector $a = (a_1, \dots, a_p)$. That is, 
\begin{equation} \label{eqs:linear-unim}
H_0: \bigcap_{a \neq \bm{0}} \left\{\sum_{i=1}^p a_i X_i \text{ is unimodal} \right\}.
\end{equation}
Linear unimodality is implied by several other notions related to multivariate unimodality such as log-concavity \citep[Lemma 2.1]{dharmadhikari1988unimodality}; see also \citet[Theorem 2.15]{dharmadhikari1988unimodality}.
The formulation in \cref{eqs:linear-unim} naturally leads to the following hunt-and-test procedure after randomly splitting the data into two parts A and B:
\begin{enumerate}
\item Identify a direction $\hat{a}$ using any suitable clustering algorithm on part A of the data;
\item Test univariate unimodality of $\sum_i \hat{a}_i X_i$ on part B.
\end{enumerate}
We note that the idea of reducing to a univariate test is not new, and is for example used by \citet{ahmed2012investigating} in projecting data onto its principal curve, and in a likelihood ratio test for log-concavity using random projection and data-splitting \citep{dunn2021universal}.

\paragraph{Dip hunting test}
To identify a good direction $\hat{a}$, on part A we run a 2-means algorithm (initialised with k-means++ by \citealp{vassilvitskii2006k}) and choose $\hat{a}$ as the normalised vector connecting the two cluster centres. Then to test for unimodality, we use a test based on the dip statistic due to \citet{hartigan1985dip}, which we describe below. \cref{fig:dip-hunt} shows a schematic of our procedure, which may be described as ``dip hunting'' by analogy with the bump hunting procedure of \citet{good1980density}.

Let $\mathbb{F}_n$ be the empirical distribution function of $Y$ and let $\mathcal{U}$ be the set of unimodal univariate distributions.
The dip statistic is defined as 
\[
\rho_n := \inf_{Q \in \mathcal{U}} \|\mathbb{F}_n - Q\|_{\infty},
\]
and may be computed efficiently using R package \texttt{diptest} \citep{maechler2021diptest}.  \citet{hartigan1985dip} recommend comparing $\rho_n$ to the dip statistic of a sample drawn from $\unif(0,1)$, which serves as the least favourable null distribution. However, this approach typically results in very conservative p-values. To avoid this problem, \citet{cheng1998calibrating} show that when the density $f$ of $Y$ is unimodal, under mild regularity conditions, we have $2 n^{3/5} \rho_n \rightarrow_{d} c \mathcal{Z}$, where $\mathcal{Z}$ is a particular function of a standard Wiener process. 
The constant $c$ depends on the density $f$ and is given by
\[ c = \left\{ f(x_0)^3 / |f''(x_0)| \right\}^{1/5}, \]
where $x_0$ is the unique mode of $f$. The only unknown quantity $c$ can be estimated with $\hat{c}$, which is a plugin from kernel density estimates $\hat{f}$ and $\hat{f''}$ evaluated at $\hat{x}_0 = \argmax \hat{f}$. We use R package \texttt{kedd} \citep{guidoum2015kedd} to estimate $\hat{f}$ and $\hat{f''}$, for which the respective bandwidths are selected with maximum likelihood cross validation (function \texttt{h.mlcv}, \citealp{habbema1974stepwise,duin1976choice}). From our experience, for $\hat{c}$ to behave properly, it is essential to centre and rescale $Y$ so that its value lies between 0 and 1, which does not affect $c$. Because the asymptotic distribution of the dip statistic only depends on $c$, \citet{cheng1998calibrating} suggest the following approach to obtain the corresponding p-value, which we adopt here. Given our observed dip statistic, we compare this to the distribution of $\rho_n$ based on samples drawn from a known distribution whose $c$ equals $\hat{c}$; three families of such distributions covering the range of $c$ are provided by \citet{cheng1998calibrating}.

\begin{figure}[!htb]
\centering
\includegraphics[width=0.6\textwidth]{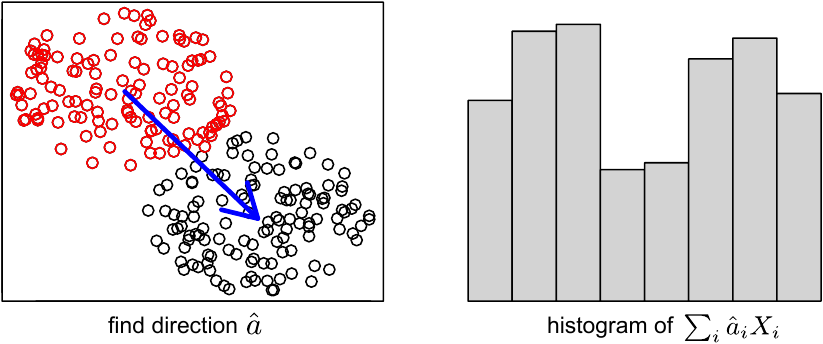}
\caption{Dip hunting test for multivariate unimodality}
\label{fig:dip-hunt}
\end{figure}

\paragraph{Simulations} We consider the following settings. 
\begin{enumerate}
\item \textbf{Mixture of unit balls}. Let $\mathfrak{B}(p)$ be the unit $p$-dimensional ball centred at the origin. Consider the following density with bounded support:
\[ f(x) = 1/2\, \unif_{\mathfrak{B}(p)}(x) + 1/2\, \unif_{\mathfrak{B}(p)}(x - x_0).  \]

Here, $\unif_{\mathfrak{B}(p)}$ denotes the uniform density on the unit ball $\mathfrak{B}(p)$ and $\|x_0\|$ is the Euclidean distance between the centres of the two balls. We set $\|x_0\| = 2 \tau / \sqrt{2+p}$ so that the density in the direction connecting the two ball centres\footnote{The one-dimensional projection of $ \unif_{\mathfrak{B}(p)}(x)$ has density $f(x) = \Gamma(p/2+1) / [\sqrt{\pi} \Gamma(p/2+1/2)] (1-x^2)^{(p-1)/2}$ for $x \in [-1,1]$, which leads to variance $1 / (2+p)$.} becomes bimodal roughly when $\tau \geq 1$.

\item \textbf{Mixture of multivariate t's}. We consider a heavy-tailed setting
\[ f(x) = 1/2\, t_{4}(x; \Sigma) + 1/2 \, t_{4}(x - x_0; \Sigma), \]
where $t_4$ is the density of the $p$-dimensional multivariate t-distribution with 4 degrees of freedom, mean zero and scale matrix $\Sigma \in \R^{p \times p}$ with entries $\Sigma_{ij} = 2^{-|i-j|}$. We set $x_0 = \tau v_2 \sqrt{p}$, where $v_2$ is the second normalised eigenvector of $\Sigma$. 
\end{enumerate}

\cref{fig:unimodal} shows the results based on sample size $n=1000$. 
We compare the single-split dip hunting test (\texttt{single}) with the multiple-split versions that aggregate $L=50$ dip hunting p-values, including $S=\text{avg}$ and $S=\min$ (\cref{alg:agg-test}), as well as the adaptive test (\cref{alg:agg-test-adaptive}) with $(S^1=\text{avg}, S^2=\min)$. 
We compare dip hunting to \texttt{SigClust} \citep{liu2008statistical,huang2015statistical,huang2022R}, a widely-used clustering significance testing method based on a Gaussian mixture model. Other methods include a nonparametric bootstrap approach suitable for ellipsoidal clusters \citep{maitra2012bootstrapping}, and an approach based on simulating from an estimated Gaussian copula model \citep{helgeson2021nonparametric}; see also the review paper \citet{adolfsson2019cluster} and references therein. 
We see that at the null ($\tau = 0$), \texttt{SigClust} incurs a large type-I error for the multivariate t settings, while all the dip hunting tests maintain the correct level in both settings; note that this is to be expected as \texttt{SigClust} assumes a Gaussian distribution under the null. For the unit ball setting, \texttt{SigClust} loses power as $p$ increases. We can see that $S=\text{avg}$ is more powerful than $S=\min$ for the unit ball setting and conversely for the multivariate t setting. Nevertheless, our adaptive test is able to achieve the better performance between the two
and also shows significant power improvement over the single-split test.

\begin{figure}[!htb]
\centering
\includegraphics[width=.99\textwidth]{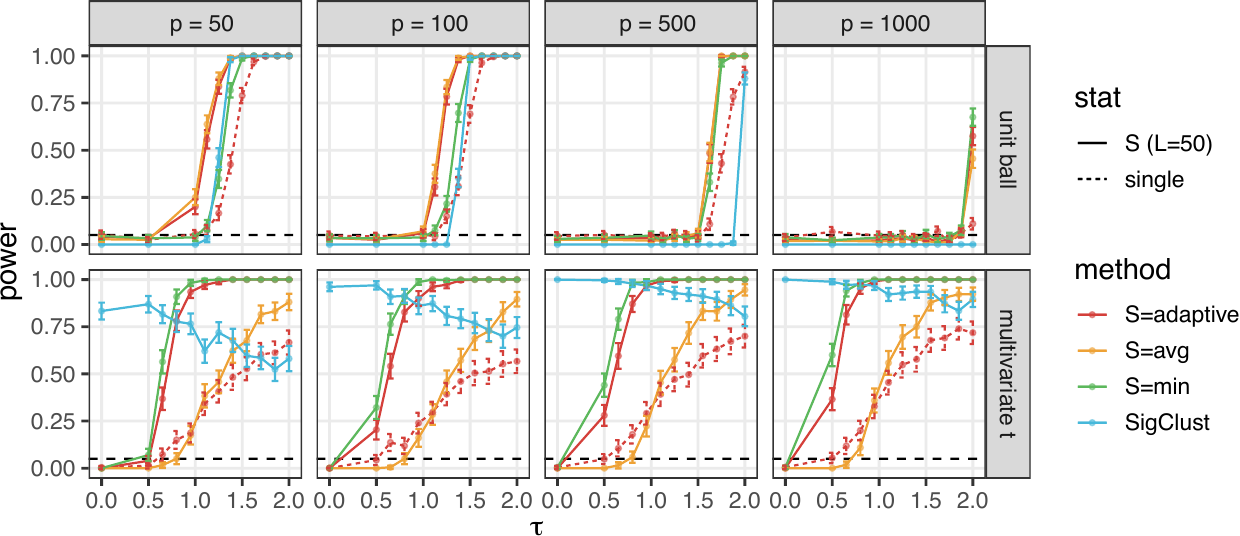}
\caption{Detecting a mixture of two $p$-dimensional unimodal components whose centres are separated $\propto \tau$ away: power (95\% CI) at level $\alpha=0.05$ (dashed horizontal).
The multivariate density is linearly unimodal when $\tau = 0$.
We compare dip hunting with clustering significance testing method \texttt{SigClust}. Test \texttt{single} is the single-split dip hunting test; \texttt{S=avg}, \texttt{S=min} and \texttt{S=adaptive} are multiple-split tests that aggregate $L=50$ dip hunting p-values. The \texttt{S=adaptive} test is \cref{alg:agg-test-adaptive} with $(S^1=\text{avg}, S^2=\min)$, which is able to adapt to the better performance between the two. See also \cref{fig:unimodal-vs-conservative} for the performance of conservative p-value aggregation rules.}
\label{fig:unimodal}
\end{figure}

\paragraph{Gene expression of cancer subtypes} We apply our test to gene expression data on renal cell carcinoma (RCC), which mainly consists of three subtypes: clear cell (ccRCC), papillary  (PRCC) and chromophobe (ChRCC). We use the ICGC/TCGA Pan-Cancer dataset (\citealp{campbell:2020aa}, available from \url{https://bit.ly/3r65AUw}), which contains mRNA expression levels (FPKM-UQ normalised) from 111 kidney samples, including 37 cc, 31 P and 43 Ch RCC cases. We use the expression levels of 1,000 genes that are most relevant to RCC by selecting those with the highest $|\mu_1 - \mu_0| / \sigma_{0}$, where $\mu_1$ and $\mu_0$ are case and control means, $\sigma_0$ is the control standard deviation. We apply both the single-split and the aggregated ($S$ is the arithmetic mean, $L=6000$) dip hunting tests to every subtype, every mixture of two subtypes and the whole sample. \cref{fig:renal} shows the distribution of p-values. The aggregated test produces stable p-values, which indicate clear separation between subtypes and relative homogeneity of each subtype.
In contrast, it is more difficult to disentangle subtypes ccRCC and PRCC from the single-split dip hunting test.

\begin{figure}
\centering
\includegraphics[width=0.95\textwidth]{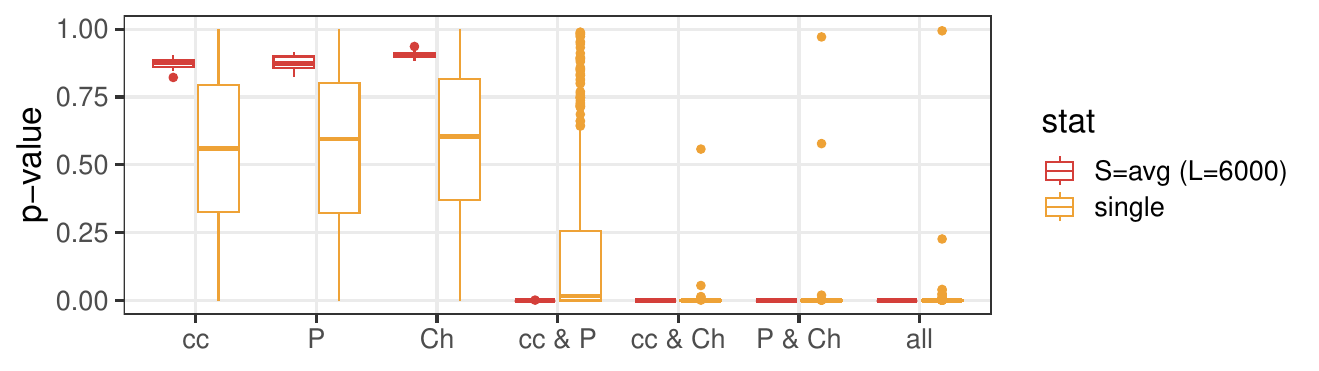}
\caption{Testing homogeneity of gene expression levels of three subtypes of renal cell carcinoma: clear cell (cc), papillary  (P) and chromophobe (Ch). The plot shows p-values resulting from repeatedly applying single-split and aggregated ($S=\text{avg}$, $L=6000$) dip hunting tests to each subtype, every mixture of two subtypes and the whole sample.}
\label{fig:renal}
\end{figure}
\subsubsection{Goodness-of-fit testing for parametric regression models} \label{sec:gof}
As another application, we use hunt and test to construct flexible goodness-of-fit tests for parametric regression models. The approach we take here is closely related to the generalised residual prediction (RP) test (\citealp{jankova2020goodness}; see also \citealp{shah2018goodness}) for assessing the goodness of fit of (potentially high-dimensional) generalised linear models. Generalised RP tests also employ sample splitting and could equally well benefit from our rank transform p-value aggregation scheme. However, our construction here is applicable more broadly to testing model specification of the form 
\begin{equation} \label{eqs:gof_mod}
 	h(X) = \beta^{\T} X,
\end{equation}
where $h(X)$
is
the conditional mean or a conditional quantile of outcome $Y \in \R$ given $X \in \R^p$.

The starting point of our approach is the simple observation that  $h(X) = \beta^{\T} X$ is equivalent to having $\gamma = 0$ in $h(X) = \beta^{\T} X + \gamma g(X)$ where $g: \R^p \to \R$ is any nonlinear (measurable) function. For a given $g$, we may test for whether $\gamma = 0$ by regressing $Y$ on $(X, g(X))$ and utilising existing inference tools for the model at hand to assess the significance of $g(X)$. To obtain good power under an alternative, we would like to pick an appropriate $g$ to expose the lack of fit present in the data. This suggests a hunt-and-test procedure where we randomly divide our data into parts A and B, and use part A to hunt for a suitable $g$, and part B to assess the significance of our artificially constructed additional covariate $g(X)$.

To find an appropriate $g$, we take inspiration from gradient boosting \citep{friedman2001greedy} and proceed as follows. Let $(X_i, Y_i)_{i=1}^{n'}$ be iid covariate--response pairs in part A. Suppose we have an M-estimator $\hat{\beta}$ for estimating $\beta$ in \cref{eqs:gof_mod} that minimises $\sum_{i} \ell(Y_i - \beta^{\T} X_i)$ for some loss function $\ell:\R \to \R$. Defining residual $r_i := Y_i - \hat{\beta}^{\T} X_i$, upon introducing a potential new covariate $g(X)$, the loss can be locally approximated as 
\[ \sum_{i} \ell(Y_i - \hat{\beta}^{\T} X_i - \gamma g(X)) \approx  \sum_{i} \ell(r_i) - \gamma \sum_{i} g(X_i) \, \ell'(r_i),\]
where $\ell'$ is the derivative of the loss function. Hence, when $\gamma >0$ and fixing $\left(\sum_i g(X_i)^2\right)^{1/2}$, to locally decrease the loss by the greatest amount, we should attempt to choose $g$ such that approximately $(g(X_i))_{i=1}^{n'} \propto (\ell'(r_i))_{i=1}^{n'}$. To achieve this, we regress $(\ell'(r_i))_{i=1}^{n'}$ onto the covariates using any flexible regression or machine learning method, and take the fitted regression function to be $g$.
Since we expect that under an alternative the resulting $g(X)$ should have a positive coefficient, we take as our single-split statistic $T_n$ a $Z$-statistic for the significance of $g(X)$ computed on part B. We expect $T_n$ to be large and positive under an alternative.

We demonstrate the effectiveness of this approach for quantile regression. Consider a quantile regression model specified as $q_{\tau}(X) = \beta_0 + \beta^{\T} X$ for a fixed $\tau \in (0,1)$, where $q_{\tau}(X)$ is the $\tau$-th conditional quantile of $Y$ given $p$-dimensional covariates $X$. The construction of goodness-of-fit tests, or more commonly called lack-of-fit tests in the related literature, have largely relied on asymptotic properties of certain statistics or processes concerning the residual; see, e.g., \citet{horowitz2002adaptive,he2003lack,escanciano2010specification,escanciano2014specification}. These tests tend to have difficulty scaling up to more than a handful of covariates \citep{conde2015lack}. Recently, \citet{dong2019lack} recast the goodness-of-fit problem as a two-sample test problem and developed a different, highly competitive method that can handle moderate or large $p$. It is worth mentioning that, unlike our approach that directly repurposes existing parameter inference for quantile regression \citep[Chap.~3]{koenker2005book}, these aforementioned methods rely on asymptotic results that can require substantial development.

For quantile regression, we have $\ell_{\tau}(r) = r(\tau - \I_{r<0})$ and $\ell'_{\tau}(r) = \tau - \I_{r<0}$ for $r \neq 0$. Therefore, we use part A to train a classifier (e.g., random forest, \citealp{breiman2001random}) that predicts the sign of the residual from $X$; we take $g: \R^p \to \{-1,1\}$ to be the resulting prediction function.
To improve numerical stability, we also partial out $X$ from $g(X)$ on part B before adding $g(X)$ to the covariates. Define the single-split test statistic as $T_n := \sqrt{n/2} \hat{\beta}' / \hat{\sigma}$ (the prime indicates fitted from part A), where $\hat{\sigma}$ is estimated from bootstrap. The statistic can be readily computed using R package \texttt{quantreg} \citep{koenker2022R}.

\cref{fig:quant} shows results from a simulation study under sample size $n=1000$. Covariate vector $X$ is drawn from $p$-dimensional Gaussian with covariance $\Sigma_{ij} = 2^{-|i-j|}$. We fix $\tau = 0.5$ and consider two specifications
\begin{equation} \label{eqs:quant-spec}
\begin{split}
\text{(i)} \quad & Y = 1 + \beta_0^{\T} X + (v / \sqrt{n}) \eta(X) + (1 + X_2 + X_3) \epsilon, \\
\text{(ii)} \quad & Y = 1 + \beta_0^{\T} X + \left[1 + X_2 + X_3 + (v / \sqrt{n}) \eta(X) \right] \epsilon,
\end{split}
\end{equation}
where non-linear function $\eta(X) = 4\sqrt{X_1^2 + X_2^2}$ introduces misspecification. We set $\beta_0^{\T} = (-1,2,0,-1,2,0,\dots)$. Error $\epsilon$ is drawn from $\{\expo(1), t_3\}$ for specification (i) and from $\expo(1)$ for specification (ii). Because $\expo(1)$ has a non-zero median, observe that the quantile regression model $q_{\tau}(X) = \beta_0 + \beta^{\T} X$ is well-specified if and only if $v=0$.
We choose $g$ to be a random forest trained with R package \texttt{ranger} \citep{wright2017ranger}.
We run \cref{alg:agg-test} with $S$ being the arithmetic mean.
Our approach is already competitive with the state-of-the-art method of \citet{dong2019lack}; see \cref{app:quant} for results in another setting.

\begin{figure}[!htb]
\centering
\includegraphics[width=.95\textwidth]{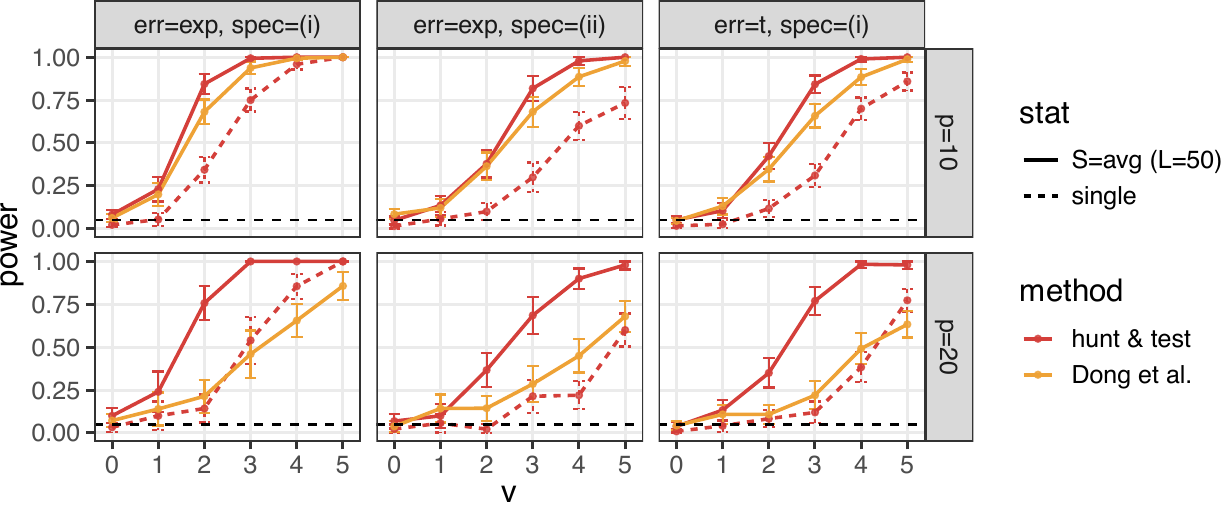}
\caption{Testing goodness-of-fit of a quantile regression model $q_{0.5}(X) = \beta_0 + \beta^{\T} X$: power (95\% CI) at level $\alpha=0.05$ (dashed horizontal) under $n=1000$. The model is well-specified if and only if $v=0$. The non-linear function in \cref{eqs:quant-spec} is $\eta(X) = 4\sqrt{X_1^2 + X_2^2}$. See also \cref{fig:quant-nl-2,fig:quant-vs-conservative} in the supplementary material.}
\label{fig:quant}
\end{figure}

\subsection{Testing generalised conditional independence} \label{sec:verma}
In this section, we consider the use of randomised procedures in causal inference. 
To draw causal conclusions from data, we are often faced with the challenge that the quantity of interest is defined with respect to an ``intervened'' distribution, which is related to but different from the data-generating distribution. This is also known as a ``distributional shift'' in machine learning \citep{shimodaira2000improving}.

For example, consider a two-stage sequentially randomised trial represented by graph $\g$ in \cref{fig:verma}: $A_1$ is the first treatment, $L$ is the first outcome, $A_2$ is the second treatment and $Y$ is the final outcome. The first treatment $A_1$ is completely randomised; the second treatment $A_2$ is randomised according to $A_1$ and $L$. Variable $U$ represents an unobserved confounder, for example, the underlying health status that affects both outcomes and is unobserved. 
Graph $\g$, typically with extra base covariates that we omit here, is often employed to represent observational or follow-up studies where the treatment is time-varying and is affected by previous outcomes. For example, in HIV studies, we have $A_t=1$ ($t=1,2$) if the individual receives antiretroviral therapy at time $t$ and $A_t=0$ otherwise. Outcomes $L$ and $Y$ denote CD4 cell counts that measure the effectiveness of the therapy. See \citet[Chap.~19]{hernan2020causal} for more background.

Suppose we are interested in the first treatment's \emph{direct effect} $\tau$ on the final outcome (i.e., not through the second treatment), represented by the dashed edge in $\g$. Because $A_1$ also affects $Y$ through $A_2$, we cannot learn $\tau$ by regressing $Y$ on $A_1$ alone. Moreover, because of the latent $U$ that affects both $L$ and $Y$ (such variables are called ``phantoms'' by \citealp{bates2022causal}), nor can we learn $\tau$ by additionally adjusting for $L$ or $A_2$ (or both) in the regression. 

\begin{figure}[htb]
\centering
\begin{tikzpicture}
\tikzset{rv/.style={circle,inner sep=1pt,fill=gray!20,draw,font=\sffamily},
lv/.style={circle,inner sep=1pt,fill=gray!50,draw,font=\sffamily},
fv/.style={rectangle,inner sep=1.5pt,fill=gray!20,draw,font=\sffamily}, 
node distance=12mm, >=stealth}
\begin{scope}
\node[rv, yshift=-1cm] (A0) {$A_1$};
\node[rv, right of=A0] (L) {$L$};
\node[rv, right of=L] (A1) {$A_2$};
\node[rv, right of=A1] (Y) {$Y$};
\node[lv, below of=A1, yshift=4mm] (U) {$U$};
\draw[->,very thick, color=blue, bend left, out=40, in=140] (A0) to (A1);
\draw[->,very thick, color=blue] (A0) -- (L);
\draw[->,very thick, color=purple, dashed, out=40, in=140] (A0) to (Y);
\node[above of=A1, xshift=-4.5mm, yshift=-1mm] (l) {\textcolor{purple}{$\tau$}};
\draw[->,very thick, color=blue] (L) -- (A1);
\draw[->,very thick, color=blue] (A1) -- (Y);
\draw[->,very thick, color=blue] (U) -- (L);
\draw[->,very thick, color=blue] (U) -- (Y);
\node[below=7mm of L, xshift=5mm] {$\g$};
\end{scope} \begin{scope}[xshift=6cm]
\node[rv, yshift=-1cm] (A0) {$A_1$};
\node[rv, right of=A0] (L) {$L$};
\node[rv, right of=L] (A1) {$A_2$};
\node[rv, right of=A1] (Y) {$Y$};
\node[lv, below of=A1, yshift=4mm] (U) {$U$};
\draw[->,very thick, color=blue] (A0) -- (L);
\draw[->,very thick, color=purple, dashed, out=40, in=140] (A0) to (Y);
\node[above of=A1, xshift=-4.5mm, yshift=-1mm] (l) {\textcolor{purple}{$\tau$}};
\draw[->,very thick, color=blue] (A1) -- (Y);
\draw[->,very thick, color=blue] (U) -- (L);
\draw[->,very thick, color=blue] (U) -- (Y);
\node[below=7mm of L, xshift=5mm] {$\g'$};
\end{scope} \end{tikzpicture}
\caption{Graph $\g$ depicts a sequentially randomised trial: $A_1$ is the first treatment, $L$ is the first outcome, $A_2$ is the second treatment, and $Y$ is the final outcome. Latent variable $U$ represents the underlying health status that affects both outcomes. The dashed edge represents the direct effect from $A_1$ on $Y$. 
Graph $\g'$ represents the population where both $A_1$ and $A_2$ are completely randomised. In this case, if $A_1$ has no direct effect on $Y$ (i.e., the dashed edge is absent), then we can observe $A_1 \indep Y$.}
\label{fig:verma}
\end{figure}
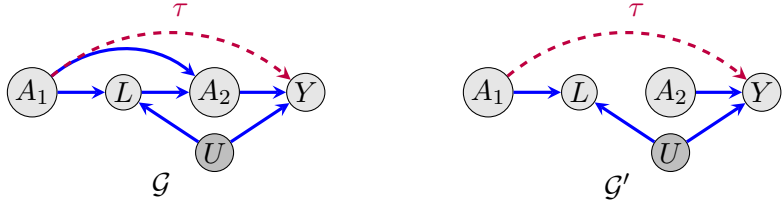

To learn $\tau$, it is useful to imagine another trial, drawn as $\g'$ in \cref{fig:verma}, where both treatments $A_1$ and $A_2$ are completely randomised so that $A_1$ only affects $Y$ directly. There, we can easily learn $\tau$ by regressing $Y$ on $A_1$ in the $\g'$ distribution. Although we did not carry out the $\g'$ trial, its data distribution can be approximated by reweighting our data obtained from the $\g$ trial according to the inverse propensity of $A_2$ given $A_1$ and $L$. Further, we can even artificially \emph{simulate} data from the $\g'$ trial by resampling (e.g., importance resampling or rejection sampling) our data from the $\g$ trial. In other words, reweighting or resampling provides access to our distribution of interest. This idea underlies a general approach known as the g-methods \citep{robins1986new,naimi2017introduction}.

Specifically, let us consider testing the sharp null hypothesis 
\begin{equation} \label{eqs:H0-verma}
H_0: \text{$A_1$ has no individual direct effect on $Y$},
\end{equation}
which is represented by the dashed edge from $A_1$ to $Y$ being absent from $\g$. As explained earlier, this amounts to $A_1 \indep Y$ in the population $Q$ represented by $\g'$, given by
\begin{equation} \label{eqs:verma-lr}
dQ / dP = q(A_2) / p(A_2 \mid A_1, L),
\end{equation}
where $q$ is an arbitrary positive distribution over $A_2$. Constraints as such, which prescribe independence or conditional independence in a reweighted distribution, are called generalised conditional independence or Verma constraints in the literature; see \citet{richardson2017nested} and the references therein.

Because the independence holds under $Q$ instead of $P$, the usual permutation test is not applicable; nor is it applicable through simple reweighting as employed by \citet{berrett2020conditional} for testing ordinary conditional independence.
Instead, the standard approach in causal inference is through inverse probability weighting (IPW).
Here, we consider an alternative approach: we resample our data to represent $Q$ according to \cref{eqs:verma-lr} and then use the resampled data to test $A_1 \indep Y$, e.g., by a permutation test or any off-the-shelf test for independence. 

More generally, as demonstrated by \citet{thams2023statistical}, such a test-after-subsampling procedure is applicable to testing any property under ``distributional shift''. 
Suppose we observe iid sample $X_1, \dots X_n \sim P$ but we are interested in testing a property of a different target distribution $Q$. Distribution $Q$ is related to $P$ through a density ratio $r = \dd Q / \dd P$, which is either known or can be estimated from $P$. 
Suppose we already have a suitable test for the property based on a test statistic $T_n := T_n(\tilde{X}_1, \dots, \tilde{X}_n)$ for iid sample $\tilde{X}_1, \dots, \tilde{X}_n \sim Q$. Assume $r$ is bounded from above by a constant $C$. Then, we can test the property in two steps:
\begin{enumerate}
\item Obtain a sample from $Q$ through rejection sampling;
\item Test the property with the test statistic computed from the accepted sample.
\end{enumerate}

The next result shows that the test statistic computed as such inherits the desired asymptotic distribution while permitting the use of an estimated density ratio $\hat{r}_n$. 

\begin{proposition}[Test after rejection sampling] \label{prop:test-rej}
Let $T_n(X_1, \dots, X_n)$ be a test statistic. Define $T_0 := 0$. With $X:=(X_1, \dots, X_n)$, suppose $T_n(X_1, \dots, X_n) \rightarrow_d T$ under $X \sim Q^n$. 
Let $P$ be a distribution and $C$ be a positive constant, such that $Q$ is absolute continuous with respect to $P$ and $r:= \dd Q / \dd P$ satisfies 
\[ r(x) \leq C, \quad \text{$P$-almost every $x$}. \]
Let $\hat{r}_n$ be a random sequence of approximate density ratios such that almost surely,
\[ \int \hat{r}_n \dd P =1, \quad 0 \leq \hat{r}_n(x) \leq C \text{ for $P$-almost every $x$}. \]
Also, suppose 
\begin{equation} \label{eqs:dens-ratio-tv}
\E \int |\hat{r}_n - r| \dd P = o(n^{-1}).
\end{equation}
With $U:=(U_1, \dots, U_n)$, consider $(X, U) \sim P^n \times \unif^{n}(0,1)$. For $i=1,\dots,n$, let the $i$-th sample be accepted if $U_i < \hat{r}_n(X_i) / C$. Let the set of accepted sample be denoted as $\tilde{X}^{n}_{1}, \dots, \tilde{X}^{n}_{\Gamma_n}$, where $0 \leq \Gamma_n \leq n$ is the number of acceptances. Then it holds that 
\[ T_{\Gamma_n}(\tilde{X}^{n}_1, \dots, \tilde{X}^{n}_{\Gamma_n}) \rightarrow_d T. \]
\end{proposition}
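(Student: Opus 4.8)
The plan is to compare our estimated-ratio rejection sampler with the ideal sampler that uses the true density ratio $r$ --- for which the accepted points are \emph{exactly} i.i.d.\ from $Q$ --- and then argue that the two samplers accept the same points with probability tending to one.

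\textbf{Step 1: the ideal sampler.} I would work on the probability space carrying $(X,U)$ and introduce, alongside the actual acceptance indicator $\hat A_i := \I\{U_i < \hat r_n(X_i)/C\}$, the oracle indicator $A_i := \I\{U_i < r(X_i)/C\}$. Since $\int r\,\dd P = 1$, we get $\pr(A_i = 1) = \E[r(X_i)]/C = 1/C$, and conditionally on $A_i = 1$ the law of $X_i$ has density proportional to $p(x)r(x) \propto q(x)$, i.e.\ is $Q$. As the pairs $(X_i,U_i)$ are i.i.d., conditionally on the oracle-accepted index set the oracle-accepted points are i.i.d.\ $Q$; writing $\Gamma_n^{\mathrm{or}} := \sum_i A_i \sim \mathrm{Binomial}(n,1/C)$, the oracle statistic thus has, on $\{\Gamma_n^{\mathrm{or}} = k\}$, the law of $T_k(\tilde X_1,\dots,\tilde X_k)$ with $\tilde X_j \iid Q$. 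Since $\Gamma_n^{\mathrm{or}} \to \infty$ almost surely (so $\pr(\Gamma_n^{\mathrm{or}} = 0) = (1-1/C)^n \to 0$ takes care of $T_0 = 0$) and $T_k(\tilde X_1,\dots,\tilde X_k) \rightarrow_d T$, a standard ``random sample size'' argument --- for a continuity point $t$ of the c.d.f.\ $F_T$ of $T$, $\pr(T_{\Gamma_n^{\mathrm{or}}} \le t) = \E[g_t(\Gamma_n^{\mathrm{or}})]$ where $g_t(k) := \pr(T_k(\tilde X_1,\dots,\tilde X_k) \le t) \to F_T(t)$ and $g_t$ is bounded --- gives that the oracle statistic converges in distribution to $T$.

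\textbf{Step 2: coupling.} The oracle and actual decisions for sample $i$ differ only when $U_i$ lies between $\hat r_n(X_i)/C$ and $r(X_i)/C$; as $U_i \sim \unif(0,1)$ is independent of $(X_i,\hat r_n)$, this has conditional probability $|\hat r_n(X_i) - r(X_i)|/C$ given $(X_i,\hat r_n)$. Taking $\hat r_n$ to be independent of $(X,U)$ (as in the data-splitting applications, where it is fit on separate data), the expected number of disagreements is
\[
\E\Big[\textstyle\sum_{i=1}^n \I\{A_i \neq \hat A_i\}\Big] \;=\; \frac{n}{C}\,\E\!\int |\hat r_n - r|\,\dd P \;=\; \frac{n}{C}\, o(n^{-1}) \;=\; o(1).
\]
By Markov's inequality the probability of \emph{any} disagreement tends to zero, so with probability tending to one the oracle and actual accepted sets coincide; in particular $\Gamma_n = \Gamma_n^{\mathrm{or}}$ and $T_{\Gamma_n}(\tilde X_1^n,\dots,\tilde X_{\Gamma_n}^n)$ equals the oracle statistic. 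Combining with Step 1 (two sequences that agree with probability tending to one have the same weak limit) yields $T_{\Gamma_n}(\tilde X_1^n,\dots,\tilde X_{\Gamma_n}^n) \rightarrow_d T$.

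\textbf{Where the difficulty lies.} The crux is the $o(n^{-1})$ rate in \eqref{eqs:dens-ratio-tv}: it is precisely what forces the \emph{sum} of $n$ mismatch probabilities, each of order $\E\int|\hat r_n - r|\,\dd P/C$, to vanish, so a slower rate such as $O(n^{-1/2})$ would break the argument. The second delicate point is the measurability/independence status of $\hat r_n$ --- the identity $\pr(A_i \neq \hat A_i) = \E\int|\hat r_n - r|\,\dd P / C$ relies on $\hat r_n$ not being a function of the very $(X_i,U_i)$ being thinned, which should be taken as part of the stated setup (or handled by conditioning on the randomness that generates $\hat r_n$). The remaining ingredients --- the binomial acceptance count, the random-sample-size limit, and the negligibility of $\{\Gamma_n = 0\}$ --- are routine.
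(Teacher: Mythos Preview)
Your proof is correct and takes a genuinely different route from the paper's. The paper works at the level of distributions: it defines $\hat{Q}_n$ via $\dd\hat{Q}_n/\dd P=\hat r_n$, first shows $T_{\gamma_n}\rightarrow_d T$ for any deterministic $\gamma_n$ with $\gamma_n/n\to 1/C$ by coupling $\hat{Q}_n^{\gamma_n}$ to $Q^{\gamma_n}$ via the tensorisation bound $\TV(\hat{Q}_n^{\gamma_n},Q^{\gamma_n})\leq\gamma_n\,\TV(\hat{Q}_n,Q)$, and then upgrades to the random $\Gamma_n$ using a subsequence argument together with the fact that, conditionally on $\Gamma_n$ and $\hat r_n$, the accepted points are i.i.d.\ $\hat{Q}_n$. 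You instead couple the two rejection samplers pointwise on the common probability space of $(X,U)$: the oracle sampler already gives i.i.d.\ $Q$ samples and a $\mathrm{Binomial}(n,1/C)$ count (so a random-sample-size argument handles Step~1 directly), and a Markov bound on $\sum_i\I\{A_i\neq\hat A_i\}$ shows the two acceptance patterns agree with probability tending to one. Both proofs pivot on the same quantity $n\,\E\!\int|\hat r_n-r|\,\dd P\to 0$; your argument is more elementary, avoiding the product-measure total-variation bound and the subsequence extraction, and you explicitly flag the implicit assumption that $\hat r_n$ is independent of $(X,U)$, which the paper's proof also uses but does not highlight.
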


While the rate in \cref{eqs:dens-ratio-tv} seems to demand a large separate sample for estimating $r$, we expect that such rate can be relaxed under additional conditions on the density and the test statistic. For example, when the density ratio is parametrically specified and the test statistic is simple, see \cref{prop:test-rej-parametric} and the following remark in the supplementary material.

When the density ratio $r$ is unbounded, the rejection sampling can be replaced by the distinct replacement sampling (DRPL) introduced by \citet{thams2023statistical}, which selects a resampled dataset $(\tilde{X}_1 = X_{i_1}, \dots, \tilde{X}_k = X_{i_k})$ of given size $k$ by sampling distinct indices $(i_1, \dots, i_k)$ with probability proportional to $\prod_{l=1}^k r(X_{i_l})$. It is shown that, under a moment condition on $r$, by choosing $k = o(\sqrt{n})$, the statistic $T_n$ evaluated on the DRPL resampled data converges to its asymptotic distribution under $Q$ \citep[Theorem 1]{thams2023statistical}. However, DRPL produces an $o(\sqrt{n})$ sample while rejection sampling produces an $O(n)$ sample with high probability.

Both rejection sampling and DRPL lead to tests that are randomised and potentially of low power due to a reduced sample size. However, as we demonstrate here in the case of testing  \cref{eqs:H0-verma} in \cref{fig:verma}, our aggregation scheme is able to restore highly competitive levels of power.
For simplicity, we choose the post-rejection/DRPL-sampling test statistic $T_n$ as the permutation p-value for $|\cov_Q(A_1,Y)|$, which is uniformly distributed when $A_1 \indep Y$ holds under $Q$. As a benchmark, we also consider testing \cref{eqs:H0-verma} with the standard IPW version of the statistic \citet[p.~10]{robins1999testing}, which directly uses the sample under $P$. Let
\[ Z_i := \frac{Y_i (A_{1,i} - \E A_{1})}{p(A_{2,i} \mid L_i, A_{1,i})}, \]
where $\E A_{1}$ and $p(A_{2} \mid L, A_{1})$ can be replaced by their consistent, asymptotically linear estimators. 
Under $H_0$, we have
\begin{equation} \label{eqs:ipw-verma}
\chi_n := \frac{\sum_i Z_i}{\sqrt{\sum_i Z_i^2}} \rightarrow_d \N(0,1), 
\end{equation}
which leads to a two-sided test; see also \cref{prop:CAN} in the supplementary material. 

We consider the following data generating mechanism. We have binary treatments $A_1, A_2$ and real-valued outcomes $L, Y$. The latent $U$ is a 4-dimensional random vector. Let 
\[  A_1 \sim \ber(1/2),  \quad A_2 \sim \ber(\text{expit}(2 A_1 - L + 2)), \]
and 
\[ U \sim \N(0, \Sigma_{ij} = 2^{-|i-j|}), \quad L = A_1 + \beta_{U,L}^{\T} U + \epsilon_L, \quad Y = \tau A_1 - A_2 + \beta_{U,Y}^{\T} U + \epsilon_{Y}, \]
where $\beta_{U,L} = (1,1,-2,2)^{\T}$, $\beta_{U,Y} = (2,-1,3,-10)^{\T}$ and $\epsilon_L, \epsilon_Y \sim \N(0,1)$ independently. Parameter $\tau$ controls the effect size: $H_0$ in \cref{eqs:H0-verma} holds if and only if $\tau = 0$. 

\cref{fig:simu-verma} shows the result for sample size $n=1000$ under $P$. For rejection sampling, $q(A_2)$ in \cref{eqs:verma-lr} is chosen to maximise the acceptance rate; for DRPL, we choose $q(A_2) = p(A_2)$ and $k = \floor{\sqrt{n}}$.
We use \cref{alg:agg-test} to aggregate $L=20$ post-rejection/DRPL-sampling permutation p-values ($S$=avg).
Presumably because rejection sampling generates a larger sample than DRPL, rejection sampling tests are more powerful than DRPL tests here. For both rejection sampling and DRPL, `avg' considerably boosts the power while maintaining the correct level at $\tau = 0$. Perhaps surprisingly, the `avg' rejection sampling test outperforms the IPW test based on \cref{eqs:ipw-verma}, even though they target the same population quantity $\cov_{Q}(A_1,Y)$.

\begin{figure}[!htb]
\centering
\includegraphics[width=.8\textwidth]{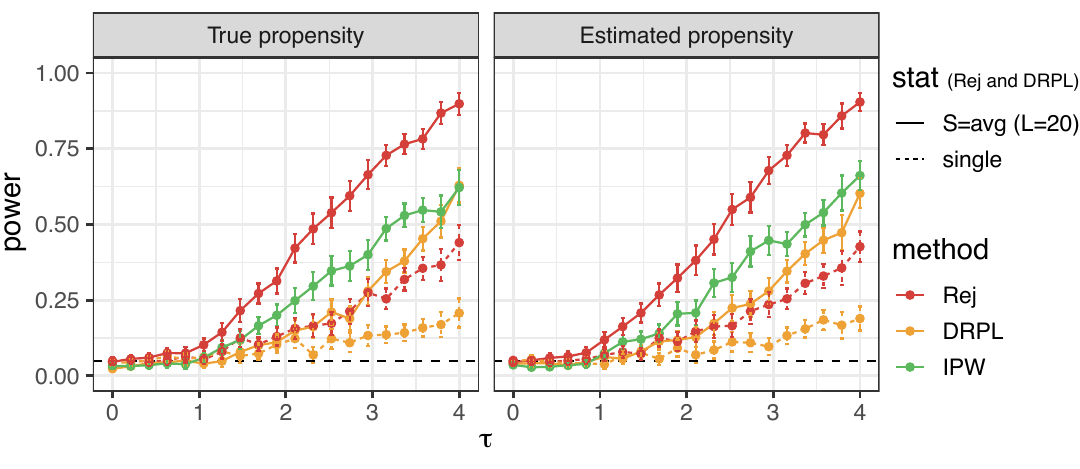}
\caption{Power for testing no individual direct effect of $A_1$ on $Y$ in graph $\g$ of \cref{fig:verma}, based on $n=1000$ sample at level 0.05 (dashed horizontal). The null hypothesis corresponds to $\tau = 0$. \texttt{Rej}: post-rejection-sampling permutation test for $|\cov_Q(A_1,Y)|$, \texttt{DRPL}: post-DRPL-sampling \citep{thams2023statistical} permutation test for $|\cov_Q(A_1,Y)|$, \texttt{IPW}: inverse probability weighted test based on $\cov_{Q}(A_1,Y) = 0$ given by \cref{eqs:ipw-verma}. \texttt{Rej} and \texttt{DRPL} are randomised tests (`single'). Their `$S$=avg' version is the aggregated test by taking the average of 20 p-values, calibrated by \cref{alg:agg-test}. Left panel: the true propensity $p(A_2 \mid A_1, L)$ is used for sampling; right panel: the propensity used for sampling is estimated using logistic regression. See also \cref{fig:verma-vs-conservative} in the supplementary material for comparison with conservatively merged p-values.}
\label{fig:simu-verma}
\end{figure}

Our strategy is particularly useful when it is difficult to detect the dependence between $A_1$ and $Y$ under $Q$ with simple, ``mean-like'' statistics such as covariance and a more complex statistic must be employed. In such cases, it is not always clear how to reweight and calibrate the statistic under $P$ using IPW. In contrast, post-resampling test works out of the box whenever a test under $Q$ can be constructed. We present such an example in \cref{app:verma}.

\subsection{Calibrating cross-fitted double machine learning} \label{sec:dml}
Double/debiased machine learning (DML) methods, also known as doubly robust targeted estimation,  employ flexible, machine learning methods to estimate low-dimensional target parameters. These methods are widely used for a variety of problems; see \citet{diaz2020machine,kennedy2022semiparametric} for recent reviews. In this section, we follow the  setup of \citet{chernozhukov2018dml} and focus on estimation with Neyman orthogonal scores, but our method can be extended to other settings such as those that are based on influence functions with a mixed bias property \citep{rotnitzky2021characterization}.

Suppose we have an estimating equation $\psi(W; \theta, \eta)$ for an unknown, real-valued target parameter $\theta_0$ such that $\E \psi(W; \theta_0, \eta_0) = 0$. 
Here $W$ denotes the data point and $\eta_0$ denotes two or more unknown nuisance parameters. Suppose $\psi(W; \theta, \eta)$ satisfies the Neyman orthogonality condition \citep[Definition 2.1]{chernozhukov2018dml} and takes the form
\[ \psi(W; \theta, \eta) = \psi^{a}(W; \eta) \, \theta + \psi^{b}(W; \eta). \]
The DML approach proposes to estimate $\eta_0$ using flexible machine learning methods, and employs sample-splitting to control bias that may otherwise be introduced by overfitting.
Suppose  iid data $W_1, \dots, W_n$ are split into $L$ equal-sized folds. For $l=1,\dots,L$, let $\hat{\eta}^{(-l)}$ be  the nuisance parameters learned from all the data except the $l$-th fold. Then, let $\hat{\theta}^{(l)}$ be the solution to 
\[ \mathbb{P}_n^{(l)} \psi(W; \theta, \eta^{(-l)}) = 0, \quad l=1,\dots,L, \]
where $\mathbb{P}_n^{(l)}$ denotes the empirical measure of data in the $l$-th fold. Under 
regularity conditions and a sufficiently fast rate of nuisance estimation \citep[Assumptions 3.1 \& 3.2]{chernozhukov2018dml}, it can be shown that the single-split statistic 
\begin{equation} \label{eqs:dml-single}
T_n^{(l)} := \frac{\sqrt{n/L} (\hat{\theta}_n^{(l)} - \theta_0)}{\sigma} \rightarrow_{d} \N(0,1), \quad l=1,\dots,L,
\end{equation}
where the asymptotic variance is
\begin{equation} \label{eqs:sigma-dml}
\sigma^2 = \E \psi^2(W; \theta_0, \eta_0) / \{\E \psi^a(W; \eta_0) \}^2. 
\end{equation}
The so-called cross-fitted DML-1 estimator is simply the average $\hat{\theta}^{\text{dml}} := (\hat{\theta}^{(1)} + \dots + \hat{\theta}^{(L)}) / L$. Under the same conditions as above, it holds that
\begin{equation} \label{eqs:dml-clt}
\sqrt{L} S_n = \frac{1}{\sqrt{L}} \sum_{l=1}^T T_n^{(l)} = \frac{\sqrt{n} (\hat{\theta}_n^{\text{dml}} - \theta_0)}{\sigma} \rightarrow_d \N(0,1), 
\end{equation}
where $S_n$ is the average of exchangeable statistics $T_n^{(1)}, \dots, T_n^{(L)}$. Provided with a consistent estimator of $\sigma$, standard large-sample DML confidence intervals are constructed from \cref{eqs:dml-clt}.

In view of \cref{eqs:dml-single}, \cref{eqs:dml-clt} effectively ignores the between-fold correlations among statistics $T_n^{(1)}, \dots, T_n^{(L)}$ because their contribution is asymptotically negligible under the conditions. However, in practice, such correlations can be visible in finite sample and hence undermine the coverage of normal confidence intervals constructed from \cref{eqs:dml-clt}. When the between-fold correlation is $\rho$, the variance of $\sqrt{n}(\hat{\theta}_n^{\text{dml}} - \theta)$ is roughly $\sigma^2(1 + \rho(L-1))$ instead of $\sigma^2$.  We propose to use \cref{alg:agg-test} to account for such correlation and construct confidence intervals with better finite-sample coverage. In this context, $\hat{H}_{b,l}$ in \cref{alg:agg-test} is the subsample counterpart of \cref{eqs:dml-single}, which involves unknown $\sigma$ and $\theta_0$. Nevertheless, by monotonicity, the rank-transformed $\tilde{\mathbf{H}}$ can be computed from the ranks of the subsampling counterparts of $\hat{\theta}_n^{(l)}$ directly as
\[
\tilde{H}_{b,l} = \Phi^{-1}\left(\mathbb{F}_{\{\hat{\theta}_{m,b}^{(l)}\}} (\hat{\theta}_{m,b}^{(l)}) -1 / (2BL) \right), \quad b=1,\dots,B, \quad l=1,\dots,L.
\]
Here $\hat{\theta}_{m,b}^{(l)}$ is the $l$-th single-split estimator computed from the $b$-th subsample. Further, because $\tilde{H}_{b,l} \approx \hat{H}_{b,l} = \sqrt{m/L} (\hat{\theta}_{m,b}^{(l)} - \theta_0) / \sigma$ with high probability, we get the following estimator for $\sigma$ as a by-product:
\[
\hat{\sigma}_{\text{ls}} := \sqrt{m/L} / \hat{\beta}_1.
\]
Here $\hat{\beta}_1$ the slope from least squares $\tilde{H}_{b,l} \sim \beta_1 \hat{\theta}_{m,b}^{(l)} +\beta_0$.
We perform this least squares regression only using those points for which
$\mathbb{F}_{\{\hat{\theta}_{m,b}^{(l)}\}} (\hat{\theta}_{m,b}^{(l)}) \in (\epsilon/2,1-\epsilon/2) $ for some small $\epsilon > 0$ to avoid high-leverage points influencing the fit too much; in our numerical experiments we use $\epsilon = 0.1$.
\begin{proposition} \label{prop:sigma}
Suppose $\hat{\theta}_n^{(1)}, \dots \hat{\theta}_n^{(L)}$ are exchangeable and \cref{eqs:dml-single} holds. Then, $\hat{\sigma}_{\text{ls}} \rightarrow_p \sigma$. 
\end{proposition}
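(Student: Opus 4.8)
The plan is to reduce the claim to the approximation $\tilde{H}_{b,l} \approx \hat{H}_{b,l}$ that already underlies \cref{thm:main-null}, and then to treat the truncated least-squares slope by a continuous-mapping argument; throughout I would work under the standing subsampling conventions ($m = m_n \to \infty$, $m/n \to 0$, $B = \JJ\floor{n/m} \to \infty$, $\mathcal{B}$ generated by \cref{alg:gen-tuple}). First I would pass to the infeasible normalised estimator $Z_{b,l} := \sqrt{m/L}\,(\hat{\theta}_{m,b}^{(l)} - \theta_0)/\sigma$, which is exactly $\hat{H}_{b,l}$. Since $\hat{\theta}_{m,b}^{(l)} = \theta_0 + \sigma\sqrt{L/m}\,Z_{b,l}$ is an increasing affine function of $Z_{b,l}$, the pooled empirical distribution function is unchanged, so the restriction set $\mathcal{R} := \{(b,l): \mathbb{F}_{\{\hat{\theta}_{m,b}^{(l)}\}}(\hat{\theta}_{m,b}^{(l)}) \in (\epsilon/2, 1-\epsilon/2)\}$ used for the regression is the same whether one regresses $\tilde{H}_{b,l}$ on $\hat{\theta}_{m,b}^{(l)}$ or on $Z_{b,l}$, and the slopes are related by $\hat{\beta}_1 = (\sqrt{m/L}/\sigma)\,\hat{\gamma}_1$, where $\hat{\gamma}_1$ is the least-squares slope of $\tilde{H}_{b,l}$ on $Z_{b,l}$ over $(b,l) \in \mathcal{R}$. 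Hence $\hat{\sigma}_{\text{ls}} = \sqrt{m/L}/\hat{\beta}_1 = \sigma/\hat{\gamma}_1$, and it suffices to show $\hat{\gamma}_1 \rightarrow_p 1$.

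The key step is the approximation $\sup_{(b,l)\in\mathcal{R}}|\tilde{H}_{b,l} - Z_{b,l}| \rightarrow_p 0$. Write $\hat{F} := \mathbb{F}_{\{Z_{b,l}\}}$ for the pooled empirical distribution function, so that $\tilde{H}_{b,l} = \Phi^{-1}(\hat{F}(Z_{b,l}) - 1/(2BL))$. Applying \cref{eqs:dml-single} at sample size $m$ shows the common marginal law of $Z_m^{(l)}$ converges to $\N(0,1)$, and -- exactly as in the proof of \cref{thm:main-null}, using exchangeability of $(\hat{\theta}_n^{(1)},\dots,\hat{\theta}_n^{(L)})$ and the independence of statistics computed on the $\JJ$ blocks of non-overlapping subsamples from \cref{alg:gen-tuple} -- one obtains $\|\hat{F} - \Phi\|_\infty \rightarrow_p 0$. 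On $\mathcal{R}$ the argument $\hat{F}(Z_{b,l}) - 1/(2BL)$ lies, with probability tending to one, in a fixed compact subinterval of $(0,1)$ on which $\Phi^{-1}$ is Lipschitz with some constant $C_\epsilon$, so that
\begin{align*}
\sup_{(b,l)\in\mathcal{R}} \big|\tilde{H}_{b,l} - Z_{b,l}\big|
&= \sup_{(b,l)\in\mathcal{R}} \big|\Phi^{-1}\big(\hat{F}(Z_{b,l}) - \tfrac{1}{2BL}\big) - \Phi^{-1}\big(\Phi(Z_{b,l})\big)\big| \\
&\leq C_\epsilon\Big(\big\|\hat{F} - \Phi\big\|_\infty + \tfrac{1}{2BL}\Big) \;\rightarrow_p\; 0,
\end{align*}
using $\Phi^{-1}\!\circ\Phi = \mathrm{id}$; the same uniform consistency also confines $\{Z_{b,l}: (b,l)\in\mathcal{R}\}$ to a fixed compact interval with probability tending to one.

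To finish I would run a continuous-mapping argument on the slope. Writing $\hat{\gamma}_1 = 1 + \widehat{\mathrm{Cov}}_{\mathcal{R}}(Z, \tilde{H} - Z)/\widehat{\mathrm{Var}}_{\mathcal{R}}(Z)$, where $\widehat{\mathrm{Cov}}_{\mathcal{R}}$ and $\widehat{\mathrm{Var}}_{\mathcal{R}}$ are sample covariance and variance over $(b,l)\in\mathcal{R}$, Cauchy--Schwarz together with the previous display gives $|\widehat{\mathrm{Cov}}_{\mathcal{R}}(Z, \tilde{H} - Z)| \leq \sqrt{\widehat{\mathrm{Var}}_{\mathcal{R}}(Z)}\,\sup_{\mathcal{R}}|\tilde{H} - Z| \rightarrow_p 0$ (the first factor being bounded since the regressors on $\mathcal{R}$ lie in a fixed compact set). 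Meanwhile $\|\hat{F} - \Phi\|_\infty \rightarrow_p 0$ forces the empirical distribution of $\{Z_{b,l}\}$ restricted to its central $(\epsilon/2, 1-\epsilon/2)$-quantile window to converge weakly in probability to the $\N(0,1)$ law truncated to $[\Phi^{-1}(\epsilon/2), \Phi^{-1}(1-\epsilon/2)]$; since $z$ and $z^2$ are bounded and continuous there, the first two sample moments converge and $\widehat{\mathrm{Var}}_{\mathcal{R}}(Z)$ tends in probability to the (strictly positive) variance of that truncated normal. Hence $\hat{\gamma}_1 \rightarrow_p 1$, and $\hat{\sigma}_{\text{ls}} = \sigma/\hat{\gamma}_1 \rightarrow_p \sigma$.

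The main obstacle is really just the uniform consistency $\|\hat{F} - \Phi\|_\infty \rightarrow_p 0$ of the pooled empirical distribution of the normalised subsample estimators; but this is precisely the marginal statement already needed to legitimise the rank transform in \cref{thm:main-null}, so I would import it from there. The rest is bookkeeping around the truncation, whose two roles I would make explicit: it keeps the argument of $\Phi^{-1}$ bounded away from $0$ and $1$ (so $\Phi^{-1}$ is Lipschitz and the transfer error in the display is uniformly $o_p(1)$), and it produces a non-degenerate limiting design (so the least-squares denominator stays bounded away from zero). Without the truncation one would instead have to control the extreme order statistics of $\{Z_{b,l}\}$, which is considerably more delicate.
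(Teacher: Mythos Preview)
Your proposal is correct and follows essentially the same route as the paper: reduce to showing the least-squares slope of $\tilde{H}_{b,l}$ on $\hat{H}_{b,l}=Z_{b,l}$ over the truncation set tends to $1$, write the slope as $1 + \widehat{\Cov}_{\mathcal R}(Z,\tilde H - Z)/\widehat{\var}_{\mathcal R}(Z)$, and control numerator and denominator via $\sup_{\mathcal R}|\tilde H - Z| = o_p(1)$ together with convergence of the restricted empirical variance to a positive constant. The only minor technical difference is how $\sup_{\mathcal R}|\tilde H - Z| = o_p(1)$ is obtained: the paper imports the bound $1/|\Phi^{-1}(\epsilon^{F,P}_n)|$ from the Taylor/Mills-ratio argument in case~(2) of \cref{lem:rank-approx}, whereas you argue directly via the Lipschitz continuity of $\Phi^{-1}$ on a fixed compact subinterval of $(0,1)$; both yield the same conclusion.
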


Consequently, our rank-transformed subsampling confidence interval for $\theta_0$ is 
\begin{equation} \label{eqs:rank-dml-CI}
\left[\hat{\theta}_n^{\text{dml}} - \sqrt{L / n}\, \hat{\sigma}_{\text{ls}}\, \tilde{G}_n^{-1}(1-\alpha/2),\: \hat{\theta}_n^{\text{dml}} - \sqrt{L / n}\, \hat{\sigma}_{\text{ls}}\, \tilde{G}_n^{-1}(\alpha/2) \right].
\end{equation}

As a simple example, consider inferring $\theta_0$ in a partially linear model 
\begin{align*}
D = m_0(X) + V, \quad & \E[V \mid X] = 0,\\
Y = D \theta_0 + g_0(X) + \xi, \quad & \E[\xi \mid D, X] = 0
\end{align*}
using  $\psi$  given by Robinson's score function \citep[\S 4.1]{chernozhukov2018dml}
\[ \psi(W; \theta, \eta) = \{Y - l(X) - \theta (D - m(X)) \}(D - m(X)),\quad l(X):=\E[Y \mid X],\: m(X) = \E[D \mid X].\]
In our numerical example, we choose heteroscedastic errors $V$ and $\xi$ (see \cref{app:dml} for details) and fit nuisance parameters $\eta := (l, m)$ with random forests. \cref{tab:simu-dml} compares the coverage of our confidence interval \cref{eqs:rank-dml-CI} with the coverage of standard DML normal confidence interval based on \cref{eqs:dml-clt}. For the latter, we use a plugin estimate of $\sigma$ that replaces expectations in \cref{eqs:sigma-dml} by their empirical counterparts.
While DML confidence intervals tend to undercover for smaller samples, where $\rho(L-1)$ is larger, our confidence intervals are well calibrated. 

When one of the two nuisance functions above is misspecified, under assumptions, $\hat{\theta}_n^{\text{dml}}$ can still be consistent and asymptotically normal due to the double robustness of the estimating equation. However, the estimators $\hat{\theta}_n^{(1)}, \dots \hat{\theta}_n^{(L)}$ are correlated, even asymptotically.
In such a case, the standard DML confidence interval no longer has the desired asymptotic coverage but our confidence interval \cref{eqs:rank-dml-CI} should still work. See \cref{app:dml} for such an example; see also \citet{benkeser2017doubly} for related methods. 

\begin{table}[!htb]
\begin{center}
\caption{Coverage of nominal $95\%$ confidence intervals (brackets: median width of intervals; $\rho$ is the between-fold correlation of $T_n^{(l)}$)} \label{tab:simu-dml}
\begin{tabular}{cccccccccc}
\toprule[1.2pt]
\multicolumn{1}{c}{}&&\multicolumn{2}{c}{$n=500$}&&\multicolumn{2}{c}{$n=1000$}&&\multicolumn{2}{c}{$n=2000$}\\ 
\cline{3-4}\cline{6-7}\cline{9-10}
method&& $L=2$ & $L=5$ && $L=2$ & $L=5$ && $L=2$ & $L=5$\\ 
\midrule
$\rho(L-1)$ && 0.46 & 0.31 && 0.36 &  0.18 &&  0.25 &  0.14 \\ 
\midrule
\cref{eqs:rank-dml-CI} &&0.94 [0.21]&0.93 [0.19]&&0.95 [0.15]&0.95 [0.13]&&0.96 [0.10]&0.95 [0.09]\\ 
DML &&0.86 [0.15]&0.88 [0.15]&&0.88 [0.11]&0.92 [0.11]&&0.91 [0.08]&0.92 [0.08]\\ 
\bottomrule[1.2pt]
\end{tabular}
\end{center}
\end{table}

\section{Discussion} \label{sec:discuss}
Rank-transformed subsampling provides a framework that properly aggregates results from multiple applications of a randomised statistical test. Backed by this framework, we are free to design explicitly randomised tests that employ data splitting, resampling or any other random processing to solve difficult problems. In particular, data splitting and resampling can often reduce a complex hypothesis to a simpler one, for which an off-the-shelf test can be repurposed in a ``plug-and-play'' fashion.
Even though the resulting ``single-split'' randomised procedure may have high variability or low power, as we have demonstrated, these aspects can be significantly improved by our aggregation algorithms.

There are several aspects worth further investigation. Firstly, it is of interest to go beyond \cref{alg:agg-test-adaptive} and study how to learn an optimal aggregation function. Secondly, with a large $L$ and $B$, constructing $\hat{\mathbf{H}}$ can be computationally intensive, although it is embarrassingly parallelisable across both rows and columns. It is desirable to develop faster approximations to $\hat{\mathbf{H}}$. Thirdly, we find $m = \floor{n / \log n}$ works surprisingly well in our numerical studies,
in spite of only guaranteeing around 7 independent subsamples for $n=1000$. Essentially, rank-transformed subsampling is able to approximate the null sampling distribution far better than the theory might suggest, and it would be interesting to investigate why this might be the case, and whether there is a better choice of the subsample size (e.g., using the rule of \citealp{bickel2008choice}).

\paragraph{Acknowledgements} The authors thank Richard Moulange and Sach Mukherjee for help with the TCGA data, and Richard Samworth, Qingyuan Zhao and four anonymous referees for their helpful comments.
Both authors were funded by EPSRC grants EP/N031938/1 and EP/R013381/1.

\bibliographystyle{plainnat}

\newpage
\appendix

\section{Gaussian location experiment} \label{app:gauss-loc}
Here we derive expressions for the power and probability of replication relating to \cref{ex:gauss-loc}. The sequence $T^{(1)}, T^{(2)}, \dots$ follows a Gaussian process with mean and covariance
\[ \mu_i = \mu, \quad \Sigma_{\rho} = \begin{cases} 1, &\quad i=j \\ \rho, & \quad i\neq j \end{cases}.\]
It follows that 
\[ \begin{pmatrix} \bar{T}_{1:L} \\ \bar{T}_{(1+L):2L} \end{pmatrix} := 
\begin{pmatrix} L^{-1} \sum_{l=1}^L T^{(l)} \\ L^{-1} \sum_{l=L+1}^{2L} T^{(l)} \end{pmatrix} \sim \N\left(\begin{pmatrix} \mu \\ \mu \end{pmatrix}, \begin{pmatrix} 1/L + \rho (L-1)/L & \rho \\ \rho & 1/L + \rho (L-1)/L \end{pmatrix} \right).
\]
The critical value is
\[ \bar{t}_{\alpha} = \sqrt{1/L + \rho (L-1) / L} \, z_{\alpha}, \]
which reduces to $z_{\alpha}$ when $L=1$. 
The power function of the aggregated test given by
\begin{align*}
\E \phi_{\bar{T}_{1:L}} = \pr(\bar{T}_{1:L} > \bar{t}_{\alpha}) &= \Phi\left( \frac{\mu}{\sqrt{1/L + \rho(L-1)/L}} - z_{\alpha} \right) \\
&= \alpha + \phi(z_{\alpha}) \frac{\mu}{\sqrt{\frac{1}{L} + \frac{L-1}{L} \rho}} + o(\mu)
\end{align*}
as $\mu \searrow 0$. To derive the expression for $\pr(\phi_{\bar{T}_{1:L}} \neq \phi_{\bar{T}_{(L+1):2L}})$, consider the representation
\[  \begin{pmatrix} \bar{T}_{1:L} \\ \bar{T}_{(1+L):2L} \end{pmatrix} =_d  \begin{pmatrix} \mu + \sqrt{\rho} Z_0 + \sqrt{(1-\rho) / L} Z_1 \\  \mu + \sqrt{\rho} Z_0 + \sqrt{(1-\rho) / L} Z_2 \end{pmatrix}, \]
where $Z_0, Z_1, Z_2$ are independent standard normals. Then we can write
\begin{align*}
& \quad \pr(\bar{T}_{1:L} > \bar{t}_{\alpha}, \bar{T}_{(L+1):2L} > \bar{t}_{\alpha}) \\
& = \E \pr(\sqrt{(1-\rho) / L} Z_1 > \bar{t}_{\alpha} - \mu - \sqrt{\rho} Z_0, \sqrt{(1-\rho) / L} Z_2 > \bar{t}_{\alpha} - \mu - \sqrt{\rho} Z_0 \mid Z_0) \\
&= \E \left\{ 1 - \Phi\left(\frac{\bar{t}_{\alpha} - \mu - \sqrt{\rho} Z_0}{\sqrt{(1-\rho) / L}} \right) \right \}^2 = \E \Phi^2 \left(\frac{\mu + \sqrt{\rho} Z_0 - \bar{t}_{\alpha}}{\sqrt{(1-\rho) / L}} \right),
\end{align*}
and similarly
\[ \pr(\bar{T}_{1:L} \leq \bar{t}_{\alpha}, \bar{T}_{(L+1):2L} \leq \bar{t}_{\alpha}) =  \E \Phi^2 \left(- \frac{\mu + \sqrt{\rho} Z_0 - \bar{t}_{\alpha}}{\sqrt{(1-\rho) / L}} \right). \]
Thus
\begin{gather*}
\pr\left(\phi_{\bar{T}_{1:L}} \neq \phi_{\bar{T}_{(L+1):2L}} \right) = 1 - \int \left\{\Phi^2(y) + \Phi^2(-y) \right\} \phi(x) \dd x, \\
y=\frac{\mu + \sqrt{\rho} z - \sqrt{1/L + \rho (L-1) / L} \, z_{\alpha}}{\sqrt{(1-\rho)/L}},
\end{gather*}
where $\phi(x)$ is the standard Gaussian density evaluated and $x$. $\pr(\phi_{T^{(1)}} \neq \phi_{T^{(2)}})$ is given by the same expression with $L=1$.

\section{Tail bound for the average statistic} \label{app:aggregation}
Let $T^{(1)}, \dots, T^{(L)}$ be exchangeable statistics with $T^{(1)} \sim F_0$ under the null.
In this Appendix, we consider aggregation rules of the form 
\begin{equation} \label{eqs:avg-S}
S := c \, \bar{T} = c \, (T^{(1)} + \dots + T^{(L)}) / L,
\end{equation}
that guarantee finite sample type-I error control when a test is constructed by comparing $S$ to the $\alpha$ quantile of $F_0$. No assumption is made on the copula of $(T^{(1)}, \dots, T^{(L)})$. We specifically consider two cases: $F_0 = \unif(0,1)$ and $F_0 = \N(0,1)$. We summarise results from the literature on the former case, and also develop new results on the latter case. 

\subsection{The $F_0 = \unif(0,1)$ case}
It is well-known that when each $T^{(l)}$ is a valid p-value, \cref{eqs:avg-S} with $c=2$ is also a valid p-value; see \citet{ruschendorf1982random,meng1994posterior,vovk2021values}. It has been recently shown that the constant $c=2$ cannot be improved under an exchangeability condition on $(T^{(1)}, \dots, T^{(L)})$ \citep{choi2022avg}.  

\subsection{The $F_0 = \N(0,1)$ case}
Without loss of generality, suppose the null hypothesis is rejected whenever $S > \Phi(1-\alpha)$. 

\begin{theorem}[Averaging for $Z$-statistics] \label{thm:avg-Z}
Let $T^{(1)}, \dots, T^{(L)}$ be exchangeable statistics with $T^{(1)} \sim \N(0,1)$ under the null. Then, for $0 < \alpha \leq 1 - \Phi(1) \approx 0.159$, rejecting the null whenever $S = \bar{T} / 2 > \Phi(1-\alpha)$ controls the type-I error below $\alpha$. 
\end{theorem}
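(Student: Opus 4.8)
The plan is to prove the equivalent statement $\pr(\bar{T} > 2 z_\alpha) \le \alpha$, where $\bar{T} := (T^{(1)} + \dots + T^{(L)})/L$ and $z_\alpha := \Phi^{-1}(1-\alpha)$; note that the hypothesis $\alpha \le 1 - \Phi(1)$ is exactly $z_\alpha \ge 1$, and that $1 - \Phi(z_\alpha) = \alpha$. Write $\bar\Phi := 1 - \Phi$ and let $\phi$ be the standard normal density.

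First I would reduce the tail probability of the average to a marginal expectation by a convexity/linearisation argument that uses only exchangeability. Fix $z \ge 1$ and set $h(t) := (t - z)_+ / z$, which is non-negative, non-decreasing and convex, with $h(2z) = 1$. By Jensen's inequality, $\frac{1}{L}\sum_{l=1}^L h(T^{(l)}) \ge h(\bar{T})$; since $h$ is non-decreasing with $h(2z) = 1$, the right-hand side is at least $1$ on the event $\{\bar{T} > 2z\}$, while $h \ge 0$ handles the complementary event. Hence the pointwise bound $\I\{\bar{T} > 2z\} \le \frac{1}{L}\sum_{l=1}^L h(T^{(l)})$ holds. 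Taking expectations and using exchangeability (so $\E h(T^{(l)})$ does not depend on $l$) gives $\pr(\bar{T} > 2z) \le \E h(T^{(1)}) = z^{-1}\E[(T^{(1)} - z)_+]$.

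Next I would compute the Gaussian truncated mean: $\E[(T^{(1)} - z)_+] = \int_z^\infty (t-z)\phi(t)\, dt = \phi(z) - z\bar\Phi(z)$, using $\int_z^\infty t\phi(t)\, dt = \phi(z)$ and $\int_z^\infty \phi(t)\, dt = \bar\Phi(z)$. Thus $\pr(\bar{T} > 2z) \le \phi(z)/z - \bar\Phi(z)$, and it remains to verify $\phi(z)/z - \bar\Phi(z) \le \bar\Phi(z)$ for $z \ge 1$, i.e. $\phi(z) \le 2z\,\bar\Phi(z)$. This follows from the standard Mills-ratio lower bound $\bar\Phi(z) \ge \frac{z}{z^2+1}\phi(z)$ (verified by checking that $\bar\Phi(z) - \frac{z}{z^2+1}\phi(z)$ has derivative $-2\phi(z)/(z^2+1)^2 < 0$ and tends to $0$ as $z \to \infty$): it yields $2z\,\bar\Phi(z) \ge \frac{2z^2}{z^2+1}\phi(z) \ge \phi(z)$ precisely when $z^2 \ge 1$. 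Specialising to $z = z_\alpha \ge 1$ gives $\pr(\bar{T} > 2 z_\alpha) \le \bar\Phi(z_\alpha) = \alpha$, which is the claim.

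The main obstacle is really just the one non-mechanical step: identifying the correct linearising test function. The choice $h(t) = (t-z)_+/z$ -- equivalently, placing the kink of the piecewise-linear majorant at $a = z$ rather than at some other value $a < 2z$ -- is what makes the resulting bound $\phi(z)/z - \bar\Phi(z)$ equal $\bar\Phi(z)$ at exactly $z = 1$, and hence recovers the sharp threshold $\alpha \le 1 - \Phi(1)$; a generic choice of $a$ would give a worse admissible range. A secondary point is that the whole argument must avoid any use of the joint law beyond exchangeability (in particular, no variance or Gaussian-copula bound may be invoked), which the convexity argument achieves automatically.
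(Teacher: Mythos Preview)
Your proof is correct and shares the paper's core idea: use exchangeability and Jensen to reduce the tail of $\bar{T}$ to a bound involving $\E(T^{(1)}-z)_+$, then close with a Mills-ratio inequality, with the kink placed at $z = s/2$. The packaging differs in two ways worth noting. First, the paper routes the argument through the convex-order framework (showing $\bar{T} \leqcx \N(0,1)$, invoking the integrated survival function $\pi$, and proving a general lemma $\pr(\bar{Z}>s) < (1-\Phi(\beta))/\{\beta(s-\beta)\}$ before specialising $\beta = s/2$), whereas your direct Markov-style majorant $\I\{\bar{T}>2z\} \le (t-z)_+/z$ collapses all of that into one line. Second, the paper bounds $\pi_Z(z) < (1-\Phi(z))/z$ via the Mills \emph{upper} bound inside an integral, while you compute $\E(Z-z)_+ = \phi(z) - z\bar\Phi(z)$ exactly and then invoke the Mills \emph{lower} bound; both routes hit the same threshold $z\ge 1$, but yours is slightly sharper along the way and avoids an unnecessary integration-by-parts detour.
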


We will show \cref{thm:avg-Z} using properties of convex ordering on random variables. 
\begin{definition}[Convex order]
For random variables $X$ and $Y$, we say $X \leqcx Y$ if $\E f(X) \leq \E f(Y)$ for every convex $f$.
\end{definition}

For a random variable $X$, let its integrated survival function be defined as
\[ \pi_X(t) = \int_{t}^{\infty} \bar{F}_X(s) \dd s =\int_{t}^{\infty} (1 - F_X(s)) \dd s = \E (X-t)_{+}. \]
It can be shown that  $\pi_X(t)$ is decreasing and convex. 

\begin{lemma}[{\citet[\S1.5]{muller2002comparison}}] \label{lem:cx-survival}
$X \leqcx Y$ if and only if $\pi_X(t) \leq \pi_Y(t)$ for every $t$ and $\E X = \E Y$. 
\end{lemma}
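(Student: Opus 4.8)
The plan is to prove the two implications separately; the forward one amounts to feeding the definition of $\leqcx$ the right test functions, while the converse rests on approximating an arbitrary convex function by nonnegative combinations of the ``hockey-stick'' functions $x \mapsto (x-t)_+$ together with affine functions.

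For the ``only if'' direction I would simply specialise the definition. Applying it to the two convex functions $f(x) = x$ and $f(x) = -x$ gives $\E X \le \E Y$ and $\E X \ge \E Y$, hence $\E X = \E Y$; applying it to the convex function $x \mapsto (x-t)_+$ for each fixed $t$ gives $\pi_X(t) = \E(X - t)_+ \le \E(Y - t)_+ = \pi_Y(t)$. (Incidentally, viewing $\pi_X(t) = \E(X-t)_+$ as an average of the nonincreasing convex maps $t \mapsto (x-t)_+$ is also exactly what yields the monotonicity and convexity of $\pi_X$ claimed just above the lemma.)

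For the ``if'' direction, assume $\E X = \E Y$ and $\pi_X \le \pi_Y$ pointwise; the goal is $\E f(X) \le \E f(Y)$ for every convex $f$, the expectations being read in $(-\infty,+\infty]$, which is unambiguous since $X, Y$ are integrable and every convex $f$ dominates some affine function. I would fix such an $f$, take $D \subset \R$ countable and dense, choose a supporting line $\ell_d$ of $f$ at each $d \in D$, and set $f_n := \max(\ell_{d_1}, \dots, \ell_{d_n})$ for an enumeration $D = \{d_1, d_2, \dots\}$. Then each $f_n$ is piecewise-linear and convex, $\ell_{d_1} \le f_n \uparrow f$ pointwise (the limit equals $f$ by continuity of $f$ and local boundedness of its subdifferential), and a piecewise-linear convex function with slopes $m_0 < \dots < m_k$ and breakpoints $t_1 < \dots < t_k$ can be written explicitly as $m_0 x + c + \sum_{i=1}^k (m_i - m_{i-1})(x - t_i)_+$ with $m_i - m_{i-1} > 0$. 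Combining the already-established inequalities for $x \mapsto x$ and $x \mapsto (x-t)_+$, and using $\E X = \E Y$ to kill the linear term, gives $\E f_n(X) \le \E f_n(Y)$; since $f_n - \ell_{d_1} \ge 0$ increases to $f - \ell_{d_1} \ge 0$, monotone convergence yields $\E f_n(X) \to \E f(X)$ and likewise for $Y$, so $\E f(X) \le \E f(Y)$. (Alternatively, one may represent $f$ itself via its monotone right derivative and Fubini as a nonnegative mixture of the two families $\{(x-t)_+\}_{t \ge x_0}$ and $\{(t-x)_+\}_{t \le x_0}$ plus an affine term, based at a minimiser $x_0$ of $f$, using in addition the identity $\E(t-X)_+ = \pi_X(t) - \E X + t$ to order the second family, and conclude by Tonelli.)

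The hard part will be this converse direction: one must handle convex $f$ that are unbounded, check that the piecewise-linear minorants genuinely increase to $f$ (respectively, that the Fubini/Tonelli interchange in the integral representation is licit — which is why one passes to the nonnegative function $f - \ell_{d_1}$), and keep straight the convention under which $\E f(X)$ is defined when it may equal $+\infty$. The forward direction and the bookkeeping identity $\E(t-X)_+ = \pi_X(t) - \E X + t$ are routine.
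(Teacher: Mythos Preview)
The paper does not prove this lemma; it simply cites it as a known result from \citet[\S1.5]{muller2002comparison}. Your argument is correct and is essentially the standard textbook proof found in that reference: the forward direction specialises the definition to the test functions $x$, $-x$, and $(x-t)_+$, while the converse approximates an arbitrary convex function by an increasing sequence of piecewise-linear convex minorants, writes each of these as an affine part plus a nonnegative combination of hockey-stick functions, and passes to the limit by monotone convergence after subtracting an affine minorant. There is nothing to compare here, and your handling of the integrability issues (reading $\E f(X)$ in $(-\infty,+\infty]$ once $X$ is integrable and $f$ is minorised by an affine function) is exactly the care needed to make the argument rigorous.
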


\begin{lemma}[Mills' ratio] \label{lem:mills}
For $\lambda>0$, $\frac{\lambda}{\lambda^2+1} \phi(\lambda) < 1 - \Phi(\lambda) < \frac{1}{\lambda} \phi(\lambda)$. 
\end{lemma}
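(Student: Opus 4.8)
The plan is to handle each of the two inequalities by an elementary monotonicity argument: for each bound we form a difference function on $(0,\infty)$ that tends to $0$ as $\lambda\to\infty$ and whose derivative we show is strictly sign-definite. Throughout write $\Psi(\lambda):=1-\Phi(\lambda)=\int_{\lambda}^{\infty}\phi(t)\,\dd t$, and recall the identities $\Psi'(\lambda)=-\phi(\lambda)$ and $\phi'(\lambda)=-\lambda\phi(\lambda)$. Note also that each of $\Psi(\lambda)$, $\phi(\lambda)/\lambda$ and $\lambda\phi(\lambda)/(\lambda^2+1)$ tends to $0$ as $\lambda\to\infty$, since the Gaussian density decays faster than any polynomial; this is what makes the monotonicity argument conclusive.

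For the upper bound, I would argue directly: for every $t\ge\lambda>0$ we have $\phi(t)\le (t/\lambda)\phi(t)$, with strict inequality on $(\lambda,\infty)$, so integrating over $(\lambda,\infty)$ gives
\[ \Psi(\lambda) < \frac{1}{\lambda}\int_{\lambda}^{\infty} t\,\phi(t)\,\dd t = \frac{1}{\lambda}\big[-\phi(t)\big]_{\lambda}^{\infty} = \frac{\phi(\lambda)}{\lambda}. \]
(Equivalently one may set $k(\lambda):=\phi(\lambda)/\lambda-\Psi(\lambda)$ and compute, using $\phi'(\lambda)=-\lambda\phi(\lambda)$, that $k'(\lambda)=-\phi(\lambda)(\lambda^2+1)/\lambda^2+\phi(\lambda)=-\phi(\lambda)/\lambda^2<0$, whence $k$ decreases strictly to $0$ and is therefore positive.)

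For the lower bound, I would set $g(\lambda):=\Psi(\lambda)-\lambda\phi(\lambda)/(\lambda^2+1)$; the virtue of this particular choice is that the numerator of $g'$ collapses. A short differentiation using $\phi'(\lambda)=-\lambda\phi(\lambda)$ gives
\[ g'(\lambda) = -\phi(\lambda)\left[1 + \frac{1-2\lambda^2-\lambda^4}{(\lambda^2+1)^2}\right] = -\phi(\lambda)\,\frac{(\lambda^2+1)^2 + 1 - 2\lambda^2 - \lambda^4}{(\lambda^2+1)^2} = -\frac{2\phi(\lambda)}{(\lambda^2+1)^2} < 0. \]
Hence $g$ is strictly decreasing on $(0,\infty)$, and since $g(\lambda)\to 0$ as $\lambda\to\infty$ we conclude $g(\lambda)>0$ for all $\lambda>0$, i.e. $\lambda\phi(\lambda)/(\lambda^2+1)<\Psi(\lambda)$. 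Combining the two displays gives the claimed two-sided bound, with both inequalities strict.

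There is no real obstacle here, as the statement is classical and the argument is routine calculus; the one point worth flagging is why the ``obvious'' route to the lower bound is insufficient. A single integration by parts gives $\Psi(\lambda)=\int_{\lambda}^{\infty} t^{-1}\cdot t\phi(t)\,\dd t = \phi(\lambda)/\lambda - \int_{\lambda}^{\infty} t^{-2}\phi(t)\,\dd t$, and estimating the remainder crudely only yields $\Psi(\lambda) > (\lambda^2-1)\phi(\lambda)/\lambda^3$, which is strictly weaker than $\lambda\phi(\lambda)/(\lambda^2+1)$ (cross-multiplying reduces to $\lambda^4-1<\lambda^4$). So one genuinely needs either the tailored monotone function $g$ above or a sharper handling of the remainder integral; I would take the $g$ argument, since the miraculous simplification of the numerator to the constant $2$ makes the proof two lines.
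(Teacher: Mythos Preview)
Your proof is correct. The paper does not actually give its own proof of this lemma; it simply states Mills' ratio as a classical fact and uses it downstream, so there is nothing to compare against. Your arguments for both inequalities are standard and complete, and the derivative computation for $g$ is accurate.
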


The next lemma derives a tail bound for a random variable dominated by a standard Gaussian in the convex order; see \citet[Lemma 1]{meng1994posterior} for a similar result on convex order with respect to a $\unif(0,1)$ random variable. 
\begin{lemma} \label{lem:tail-cx-normal}
Suppose $\bar{Z} \leqcx Z$ for $Z \sim \N(0,1)$. Then we have
\[ \mathbb{P}(\bar{Z} > s) = \bar{F}_{\bar{Z}}(s) < \frac{1 - \Phi(\beta)}{\beta (s-\beta)}, \quad 0 < \beta < s. \]
\end{lemma}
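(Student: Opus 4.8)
The plan is to combine a Markov-type bound phrased through the integrated survival function $\pi_{\bar Z}$ with the convex-order characterisation in \cref{lem:cx-survival} and Mills' ratio. Fix $0 < \beta < s$.

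\emph{Step 1 (truncated Markov bound).} On the event $\{\bar Z > s\}$ we have $(\bar Z - \beta)_+ = \bar Z - \beta > s - \beta$, so
\[
\pi_{\bar Z}(\beta) = \E(\bar Z - \beta)_+ \;\geq\; \E\!\left[(\bar Z - \beta)_+\,\mathbb{1}\{\bar Z > s\}\right] \;\geq\; (s-\beta)\,\mathbb{P}(\bar Z > s),
\]
hence $\mathbb{P}(\bar Z > s) \leq \pi_{\bar Z}(\beta)/(s-\beta)$.

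\emph{Step 2 (use the convex order).} Since $\bar Z \leqcx Z$ with $\E\bar Z = \E Z = 0$, \cref{lem:cx-survival} gives $\pi_{\bar Z}(\beta) \leq \pi_Z(\beta)$.

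\emph{Step 3 (bound $\pi_Z$ via Mills' ratio).} Write $\pi_Z(\beta) = \int_\beta^\infty \big(1-\Phi(u)\big)\,\dd u$. By the upper bound in \cref{lem:mills}, $1-\Phi(u) < \phi(u)/u$ for $u>0$, so
\[
\pi_Z(\beta) \;<\; \int_\beta^\infty \frac{\phi(u)}{u}\,\dd u \;<\; \frac{1}{\beta}\int_\beta^\infty \phi(u)\,\dd u \;=\; \frac{1-\Phi(\beta)}{\beta},
\]
the middle inequality being strict because $u \mapsto 1/u$ is strictly decreasing. (Alternatively one may compute $\pi_Z(\beta) = \phi(\beta) - \beta\big(1-\Phi(\beta)\big)$ and invoke the \emph{lower} Mills bound, which gives the same conclusion.) Combining the three steps,
\[
\mathbb{P}(\bar Z > s) \;\leq\; \frac{\pi_{\bar Z}(\beta)}{s-\beta} \;\leq\; \frac{\pi_Z(\beta)}{s-\beta} \;<\; \frac{1-\Phi(\beta)}{\beta(s-\beta)},
\]
as claimed.

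I do not anticipate a genuine obstacle here; the argument is short once one identifies that the integrated survival function is the right object to transport the convex-order hypothesis. The only point requiring a little care is the strictness of the final inequality, which is inherited from the strict inequality in Mills' ratio (\cref{lem:mills}), and the observation that $\pi_Z$ is finite and well-defined, which follows from the Gaussian tail. If one wants to avoid the explicit integral estimate in Step 3 one can instead use the closed form $\pi_Z(\beta)=\phi(\beta)-\beta(1-\Phi(\beta))$ together with $1-\Phi(\beta) > \tfrac{\beta}{\beta^2+1}\phi(\beta)$ to get $\pi_Z(\beta) < \tfrac{1}{\beta}(1-\Phi(\beta))$ directly.
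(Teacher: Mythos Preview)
Your proof is correct and follows essentially the same three-step architecture as the paper: bound the tail of $\bar Z$ via the integrated survival function $\pi_{\bar Z}$, pass to $\pi_Z$ using the convex-order characterisation in \cref{lem:cx-survival}, and then control $\pi_Z(\beta)$ by Mills' ratio. The differences are purely in execution. In Step~1 the paper instead exploits convexity of $\pi_{\bar Z}$, writing $\pi_{\bar Z}(\beta) \geq \pi_{\bar Z}(s) + \bar F_{\bar Z}(s)(s-\beta)$ and then dropping the nonnegative term $\pi_{\bar Z}(s)$; your Markov-type argument reaches the same inequality more directly without invoking convexity. In Step~3 the paper obtains $\pi_Z(\beta) < (1-\Phi(\beta))/\beta$ through an integration-by-parts calculation, whereas your route via $1-\Phi(u) < \phi(u)/u$ and $1/u \leq 1/\beta$ is shorter; your parenthetical alternative using the closed form $\pi_Z(\beta)=\phi(\beta)-\beta(1-\Phi(\beta))$ together with the lower Mills bound is perhaps the cleanest of all. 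None of these variations changes the substance of the argument.
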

\begin{proof}
Because $\pi_{\bar{Z}}$ is convex, we have
\[ \pi_{\bar{Z}}(\beta) \geq \pi(s) + \pi_{\bar{Z}}'(s) (\beta -s) = \pi_{\bar{Z}}(s) + \bar{F}_{\bar{Z}}(s) (s-\beta), \]
which gives
\[ \bar{F}_{\bar{Z}}(s) \leq \frac{\pi_{\bar{Z}}(\beta) - \pi_{\bar{Z}}(s)}{s-\beta}, \quad \beta < s.\]
By $\bar{Z} \leqcx Z$ and \cref{lem:cx-survival}, we have $\pi_{\bar{Z}}(\beta) \leq \pi_{Z}(\beta)$ and it follows that
\[ \bar{F}_{\bar{Z}}(s) \leq \frac{\pi_{Z}(\beta) - \pi_{\bar{Z}}(s)}{s-\beta}, \quad \beta < s.\]
Now we bound $\pi_{Z}(\beta)$ for $\beta > 0$. Using \cref{lem:mills}, we have
\begin{equation*}
\begin{split}
\pi_{Z}(\beta) < \int_{\beta}^{+\infty} t \phi(t) \dd t = \int_{\beta}^{+\infty} t \dd \Phi(t) &= [t \Phi(t)]|_{\beta}^{+\infty} - \int_{\beta}^{+\infty} \Phi(t) \dd (1/t) \\
&= - \Phi(\beta) / \beta + \int_{\beta}^{+\infty} \Phi(t) t^{-2} \dd t \\
&\leq - \Phi(\beta) / \beta + \int_{\beta}^{+\infty} t^{-2} \dd t = \frac{1-\Phi(\beta)}{\beta}.
\end{split}
\end{equation*}
Hence, for $0 < \beta < s$, we have
\[ \bar{F}_{\bar{Z}}(s) < \frac{(1-\Phi(\beta)) / \beta - \pi_{\bar{Z}}(s)}{s-\beta} \leq \frac{1 - \Phi(\beta)}{\beta (s-\beta)}.\]
\end{proof}

\begin{proof}[of \cref{thm:avg-Z}]
We first claim that $\bar{T} \leqcx T^{(1)} \sim \N(0,1)$. Indeed, given any convex function $f$, we have 
\[ \E f(\bar{T}) \leq \E L^{-1} \sum_{l=1}^L f(T^{(l)}) = \E f(T^{(1)}) \]
if $\E f(T^{(1)}) < \infty$, where we used Jensen's inequality and exchangeability. 
Now, take $\beta = s/2$ for $s \geq 2$ in \cref{lem:tail-cx-normal}. We have 
\begin{equation*}
\mathbb{P}(\bar{T} / 2 > s / 2) = \mathbb{P}(\bar{T} > s ) < \frac{1 - \Phi(s/2)}{(s/2)^2} \leq 1 - \Phi(s/2). 
\end{equation*}
\end{proof}

From our experience, the aggregation rule \cref{eqs:avg-S} with $c=1/2$ is often quite conservative; see, for example, \cref{fig:power-KR}. It is sometimes argued that using $c=1$ seems to work in practice \citep{wang2020debiased,lundborg2022projected}. \cref{prop:2-alpha} below shows that in certain cases using $c=1$ can lead to type-I error inflated by a factor of two. Further, \cref{prop:tail-cx-normal} shows that, when $\alpha$ is small, the actual type-I error is at most $e \alpha$. However, determining the optimal $c$, which lies between $1/2$ and $1$, remains an open question.

\begin{proposition} \label{prop:2-alpha}
Under the assumption of \cref{thm:avg-Z}, using $S = \bar{T}$ can lead to size $2 \alpha$ when $L$ is even.  
\end{proposition}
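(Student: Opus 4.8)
The plan is to exhibit, for every even $L$, an explicit exchangeable random vector $(T^{(1)},\dots,T^{(L)})$ with $\N(0,1)$ margins under which $\pr(\bar T > z_\alpha) = 2\alpha$, where $z_\beta := \Phi^{-1}(1-\beta)$ and $\bar T := (T^{(1)}+\dots+T^{(L)})/L$. This is a Gaussian analogue of the extremal construction behind the tightness of the doubling rule for averaging p-values. Since $\alpha \le 1-\Phi(1) < 1/4$ under the assumption of \cref{thm:avg-Z}, we have $z_{2\alpha} \ge 0$; set $R := (z_{2\alpha},\infty)$, so that $\pr(Z\in R) = 2\alpha$ for $Z\sim\N(0,1)$, and note $z_\alpha\in R$.

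The first ingredient is the decreasing involution $\sigma:R\to R$ defined by $\Phi(\sigma(t)) = 2(1-\alpha)-\Phi(t)$, i.e. $\sigma(t) = \Phi^{-1}\!\big(2(1-\alpha)-\Phi(t)\big)$. One checks directly that $\sigma$ maps $R$ onto $R$, is an involution ($\sigma\circ\sigma=\mathrm{id}$), has fixed point $z_\alpha$, and preserves the law of $Z$ conditioned on $\{Z\in R\}$. The crux of the argument is the deterministic inequality
\[
t+\sigma(t) \ \ge\ 2z_\alpha \qquad\text{for all } t\in R,
\]
with equality only at $t=z_\alpha$. Indeed $\Phi(t),\Phi(\sigma(t))\in(1/2,1)$ because $t,\sigma(t)>z_{2\alpha}\ge 0$; on $(1/2,1)$ the quantile function $\Phi^{-1}$ is convex; and $\Phi(t)+\Phi(\sigma(t)) = 2(1-\alpha)$ is constant, so Jensen gives $t+\sigma(t) = \Phi^{-1}(\Phi(t))+\Phi^{-1}(\Phi(\sigma(t))) \ge 2\Phi^{-1}(1-\alpha) = 2z_\alpha$, with equality iff $\Phi(t)=\Phi(\sigma(t))$, i.e. $t=z_\alpha$.

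Now I would build the vector. Write $L=2k$, let $X\sim\N(0,1)$, and let $\Pi$ be an independent uniformly random permutation of $\{1,\dots,L\}$. On $\{X\in R\}$, assign $k$ copies of $X$ and $k$ copies of $\sigma(X)$ to the $L$ coordinates in the order prescribed by $\Pi$; on $\{X\notin R\}$, set every coordinate equal to $X$. Applying a uniform random permutation renders the conditional law given $\{X\in R\}$ exchangeable, the conditional law given $\{X\notin R\}$ is trivially exchangeable, and the unconditional law, being a mixture of the two, is exchangeable. Each coordinate is marginally $\N(0,1)$: given $\{X\in R\}$ it is a $50/50$ mixture of $X$ and $\sigma(X)$, which have the same conditional law since $\sigma$ preserves $\N(0,1)|_R$, so this mixture equals $X\mid\{X\in R\}$; combining with the event $\{X\notin R\}$ recovers the full $\N(0,1)$ law. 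Finally, on $\{X\in R\}$ we have $\bar T = \tfrac12\big(X+\sigma(X)\big)\ge z_\alpha$ by the inequality above, with strict inequality almost surely since $\pr(X=z_\alpha)=0$; on $\{X\notin R\}$ we have $\bar T = X \le z_{2\alpha} < z_\alpha$. Hence $\{\bar T>z_\alpha\}$ agrees with $\{X\in R\}$ up to a null set, so $\pr(\bar T>z_\alpha)=\pr(X\in R)=2\alpha$, and the test rejecting when $S=\bar T$ exceeds $z_\alpha=\Phi^{-1}(1-\alpha)$ has size $2\alpha$ under this law.

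The genuinely substantive step is the convexity inequality $t+\sigma(t)\ge 2z_\alpha$; everything else is routine verification (well-definedness of $\sigma$ as a measure-preserving involution, and the exchangeability/marginal bookkeeping handled cleanly by the random permutation). This inequality is also the one place the restriction $\alpha<1/4$ — equivalently $z_{2\alpha}\ge 0$, placing $\Phi(t),\Phi(\sigma(t))$ in the convex region of $\Phi^{-1}$ — is used, and it is special to even $L$: for odd $L$ the average on $\{X\in R\}$ is an unequal convex combination $\lambda X+(1-\lambda)\sigma(X)$ with $\lambda\ne\tfrac12$, whose minimum over $R$ is strictly below $z_\alpha$, so this construction (and, one expects, the value $2\alpha$) is unavailable there.
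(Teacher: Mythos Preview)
Your proof is correct and follows essentially the same construction as the paper's: both build an exchangeable pair $(T^{(1)},T^{(2)})$ via the involution $t\mapsto\Phi^{-1}(2(1-\alpha)-\Phi(t))$ on the upper tail (the paper phrases it in uniform coordinates as $U_2=2-2\alpha-U_1$), and extend to even $L=2k$ by randomly assigning $k$ copies of each value. The one place you add value is in making explicit the key inequality $t+\sigma(t)\ge 2z_\alpha$ via convexity of $\Phi^{-1}$ on $(1/2,1)$; the paper simply asserts that the minimum of $\bar T$ on the relevant curve equals $\Phi^{-1}(1-\alpha)$ without justification.
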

\begin{proof}
Let $0 \leq \alpha \leq 1/2$ and consider \cref{fig:UZ}. Let $U_1 \sim \unif(0,1)$ and let 
\[ U_2 = \begin{cases} U_1, \quad &U_1 \leq 1 - 2 \alpha \\ 2 - 2\alpha - U_1, \quad & U_1 > 1- 2\alpha \end{cases}.\]
Pair $U_1, U_2$ are exchangeable with uniform margin.
Then, let $T^{(1)} = \Phi^{-1}(U_1)$ and $T^{(2)} = \Phi^{-1}(U_2)$. It follows that $T^{(1)}, T^{(2)}$ are exchangeable with standard normal margin. Further, with probability $2\alpha$, the graph of $(T^{(1)}, T^{(2)})$ is the red curve, on which the minimum value of $\bar{T}$ is $\Phi^{-1}(1-\alpha)$, when $U_1 = U_2 = 1-\alpha$. Hence, 
\[ \pr(\bar{T} > \Phi^{-1}(1-\alpha)) = 2 \alpha, \]
or equivalently, 
\[ \pr(\bar{T} > t) = 2(1 - \Phi(t)). \]
The construction is specific to $t$ for $t > 0$. 

It is easy to extend this construction for $L=2k$, $k \geq 1$. To construct $T^{(1)}, \dots, T^{(2k)}$, randomly choose half of them to be equal to $\Phi^{-1}(U_1)$ and the rest to be equal to  $\Phi^{-1}(U_2)$. Observe that $T^{(1)}, \dots, T^{(2k)}$ are exchangeable with standard normal margin. Further, $\bar{T} = (\Phi^{-1}(U_1) + \Phi^{-1}(U_2))/2$ so the bound remains valid. 
\end{proof}

\begin{proposition}[Tight tail bound up to a constant] \label{prop:tail-cx-normal}
Given $\bar{Z} \leqcx \N(0,1)$, it holds that
\[ \pr(\bar{Z} > s) < \frac{1 - \Phi(s-1/s)}{1-1/s^2}, \quad s>1, \]
where $\mathrm{RHS} \sim e (1 - \Phi(s))$ as $s \rightarrow \infty$. 
\end{proposition}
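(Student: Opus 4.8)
The plan is to obtain the displayed inequality by a single judicious choice of the free parameter $\beta$ in \cref{lem:tail-cx-normal}, and then to extract the stated asymptotic from a short Mills-ratio computation.

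First I would apply \cref{lem:tail-cx-normal} with $\beta = s - 1/s$. For $s > 1$ we have $0 < s - 1/s < s$, so the lemma applies, and $\beta(s-\beta) = (s-1/s)\cdot(1/s) = 1 - 1/s^2$. Substituting into the bound $\pr(\bar{Z} > s) < (1-\Phi(\beta))/(\beta(s-\beta))$ yields exactly
\[ \pr(\bar{Z} > s) < \frac{1 - \Phi(s-1/s)}{1-1/s^2}, \qquad s > 1, \]
as claimed, and the inequality is strict since \cref{lem:tail-cx-normal} is. (The choice $\beta = s - 1/s$ is essentially the minimiser of $\beta \mapsto (1-\Phi(\beta))/(\beta(s-\beta))$ for large $s$, as one sees after replacing $1-\Phi$ by its Mills-ratio approximation; but for the inequality itself any admissible $\beta$ works and no optimality argument is needed.)

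For the asymptotic as $s \to \infty$: since $1 - 1/s^2 \to 1$, it suffices to show $1 - \Phi(s-1/s) \sim e\,(1-\Phi(s))$. I would invoke the standard tail equivalence $1 - \Phi(x) \sim \phi(x)/x$ as $x \to \infty$ (a consequence of \cref{lem:mills}). Writing $\phi(s-1/s) = (2\pi)^{-1/2}\exp\{-(s-1/s)^2/2\} = \phi(s)\exp\{1 - 1/(2s^2)\}$ gives $\phi(s-1/s) \sim e\,\phi(s)$, while $1/(s-1/s) \sim 1/s$. Hence $1 - \Phi(s-1/s) \sim e\,\phi(s)/s \sim e\,(1-\Phi(s))$, and combining with $1-1/s^2 \to 1$ gives the stated equivalence $\mathrm{RHS} \sim e\,(1-\Phi(s))$.

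There is no genuine obstacle here: the argument is a one-line specialisation of \cref{lem:tail-cx-normal} followed by an elementary asymptotic estimate. The only points requiring care are checking that $\beta = s - 1/s$ lies in the admissible range $(0,s)$, which is precisely where the hypothesis $s>1$ is used, and being explicit about the Mills-ratio equivalences underlying the final step.
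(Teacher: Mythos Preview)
Your proof is correct and follows essentially the same route as the paper: the paper likewise plugs $\beta = s - 1/s$ into \cref{lem:tail-cx-normal} to obtain the inequality, and for the asymptotic it applies L'H\^opital's rule to reduce to $\phi(s-1/s)/\phi(s) \to e$, which is the same computation you carry out via the Mills-ratio equivalence.
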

\begin{proof}
For $s>1$, choose $\beta = s - 1/s > 0$ in \cref{lem:tail-cx-normal} and we get
\[ \pr(\bar{Z} > s) <  \frac{1 - \Phi(s-1/s)}{1-1/s^2}. \] 
By L'H\^opital's rule, it is easy to see that 
\[ \lim_{s \rightarrow +\infty} \frac{1 - \Phi(s-1/s)}{1-1/s^2} / (1 - \Phi(s)) = \lim_{s \rightarrow +\infty} \frac{\phi(s-1/s)}{\phi(s)} = e. \]
\end{proof}

\begin{remark}
This means for small $\alpha$, comparing $\bar{Z}$ to a standard normal inflates size by $e$. 
\end{remark}

\begin{figure}[!htb]
\centering
\includegraphics[width=0.7\textwidth]{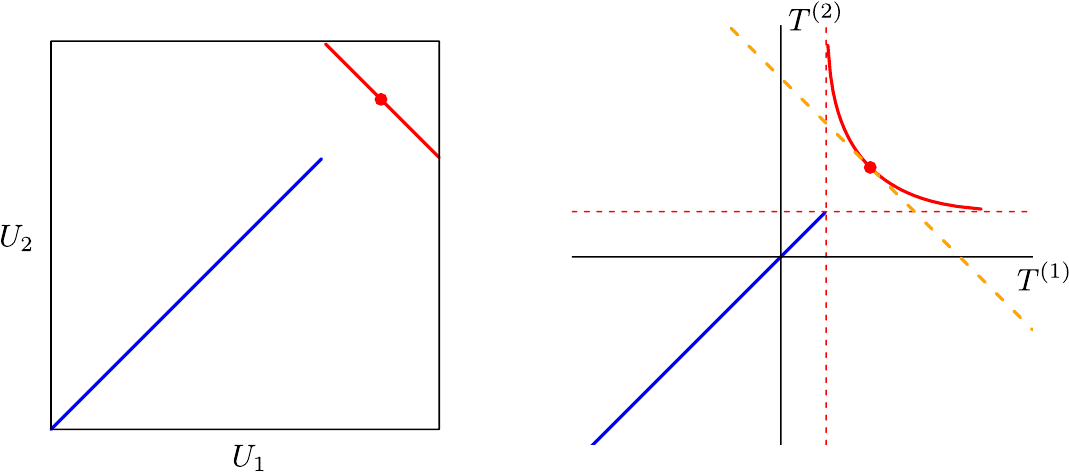}
\caption{Construction for exchangeable $(T^{(1)},T^{(2)})$ with standard normal margin such that $\pr(\bar{T} > t) = 2(1-\Phi(t))$.}
\label{fig:UZ}
\end{figure}

\section{Proofs of theoretical results} \label{sec:app:proofs}
In this Appendix, we prove results for a generic $m=m(n)$ sequence such that $1<m<n$, $m \nearrow \infty$ and $n/m \rightarrow \infty$, although we choose $m = \floor{n / \log n}$ throughout the paper. We write $B_n$ in lieu of $B$ to highlight its dependence on $n$.
\subsection{Finite-sample bounds of subsampling} \label{sec:app:subsample}
The standard consistency result is given by \citet[Theorem 2.2.1]{politis1999subsampling}. The following is an extension that allows multiple exchangeable statistics. Throughout, let $\hat{F}_n := \mathbb{F}_{\hat{\mathbf{H}}}$, where $\hat{\mathbf{H}}$ is given by \cref{eqs:H-mat}. Let $F_{n,P}$ be the distribution function of $T_n^{(1)}$.

\begin{lemma}[Uniform Hoeffding bound] \label{lem:hoeffding}
Let $T_n^{(1)}, \dots, T_n^{(L)}$ be exchangeable statistics under $(X, \Omega) \sim P^n \times P_{\Omega}$.  
We have $\E \hat{F}_{n}(x) = F_{m,P}(x)$ for every $x$ and 
\[ \pr(\|\hat{F}_{n} - F_{m,P}\|_{\infty} > t) \leq 2 \exp(-2 \floor{n/m} t^2 / c_0^2), \quad t \geq 0, \] 
where $c_0$ is a universal constant. 
\end{lemma}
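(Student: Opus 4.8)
The plan is to treat the two assertions separately: the identity $\E\hat F_n(x)=F_{m,P}(x)$ is immediate from how the subsamples are drawn, while the tail bound is obtained by exploiting the block structure of the index sets generated by \cref{alg:gen-tuple}, which supplies enough independence for a Dvoretzky--Kiefer--Wolfowitz-type argument.

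For the unbiasedness, fix $x$ and any row index $b$. The indices $(i_{1,b},\dots,i_{m,b})$ are $m$ distinct elements of $\{1,\dots,n\}$, so $(X_{i_{1,b}},\dots,X_{i_{m,b}})$ is an iid sample from $P$ of size $m$; together with the freshly drawn $\Omega^{(b,l)}\sim\unif(0,1)$ this makes $\hat H_{b,l}$ equal in law to $T_m^{(l)}$ under $(X,\Omega)\sim P^m\times P_\Omega$, and by the assumed exchangeability this law has marginal $F_{m,P}$. Hence $\E\,\I\{\hat H_{b,l}\le x\}=F_{m,P}(x)$, and averaging over all $B$ rows and $L$ columns gives $\E\hat F_n(x)=F_{m,P}(x)$.

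For the tail bound, write $N:=\floor{n/m}$ and note that $\hat{\mathbf H}$ splits into $\JJ$ blocks of $N$ rows, the $j$-th block consisting of the $N$ disjoint length-$m$ segments of the $j$-th random permutation. Conditional on that permutation these $N$ subsamples are disjoint subsets of the iid data and the external randomness is drawn afresh, so the $N$ rows of a block are mutually independent; since their conditional law does not depend on the permutation, the $N$ rows are in fact iid. Letting $g^{(j)}$ denote the empirical distribution function of the entries in block $j$, we have $\hat F_n=\JJ^{-1}\sum_{j=1}^\JJ g^{(j)}$ and therefore
\[
\|\hat F_n-F_{m,P}\|_\infty\ \le\ \JJ^{-1}\sum_{j=1}^{\JJ}\|g^{(j)}-F_{m,P}\|_\infty\ \le\ \max_{1\le j\le\JJ}\|g^{(j)}-F_{m,P}\|_\infty ,
\]
so a union bound over $j$ reduces everything to bounding $\pr(\|g^{(1)}-F_{m,P}\|_\infty>t)$ for a single block. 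Within block $1$, $g^{(1)}=N^{-1}\sum_{k=1}^N\mathbb F_k$ where $\mathbb F_k$ is the empirical distribution function of the $L$ entries of row $k$; the $\mathbb F_k$ are iid, non-decreasing, $[0,1]$-valued and have mean $F_{m,P}$ (by exchangeability across columns). For each fixed $x$ Hoeffding's inequality gives $\pr(|g^{(1)}(x)-F_{m,P}(x)|>s)\le 2\exp(-2Ns^2)$, and uniformising this over $x$ — equivalently, bounding the sum of the $N$ iid monotone $[0,1]$-valued processes $\mathbb F_k$ centred at their common mean $F_{m,P}$ — by a standard maximal inequality (a DKW-type bound) yields $\pr(\|g^{(1)}-F_{m,P}\|_\infty>t)\le C\exp(-cNt^2)$ for universal $C,c$. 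Combining with the factor $\JJ$ and choosing $c_0$ so that $2/c_0^2\le c/\log_2(\JJ C)$ gives the asserted bound: since $\pr(\cdot)\le 1$ it holds automatically once $2Nt^2/c_0^2\le\ln 2$, and in the complementary regime the exponent dominates the fixed prefactor $\JJ C$.

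I expect the main obstacle to be the uniformisation step within a single block: because $F_{m,P}$ may have atoms for finite $m$ and the $L$ entries within a row are only exchangeable (not independent), the classical DKW inequality for an iid sample does not apply verbatim, so one must either run a dyadic-grid/chaining argument by hand on $N^{-1}\sum_k(\mathbb F_k-F_{m,P})$ or invoke an exponential maximal inequality for empirical processes indexed by the bounded, bounded-variation monotone function class, and in either case track the constants carefully to land on a constant $c_0$ free of $n$, $m$, $L$ and $P$ (depending only on the fixed tuning parameter $\JJ$; if a $\JJ$-free statement is desired, the factor $\JJ$ can simply be displayed in front of the bound).
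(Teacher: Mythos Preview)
Your proof of unbiasedness is fine, and your block decomposition is in the right spirit. However, you have made things harder than necessary, and the obstacle you flag at the end --- uniformising over $x$ when the $L$ entries within a row are merely exchangeable --- is an artefact of stopping your decomposition too early.

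The paper refines the decomposition one step further: in addition to splitting over the $\JJ$ permutations, it also splits over the $L$ columns. For each fixed pair $(\pi,l)$, the quantity
\[
\hat F_{\pi,n}^{(l)}(x)\;=\;\floor{n/m}^{-1}\sum_{k=1}^{\floor{n/m}} \I\bigl\{T_m^{(l)}\text{ on the $k$-th segment of }\pi \le x\bigr\}
\]
is a \emph{genuine} empirical distribution function of $N=\floor{n/m}$ iid scalar observations, so the classical DKW inequality applies directly and gives $\pr(\|\hat F_{\pi,n}^{(l)}-F_{m,P}\|_\infty>t)\le 2\exp(-2Nt^2)$, with no need for chaining or empirical-process machinery and no concern about atoms.

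For the combination step, the paper does not use a union bound. Instead, it observes that $\|\hat F_n-F_{m,P}\|_\infty$ is bounded by the average of the $\JJ L$ quantities $\|\hat F_{\pi,n}^{(l)}-F_{m,P}\|_\infty$, which are \emph{identically distributed} (though dependent). The DKW tail bound makes each one $\sigma$-sub-Gaussian with $\sigma=c_0/(2\sqrt N)$ for a universal $c_0$ (via the standard equivalence of sub-Gaussian characterisations). Then Jensen's inequality applied to the convex exponential, $e^{\lambda \bar Z}\le (\JJ L)^{-1}\sum_{\pi,l}e^{\lambda Z_{\pi,l}}$, together with identical distribution gives $\E e^{\lambda\bar Z}\le \E e^{\lambda Z_{1}}$, so the average inherits the same sub-Gaussian parameter. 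This yields the stated bound with $c_0$ truly universal --- independent of $\JJ$ and $L$ --- whereas your union bound forces $c_0$ to depend on $\JJ$ (which you noticed) and, once you unpack your within-block argument, also on $L$.
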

\begin{proof}
By the construction of $\mathcal{B}$ via \cref{alg:gen-tuple} and noting that $|\mathcal{B}| = B = \JJ \floor{n/m}$, we have
\begin{equation} \label{eqs:proof-subsample-F}
\hat{F}_n(x) - F_{m,P}(x) = \frac{1}{\JJ L} \sum_{l=1}^L \sum_{\pi \in \{\pi_1, \dots, \pi_{\JJ} \}} \left[\hat{F}_{\pi,n}^{(l)}(x) - F_{m,P}(x) \right],
\end{equation}
where 
\begin{multline*}
 \hat{F}_{\pi,n}^{(l)}(x) := \frac{1}{\floor{n/m}} \biggl[ \I\left\{ T_{m}^{(l)}(X_{\pi(1)}, \dots, X_{\pi(m)}) \leq x \right\} + \\
 \I\left\{ T_{m}^{(l)}(X_{\pi(m+1)}, \dots, X_{\pi(2m)}) \leq x \right\} + \dots + \I\left\{ T_{m}^{(l)}(X_{\pi((\floor{n/m}-1) m + 1)}, \dots, X_{\pi(\floor{n/m} m)}) \leq x \right\} \biggr].
\end{multline*}
The summands in the square brackets correspond to $\floor{n/m}$ disjoint subsamples of data
\[(X_{\pi(1)}, \dots, X_{\pi(m)}), \quad (X_{\pi(m+1)}, \dots, X_{\pi(2m)}), \quad \dots \quad, (X_{\pi((\floor{n/m}-1) m + 1)}, \dots, X_{\pi(\floor{n/m} m)})\]
and they are therefore independent.

Fix $\pi$ and $l$. By definition, we know $\E \hat{F}_{\pi,n}^{(l)}(x) = F_{m,P}(x)$ and hence $\E \hat{F}_n(x) = F_{m,P}(x)$. In view of $\hat{F}_{\pi,n}^{(l)}$ as an empirical distribution function, the inequality due to \citet{dvoretzky1956asymptotic} and \citet{massart1990tight} states that
\[ \pr\left(\|\hat{F}_{\pi,n}^{(l)} - F_{m,P} \|_{\infty} > t \right) \leq 2 \exp\left(-2 \floor{n/m} t^2 \right), \quad t \geq 0.  \]
This inequality does not require $F_{m,P}$ to be continuous; see \citet[Comment 2(iii)]{massart1990tight}.
The inequality implies that $\|\hat{F}_{\pi,n}^{(l)} - F_{m,P} \|_{\infty}$ is $c_0 / (2 \sqrt{\floor{n/m}})$-sub-Gaussian for a universal constant $c_0 > 0$ \citep[Proposition 2.5.2]{vershynin2018high}. From \cref{eqs:proof-subsample-F}, it follows that
\[ \|\hat{F}_n - F_{m,P}\|_{\infty} \leq \frac{1}{\JJ L} \sum_{l=1}^L \sum_{\pi \in \{\pi_1, \dots, \pi_{\JJ} \}} \|\hat{F}_{\pi,n}^{(l)} - F_{m,P} \|_{\infty}, \]
where we observe that every $\|\hat{F}_{\pi,n}^{(l)} - F_{m,P} \|_{\infty}$ is identically distributed. By Jensen's inequality, we have that the upper bound above is also $c_0 / (2 \sqrt{\floor{n/m}})$-sub-Gaussian and hence so is $\|\hat{F}_n - F_{m,P}\|_{\infty}$, from which the result follows. 
\end{proof}

Recall
that $\hat{G}_n(x) := \mathbb{F}_{\{\hat{S}_b\}}(x)$ is the natural subsampling estimation of $G_P$, the asymptotic distribution function of $S_n$. For $b=1,\dots,B$, $\hat{S}_b$ is the result of applying $S$ to the $b$-th row of $\hat{\mathbf{H}}$. Let $G_{n,P}$ be the distribution function of $S_n$. We define
\begin{equation} \label{eqs:K}
K_n := \sqrt{\frac{\floor{n/m}}{\log (n/m)}},
\end{equation}
which is a positive sequence that tends to infinity. 

\begin{lemma}[High probability bound on subsampling approximation] \label{lem:bound-unif-approx}
Let $T_n^{(1)}, \dots, T_n^{(L)}$ be exchangeable statistics under $(X, \Omega) \sim P^n \times P_{\Omega}$.
Given any distribution functions $F$ and $G$, with probability at least $1 - 4 (n/m)^{-2/c_0^2}$ (where $c_0$ is the universal constant from \cref{lem:hoeffding}), it holds that 
\begin{align*}
\| \hat{F}_{n} - F \|_{\infty} &\leq 1/K_n + \|F_{m,P} - F\|_{\infty}, \\
\| \hat{G}_{n} - G \|_{\infty} &\leq 1/K_n + \|G_{m,P} - G \|_{\infty}.
\end{align*}
\end{lemma}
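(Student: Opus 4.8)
The plan is to reduce the claim to the uniform Hoeffding bound of \cref{lem:hoeffding} together with a single well-chosen deviation level and a triangle inequality. First I would observe that \cref{lem:hoeffding} applies not only to $\hat{F}_n$ but, by an identical argument, to $\hat{G}_n$. Indeed, the row statistics $\hat{S}_b = S(\hat{\mathbf{H}}_{b,\cdot})$ are obtained by evaluating the single real-valued statistic $S_m := S(T_m^{(1)}, \dots, T_m^{(L)})$ on the subsamples listed in $\mathcal{B}$; within each of the $\JJ$ permutations used in \cref{alg:gen-tuple} the $\floor{n/m}$ subsamples are disjoint and the external randomness is regenerated independently, so these evaluations are independent and identically distributed. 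Hence the same decomposition into $\JJ$ empirical distribution functions of $\floor{n/m}$ i.i.d.\ terms, followed by the Dvoretzky--Kiefer--Wolfowitz inequality and the Jensen/sub-Gaussian step, gives $\E \hat{G}_n(x) = G_{m,P}(x)$ (with $G_{m,P}$ the distribution function of $S_m$) and
\[ \pr\bigl(\|\hat{G}_n - G_{m,P}\|_\infty > t\bigr) \leq 2\exp\bigl(-2\floor{n/m}\, t^2 / c_0^2\bigr), \qquad t \geq 0, \]
with the same universal constant $c_0$ as in \cref{lem:hoeffding}.

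Next I would specialise both tail bounds to $t = 1/K_n$. By the definition of $K_n$ in \cref{eqs:K} we have $K_n^2 = \floor{n/m}/\log(n/m)$, so $2\floor{n/m}(1/K_n)^2/c_0^2 = (2/c_0^2)\log(n/m)$ and each tail probability at this level equals $2(n/m)^{-2/c_0^2}$. Therefore, on an event $E_F$ of probability at least $1 - 2(n/m)^{-2/c_0^2}$ we have $\|\hat{F}_n - F_{m,P}\|_\infty \leq 1/K_n$, and on an event $E_G$ of probability at least $1 - 2(n/m)^{-2/c_0^2}$ we have $\|\hat{G}_n - G_{m,P}\|_\infty \leq 1/K_n$.

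Finally, on $E_F \cap E_G$ the triangle inequality yields $\|\hat{F}_n - F\|_\infty \leq \|\hat{F}_n - F_{m,P}\|_\infty + \|F_{m,P} - F\|_\infty \leq 1/K_n + \|F_{m,P} - F\|_\infty$ for any distribution function $F$, and likewise $\|\hat{G}_n - G\|_\infty \leq 1/K_n + \|G_{m,P} - G\|_\infty$; a union bound gives $\pr(E_F \cap E_G) \geq 1 - 4(n/m)^{-2/c_0^2}$, which is exactly the claimed statement. I do not expect a real obstacle here: the only point requiring a little care is the transfer of \cref{lem:hoeffding} to $\hat{G}_n$, i.e.\ verifying that applying the (continuous, symmetric) aggregation $S$ row-wise preserves the independence-within-permutation structure that the proof of \cref{lem:hoeffding} exploits; once that is noted, the remainder is bookkeeping with the exponent.
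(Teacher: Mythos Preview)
Your proposal is correct and follows essentially the same approach as the paper: apply \cref{lem:hoeffding} (and its analogue for $\hat{G}_n$) at $t=1/K_n$, compute the resulting exponent $(2/c_0^2)\log(n/m)$, then finish with the triangle inequality and a union bound. You simply spell out in more detail both the transfer of the Hoeffding bound to $\hat{G}_n$ and the arithmetic in the exponent, which the paper leaves implicit.
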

\begin{proof}
Taking $t=1/K_n$ in \cref{lem:hoeffding}, by the triangle inequality, the first inequality above holds with probability at least $1 - 2 (n/m)^{-2/c_0^2}$. By the definition of $\hat{G}_n$, a tail bound of the same type as \cref{lem:hoeffding} holds for $\|\hat{G}_n - G_{m,P}\|_{\infty}$. Hence, the second inequality above also holds with probability at least $1 - 2 (n/m)^{-2/c_0^2}$. The result follows from a union bound. 
\end{proof}

\begin{corollary} \label{cor:sub-consistency}
Let $T_n^{(1)}, \dots, T_n^{(L)}$ be exchangeable statistics under $(X, \Omega) \sim P^n \times P_{\Omega}$ for $P \in \mathcal{P}_0$.
Under \cref{cond:pivotal,assump:stable-G}, we have 
\[ \| \hat{F}_{n} - F_0 \|_{\infty} \rightarrow_p 0, \quad \| \hat{G}_{n} - G_{P} \|_{\infty} \rightarrow_p 0 \]
as $n \rightarrow \infty$. 
\end{corollary}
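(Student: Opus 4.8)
The plan is to combine the high-probability bound of \cref{lem:bound-unif-approx} with P\'olya's uniform convergence theorem \citep[Lemma~2.11]{van2000asymptotic}. Fix $P \in \mathcal{P}_0$ and apply \cref{lem:bound-unif-approx} with $F = F_0$ and $G = G_P$: on an event of probability at least $1 - 4 (n/m)^{-2/c_0^2}$,
\[ \| \hat{F}_{n} - F_0 \|_{\infty} \leq 1/K_n + \|F_{m,P} - F_0\|_{\infty}, \qquad \| \hat{G}_{n} - G_P \|_{\infty} \leq 1/K_n + \|G_{m,P} - G_P \|_{\infty}. \]
Since $n/m \rightarrow \infty$ the failure probability $4(n/m)^{-2/c_0^2} \rightarrow 0$, and $K_n \rightarrow \infty$ so $1/K_n \rightarrow 0$. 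Hence it suffices to show that the deterministic quantities $\|F_{m,P} - F_0\|_{\infty}$ and $\|G_{m,P} - G_P\|_{\infty}$ vanish as $n \rightarrow \infty$.

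For the first, \cref{cond:pivotal} gives $T_n^{(1)} \rightarrow_{d} F_0$, i.e. $F_{n,P}(x) \rightarrow F_0(x)$ at every continuity point of $F_0$; as $F_0$ is continuous this holds for all $x \in \R$, and P\'olya's theorem upgrades pointwise convergence of distribution functions towards a continuous limit to uniform convergence, so $\|F_{n,P} - F_0\|_{\infty} \rightarrow 0$. Because $m = m(n) \nearrow \infty$, passing to this subsequence yields $\|F_{m,P} - F_0\|_{\infty} \rightarrow 0$. An identical argument using \cref{assump:stable-G} (continuity of $G_P$ together with $S_n \rightarrow_d G_P$, hence $G_{n,P}(x) \rightarrow G_P(x)$ for all $x$) gives $\|G_{m,P} - G_P\|_{\infty} \rightarrow 0$.

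Combining, for any $\epsilon > 0$ we have $1/K_n + \|F_{m,P} - F_0\|_{\infty} < \epsilon$ for all $n$ sufficiently large, so the display above gives $\pr_P(\|\hat{F}_n - F_0\|_{\infty} > \epsilon) \leq 4(n/m)^{-2/c_0^2} \rightarrow 0$, and likewise for $\|\hat{G}_n - G_P\|_{\infty}$; this is exactly the claimed convergence in probability. The argument is essentially bookkeeping on top of \cref{lem:bound-unif-approx}, so there is no real obstacle; the only mild subtlety is that $F_{m,P}$ and $G_{m,P}$ are indexed by the subsample size $m(n)$ rather than $n$ itself, which is harmless since $m(n) \to \infty$.
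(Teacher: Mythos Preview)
Your proof is correct and follows essentially the same approach as the paper: apply \cref{lem:bound-unif-approx} with $F=F_0$, $G=G_P$, then invoke P\'olya's theorem \citep[Lem.~2.11]{van2000asymptotic} under \cref{cond:pivotal} and \cref{assump:stable-G} to kill the deterministic remainders $\|F_{m,P}-F_0\|_\infty$ and $\|G_{m,P}-G_P\|_\infty$. Your write-up is simply a more detailed unpacking of the same argument.
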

\begin{proof}
Take $F=F_0$ and $G=G_P$ in \cref{lem:bound-unif-approx}. Because $F_0$ and $G_P$ are continuous, by \citet[Lem.~2.11]{van2000asymptotic} \cref{cond:pivotal} implies $\|F_{m,P} - F_0\|_{\infty} \rightarrow 0$ and \cref{assump:stable-G} implies $\|G_{m,P} - G_P\|_{\infty} \rightarrow 0$. Therefore, with probability tending to one, the bounds in \cref{lem:bound-unif-approx} tend to zero, which yields the result. 
\end{proof}

\subsection{Rank transform} \label{sec:app:rank}
In this section, we show that the rank transform does not affect the validity of subsampling approximation under the null. Recall that $\tilde{G}_n$ is the empirical distribution of $\{\tilde{S}_b\}$, where $\tilde{S}_b$ is the result of applying aggregation $S$ to the $b$-th row of the rank-transformed matrix $\tilde{\mathbf{H}}$. 

\begin{lemma} \label{lem:empirical-bound}
Given $x_1, \dots, x_n \in \R$, $x_1', \dots, x_n' \in \R$ and $\varepsilon > 0$ such that $|x_i - x_i'| \leq \epsilon$ for all $i$, for every $x \in R$ it holds that 
\[ |\mathbb{F}_{\{x_i\}}(x) - \mathbb{F}_{\{x_i'\}}(x)| \leq \mathbb{F}_{\{x_i\}}(x + \epsilon) - \mathbb{F}_{\{x_i\}}(x - \epsilon).\]
\end{lemma}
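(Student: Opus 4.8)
The plan is to argue directly from the definition of the empirical distribution function, namely $\mathbb{F}_{\{x_i\}}(x) = n^{-1}\sum_{i=1}^n \I\{x_i \le x\}$ and likewise for the primed points, comparing the two sums indicator-by-indicator using the hypothesis $|x_i - x_i'| \le \epsilon$.

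First I would record two pointwise inequalities, valid for every $i$ and every $x \in \R$. If $x_i' \le x$, then $x_i \le x_i' + \epsilon \le x + \epsilon$, so $\I\{x_i' \le x\} \le \I\{x_i \le x + \epsilon\}$; averaging over $i$ gives $\mathbb{F}_{\{x_i'\}}(x) \le \mathbb{F}_{\{x_i\}}(x + \epsilon)$. Symmetrically, if $x_i \le x - \epsilon$ then $x_i' \le x_i + \epsilon \le x$, so $\I\{x_i \le x - \epsilon\} \le \I\{x_i' \le x\}$, which yields $\mathbb{F}_{\{x_i\}}(x - \epsilon) \le \mathbb{F}_{\{x_i'\}}(x)$.

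Then I would combine these with the monotonicity of $t \mapsto \mathbb{F}_{\{x_i\}}(t)$, funnelling both one-sided estimates into the same quantity since the right-hand side of the claim involves only the unprimed empirical distribution function. For one direction, $\mathbb{F}_{\{x_i'\}}(x) - \mathbb{F}_{\{x_i\}}(x) \le \mathbb{F}_{\{x_i\}}(x + \epsilon) - \mathbb{F}_{\{x_i\}}(x - \epsilon)$, using the first inequality together with $\mathbb{F}_{\{x_i\}}(x) \ge \mathbb{F}_{\{x_i\}}(x - \epsilon)$. For the other, $\mathbb{F}_{\{x_i\}}(x) - \mathbb{F}_{\{x_i'\}}(x) \le \mathbb{F}_{\{x_i\}}(x + \epsilon) - \mathbb{F}_{\{x_i\}}(x - \epsilon)$, using the second inequality together with $\mathbb{F}_{\{x_i\}}(x) \le \mathbb{F}_{\{x_i\}}(x + \epsilon)$. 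Taking the larger of the two bounds gives the stated absolute-value inequality. There is no substantive obstacle; the only care needed is to get the direction of each indicator inequality right and to route both one-sided bounds through the unprimed distribution function via monotonicity.
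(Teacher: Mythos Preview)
Your proof is correct and follows essentially the same elementary indicator-by-indicator route as the paper. The paper phrases it as counting the indices $i$ for which $\I\{x_i \le x\} \ne \I\{x_i' \le x\}$ (so that $x$ lies between $x_i$ and $x_i'$, forcing $x_i \in (x-\epsilon, x+\epsilon]$), while you organise the same observation as the sandwich $\mathbb{F}_{\{x_i\}}(x-\epsilon) \le \mathbb{F}_{\{x_i'\}}(x) \le \mathbb{F}_{\{x_i\}}(x+\epsilon)$ and then invoke monotonicity; these are equivalent presentations of the same argument.
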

\begin{proof}
As $|\mathbb{F}_{\{x_i\}}(x) - \mathbb{F}_{\{x_i'\}}(x)| = n^{-1} \sum_{i=1}^n (\I\{x_i \leq x \} - \I\{x_i' \leq x \})$, we know $|\mathbb{F}_{\{x_i\}}(x) - \mathbb{F}_{\{x_i'\}}(x)|$ is bounded above by the fraction of $i$ such that $x$ is between $x_i$ and $x_i'$. Due to the fact that $|x_i - x_i'| \leq \epsilon$, this fraction is further bounded above by the fraction of $x_i$ such that $x \in [x_i-\epsilon, x_i+\epsilon)$, or equivalently, $x_i \in (x-\epsilon, x+\epsilon]$, which is the right-hand side of the display above.
\end{proof}

\begin{lemma}[High probability bound on rank-transformed subsampling] \label{lem:rank-approx}
Let $T_n^{(1)}, \dots, T_n^{(L)}$ be exchangeable statistics under $(X, \Omega) \sim P^n \times P_{\Omega}$ for $P \in \mathcal{P}_0$.
Suppose $S$ is chosen such that \cref{cond:lip} holds.
Let 
\begin{align*}
& \epsilon^{F,P}_{n} := 1/K_n + \|F_{m,P} - F_0\|_{\infty} + 1/(2\,B_n\,L), \\
& \epsilon^{G,P}_{n} := 1/K_n + \|G_{m,P} - G_{P}\|_{\infty},
\end{align*}
where $B_n = \JJ \floor{n/m}$.
Then, under \cref{cond:pivotal,assump:stable-G}, we have the following results.
\begin{enumerate}[(1)]
\item For $F_0 = \unif(0,1)$, with probability tending to one, it holds that
\[ \|\tilde{G}_{n} - G_P\|_{\infty} \leq 2 \, g_{\max,P}\, \epsilon^{F,P}_n + 3 \,\epsilon^{G,P}_n. \]

\item For $F_0 = \N(0,1)$, with probability tending to one, it holds that
\[ \|\tilde{G}_{n} - G_P\|_{\infty} \leq \frac{2 \, g_{\max,P}}{|\Phi^{-1}(\epsilon^{F,P}_n)|} + 4 L \,\epsilon^{F,P}_n  + 3 \, \epsilon^{G,P}_n + \frac{1}{B_n}. \]
\end{enumerate}
\end{lemma}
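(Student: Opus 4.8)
The plan is to work throughout on the high‑probability event furnished by \cref{lem:bound-unif-approx} with $F=F_0$ and $G=G_P$, on which simultaneously $\|\hat F_n - F_0\|_\infty \le 1/K_n + \|F_{m,P}-F_0\|_\infty \le \epsilon^{F,P}_n$ and $\|\hat G_n - G_P\|_\infty \le \epsilon^{G,P}_n$; by \cref{lem:bound-unif-approx} this event has probability at least $1 - 4(n/m)^{-2/c_0^2}\to 1$. As in the proof of \cref{cor:sub-consistency}, \cref{cond:pivotal} and \cref{assump:stable-G} (together with continuity of $F_0$ and $G_P$ and \citet[Lem.~2.11]{van2000asymptotic}) give $\epsilon^{F,P}_n\to 0$ and $\epsilon^{G,P}_n\to 0$, so for $n$ large all quantities below are well defined. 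On this event the argument has four steps: (i) bound $|\tilde H_{b,l}-\hat H_{b,l}|$ entrywise; (ii) push this through the $1$-Lipschitz aggregation $S$ (\cref{cond:lip}) to bound $|\tilde S_b-\hat S_b|$; (iii) turn a bound on $\max_b|\tilde S_b-\hat S_b|$ into one on $\|\tilde G_n-\hat G_n\|_\infty$ via \cref{lem:empirical-bound}, using the density bound $g_{\max,P}$ to absorb the resulting increments of $\hat G_n$; (iv) add $\|\hat G_n-G_P\|_\infty\le\epsilon^{G,P}_n$ by the triangle inequality.

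For $F_0=\unif(0,1)$, step (i) is immediate because $F_0^{-1}$ is the identity: from \cref{eqs:rank} and $F_0(\hat H_{b,l})=\hat H_{b,l}$ one gets $|\tilde H_{b,l}-\hat H_{b,l}| = |\hat F_n(\hat H_{b,l}) - F_0(\hat H_{b,l}) - 1/(2B_nL)| \le \|\hat F_n - F_0\|_\infty + 1/(2B_nL) \le \epsilon^{F,P}_n$ for every $b,l$, so (ii) yields $|\tilde S_b-\hat S_b|\le\epsilon^{F,P}_n$. For (iii), \cref{lem:empirical-bound} applied to $\{\hat S_b\}$, $\{\tilde S_b\}$ with perturbation $\epsilon^{F,P}_n$ gives $|\tilde G_n(x)-\hat G_n(x)| \le \hat G_n(x+\epsilon^{F,P}_n) - \hat G_n(x-\epsilon^{F,P}_n)$, and bounding the right-hand side by $2\|\hat G_n-G_P\|_\infty$ plus the $G_P$-increment $2g_{\max,P}\epsilon^{F,P}_n$ (by the density bound of \cref{assump:stable-G}) shows $\|\tilde G_n-\hat G_n\|_\infty \le 2\epsilon^{G,P}_n + 2g_{\max,P}\epsilon^{F,P}_n$. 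With (iv) this is $\|\tilde G_n - G_P\|_\infty \le 2g_{\max,P}\epsilon^{F,P}_n + 3\epsilon^{G,P}_n$.

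For $F_0=\N(0,1)$ the complication is that $\Phi^{-1}$ is unbounded near $0$ and $1$, so step (i) must be localised. Writing $u_{b,l}:=\hat F_n(\hat H_{b,l})-1/(2B_nL)$, so $\tilde H_{b,l}=\Phi^{-1}(u_{b,l})$ and $|u_{b,l}-\Phi(\hat H_{b,l})|\le\epsilon^{F,P}_n$, call an entry \emph{good} if $\Phi(\hat H_{b,l})\in[2\epsilon^{F,P}_n, 1-2\epsilon^{F,P}_n]$; then $u_{b,l}\in[\epsilon^{F,P}_n,1-\epsilon^{F,P}_n]$ as well, and since $(\Phi^{-1})'(t)=1/\phi(\Phi^{-1}(t))$ is maximised over that interval at its endpoints, the mean value theorem gives $|\tilde H_{b,l}-\hat H_{b,l}| \le \epsilon^{F,P}_n/\phi(\Phi^{-1}(\epsilon^{F,P}_n))$. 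The upper Mills' ratio bound of \cref{lem:mills} applied to $\lambda=|\Phi^{-1}(\epsilon^{F,P}_n)|$ (so $\epsilon^{F,P}_n = 1-\Phi(\lambda) < \phi(\lambda)/\lambda$) then converts this to $|\tilde H_{b,l}-\hat H_{b,l}| < 1/|\Phi^{-1}(\epsilon^{F,P}_n)| =: \eta_n$, which is the key saving (trading $1/\phi(\Phi^{-1}(\epsilon))$ for the much milder $1/|\Phi^{-1}(\epsilon)|$). On a row made up entirely of good entries, (ii) gives $|\tilde S_b-\hat S_b|\le\eta_n$. Meanwhile, because $\|\hat F_n-F_0\|_\infty\le\epsilon^{F,P}_n$, the fraction of entries of $\hat{\mathbf{H}}$ that fail to be good is $O(\epsilon^{F,P}_n)$, so the fraction of rows containing a non-good entry is $O(L\epsilon^{F,P}_n)$; tracking constants and the rounding incurred when counting these rows bounds that fraction by $4L\epsilon^{F,P}_n + 1/B_n$. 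Splitting $|\tilde G_n(x)-\hat G_n(x)|$ into the good-row contribution (handled as in the uniform case but with increment $\eta_n$, giving $2\epsilon^{G,P}_n + 2g_{\max,P}\eta_n$) plus the non-good-row fraction, and then adding $\epsilon^{G,P}_n$ as in (iv), yields $\|\tilde G_n-G_P\|_\infty \le 2g_{\max,P}/|\Phi^{-1}(\epsilon^{F,P}_n)| + 4L\epsilon^{F,P}_n + 3\epsilon^{G,P}_n + 1/B_n$.

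The main obstacle is exactly this normal case: one has to quantify how far $\Phi^{-1}$ stretches near the edges of $(0,1)$ and verify that only a vanishing fraction of the $B_nL$ subsampled statistics land there, so the Lipschitz-aggregation step survives; Mills' ratio is the crucial analytic input, and the rest is careful bookkeeping of the two shrinking error budgets $\epsilon^{F,P}_n$ (rank transform plus DKW concentration of $\hat F_n$) and $\epsilon^{G,P}_n$ (subsampling consistency of $\hat G_n$), both of which vanish under \cref{cond:pivotal,assump:stable-G}.
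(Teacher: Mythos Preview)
Your proposal is correct and follows essentially the same route as the paper: work on the high-probability event of \cref{lem:bound-unif-approx}, push entrywise perturbations through the Lipschitz $S$ and \cref{lem:empirical-bound}, and in the normal case localise away from the tails via Mills' ratio to trade $\epsilon^{F,P}_n/\phi(\Phi^{-1}(\epsilon^{F,P}_n))$ for $1/|\Phi^{-1}(\epsilon^{F,P}_n)|$. The one place you diverge is in declaring an entry ``good'' via $\Phi(\hat H_{b,l})\in[2\epsilon^{F,P}_n,1-2\epsilon^{F,P}_n]$ rather than via the rank $u_{b,l}$ itself: the paper's choice makes the bad-entry count deterministic (the $u_{b,l}$ are exactly the normalised ranks, so at most $4\epsilon^{F,P}_n B_nL+1$ lie in the two edge intervals), whereas your definition needs an extra appeal to $\|\hat F_n-\Phi\|_\infty\le\epsilon^{F,P}_n$ and produces a constant closer to $6L$ than the stated $4L$.
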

\begin{proof}
In what follows, let $\mathcal{A}_n$ be the event that the bounds in \cref{lem:bound-unif-approx} hold with $F = F_0$ and $G = G_P$. Note that $\pr_{P}(\mathcal{A}_n) \rightarrow 1$.
\begin{enumerate}[(1)]
\item Under $F_0 = \unif(0,1)$, the rank transform \cref{eqs:rank} becomes
\[ \tilde{H}_{b,l} = \hat{F}_{n}(\hat{H}_{b,l}) - \frac{1}{2 B_n L}, \quad b=1,\dots,B_n,\quad l=1,\dots,L. \]
On $\mathcal{A}_n$, we have 
\[ \sup_{x \in [0,1]} |\hat{F}_n(x) - x| \leq 1/K_n + \|F_{m,P} - F_0\|_{\infty} \]
 and hence, by the rank transform,
\[ \max_{b,l} |\tilde{H}_{b,l} - \hat{H}_{b,l} | \leq \epsilon^{F,P}_n. \]
By \cref{cond:lip}, it implies
\[ \max_{b} |\tilde{S}_{b} - \hat{S}_b | \leq \epsilon^{F,P}_n.\]
Recall that on $\mathcal{A}_n$, $ \| \hat{G}_{n} - G_P \|_{\infty} \leq \epsilon^{G,P}_n$. Note that $\tilde{G}_n$ and $\hat{G}_{n}$ are respectively the distribution functions of $\{\tilde{S}_b\}$ and $\{\hat{S}_b\}$. Applying \cref{lem:empirical-bound}, we have that for every $x$,
\begin{align*}
|\tilde{G}_{n}(x) - \hat{G}_{n}(x)| &\leq \hat{G}_{n}(x + \epsilon^{F,P}_n) - \hat{G}_{n}(x - \epsilon^{F,P}_n) \\
& \leq G_P(x + \epsilon^{F,P}_n) - G_P(x - \epsilon^{F,P}_n) + 2 \epsilon^{G,P}_n \\
& \leq  2 g_{\max,P}\, \epsilon^{F,P}_n + 2\,\epsilon^{G,P}_n \quad \text{on } \mathcal{A}_n,
\end{align*}
where the last step follows from \cref{assump:stable-G} and the mean value theorem. Note that the final bound above does not depend on $x$. Consequently, on $\mathcal{A}_n$, we have 
\begin{align*}
 \|\tilde{G}_{n} - G_P\|_{\infty}  &\leq \|\tilde{G}_{n} - \hat{G}_{n}\|_{\infty} +  \|\hat{G}_{n} - G_P \|_{\infty} \\
 & \leq  2 \, g_{\max,P}\, \epsilon^{F,P}_n + 3 \, \epsilon^{G,P}_n,
\end{align*}
which occurs with probability tending to one.
\item Under $F_0 = \Phi$, the rank transform in \cref{eqs:rank} becomes
\[ \tilde{H}_{b,l} = \Phi^{-1}\left(\hat{F}_n(\hat{H}_{b,l}) - 1/(2 B_n L) \right). \]
By a Taylor expansion of the outer function at $\Phi(\hat{H}_{b,l})$, 
\begin{equation} \label{eqs:pf-taylor}
\tilde{H}_{b,l} =  \hat{H}_{b,l} + \frac{1}{\phi(\Phi^{-1}(\xi))} \left[\hat{F}_n(\hat{H}_{b,l}) - 1/(2B_n L) - \Phi(\hat{H}_{b,l})  \right]
\end{equation}
for some $\xi$ between $\hat{F}_n(\hat{H}_{b,l}) - 1/(2 B_n L)$ and $\Phi(\hat{H}_{b,l})$. Let
\[ \Gamma_{\epsilon} := \left\{(b,l): \hat{F}_n(\hat{H}_{b,l}) - 1/(2 B_n L) \in [0, 2 \epsilon^{F,P}_n) \cup (1-2 \epsilon^{F,P}_n, 1]  \right\}. \]
Because $\{\hat{F}_n(\hat{H}_{b,l}) \}$ is the set of normalised ranks, we have 
\begin{equation} \label{eqs:pf-size-bound}
|\Gamma_{\epsilon}| \leq 4 \epsilon^{F,P}_n B_n L + 1.
\end{equation}

Observe that $ \| \hat{F}_n - 1/(2B_n L) - \Phi \|_{\infty} \leq \epsilon^{F,P}_n$ on $\mathcal{A}_n$. Then on $\mathcal{A}_n$, for every $(b,l) \in \Gamma_{\epsilon}^{c}$, we know $\xi$ in \cref{eqs:pf-taylor} is in $[\epsilon^{F,P}_n,1-\epsilon^{F,P}_n]$. By Mills' ratio (\cref{lem:mills}), we have
\[
\phi(\Phi^{-1}(\xi)) \geq \phi(\Phi^{-1}(1-\varepsilon_{F,P})) > \epsilon^{F,P}_n |\Phi^{-1}(\epsilon^{F,P}_n)|.
\]
It then follows from \cref{eqs:pf-taylor} that, on $\mathcal{A}_n$, 
\begin{equation} \label{eqs:pf-T-close}
|\tilde{H}_{b,l} - \hat{H}_{b,l}| \leq \frac{1}{|\Phi^{-1}(\epsilon^{F,P}_n)|}, \quad (b,l) \in \Gamma_{\epsilon}^c.
\end{equation}
Define
\[ \Xi_{\epsilon} := \left\{b: (b, l) \in \Gamma_\epsilon \text{ for some } l \in \{1,\dots,L\} \right\}. \]
By \cref{cond:lip}, on $\mathcal{A}_n$, \cref{eqs:pf-T-close} implies
\begin{equation} \label{eqs:pf-delta}
 |\tilde{S}_{b} - \hat{S}_b| \leq \delta_n := \frac{1}{|\Phi^{-1}(\epsilon^{F,P}_n)|}, \quad b \in \Xi_{\epsilon}^c.
\end{equation}
In light of $|\Xi_{\epsilon}| \leq |\Gamma_{\epsilon}|$ and \cref{eqs:pf-size-bound}, observe that
\begin{equation} \label{eqs:pf-fraction-bound}
\rho_{\epsilon} := |\Xi_{\epsilon}| / B_n \leq 4 \epsilon^{F,P}_n L + 1/B_n. 
\end{equation}
We now bound the distance between the empirical distributions of the rank-transformed and untransformed statistics. We have
\begin{equation} \label{eqs:pf-bound-emp-1}
\begin{split}
\|\tilde{G}_{n} - \hat{G}_{n}\|_{\infty} &= \left\|\mathbb{F}_{\{\tilde{S}_b\}} - \mathbb{F}_{\{\hat{S}_b\}} \right\|_{\infty} \\
&= \left\|(1-\rho_{\epsilon}) \mathbb{F}_{\tilde{S}_{\Xi_{\epsilon}^c}} + \rho_{\epsilon} \mathbb{F}_{\tilde{S}_{\Xi_{\epsilon}}} - (1-\rho_{\epsilon}) \mathbb{F}_{\hat{S}_{\Xi_{\epsilon}^c}} - \rho_{\epsilon} \mathbb{F}_{\hat{S}_{\Xi_{\epsilon}}} \right\|_{\infty} \\
& \leq (1-\rho_{\epsilon}) \left\|\mathbb{F}_{\tilde{S}_{\Xi_{\epsilon}^c}} - \mathbb{F}_{\hat{S}_{\Xi_{\epsilon}^c}} \right\|_{\infty} + \rho_{\epsilon} \left\| \mathbb{F}_{\tilde{S}_{\Xi_{\epsilon}}} - \mathbb{F}_{\hat{S}_{\Xi_{\epsilon}}}  \right\|_{\infty} \\
& \leq (1-\rho_{\epsilon}) \left\|\mathbb{F}_{\tilde{S}_{\Xi_{\epsilon}^c}} - \mathbb{F}_{\hat{S}_{\Xi_{\epsilon}^c}} \right\|_{\infty} + \rho_{\epsilon},
\end{split}
\end{equation}
where the last step uses the fact that Kolmogorov distance is bounded by one. Given \cref{eqs:pf-delta}, by \cref{lem:empirical-bound}, at any $x$, we have
\begin{equation*}
\begin{split}
\left|\mathbb{F}_{\tilde{S}_{\Xi_{\epsilon}^c}}(x) - \mathbb{F}_{\hat{S}_{\Xi_{\epsilon}^c}}(x) \right| &\leq \frac{\left|\{b \in \Xi_{\epsilon}^c: \hat{S}_b \in(x-\delta_n, x+\delta_n] \right|}{|\Xi_{\epsilon}^c|} \\
& \leq \frac{\left|\{b: \hat{S}_b\in(x-\delta_n, x+\delta_n] \right|}{|\Xi_{\epsilon}^c|} \\
&= \frac{\left|\{b: \hat{S}_b\in(x-\delta_n, x+\delta_n] \right|}{B (1 - \rho_{\epsilon})} \\
&= \frac{1}{1-\rho_{\epsilon}} \left[\hat{G}_{m,n,P}(x+\delta_n) - \hat{G}_{m,n,P}(x-\delta_n) \right],
\end{split}
\end{equation*}
where $\rho_{\epsilon}$ is defined in \cref{eqs:pf-fraction-bound}. Further, on $\mathcal{A}_n$, we also have $\| \hat{G}_{n} - G_P \|_{\infty} \leq \epsilon^{G,P}_n$ and hence 
\[ \left|\mathbb{F}_{\tilde{S}_{\Xi_{\epsilon}^c}}(x) - \mathbb{F}_{\hat{S}_{\Xi_{\epsilon}^c}}(x) \right| \leq \frac{1}{1-\rho_{\epsilon}} \left[ G_P(x+\delta_n) - G_P(x-\delta_n) + 2 \epsilon^{G,P}_n \right]. \]
Taking supremum over $x$ and using \cref{assump:stable-G}, we have
\[ \left\|\mathbb{F}_{\tilde{S}_{\Xi_{\epsilon}^c}} - \mathbb{F}_{\hat{S}_{\Xi_{\epsilon}^c}} \right\|_{\infty} \leq \frac{1}{1-\rho_{\epsilon}}(2 g_{\max,P} \delta_n + 2 \epsilon^{G,P}_n). \]
Hence, by substituting the above, \cref{eqs:pf-delta} and \eqref{eqs:pf-fraction-bound} into \cref{eqs:pf-bound-emp-1}, we arrive at
\begin{align*}
\|\tilde{G}_{n} - \hat{G}_{n}\|_{\infty} &\leq 2 g_{\max,P} \delta_n + 2 \epsilon^{G,P}_n  + 4  L \epsilon^{F,P}_n + 1/B_n\\
&= \frac{2 g_{\max,P}}{|\Phi^{-1}(\epsilon^{F,P}_n)|} + 2 \epsilon^{G,P}_n + 4 L \epsilon^{F,P}_n + \frac{1}{B_n} \quad \text{on } \mathcal{A}_n.
\end{align*}
Finally, combining the above and $\| \hat{G}_{n} - G_P \|_{\infty} \leq \epsilon^{G,P}_n$, we conclude that on $\mathcal{A}_n$,
\[ \|\tilde{G}_{n} - G_{P}\|_{\infty} \leq \frac{2 g_{\max,P}}{|\Phi^{-1}(\epsilon^{F,P}_n)|} + 4 L \epsilon^{F,P}_n + 3 \epsilon^{G,P}_n + 1/B_n, \]
whose probability tends to one.
\end{enumerate}
\end{proof}

\begin{remark}
The normal case needs a separate proof because when $S$ is Lipschitz, $(t_1, \dots, t_L) \mapsto S(\Phi^{-1}(t_1),\dots,\Phi^{-1}(t_L))$ need not be Lipschitz.
\end{remark}

\subsection{Proof of \cref{thm:main-null}}
\begin{proof}
Observe that with $m = \floor{n / \log n}$ and $B_n = \JJ \floor{n/m}$, the bounds on $\|\tilde{G}_{n} - G_P\|_{\infty}$ in \cref{lem:rank-approx} tend to zero as $n \rightarrow \infty$. We have $\|\tilde{G}_{n} - G_P\|_{\infty} \rightarrow_p 0$. 

\begin{enumerate}[(i)]
\item This follows from the proof of (ii) by taking $\mathcal{P}_0 = \{P\}$. 

\item Note first that by \cref{lem:G_P_unif} and \citet[Lem.~2.11]{van2000asymptotic}, we have
\[
\sup_{P \in \mathcal{P}_0} \|F_{n,P} - F_0 \|_\infty \rightarrow 0, \quad \sup_{P \in \mathcal{P}_0} \|G_{n,P}- G_P \|_\infty \rightarrow 0.
\]
By \cref{lem:rank-approx} and the given assumptions, there exists a sequence $\gamma_n \rightarrow 0$ that does not depend on $P$, such that 
\[ \sup_{P \in \mathcal{P}_0} \epsilon^{F,P}_n \leq \gamma_n, \quad \sup_{P \in \mathcal{P}_0} \epsilon^{G,P}_n \leq \gamma_n, \]
where $\epsilon^{F,P}_n$ and $\epsilon^{G,P}_n$ are defined in \cref{lem:rank-approx}. Also, because the probability in \cref{lem:bound-unif-approx} does not depend on $P$, there exists sequences $\lambda_n \rightarrow 0$ and $\delta_n \rightarrow 0$, which do not depend on $P$, such that for every $P \in \mathcal{P}_0$,
\[ \mathcal{D}_{n,P} := \{\|\tilde{G}_n - G_P\|_{\infty} < \lambda_n\} \]
occurs with probability at least $1 - \delta_n$. Let $\mathcal{B}_n := \left\{S_n > \tilde{G}_n^{-1}(1-\alpha) \right\}$. 

We first show that $\limsup_{n} \sup_{P \in \mathcal{P}_0} \mathbb{P}_{P}(\mathcal{B}_n) \leq \alpha$. 
Observe that 
\[ 1-\alpha \leq \tilde{G}_n\left(\tilde{G}_n^{-1}(1-\alpha)\right) \leq G_P\left(\tilde{G}_n^{-1}(1-\alpha)\right) + \|\tilde{G}_n - G_P\|_{\infty} \]
using \cref{lem:df}, 
which implies
\begin{equation}
G_P\left(\tilde{G}_n^{-1}(1-\alpha)\right) \geq 1 - \alpha - \|\tilde{G}_n - G_P\|_{\infty}.
\end{equation}
It follows that on $\mathcal{B}_n$,
\[ G_P(S_n) \geq G_P\left(\tilde{G}_n^{-1}(1-\alpha) \right) \geq 1 - \alpha - \|\tilde{G}_n - G_P\|_{\infty}.\]
Hence on $ \mathcal{B}_n \cap \mathcal{D}_n$,
\[ S_n > G_P^{-1}(1-\alpha-\lambda_n). \]
We have
\begin{align*}
\mathbb{P}_P(\mathcal{B}_n) &\leq \mathbb{P}_P(\mathcal{B}_n \cap \mathcal{D}_n) + \mathbb{P}_P(\mathcal{D}_n^c)\\
& \leq \mathbb{P}_P\left\{ S_n > G_P^{-1}(1-\alpha-\lambda_n) \right\} + \delta_n \\
&= 1 - G_{n,P}\left(G_P^{-1}(1-\alpha-\lambda_n) \right) + \delta_n \\
&\leq 1 - G_P\left(G_P^{-1}(1-\alpha-\lambda_n) \right) + \|G_P - G_{n,P}\|_{\infty} + \delta_n \\
\text{(by \cref{lem:df})} \quad & \leq \alpha+\lambda_n+\|G_P - G_{n,P}\|_{\infty} + \delta_n,
\end{align*}
where $G_{n,P}$ is the distribution function of $S_n$. Because $\lambda_n$ and $\delta_n$ do not depend on $P$, it follows that
\[ \sup_{P \in \mathcal{P}_0} \mathbb{P}_P(\mathcal{B}_n) \leq \alpha+\lambda_n+\sup_{P \in \mathcal{P}_0} \|G_P - G_{n,P}\|_{\infty} + \delta_n,\]
and hence $\limsup_{n} \sup_{P \in \mathcal{P}_0} \mathbb{P}_P(\mathcal{B}_n) \leq \alpha$ because $\lambda_n \rightarrow 0$, $\delta_n \rightarrow 0$ and $\sup_{P \in \mathcal{P}_0} \|G_P - G_{n,P}\|_{\infty} \rightarrow 0$.

We now show that $\liminf_n \inf_{P \in \mathcal{P}_0} \mathbb{P}_P(\mathcal{B}_n) \geq \alpha$, which will yield the final conclusion. Let 
\begin{equation} \label{eqs:proof-y}
 y_{n,P}:=\sup\left\{x: G_P(x) \leq 1 - \alpha + \lambda_n + 1/B_n \right\}
\end{equation}
and consider $\mathcal{F}_{n,P} := \{S_n > y_{n,P}\}$. Observe that $\mathcal{F}_{n,P}$ implies $G_P(S_n) - \lambda_n > 1 - \alpha + 1/B$. Then, $\mathcal{F}_{n,P} \cap \mathcal{D}_{n,P}$ implies $\tilde{G}_n(S_n) > 1 - \alpha + 1/B_n$, which further leads to 
\[ S_n > \tilde{G}_n^{-1}(1 - \alpha) \]
because $\tilde{G}_n$ is an empirical measure with $B$ atoms. Hence, we have $\mathcal{F}_{n,P} \cap \mathcal{D}_{n,P} \subseteq \mathcal{B}_{n} \cap \mathcal{D}_{n,P}$ and therefore
\begin{align*}
\mathbb{P}_P(\mathcal{B}_n) &\geq \mathbb{P}_P(\mathcal{B}_n \cap \mathcal{D}_{n,P}) \\
& \geq \mathbb{P}_P(\mathcal{F}_{n,P} \cap \mathcal{D}_{n,P}) \\
& \geq \mathbb{P}_P(\mathcal{F}_{n,P}) - \mathbb{P}_P(\mathcal{D}_{n,P}^c) \\
& = 1 - G_{n,P}(y_{n,P}) - \mathbb{P}_P(\mathcal{D}_{n,P}^c) \\
& \geq 1 - G_{n,P}(y_{n,P}) - \delta_{n} \\
& \geq 1 - G_{P}(y_{n,P}) - \|G_P - G_{n,P}\|_{\infty} - \delta_{n} \\
& \stackrel{(i)}{=} 1 - (1 - \alpha + \lambda_n + 1/B_n) - \|G_P - G_{n,P}\|_{\infty} - \delta_{n} \\
& =  \alpha - \lambda_n - 1/B_n - \|G_P - G_{n,P}\|_{\infty} - \delta_{n},
\end{align*}
where (i) uses \cref{eqs:proof-y} and the continuity of $G_P$. Because sequences $\lambda_n$, $B$ and $\delta_n$ do not depend on $P$, it then follows that 
\[ \inf_{P \in \mathcal{P}_0} \mathbb{P}_P(\mathcal{B}_n) \geq  \alpha - \lambda_n - 1/B - \sup_{P \in \mathcal{P}_0} \|G_P - G_{n,P}\|_{\infty} - \epsilon_{n}, \]
and hence $\liminf_{n} \inf_{P \in \mathcal{P}_0} \mathbb{P}_P(\mathcal{B}_n) \geq \alpha$ because $\lambda_n \rightarrow 0$, $B \rightarrow \infty$, $\delta_n \rightarrow 0$ and $\sup_{P \in \mathcal{P}_0} \|G_P - G_{n,P}\|_{\infty} \rightarrow 0$.
\end{enumerate}
\end{proof}

\begin{remark}
The proof above needs no assumption on $G_P$ being strictly increasing at $G_P^{-1}(1-\alpha)$; see also \citet[Remark 1.2.1]{politis1999subsampling}.
\end{remark}
\subsection{Proof of \cref{thm:main-null-adaptive}}
We first prove the following lemma.
\begin{lemma} \label{lem:adaptive}
Under the assumptions of \cref{thm:main-null-adaptive} and \cref{assump:stable-joint}, it holds that $\|\tilde{Q}_n - Q_P\|_{\infty} \rightarrow_p 0$, where $Q_P$ is the distribution function of $\max\{G_P^1(S^1), \dots, G_P^W(S^W)\}$.
\end{lemma}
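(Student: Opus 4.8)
\medskip
\noindent\textbf{Proof proposal.}
The plan is to sandwich $\tilde{Q}_n$ between $Q_P$ and an intermediate ``oracle'' empirical distribution function, and to control the rank transform by re-running the perturbation bookkeeping from the proof of \cref{lem:rank-approx}. For $b=1,\dots,B_n$ set $\hat{S}_b^w := S^w(\hat{H}_{b,1},\dots,\hat{H}_{b,L})$ and define the oracle aggregates $R_b^* := \max_{1\le w\le W} G_P^w(\hat{S}_b^w)$, with empirical distribution function $Q_n^* := \mathbb{F}_{\{R_b^*\}}$. Since $\|\tilde{Q}_n - Q_P\|_\infty \le \|\tilde{Q}_n - Q_n^*\|_\infty + \|Q_n^* - Q_P\|_\infty$, it suffices to show each summand is $o_p(1)$.

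\emph{Step 1 (subsampling consistency of the oracle).} Each $R_b^*$ is a real-valued statistic computed on the $b$-th subsample (with fresh external randomness), namely $\max_w G_P^w(S_m^w)$ evaluated there; call its law $Q_{m,P}$. By \cref{assump:stable-joint}, $(S_m^1,\dots,S_m^W)\rightarrow_d (S^1,\dots,S^W)$ as $m\to\infty$, and since each $G_P^w$ is continuous the continuous mapping theorem gives $Q_{m,P}\to Q_P$ pointwise; as $Q_P$ is continuous---it is the law of a maximum of finitely many $\unif(0,1)$ variables, hence atomless---this upgrades to $\|Q_{m,P}-Q_P\|_\infty\to 0$ by \citet[Lem.~2.11]{van2000asymptotic}. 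A tail bound of exactly the type in \cref{lem:hoeffding,lem:bound-unif-approx}, obtained by decomposing the $B_n$ subsamples into $\JJ$ collections of $\floor{n/m}$ pairwise non-overlapping (hence independent) subsamples and applying the Dvoretzky--Kiefer--Wolfowitz inequality within each collection, then yields $\|Q_n^*-Q_{m,P}\|_\infty\le 1/K_n$ with probability at least $1-2(n/m)^{-2/c_0^2}$. Hence $\|Q_n^*-Q_P\|_\infty\rightarrow_p 0$.

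\emph{Step 2 (the rank transform is harmless) and conclusion.} Let $\mathcal{E}_n$ be the intersection, over $w=1,\dots,W$, of the events of \cref{lem:bound-unif-approx} taken with $F=F_0$ and $G=G_P^w$; a union bound gives $\pr_P(\mathcal{E}_n)\to 1$. On $\mathcal{E}_n$: first, by \cref{lem:rank-approx} applied to each $S^w$, $\|\tilde{G}_n^w-G_P^w\|_\infty\to 0$; second, the proof of \cref{lem:rank-approx} produces an index set $\Xi_n\subseteq\{1,\dots,B_n\}$---empty when $F_0=\unif(0,1)$, and with $|\Xi_n|/B_n =: \rho_n\to 0$ when $F_0=\N(0,1)$, and in either case determined solely by the ranks of the entries of $\hat{\mathbf{H}}$, hence independent of $w$---such that, by \cref{cond:lip}, $\max_w|\tilde{S}_b^w-\hat{S}_b^w|\le\delta_n$ for all $b\notin\Xi_n$, with a deterministic $\delta_n\to 0$. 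Consequently, for $b\notin\Xi_n$,
\[
|\tilde{R}_b - R_b^*| \le \max_w\Big(\|\tilde{G}_n^w-G_P^w\|_\infty + g_{P,\max}^w\,|\tilde{S}_b^w-\hat{S}_b^w|\Big) \le \eta_n := \max_w\Big(\|\tilde{G}_n^w-G_P^w\|_\infty + g_{P,\max}^w\,\delta_n\Big),
\]
where the middle bound uses the mean value theorem and the density bound $g_{P,\max}^w<\infty$ of \cref{assump:stable-joint}, and $\eta_n\rightarrow_p 0$. Splitting $\tilde{Q}_n$ and $Q_n^*$ according to $\Xi_n$ versus $\Xi_n^c$ exactly as in \eqref{eqs:pf-bound-emp-1} and applying \cref{lem:empirical-bound} on $\Xi_n^c$ then gives, on $\mathcal{E}_n$,
\[
\|\tilde{Q}_n-Q_n^*\|_\infty \le \sup_x\big(Q_n^*(x+\eta_n)-Q_n^*(x-\eta_n)\big) + \rho_n \le \omega_{Q_P}(2\eta_n) + 2\|Q_n^*-Q_P\|_\infty + \rho_n,
\]
where $\omega_{Q_P}$ is the modulus of continuity of $Q_P$, which tends to $0$ at $0$ because a bounded, monotone, continuous function on $\R$ is uniformly continuous. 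By Step 1 and $\eta_n\rightarrow_p 0$ all three terms vanish in probability, so $\|\tilde{Q}_n-Q_n^*\|_\infty\rightarrow_p 0$, and combining with Step 1 completes the proof. (Under the stronger hypotheses of \cref{thm:main-null-adaptive}(ii) every bound above can be made uniform over $P\in\mathcal{P}_0$, yielding the uniform version.)

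\emph{Main obstacle.} Unlike $\tilde{G}_n^w$, the quantity $\tilde{R}_b=\max_w\tilde{G}_n^w(\tilde{S}_b^w)$ is \emph{not} obtained by applying a fixed Lipschitz aggregation to $\tilde{\mathbf{H}}_{b,\cdot}$ (the ``aggregation'' depends on the data through the random $\tilde{G}_n^w$), so \cref{lem:rank-approx} cannot be invoked directly and the perturbation analysis must be redone by hand; relatedly, $Q_P$ need not admit a bounded density---it is merely a max of possibly perfectly dependent uniforms---so the final smoothing step must go through uniform continuity of $Q_P$ rather than a mean-value/density argument. The $F_0=\N(0,1)$ case additionally requires carrying the exceptional index set $\Xi_n$ through, exactly as in \cref{lem:rank-approx}.
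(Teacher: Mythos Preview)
Your proof is correct and follows essentially the same route as the paper: introduce the oracle $\hat{R}_b=\max_w G_P^w(\hat{S}_b^w)$ and its empirical distribution (your $Q_n^*$, the paper's $\hat{Q}_n$), establish subsampling consistency for it, and then control $|\tilde{R}_b-\hat{R}_b|$ by combining $\|\tilde{G}_n^w-G_P^w\|_\infty$ with the Lipschitz bound $|\tilde{S}_b^w-\hat{S}_b^w|\le\delta_n$ from \cref{lem:rank-approx}, finishing via \cref{lem:empirical-bound}. Your treatment of the $F_0=\N(0,1)$ case, carrying the exceptional index set $\Xi_n$ through explicitly, is in fact more detailed than the paper's, which only sketches that case.

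The one substantive difference is your handling of the smoothness of $Q_P$. You argue via uniform continuity (bounded, monotone, continuous implies uniformly continuous), which is valid. The paper instead proves the sharper fact that $Q_P$ is $W$-Lipschitz: for $t>u$, $Q_P(t)-Q_P(u)\le \sum_w \pr_P(u<G_P^w(S^w)\le t)=W(t-u)$, so $Q_P$ has density bounded by $W$. Thus your remark in the ``main obstacle'' paragraph that ``$Q_P$ need not admit a bounded density'' is mistaken---it always does. Your weaker uniform-continuity argument suffices for the pointwise statement of the lemma, but the paper's Lipschitz bound is what makes the uniform-in-$P$ version (needed for \cref{thm:main-null-adaptive}(ii)) go through cleanly, since the constant $W$ does not depend on $P$.
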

\begin{proof}
We show the result by relating $R_n$ to 
\[ R_n^{0} := \max(G_{P}^{1}(S_n^{1}), \dots, G_{P}^{W}(S_n^{1})) \]
and relating $\tilde{Q}_n := \mathbb{F}_{\{\tilde{R}_b\}}$ to the subsampling estimate $\hat{Q}_n := \mathbb{F}_{\{\hat{R}_b\}}$ with $\{\hat{R}_b\}$ computed as follows. Applying $S^1, \dots, S^W$ to the $b$-th row of the untransformed matrix $\hat{\mathbf{H}}$, we get 
\[ \hat{S}_b^{1}:=S^1(\hat{H}_{b,1}, \dots,\hat{H}_{b,L}), \quad \dots, \quad \hat{S}_b^{W}:=S^W(\hat{H}_{b,1}, \dots,\hat{H}_{b,L}).  \]
Recall that for $w=1,\dots,W$, $G_P^w$ is the asymptotic distribution function of $S_n^{w}$. We define 
\[ \hat{R}_b := \max\left\{G_P^{1}(\hat{S}_b^1), \dots, G_P^{W}(\hat{S}_b^W) \right\}, \quad b=1,\dots,B_n; \]
compare this with the definition of $\tilde{R}_b$ in \cref{alg:agg-test-adaptive}. 

First note that $Q_P$ has a density that is bounded by $W$. Indeed, fix $1 \geq t > u \geq 0$ and set $\mathcal{T}_w := \{G_P^w(S_n^w) \leq t\}$ and $\mathcal{U}_w:= \{G_P^w(S_n^w) \leq u\}$ for $w=1,\ldots,W$. Then
\begin{align*}
	Q_P(t) - Q_P(u) &= \pr_P\left(\cap_w \mathcal{T}_w \right) - \pr_P( \cap_w \mathcal{U}_w) \\
	&\leq \pr_P \left\{\left(\cap_w \mathcal{T}_w\right) \setminus \left(\cap_w \mathcal{U}_w\right) \right\}.
\end{align*}
Now
\begin{align*}
	\left(\cap_w \mathcal{T}_w\right) \setminus \left(\cap_w \mathcal{U}_w\right)&= \cap_w \mathcal{T}_w \cap \left(\cap_w  \mathcal{U}_w\right)^c \\
	&= \cap_w \mathcal{T}_w  \cap \left(\cup_w  \mathcal{U}_w^c\right) \\
	&= \cup_w \left(\cap_w \mathcal{T}_w  \cap  \mathcal{U}_w^c \right) \\
	&\subseteq \cup_w \left\{u < G_P^w(S_n^w) \leq t \right\}.
\end{align*}
Thus by a union bound,
\[
Q_P(t) - Q_P(u) \leq W(t-u),
\]
proving the claim.

By the joint stability in \cref{assump:stable-joint}, the distribution of $R^0_n$ converges to $Q_P$. By definition, $\hat{Q}_n$ is the natural subsampling estimate of $Q_P$. Let 
\[ \mathcal{Q}_{n,P} := \left\{\|\hat{Q}_n - Q_P\|_{\infty} \leq 1/K_n + \|Q_{n,P} - Q_{P}\|_{\infty} \right\}, \]
where $K_n$ is defined in \cref{eqs:K} and $Q_{n,P}$ is the distribution function of $R_n^0$. By \cref{assump:stable-joint}, $\|Q_{n,P} - Q_P\|_{\infty} \rightarrow 0$. By an argument similar to \cref{lem:bound-unif-approx}, we know $\mathbb{P}_P(\mathcal{Q}_{n,P}) \rightarrow 1$. In what follows, let $\mathcal{E}_n^w$ be the event that the corresponding bound in \cref{lem:rank-approx} on $\|\tilde{G}^w_n - G_P^w\|$ holds, namely
\[ \mathcal{E}_{n,P}^w := \left\{ \|\tilde{G}_n^w - G_P^w\|_{\infty} \leq \Delta_{n,P}^w \right \}, \] 
where 
\begin{equation} \label{eqs:proof-Delta}
\Delta_{n,P}^w = \begin{cases} 2 \, g^w_{\max,P}\, \epsilon^{F,P}_n + 3 \,\epsilon^w_{G,P}, \quad & F_0 = \unif(0,1) \\
 2 \, g^w_{\max,P} / |\Phi^{-1}(\epsilon^{F,P}_n)| + 4 L \,\epsilon^{F,P}_n  + 3 \, \epsilon^w_{G,P} + 1/B_n, \quad & F_0 = \N(0,1)
\end{cases}
\end{equation}
with $\epsilon^{F,P}_n$ and $\epsilon^w_{G,P}$ given by \cref{lem:rank-approx}. From \cref{lem:rank-approx}, we know $\mathbb{P}_P(\mathcal{E}_{n,P}^w) \rightarrow 1$. We now prove $\|\tilde{Q}_n - Q_P\|_{\infty} \rightarrow_p 0$.

\medskip We first show the $F_0 = \unif(0,1)$ case. Let $\mathcal{A}_{n,P}$ be the event that the bound in \cref{lem:bound-unif-approx} holds, which implies $\| \hat{F}_{n} - F_0 \|_{\infty} \leq \epsilon^{F,P}_n$; we have $\mathbb{P}_P(\mathcal{A}_{n,P}) \rightarrow 1$. Because \cref{cond:lip} holds for $S^1, \dots, S^W$, we have that on $ \mathcal{A}_{n,P}$,
\[
|\tilde{S}_b^w - \hat{S}_b^w| \leq \epsilon^{F,P}_n, \quad b=1,\dots,B_n, \quad w=1,\dots,W.
\]
Then, we have for $w=1,\dots, W$, on the event $\mathcal{A}_{n,P} \cap \mathcal{E}_{n,P}^w$,
\begin{align*}
|\tilde{G}_n^w(\tilde{S}_b^w) - G_P^w(\hat{S}_b^w)| &\leq |\tilde{G}_n^w(\tilde{S}_b^w) - G_P^w(\tilde{S}_b^w)| + |G_P^w(\tilde{S}_b^w) - G_P^w(\hat{S}_b^w)| \\
& \leq \Delta_{n,P}^w + g_{P,\max}^w \, \epsilon^{F,P}_n, \quad b=1,\dots,B_n.
\end{align*}
Further, on $\mathcal{A}_{n,P} \cap (\cap_w \mathcal{E}_{n,P}^w)$, it holds that 
\[ |\tilde{G}_n^w(\tilde{S}_b^w) - G_P^w(\hat{S}_b^w)| \leq (\max_w \Delta_{n,P}^w) + (\max_w g_{P,\max}^w) \, \epsilon^{F,P}_n, \quad b=1,\dots,B_n, \quad w=1,\dots,W, \]
and, because $\max(\cdot)$ is 1-Lipschitz in $\|\cdot\|_{\infty}$, that
\[ |\tilde{R}_b - \hat{R}_b| \leq (\max_w \Delta_{n,P}^w) + (\max_w g_{P,\max}^w) \, \epsilon^{F,P}_n, \quad b=1,\dots,B_n. \]
Because $\tilde{Q}_n := \mathbb{F}_{\{\tilde{R}_b\}}$ and $\hat{Q}_n := \mathbb{F}_{\{\hat{R}_b\}}$, using \cref{lem:empirical-bound} one can show that on $\mathcal{A}_{n,P} \cap (\cap_w \mathcal{E}_{n,P}^w)$,
\[
\|\tilde{Q}_n - \hat{Q}_n\|_{\infty} \leq 2 q_{\max,P} \left[(\max_w \Delta_{n,P}^w) + (\max_w g_{P,\max}^w) \, \epsilon^{F,P}_n \right] + 2 \|\hat{Q}_n - Q_P\|_{\infty}.
\]
It follows that, on $\mathcal{Q}_{n,P} \cap \mathcal{A}_{n,P} \cap (\cap_w \mathcal{E}_{n,P}^w)$, 
\begin{multline} \label{eqs:proof-bound-Q}
\|\tilde{Q}_n - Q_P\|_{\infty} \leq 2 W \left[(\max_w \Delta_{n,P}^w) + (\max_w g_{P,\max}^w) \, \epsilon^{F,P}_n \right] \\
+ 3\left(1/K_n + \|Q_{n,P} - Q_{P}\|_{\infty}\right). 
\end{multline}
The bound vanishes because
$\|Q_{n,P} - Q_P\|_{\infty} \rightarrow 0$ (by \cref{assump:stable-joint}), $\epsilon^{F,P}_n \rightarrow 0$, $\max_w \epsilon^w_{G,P} \rightarrow 0$ (by \cref{lem:rank-approx}, \cref{cond:pivotal,assump:stable-joint}), $\max_w \Delta_{n,P}^w \rightarrow 0$ (by \cref{eqs:proof-Delta}) and $K \rightarrow \infty$ (by definition). Since $\mathbb{P}_P(\mathcal{Q}_{n,P} \cap \mathcal{A}_{n,P} \cap (\cap_w \mathcal{E}_{n,P}^w)) \rightarrow 1$ by the union bound, we conclude that $\|\tilde{Q}_n - Q_P\|_{\infty} \rightarrow_p 0$.

\medskip The $F_0 = \N(0,1)$ case can be shown by an argument similar to the above and the argument used for proving the  corresponding case in \cref{lem:rank-approx}. 
\end{proof}

\begin{proof}[of \cref{thm:main-null-adaptive}]
We continue to use the notation for proving the last lemma. 
\begin{enumerate}[(i)]
\item This follows from the proof of (ii) by taking $\mathcal{P}_0 = \{P\}$. 

\item 
Note first that by \cref{lem:G_P_unif} and \citet[Lem.~2.11]{van2000asymptotic}, we have
\[
\sup_{P \in \mathcal{P}_0} \|F_{n,P} - F_0 \|_\infty \rightarrow 0, \quad \max_w \sup_{P \in \mathcal{P}_0} \|G^w_{n,P}- G^w_P \|_\infty \rightarrow 0, \quad \sup_{P \in \mathcal{P}_0} \|Q_{n,P} - Q_P \|_\infty \rightarrow 0.
\]
Now fix $P \in \mathcal{P}_0$. Observe that on the event $ \cap_w \mathcal{E}_{n,P}^w$, we have
\[|R_n^0 - R_n| \leq \max_w \Delta_{n,P}^w,\]
where each $\Delta_{n,P}^w$ in \cref{eqs:proof-Delta} tends to zero and $\mathbb{P}_P(\cap_w \mathcal{E}_{n,P}^w ) \rightarrow 1$. 
Now let $P$ vary in $\mathcal{P}_0$. 
Under the given assumptions, observe that there exists a sequence $\delta_n \rightarrow 0$, which does not depend on $P$, such that $\sup_{P \in \mathcal{P}_0} \max_w \Delta_{n,P}^w \leq \delta_n$. Further, because the probability in \cref{lem:bound-unif-approx} does not depend on $P$, also observe that 
\[ \sup_{P \in \mathcal{P}_0} \mathbb{P}_{P} \left\{\cup_{w} (\mathcal{E}_{n,P}^w)^c \right\} \rightarrow 0\]
holds under the given assumptions; see also the proof of \cref{lem:rank-approx}. 
Hence, with $\delta_n \rightarrow 0$ that does not depend on $P$, we have
\[ \sup_{P \in \mathcal{P}_0} \mathbb{P}_{P}\left(|R_n - R_n^0|  > \delta_n \right) \rightarrow 0. \]

Also, observe from the proof of (1), e.g., \cref{eqs:proof-bound-Q} for the case of $F_0 = \unif(0,1)$, that there exists sequence $\lambda_n \rightarrow 0$, which does not depend on $P$, such that 
\[ \sup_{P \in \mathcal{P}_0} \mathbb{P}_{P}\left(\|\tilde{Q}_n - Q_P\|_{\infty} > \lambda_n \right) \rightarrow 0. \]
The result then follows from an argument similar to that for proving (ii) of \cref{thm:main-null}. 
\end{enumerate}
\end{proof}
\subsection{Proofs of power results}
\begin{proposition} \label{prop:most-powerful}
Let $(T_n^{(1)}, \dots, T_n^{(L)})$ be $L$ exchangeable test statistics. Suppose $\phi_n(T_n^{(1)}, \dots, T_n^{(L)})$ is a most powerful at test level $\alpha$ for testing certain $P_0 \in \mathcal{P}_0$ against certain $P_1 \in \mathcal{P} \setminus \mathcal{P}_0$. Then $\phi_n$ equals a symmetric function of $(T_n^{(1)},\dots, T_n^{(n)})$ $P_0$-almost everywhere. 
\end{proposition}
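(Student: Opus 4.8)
The plan is to combine the essential uniqueness of the Neyman--Pearson test with the permutation-invariance forced by exchangeability. Write $\mu_0$ and $\mu_1$ for the laws of the vector $(T_n^{(1)},\dots,T_n^{(L)})$ under $P_0$ and $P_1$ respectively; since $(T_n^{(1)},\dots,T_n^{(L)})$ is exchangeable under every $P \in \mathcal{P}$, both $\mu_0$ and $\mu_1$ are invariant under the natural action of the symmetric group $\Sym_L$ on $\R^L$. Fix the $\Sym_L$-invariant, $\sigma$-finite dominating measure $\mu := \mu_0 + \mu_1$ and set $p_0 := \dd\mu_0/\dd\mu$, $p_1 := \dd\mu_1/\dd\mu$. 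Because $\mu_i$ is $\Sym_L$-invariant, $p_i \circ \sigma = p_i$ holds $\mu$-a.e.\ for each $\sigma \in \Sym_L$, so replacing $p_i$ by its average $(L!)^{-1}\sum_{\sigma \in \Sym_L} p_i \circ \sigma$ we may and do take $p_0, p_1$ to be genuinely $\Sym_L$-invariant (symmetric) functions on $\R^L$.

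Next I would invoke the necessity part of the Neyman--Pearson lemma \citep[Theorem~3.2.1]{lehmann2005testing}: since $\phi_n$ is most powerful at level $\alpha$ for $\mu_0$ against $\mu_1$, there is a constant $k \in [0,\infty)$ with
\[ \phi_n = 1 \ \text{on} \ \{p_1 > k p_0\}, \qquad \phi_n = 0 \ \text{on} \ \{p_1 < k p_0\}, \qquad \mu\text{-a.e.} \]
Because $p_0$ and $p_1$ are symmetric, the set $\{p_1 > k p_0\}$ is $\Sym_L$-invariant, so off the randomization set $N := \{p_1 = k p_0\}$ we have $\mu$-a.e.\ that $\phi_n$ agrees with the symmetric function $t \mapsto \I\{p_1(t) > k p_0(t)\}$. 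To exhibit a single symmetric representative on all of $\R^L$, consider the symmetrization $\bar\phi_n(t) := (L!)^{-1}\sum_{\sigma \in \Sym_L} \phi_n(t_{\sigma(1)},\dots,t_{\sigma(L)})$. Exchangeability under $P_0$ gives $\E_{P_0}\bar\phi_n = \E_{P_0}\phi_n \le \alpha$, and exchangeability under $P_1$ gives $\E_{P_1}\bar\phi_n = \E_{P_1}\phi_n$, which is the optimal power; hence $\bar\phi_n$ is itself a most powerful level-$\alpha$ test, it is symmetric by construction, and it too equals $\I\{p_1 > k p_0\}$ off $N$. Therefore $\phi_n$ and $\bar\phi_n$ can differ only on $N$.

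The remaining step, and the main obstacle, is controlling the behaviour on the randomization region $N$, where Neyman--Pearson uniqueness is silent. In the cases of interest here one argues $P_0(N) = 0$: this holds whenever the likelihood ratio $p_1/p_0$ of the statistic vector has no atom under $P_0$, in particular whenever $(T_n^{(1)},\dots,T_n^{(L)})$ has a continuous law, which is the generic situation in the settings of this paper (e.g.\ $T_n^{(1)}$ arising from a $Z$-statistic or a continuous p-value). Under this condition $\phi_n = \bar\phi_n$ holds $P_0$-a.e., so $\phi_n$ coincides $P_0$-a.e.\ with the symmetric function $\bar\phi_n$, completing the proof. (If one does not wish to assume $P_0(N)=0$, the content of the result is that there is a symmetric most powerful test, namely $\bar\phi_n$, agreeing with $\phi_n$ outside the Neyman--Pearson randomization set, whose behaviour there affects neither the size nor the power.) Everything away from $N$ is immediate once the two densities have been replaced by symmetric versions, so essentially all the care goes into this last point.
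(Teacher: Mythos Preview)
Your proof is correct and follows the same core strategy as the paper: invoke the Neyman--Pearson lemma and use exchangeability to conclude that the densities $p_0, p_1$ are symmetric, so the likelihood-ratio indicator is symmetric. The paper's proof is a three-line version of this that simply asserts $\phi_n = \I\{p_1 > k p_0\}$ $\mu$-a.e.\ and stops, whereas you go further by (i) explicitly constructing symmetric versions of the densities via group averaging, (ii) introducing the symmetrized test $\bar\phi_n$, and (iii) confronting the randomization region $N = \{p_1 = k p_0\}$ head-on. The paper glosses over $N$ entirely; your treatment is strictly more careful, and your closing remark---that either $P_0(N)=0$ in the continuous settings of interest, or else the substantive content is the existence of a symmetric most powerful test---is exactly the right way to patch the gap the paper leaves.
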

\begin{proof}
Let $p_0$ and $p_1$ be respectively the densities of $P_0$ and $P_1$ with respect to some dominating measure $\mu$. By the Neyman--Pearson lemma \citep[Theorem 3.2.1]{lehmann2005testing}, $\phi_n$ must be equal to $\I\{ p_1(T_n^{(1)}, \dots, T_n^{(L)}) > k p_0(T_n^{(1)}, \dots, T_n^{(L)})\}$ for some $k$, $\mu$-almost everywhere. By exchangeability, $p_1$ and $p_0$ are symmetric in $(T_n^{(1)}, \dots, T_n^{(L)})$. The result follows. 
\end{proof}

\begin{proof}[of \cref{thm:stable-copula}]
We prove the result by showing that $\|\tilde{G}_n - G_{P_0}\|_{\infty} \rightarrow_p 0$.

\medskip We first prove the result for the case when $F_0 = \unif(0,1)$. Recall that in subsampling, for each $b=1,\dots,B_n$ and each $l=1,\dots,L$, we have
\[ \hat{H}_{b,l} = T_m^{(l)}(X_{i_{1,b}}, \dots, X_{i_{m,b}}), \]
whose distribution function is $F_{m,P_n}$ because data is drawn iid from $P_n$. With $K_n$ defined in \cref{eqs:K}, let 
\[ \mathcal{A}_n := \left \{ \|\mathbb{F}_{\hat{\mathbf{H}}} - F_{m}^{P_n} \|_{\infty} \leq 1/K_n \right \}. \]
Using \cref{lem:hoeffding}, it holds that $\mathbb{P}_{P_n}(\mathcal{A}_n) \rightarrow 1$. 
Also recall from \cref{alg:agg-test} that 
\[ \tilde{H}_{b,l} = F_0^{-1}\left(\mathbb{F}_{\hat{\mathbf{H}}}(\hat{H}_{b,l}) - 1 / (2B_n L)  \right) = \mathbb{F}_{\hat{\mathbf{H}}}(\hat{H}_{b,l}) - 1 / (2B_n L), \quad b=1,\dots,B_n,\; l=1,\dots,L.\]
Let $\check{H}_{b,l} := F_{m,P_n}(\hat{H}_{b,l})$ accordingly. From the definitions of $\tilde{H}_{b,l}$, $\check{H}_{b,l}$ and $\mathcal{A}_n$, observe that on $\mathcal{A}_n$,
\[ \max_{b,l} |\tilde{H}_{b,l} - \check{H}_{b,l}| \leq 1 / K_n + 1 / (2B_n L) =: \epsilon_n, \]
where $\epsilon_n \rightarrow 0$.
Recall that $\tilde{S}_b := S(\tilde{H}_{b,1}, \dots \tilde{H}_{b,L})$ for $b=1,\dots,B_n$. Accordingly, let $\check{S}_b := S(\check{H}_{b,1}, \dots \check{H}_{b,L})$. By \cref{cond:lip} on $S$, it follows that on $\mathcal{A}_n$, 
\begin{equation} \label{eqs:proof-S-approx}
\max_{b} |\tilde{S}_b - \check{S}_b| \leq \epsilon_n.
\end{equation}

Because $P_n$ converges in copula to $P_0$, observe that for every $b$, it holds that
\begin{align*}
(\check{H}_{b,1},\dots,\check{H}_{b,L}) &= \left(F_{m,P_n}(T_m^{(1)}(X_{i_{1,b}}, \dots, X_{i_{m,b}})), \dots, F_{m,P_n}(T_m^{(L)}(X_{i_{1,b}}, \dots, X_{i_{m,b}})) \right) \\
& \rightarrow_d (C^{(1)}, \dots, C^{(L)}).
\end{align*}
By the continuous mapping theorem, it then follows that
\begin{equation} \label{eqs:proof-conv-null}
\check{S}_b \rightarrow_{d} S(C^{(1)}, \dots, C^{(L)}) \sim G_{P_0}, \quad b=1,\dots,B_n.
\end{equation}
Let 
\[ \mathcal{B}_n := \left\{\|\mathbb{F}_{\{\check{S}_b\}} - G_{P_0}\|_{\infty} \leq 1/K_n + \|G_{m,P_n} - G_{P_0}\|_{\infty} =: \delta_n \right\}, \]
where $G_{m,P_n}$ is the distribution function of $S(F_{m,P_n}(T_m^{(1)}), \dots, F_{m,P_n}(T_m^{(L)}))$ under $(X,\Omega) \sim P_n^m \times P_{\Omega}$. 
By an argument similar to the proof of \cref{lem:bound-unif-approx}, we have that $\mathbb{P}_{P_n}(\mathcal{B}_n) \rightarrow 1$. Further, in light of \cref{eqs:proof-conv-null} where $\check{S}_b \sim G_{m,P_n}$ and the continuity of $G_{P_0}$ by \cref{assump:stable-G}, we know $\|G_{m,P_n} - G_{P_0}\|_{\infty} \rightarrow 0$ and hence $\delta_n \rightarrow 0$. 

Recall that $\tilde{G}_n := \mathbb{F}_{\{\tilde{S}_b\}}$. Now, by definition of $\mathcal{B}_n$ and \cref{eqs:proof-S-approx}, on $\mathcal{A}_n \cap \mathcal{B}_n$, using \cref{lem:empirical-bound}, for every $x$ it holds that 
\begin{align*}
|\tilde{G}_n(x) - \mathbb{F}_{\{\check{S}_b\}}(x)| & \leq \mathbb{F}_{\{\check{S}_b\}}(x + \epsilon_n) - \mathbb{F}_{\{\check{S}_b\}}(x - \epsilon_n) \\
& \leq G_{P_0}(x + \epsilon_n) - G_{P_0}(x - \epsilon_n) + 2 \delta_n \\
& \leq 2 g_{\max,P_0} \epsilon_n + 2 \delta_n
\end{align*}
and consequently,
\[ \|\tilde{G}_n - G_{P_0} \|_{\infty} \leq 2 g_{\max,P_0} \epsilon_n + 3 \delta_n \rightarrow 0\]
since $\epsilon_n \rightarrow 0$, $\delta_n \rightarrow 0$ and $g_{\max,P_0} < \infty$ (by \cref{assump:stable-G}). 
Because $\mathbb{P}_{P_n}(\mathcal{A}_n \cap \mathcal{B}_n) \rightarrow 1$ by a union bound, our desired result that $\|\tilde{G}_{n} - G_{P_0}\|_{\infty} \rightarrow_p 0$ follows.  

\medskip The proof for the case when $F_0 = \N(0,1)$ follows analogously
using an argument similar to that employed to prove case (2) of \cref{lem:rank-approx}.
\end{proof}
\subsection{Test after rejection sampling}
\begin{proof}[of \cref{prop:test-rej}]
Let $\hat{Q}_n$ be the distribution defined through $\dd \hat{Q}_n / \dd P = \hat{r}_n$. Note that $\hat{Q}_n$ is random because $\hat{r}_n$ is random. Let $\gamma_n$ be a sequence of positive integers such that $\gamma_n / n \rightarrow 1/C$.
Consider a triangular array consisting of rows $(\tilde{X}^n_1, \dots, \tilde{X}^n_{\gamma_n})$ drawn iid from $\hat{Q}_n$ for $n=1,2,\dots$, i.e., the $n$-th row is drawn from $\hat{Q}_n^{\gamma_n}$. 

We first show that $T_{\gamma_n}(\tilde{X}^n_1, \dots, \tilde{X}^n_{\gamma_n}) \rightarrow_d T$. Let $\pi_n$ be the total variation coupling between $(\tilde{X}_1^n, \dots, \tilde{X}_{\gamma_n}^n) \sim \hat{Q}_n^{\gamma_n}$ and $(\tilde{X}_1^{\ast}, \dots, \tilde{X}_{\gamma_n}^{\ast}) \sim Q^{\gamma_n}$, where $\hat{Q}_n^{\gamma_n}$ and $Q^{\gamma_n}$ are respectively the $\gamma_n$-fold product measures of $\hat{Q}_n$ and $Q$. Under $\pi_n$ it holds that 
\begin{align*}
\pr\left(T_{\gamma_n}(\tilde{X}_1^n, \dots, \tilde{X}_{\gamma_n}^n) \neq T_{\gamma_n}(\tilde{X}_1^{\ast}, \dots, \tilde{X}_{\gamma_n}^{\ast}) \mid \hat{r}_n \right ) &\leq \pr \left((\tilde{X}_1^n, \dots, \tilde{X}_{\gamma_n}^n) \neq (\tilde{X}_1^{\ast}, \dots, \tilde{X}_{\gamma_n}^{\ast}) \mid \hat{r}_n \right ) \\
& = \TV(\hat{Q}_n^{\gamma_n},\, Q^{\gamma_n}) \\
& \leq \gamma_n \TV(\hat{Q}_n, Q) = \gamma_n \int |\hat{r}_n - r| \dd P,
\end{align*}
where we used total variation's tensorisation property. Taking expectation over $\hat{r}_n$, we have
\begin{align*}
\pr \left(T_{\gamma_n}(\tilde{X}_1^n, \dots, \tilde{X}_{\gamma_n}^n) \neq T_{\gamma_n}(\tilde{X}_1^{\ast}, \dots, \tilde{X}_{\gamma_n}^{\ast}) \right ) &\leq \gamma_n \E \int |\hat{r}_n - r| \dd P \\
&= (\gamma_n / n) \, n \E \int |\hat{r}_n - r| \dd P \rightarrow 0,
\end{align*}
where the last step used \cref{eqs:dens-ratio-tv} and $\gamma_n / n \rightarrow 1/C$. Again, using the optimal coupling formulation of total variation, it follows that the total variation between the law of $T_{\gamma_n}(\tilde{X}_1^n, \dots, \tilde{X}_{\gamma_n}^n)$ and the law of $T_{\gamma_n}(\tilde{X}_1^{\ast}, \dots, \tilde{X}_{\gamma_n}^{\ast})$ tends to zero. 
Because total variation is a strong metric and $T_{\gamma_n}(\tilde{X}_1^{\ast}, \dots, \tilde{X}_{\gamma_n}^{\ast}) \rightarrow_d T$ by our assumption, we have that $T_{\gamma_n}(\tilde{X}_1^n, \dots, \tilde{X}_{\gamma_n}^n) \rightarrow_d T$. 

By the properties of rejection sampling \citep{von195113}, conditioned on $\Gamma_n = \gamma_n$, the accepted sample $\tilde{X}_1^n, \dots, \tilde{X}_{\gamma_n}^n$ are iid from $\hat{Q}_n$. Further, $\Gamma_n \sim \bin(n, 1/C)$ and hence $\Gamma_n / n \rightarrow_p 1/C$. Consider the sequence $\tilde{T}_{n} := T_{\Gamma_n}(\tilde{X}_1^n, \dots, \tilde{X}_{\Gamma_n}^n)$. For every subsequence $n_i \nearrow \infty$, there is a further subsequence $n_{i_k}$ such that $\Gamma_{n_{i_k}} / n_{i_k} \rightarrow 1/C$ almost surely. This completes the proof because by the result we showed earlier $\tilde{T}_{n_{i_k}}$ has the same limit law for every such subsequence.  
\end{proof}

\begin{proposition}[Parametric density ratio and simple test statistic] \label{prop:test-rej-parametric}
Consider the setting of \cref{prop:test-rej} without requiring \cref{eqs:dens-ratio-tv} to hold. Suppose $r(x) = r(x; \beta_0)$ holds for some parametric family $r(x;\beta)$ differentiable in $\beta$ at $\beta_0$, such that $\|\beta - \beta_0\| \rightarrow 0$ implies $\int |r(x; \beta) - r(x; \beta_0)|^2 \dd P(x) \rightarrow 0$. Further, let $\hat{r}_n(x) := r(x; \hat{\beta}_n)$, where $\hat{\beta}_n$ is a consistent, asymptotically normal estimator of $\beta_0$ fitted on a separate sample of size $n$ from $P$. Let $f$ be a real-valued function of $X$ such that $\E_P f^4(X) < \infty$. Define $T_n := n^{-1/2} \sum_{i=1}^n \{f(X_i) - \E_Q f(X)\}$ and $T_0 := 0$. Then we have
\[ T_{\Gamma_n}(\tilde{X}^{n}_1, \dots, \tilde{X}^{n}_{\Gamma_n}) \rightarrow_d \N\left(0,\, \var_Q f(X) + v_{\beta_0}^{\T} \Sigma v_{\beta_0} / C \right),\]
where $\Sigma$ is the asymptotic covariance of $\hat{\beta}_n$ and $v_{\beta_0} = \nabla_{\beta} \E_P f(X) r(X; \beta)\mid_{\beta=\beta_0}$.
\end{proposition}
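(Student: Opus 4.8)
The plan is to condition on the auxiliary estimator $\hat{\beta}_n$, which by construction is independent of $(X,U)$ since it is fitted on a separate sample, and then to split the statistic into a conditionally centred (Gaussian) part and a plug-in bias part whose fluctuations are driven by $\hat{\beta}_n$. Write $\mu(\beta) := \E_P f(X)\, r(X;\beta)$, so that $\E_Q f(X) = \mu(\beta_0)$, and let $\widehat{Q}_n$ be the law with $\dd\widehat{Q}_n/\dd P = r(\cdot;\hat{\beta}_n)$, so that $\E_{\widehat{Q}_n} f(X) = \mu(\hat{\beta}_n)$. Exactly as in the proof of \cref{prop:test-rej}, rejection sampling \citep{von195113} gives that, conditionally on $(\hat{\beta}_n,\Gamma_n)$, the accepted points $\tilde{X}^n_1,\dots,\tilde{X}^n_{\Gamma_n}$ are i.i.d.\ from $\widehat{Q}_n$, while $\Gamma_n \mid \hat{\beta}_n \sim \bin(n,1/C)$ (since $\int \hat{r}_n\,\dd P = 1$), so that $\Gamma_n/n \rightarrow_p 1/C$ and $\Gamma_n\rightarrow_p\infty$. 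On $\{\Gamma_n\ge 1\}$, whose complement has probability $(1-1/C)^n \rightarrow 0$, I would then write
\[
T_{\Gamma_n}(\tilde{X}^n_1,\dots,\tilde{X}^n_{\Gamma_n}) = \underbrace{\Gamma_n^{-1/2}\sum_{j=1}^{\Gamma_n}\bigl\{f(\tilde{X}^n_j)-\mu(\hat{\beta}_n)\bigr\}}_{=:\,A_n} \;+\; \underbrace{\Gamma_n^{1/2}\bigl\{\mu(\hat{\beta}_n)-\mu(\beta_0)\bigr\}}_{=:\,B_n}.
\]

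For the bias part $B_n$: by definition of $v_{\beta_0}$ the map $\mu$ is differentiable at $\beta_0$ with $\nabla\mu(\beta_0) = v_{\beta_0}$, so $\mu(\hat{\beta}_n)-\mu(\beta_0) = v_{\beta_0}^{\T}(\hat{\beta}_n-\beta_0) + o_p(\|\hat{\beta}_n-\beta_0\|)$; since $\|\hat{\beta}_n-\beta_0\| = O_p(n^{-1/2})$ and $\Gamma_n^{1/2} = O_p(n^{1/2})$, the remainder contributes $o_p(1)$, whence $B_n = (\Gamma_n/n)^{1/2}\cdot n^{1/2} v_{\beta_0}^{\T}(\hat{\beta}_n-\beta_0) + o_p(1) \rightarrow_d \N(0, v_{\beta_0}^{\T}\Sigma v_{\beta_0}/C)$ by Slutsky, using $\Gamma_n/n \rightarrow_p 1/C$ and the asymptotic normality of $\hat{\beta}_n$.

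For the centred part $A_n$: conditionally on $(\hat{\beta}_n,\Gamma_n)$ it is a normalised sum of $\Gamma_n$ i.i.d.\ draws of $f$ under $\widehat{Q}_n$, centred at their mean. Because $\hat{r}_n\le C$, there is a uniform fourth-moment bound $\E_{\widehat{Q}_n} f^4(X) \le C\,\E_P f^4(X) < \infty$, which furnishes a Lyapunov condition holding uniformly over $\hat{\beta}_n$ in a neighbourhood of $\beta_0$; moreover $\var_{\widehat{Q}_n} f(X) \rightarrow_p \var_Q f(X)$ by Cauchy--Schwarz together with the stated $L^2$-continuity of $r(\cdot;\beta)$ at $\beta_0$ and $\E_P f^4(X)<\infty$. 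Hence, for every $s\in\R$, the conditional characteristic function satisfies $\E[e^{\mathrm{i}s A_n}\mid \hat{\beta}_n,\Gamma_n] \rightarrow_p e^{-s^2\var_Q f(X)/2}$.

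Finally I would stitch the pieces together through characteristic functions. Since $B_n$ is $(\hat{\beta}_n,\Gamma_n)$-measurable,
\[
\E\bigl[e^{\mathrm{i}(sA_n + tB_n)}\bigr] = \E\Bigl[e^{\mathrm{i}tB_n}\,\E\bigl(e^{\mathrm{i}sA_n}\mid \hat{\beta}_n,\Gamma_n\bigr)\Bigr];
\]
the inner conditional characteristic function is bounded by $1$ and converges in probability to the constant $e^{-s^2\var_Q f(X)/2}$, while $e^{\mathrm{i}tB_n} \rightarrow_d e^{\mathrm{i}tN}$ with $N\sim\N(0, v_{\beta_0}^{\T}\Sigma v_{\beta_0}/C)$, so the right-hand side tends to $e^{-s^2\var_Q f(X)/2}\,\E[e^{\mathrm{i}tN}]$. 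By the bivariate Lévy continuity theorem this identifies the joint limit of $(A_n, B_n)$ as a pair of \emph{independent} Gaussians, and the continuous mapping theorem applied to addition yields $T_{\Gamma_n} = A_n + B_n \rightarrow_d \N\bigl(0,\ \var_Q f(X) + v_{\beta_0}^{\T}\Sigma v_{\beta_0}/C\bigr)$, as claimed; as in the proof of \cref{prop:test-rej}, a subsequence argument removes the need to worry about the convention $T_0 := 0$ and other small-$\Gamma_n$ effects. The main obstacle is this last stitching step: the limiting randomness of $B_n$ is generated by the very variables $(\hat{\beta}_n,\Gamma_n)$ on which the conditional central limit theorem for $A_n$ is phrased, so a naive Slutsky argument does not apply --- the characteristic-function factorisation above is what makes the joint convergence go through --- and it sits alongside the routine but slightly delicate verification of the conditional Lindeberg/Lyapunov condition uniformly in $\hat{\beta}_n$ near $\beta_0$.
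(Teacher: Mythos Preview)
Your proposal is correct and follows essentially the same route as the paper: decompose $T_{\Gamma_n}$ into a conditionally centred part and a plug-in bias part, establish a conditional Lyapunov CLT for the former (using $\hat r_n\le C$ and the $L^2$-continuity of $r(\cdot;\beta)$ to control moments and show $\var_{\widehat Q_n}f(X)\to_p\var_Q f(X)$), apply the delta method to the latter, and combine via characteristic functions and dominated convergence. The only cosmetic differences are that the paper first fixes a deterministic $\gamma_n$ with $\gamma_n/n\to 1/C$ and transfers to random $\Gamma_n$ by the subsequence argument from the proof of \cref{prop:test-rej}, and it works with the univariate characteristic function of the sum rather than your bivariate factorisation of the joint law of $(A_n,B_n)$.
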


\begin{proof}
Let $\hat{Q}_n$ be the distribution defined through $\dd \hat{Q}_n / \dd P = \hat{r}_n$. Note that $\hat{Q}_n$ is random because $\hat{r}_n$ is random. Let $\gamma_n$ be a sequence of positive integers such that $\gamma_n / n \rightarrow 1/C$.
Consider a triangular array consisting of rows $(\tilde{X}^n_1, \dots, \tilde{X}^n_{\gamma_n})$ drawn iid from $\hat{Q}_n$ for $n=1,2,\dots$, i.e., the $n$-th row is drawn iid from $\hat{Q}_n$. By the proof of \cref{prop:test-rej}, it suffices to show that 
\[ \tilde{T}_n:= T_{\gamma_n}(\tilde{X}^{n}_1, \dots, \tilde{X}^{n}_{\gamma_n}) \rightarrow_d \N\left(0,\, \var_Q f(X) + v_{\beta_0}^{\T} \Sigma v_{\beta_0} / C \right). \]
By L\'evy's continuity theorem, to show the above, it suffices to show that the characteristic function of $\tilde{T}_n$ converges to the characteristic function of the limit distribution pointwise. Using 
\[ T_n = \gamma_n^{-1/2} \sum_{i=1}^{\gamma_n} \left\{f(\tilde{X}^{n}_i) - \E_{\hat{Q}_n} f(\tilde{X}_1^n)  \right\} + \sqrt{\gamma_n} \left\{\E_{\hat{Q}_n} f(X) - \E_Q f(X) \right\},\]
we can write the characteristic function of $\tilde{T}_n$ as
\begin{equation} \label{eqs:proof-cf}
\varphi_n(t) := \E \exp(i t \tilde{T}_n) = \E \left [ A_n(t) \exp\left(i t \sqrt{\gamma_n} \{\E_{\hat{Q}_n} f(X) - \E_Q f(X) \} \right) \right ],
\end{equation}
where 
\[ A_n(t):= \E\left[\exp\left(i t \gamma_n^{-1/2} \sum_{i=1}^{\gamma_n} \{f(\tilde{X}^{n}_i) - \E_{\hat{Q}_n} f(X) \} \right) \mid \{\hat{Q}_n\}_{n=1}^{\infty} \right]. \]

By the Lyapunov CLT, we have
\begin{equation} \label{eqs:proof-clt}
 \gamma_n^{-1/2} \sum_{i=1}^{\gamma_n} \frac{f(\tilde{X}^{n}_i) - \E_{\hat{Q}_n} f(X)}{\sqrt{\var_{\hat{Q}_n} f(X)}} \mid \{\hat{Q}_n\}_{n=1}^{\infty} \rightarrow_{d} \N(0, 1).
\end{equation}
To see this, note that $\E_P f^4(X) < \infty$ and $\hat{r}_n \leq C$ imply $\E_{\hat{Q}_n} |f(X)|^3 = \int \hat{r}_n |f^3| \dd P < \infty$, which ensures $\E_{\hat{Q}_n}|f(X) - \E_{\hat{Q}_n} f(X)|^3 < \infty$. 
Further, observe that 
\begin{align*}
 |\E_{\hat{Q}_n} f^2(X) - \E_{Q} f^2(X)| &\leq \int |r(x; \hat{\beta}_n) - r(x; \beta_0)| f^2(x) \dd P(x) \\
 & \leq \left(\int |r(x; \hat{\beta}_n) - r(x; \beta_0)|^2 \dd P(x) \right)^{1/2} (\E_P f^4(X))^{1/2} \rightarrow_p 0
\end{align*}
by $\hat{\beta}_n \rightarrow_p \beta_0$ and our assumption on $r(x; \beta)$ and $f$. This implies $\var_{\hat{Q}_n} f(X) \rightarrow_p \var_{Q} f(X)$. Let $n'$ be an arbitrary subsequence of $n$. Then there exists a further subsequence $n''$ along which $\var_{\hat{Q}_{n''}} f(X) \rightarrow_{a.s.} \var_{Q} f(X)$. By Slutsky's theorem, along the subsequence $n''$, \cref{eqs:proof-clt} implies
\[  \gamma_{n''}^{-1/2} \sum_{i=1}^{\gamma_{n''}} \frac{f(\tilde{X}^{n''}_i) - \E_{\hat{Q}_{n''}} f(\tilde{X}_1^{n''})}{\sqrt{\var_{Q} f(X)}} \mid \{\hat{Q}_n\}_{n=1}^{\infty} \rightarrow_{d} \N(0, 1) \]
and hence
\[  \gamma_{n''}^{-1/2} \sum_{i=1}^{\gamma_{n''}} \left\{ f(\tilde{X}^{n''}_i) - \E_{\hat{Q}_{n''}} f(\tilde{X}_1^{n''}) \right\} \mid \{\hat{Q}_n\}_{n=1}^{\infty} \rightarrow_{d} \N(0, \var_{Q} f(X)). \]
It follows from L\'evy's continuity theorem that at every $t$, $A_{n''}(t) \rightarrow_{a.s.} \exp\left\{- \var_{Q} f(X)\, t^2 / 2\right\}$ along $n''$.
Because $n'$ is an arbitrary subsequence of $n$, we conclude that 
\[A_{n}(t) \rightarrow_{p} \exp\left\{- \var_{Q} f(X)\, t^2 / 2\right\}\]
along the original sequence $n$.
Rewrite \cref{eqs:proof-cf} as 
\begin{multline*}
\varphi_n(t) = \exp\left\{- \var_{Q} f(X)\, t^2 / 2\right\} \E \exp\left(i t \sqrt{\gamma_n} \{\E_{\hat{Q}_n} f(X) - \E_Q f(X) \} \right) + \\
\E \left\{ \left[A_n(t) - \exp\left\{- \var_{Q} f(X)\, t^2 / 2\right\} \right] \exp\left(i t \sqrt{\gamma_n} \{\E_{\hat{Q}_n} f(X) - \E_Q f(X) \} \right)  \right\}.
\end{multline*}
For every $t$, $(A_n(t) - \exp\left\{- \var_{Q} f(X)\, t^2 / 2\right\}) \rightarrow_p 0$. Because the moduli of both $A_n(t)$ and $\exp\left(i t \sqrt{\gamma_n} \{\E_{\hat{Q}_n} f(X) - \E_Q f(X) \} \right)$  are bounded by one, by DCT, we have 
\[ \E \left\{ \left[A_n(t) - \exp\left\{- \var_{Q} f(X)\, t^2 / 2\right\} \right] \exp\left(i t \sqrt{\gamma_n} \{\E_{\hat{Q}_n} f(X) - \E_Q f(X) \} \right)  \right\} \rightarrow 0 \]
and hence 
\[\varphi_n(t) \rightarrow \exp\left\{- \var_{Q} f(X)\, t^2 / 2\right\} \, \lim_{n} \E \exp\left(i t \sqrt{\gamma_n} \{\E_{\hat{Q}_n} f(X) - \E_Q f(X) \} \right). \]

Because $\sqrt{n} (\hat{\beta}_n - \beta_0) \rightarrow_d \N(0, \Sigma)$, by the delta method, we have
\[ \sqrt{n} \{\E_{\hat{Q}_n} f(X) - \E_Q f(X) \} \rightarrow_d \N(0, v_{\beta_0}^{\T} \Sigma v_{\beta_0}) \]
and hence at every $t$,
\[ \E \exp\left(i t \sqrt{\gamma_n} \{\E_{\hat{Q}_n} f(X) - \E_Q f(X) \} \right) \rightarrow \exp\left[- (v_{\beta_0}^{\T} \Sigma v_{\beta_0} / C) t^2 / 2 \right]. \]
Finally, at every $t$, $\varphi_n(t)$ converges to the characteristic function of $\N(0,\, \var_Q f(X) + v_{\beta_0}^{\T} \Sigma v_{\beta_0} / C )$ and this completes the proof.
\end{proof}

\begin{remark} \label{rem:test-rej-parametric}
In addition to the variance of $f(X)$ under $Q$, the estimation of the density ratio also contributes to the asymptotic variance of the CLT in \cref{prop:test-rej-parametric}. While it may be difficult to form a self-normalised $Z$-statistic using the accepted sample, we expect other forms of normalisation, such as using a permutation test (as employed in our numerical studies), to behave properly. For permutations, the additional asymptotic variance equally affects $T_{\Gamma_n}$ and its permutation counterparts.
\end{remark}

\subsection{Variance estimator from rank-transformed subsampling} 
\begin{proof}[of \cref{prop:sigma}]
Fix $\epsilon > 0$. Let $\Pi := \left\{(b,l): \mathbb{F}_{\hat{\theta}_{m,b}^{(l)}}(\hat{\theta}_{m,b}^{(l)}) \in (\epsilon /2, 1-\epsilon /2) \right\}$. Observe that $|\Pi|>(1-\epsilon) B_n L \nearrow \infty$. Let $\tilde{\beta}$ be the least squares estimate of the slope from linearly regressing $\tilde{H}_{b,l}$ on $\hat{H}_{b,l} = \sqrt{m/L} (\hat{\theta}_{m,b}^{(l)} - \theta_0) / \sigma$ and an intercept term using those points in $\Pi$. It suffices to show that $\tilde{\beta} \rightarrow_p 1$. 
By inspecting the proof for case (2) of \cref{lem:rank-approx}, because $\epsilon^{F,P}_n \rightarrow 0$, observe that when $n$ is sufficiently large, we have $2 \epsilon^{F,P}_n < \epsilon$ and, with probability tending to one, $|\tilde{H}_{b,l} - \hat{H}_{b,l}| \leq 1 / |\Phi^{-1}(\epsilon^{F,P}_n)|$ for every $(b,l) \in \Pi$. Let $\hat{h}_0 := |\Pi|^{-1} \sum_{(b,l) \in \Pi} \hat{H}_{b,l}$ and note that
\[
\sum_{(b,l) \in \Pi} \hat{H}_{b,l}(\hat{H}_{b,l} - \hat{h}_0) = \sum_{(b,l) \in \Pi} (\hat{H}_{b,l} - \hat{h}_0)^2.
\]
Thus
\begin{equation*}
\tilde{\beta} - 1 = \frac{\sum_{(b,l) \in \Pi} (\tilde{H}_{b,l} - \hat{H}_{b,l}) (\hat{H}_{b,l} - \hat{h}_0)}{\sum_{(b,l) \in \Pi} (\hat{H}_{b,l} - \hat{h}_0)^2} \leq \frac{1}{|\Phi^{-1}(\epsilon^{F,P}_n)|} \frac{1}{\sqrt{|\Pi|^{-1} \sum_{(b,l) \in \Pi} (\hat{H}_{b,l} - \hat{h}_0)^2}}
\end{equation*}
with probability tending to one. Noting that $|\Phi^{-1}(\epsilon^{F,P}_n)| \rightarrow \infty$ and that $|\Pi|^{-1} \sum_{(b,l) \in \Pi} (\hat{H}_{b,l} - \hat{h}_0)^2$ tends to a positive constant by \cref{eqs:dml-single}, it follows that $\tilde{\beta} \rightarrow_p 1$. 
\end{proof}
\subsection{Auxiliary lemmas} \label{sec:app:aux}
The following standard result follows directly from the definition of $F^{-1}$.
\begin{lemma} \label{lem:df}
Given any distribution function $F$, we have $ F^{-1}(F(x)) \leq x$ and $F(F^{-1}(\alpha)) \geq \alpha$. 
\end{lemma}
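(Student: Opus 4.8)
The plan is to argue directly from the definition $F^{-1}(\beta) = \inf\{x : F(x) \geq \beta\}$, using only the monotonicity and right-continuity that every distribution function enjoys; no probabilistic input is needed.

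For the first inequality, I would observe that $x$ itself satisfies $F(x) \geq F(x)$, so $x$ lies in the set $\{y : F(y) \geq F(x)\}$ over which the infimum defining $F^{-1}(F(x))$ is taken; hence $F^{-1}(F(x)) \leq x$. This step uses nothing beyond the definition of the generalised inverse.

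For the second inequality, I would set $x_0 := F^{-1}(\alpha)$. First dispose of the degenerate cases: if $\{x : F(x) \geq \alpha\}$ is empty then $x_0 = +\infty$ by the convention $\inf\emptyset = +\infty$ and $F(x_0) = 1 \geq \alpha$ since $\alpha \leq 1$, while if $\alpha \leq 0$ the inequality is trivial. Otherwise, by the definition of the infimum there is a sequence $x_k \downarrow x_0$ with $F(x_k) \geq \alpha$ for all $k$ (one may take $x_k \equiv x_0$ when the infimum is attained). Right-continuity of $F$ then gives $F(x_k) \to F(x_0)$, and passing to the limit in $F(x_k) \geq \alpha$ yields $F(x_0) \geq \alpha$. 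The whole argument is only a few lines; the only point needing a moment's care is the appeal to right-continuity together with the harmless treatment of the degenerate cases, so I do not anticipate any genuine obstacle.
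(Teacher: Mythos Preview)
Your proposal is correct and matches the paper's approach: the paper does not actually spell out a proof but simply remarks that the result ``follows directly from the definition of $F^{-1}$,'' which is precisely what you do. Your write-up supplies the routine details (membership of $x$ in the defining set for the first inequality, right-continuity for the second) that the paper leaves implicit.
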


\subsection{Other results} \label{app:extra}
\begin{proposition} \label{prop:tight}
	Suppose $T_n^{(1)}, \dots, T_n^{(L)}$ are exchangeable and $S$ satisfies \cref{cond:lip}. Then, under \cref{cond:pivotal}, the sequence $(S_n)$ is uniformly tight. 
\end{proposition}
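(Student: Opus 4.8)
The plan is to reduce uniform tightness of $(S_n)$ to that of the coordinate sequences $(T_n^{(l)})_n$, using the Lipschitz property of $S$. First I would observe that by \cref{cond:pivotal} we have $T_n^{(1)} \rightarrow_d F_0$, and a sequence of real-valued random variables that converges in distribution is automatically uniformly tight: given $\varepsilon>0$, pick a continuity point $M$ of $F_0$ with $F_0(M)-F_0(-M) > 1-\varepsilon/2$, so that $\pr(|T_n^{(1)}|>M)\to 1-(F_0(M)-F_0(-M))<\varepsilon/2$; enlarging $M$ if necessary to dominate the finitely many initial terms (each $T_n^{(1)}$ being a finite random variable) gives $\sup_n \pr(|T_n^{(1)}|>M)<\varepsilon$. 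By exchangeability, $T_n^{(l)} \stackrel{d}{=} T_n^{(1)}$ for every $l=1,\dots,L$, so each $(T_n^{(l)})_n$ is uniformly tight with the \emph{same} bounds; a union bound over the $L$ coordinates then yields, for every $\varepsilon>0$, a constant $M_\varepsilon<\infty$ with
\[
\sup_n \pr\Bigl(\max_{1\le l\le L}|T_n^{(l)}|>M_\varepsilon\Bigr)<\varepsilon .
\]

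Next I would invoke \cref{cond:lip}: since $S$ is $1$-Lipschitz in $\|\cdot\|_\infty$,
\[
|S_n| = \bigl|S(T_n^{(1)},\dots,T_n^{(L)})\bigr| \le \bigl|S(0,\dots,0)\bigr| + \bigl\|(T_n^{(1)},\dots,T_n^{(L)})\bigr\|_\infty = |S(\mathbf 0)| + \max_{1\le l\le L}|T_n^{(l)}| .
\]
Combining this with the previous display gives $\pr\bigl(|S_n| > |S(\mathbf 0)| + M_\varepsilon\bigr) \le \pr\bigl(\max_l |T_n^{(l)}| > M_\varepsilon\bigr) < \varepsilon$ for all $n$, and since $\varepsilon>0$ is arbitrary this is exactly uniform tightness of $(S_n)$.

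I do not anticipate any real obstacle here; the statement is essentially bookkeeping, and the only two points requiring care are (i) upgrading distributional convergence of $T_n^{(1)}$ to uniform tightness of the entire sequence (including small $n$), handled by choosing $M_\varepsilon$ large enough to also cover the finitely many initial terms, and (ii) being explicit that it is precisely the $1$-Lipschitz bound in $\|\cdot\|_\infty$ that lets one control $|S_n|$ by the \emph{maximum} of the coordinates rather than their sum — so that the union bound above suffices and no joint control of $(T_n^{(1)},\dots,T_n^{(L)})$ beyond the marginals is needed.
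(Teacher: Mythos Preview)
Your argument is correct and matches the paper's proof essentially line for line: both obtain uniform tightness of $(T_n^{(1)})$ from its convergence in distribution, use exchangeability plus a union bound to control $\max_l |T_n^{(l)}|$, and then invoke the $1$-Lipschitz property of $S$ in $\|\cdot\|_\infty$ to bound $|S_n|$ by $|S(\mathbf 0)| + \max_l |T_n^{(l)}|$. The only cosmetic difference is that the paper cites Prohorov's theorem for the first step while you spell it out directly.
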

\begin{proof}
	We know from Prohorov's theorem \citep[Thm.~2.4]{van2000asymptotic} and \cref{cond:pivotal} that the sequence $(T_n^{(1)})$ is uniformly tight, so given $\epsilon > 0$, there exists $M \geq 0$ such that $\sup_n \pr_P(|T_n^{(1)}| > M) < \epsilon/L$.
	
Now from \cref{cond:lip} we have that $|S_n - S(0, \ldots, 0)| \leq \|T_n\|_\infty$. Thus,
\begin{align*}
	\sup_n \pr_P \left( |S_n| > M + |S(0, \ldots, 0)| \right) &\leq \sup_n \pr_P \left( \|T_n\|_\infty > M \right) \\
	&\leq L \sup_n \pr_P\left(|T_n^{(1)}| > M\right) < \epsilon,
\end{align*}
applying a union bound and appealing to exchangeability in the final line.
\end{proof}

\begin{lemma} \label{lem:uniform_dist}
	Suppose that for all $P \in \mathcal{P}$, distribution functions $H_{n, P}$ converge uniformly to a continuous distribution function $H_P$, i.e., for all $x \in \R$,
	\[
	\sup_{P \in \mathcal{P}} |H_{n, P}(x) - H_P(x)| \to 0.
	\]
	Assume that each $H_P$ has density $h_P$ and $\sup_{P \in \mathcal{P}} \|h_P\|_\infty =: h_{\max} < \infty$.
	Further, suppose $\{H_P: P \in \mathcal{P}\}$ is tight. 
	Then we have
	\[
	\sup_{P \in \mathcal{P}} \|H_{n,P} - H_P\|_\infty \to 0.
	\]
\end{lemma}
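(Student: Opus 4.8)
The plan is to run a uniform-over-$\mathcal{P}$ version of Pólya's theorem: the classical quantile-grid argument (pointwise convergence of distribution functions to a continuous limit upgrades to uniform convergence) goes through here because the uniform density bound and tightness let us use a \emph{single} finite grid that works for every $P$ simultaneously.

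First I would fix $\epsilon > 0$. By tightness of $\{H_P : P \in \mathcal{P}\}$, pick $M > 0$ with $H_P(-M) < \epsilon$ and $1 - H_P(M) < \epsilon$ for all $P \in \mathcal{P}$. Set $\delta := \epsilon / h_{\max}$ and choose a finite grid $-M = x_0 < x_1 < \dots < x_k = M$ with $x_{j+1} - x_j \leq \delta$ for each $j$; crucially this grid does not depend on $P$. Since $\|h_P\|_\infty \leq h_{\max}$, we get $H_P(x_{j+1}) - H_P(x_j) \leq h_{\max}(x_{j+1} - x_j) \leq \epsilon$ for every $j$ and every $P$. Next, for $x \in [x_j, x_{j+1}]$, monotonicity of $H_{n,P}$ and $H_P$ gives $H_{n,P}(x) - H_P(x) \leq H_{n,P}(x_{j+1}) - H_P(x_j) \leq |H_{n,P}(x_{j+1}) - H_P(x_{j+1})| + \epsilon$ and, symmetrically, $H_{n,P}(x) - H_P(x) \geq -|H_{n,P}(x_j) - H_P(x_j)| - \epsilon$. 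For $x < -M$ one uses $0 \leq H_{n,P}(x) \leq H_{n,P}(-M) \leq |H_{n,P}(-M) - H_P(-M)| + \epsilon$ together with $0 \leq H_P(x) < \epsilon$, and for $x > M$ one argues analogously with the right tails $1 - H_{n,P}$ and $1 - H_P$. Collecting the three cases yields
\[
\sup_{P \in \mathcal{P}} \|H_{n,P} - H_P\|_\infty \;\leq\; \max_{0 \leq j \leq k}\, \sup_{P \in \mathcal{P}} |H_{n,P}(x_j) - H_P(x_j)| \;+\; 2\epsilon .
\]

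Finally, since $\{x_0, \dots, x_k\}$ is a finite set independent of $n$, the hypothesis that $\sup_{P}|H_{n,P}(x) - H_P(x)| \to 0$ for each fixed $x$ gives $\max_{0 \leq j \leq k} \sup_P |H_{n,P}(x_j) - H_P(x_j)| \to 0$, hence $\limsup_n \sup_P \|H_{n,P} - H_P\|_\infty \leq 2\epsilon$, and letting $\epsilon \downarrow 0$ finishes the proof. The one step I would be most careful about is the construction of the $P$-independent grid and the uniform tail truncation: this is exactly where tightness and the uniform density bound $h_{\max} < \infty$ are used, and without either of them the spacing $\delta$ or the truncation level $M$ would have to depend on $P$ and the argument would fail.
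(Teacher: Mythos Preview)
Your proof is correct and follows essentially the same approach as the paper: a uniform-in-$P$ Pólya argument using tightness to truncate and the density bound to build a single $P$-independent grid. The only cosmetic difference is that the paper folds the tail cases into the grid by taking $x_0 = -\infty$ and $x_J = \infty$, which yields a bound of $\epsilon$ rather than your $2\epsilon$; this is immaterial.
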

\begin{proof}
	Given $\epsilon > 0$, by tightness of $\{H_P: P \in \mathcal{P}\}$, there exists  $M \geq 0$ such that $\sup_{P \in \mathcal{P}} H_P(-M) < \epsilon$ and $\sup_{P \in \mathcal{P}} \{1-H_P(M)\} < \epsilon$.
	Next set $\delta := \epsilon /  h_{\max}$ and $x_j = \delta (j-1) - M$ for $j=1, \ldots, \ceil{2M / \delta} =: J-1$. Additionally set $x_0 := -\infty$ and $x_J := \infty$. Fix $P \in \mathcal{P}$ and note that $H_P(x_j) - H_P(x_{j-1}) \leq \epsilon$ for all $j$.
	
	Now for each $x \in \R$, there exists $j$ with $x_{j-1} \leq x \leq x_j$, and
	\begin{align*}
		H_{n, P}(x) - H_P(x) &\leq H_{n, P}(x_j) - H_P(x_{j-1}) \leq H_{n, P}(x_j) - H_P(x_{j}) +\epsilon \\
		H_{n, P}(x) - H_P(x) &\geq H_{n, P}(x_{j-1}) - H_P(x_j) \geq H_{n, P}(x_{j-1}) - H_P(x_{j-1})  - \epsilon,
	\end{align*}
so
\[
\|H_{n, P} - H_P\|_{\infty} \leq \max_{j=0,\ldots,J} |H_{n, P}(x_j) - H_P(x_{j})| +\epsilon.
\]
Thus
\[
\limsup_{n \to \infty} \sup_{P\in \mathcal{P}} \|H_{n, P} - H_P\|_{\infty} \leq \limsup_{n \to \infty} \sup_{P \in \mathcal{P}} \max_{j=0,\ldots,J} |H_{n, P}(x_j) - H_P(x_{j})| +\epsilon = \epsilon.
\]
As $\epsilon$ was arbitrary, we have the desired result.
\end{proof}

\begin{lemma} \label{lem:G_P_unif}
	Let $G_{n, p}$ and $G_P$ for $P \in \mathcal{P}_0$ be as in (3) of \cref{thm:main-null}. Then
\[
\sup_{P \in \mathcal{P}_0} \|G_{n,P} - G_P\|_\infty \to 0.
\]
\end{lemma}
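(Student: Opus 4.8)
The plan is to obtain Lemma~\ref{lem:G_P_unif} as a direct consequence of the auxiliary Lemma~\ref{lem:uniform_dist}, applied with $\mathcal{P} = \mathcal{P}_0$, $H_{n,P} = G_{n,P}$ and $H_P = G_P$. Three of the four hypotheses of Lemma~\ref{lem:uniform_dist} are already available: the pointwise-in-$x$, uniform-over-$\mathcal{P}_0$ convergence $\sup_{P\in\mathcal{P}_0}|G_{n,P}(x) - G_P(x)|\to 0$ is assumed in part~(ii) of \cref{thm:main-null}; each $G_P$ is continuous by \cref{assump:stable-G}; and $\sup_{P\in\mathcal{P}_0}\|g_P\|_\infty = \sup_{P\in\mathcal{P}_0} g_{\max,P} < \infty$ is also assumed there. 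What remains is to verify tightness of the family $\{G_P : P\in\mathcal{P}_0\}$, which is the only real content of the proof.

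To establish this I would adapt the argument of \cref{prop:tight}, tracking uniformity in $P$. Put $s_0 := |S(0,\dots,0)|$. By \cref{cond:lip}, $|S_n - S(0,\dots,0)| \le \max_{1\le l\le L}|T_n^{(l)}|$, so for any $M > s_0$, writing $M' := M - s_0 > 0$, a union bound and exchangeability give
\[
\pr_P(|S_n| > M) \le \pr_P\Bigl(\max_l |T_n^{(l)}| > M'\Bigr) \le L\,\pr_P\bigl(|T_n^{(1)}| > M'\bigr).
\]
Next, $\pr_P(|T_n^{(1)}|>M') \le \bigl[1 - F_0(M') + F_0(-M')\bigr] + |F_{n,P}(M') - F_0(M')| + |F_{n,P}(-M') - F_0(-M')|$, and for each fixed $P\in\mathcal{P}_0$ the last two terms tend to $0$ by the assumed uniform-over-$\mathcal{P}_0$ convergence $F_{n,P}\to F_0$ at the points $\pm M'$. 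Hence $\limsup_n \pr_P(|S_n|>M) \le L\bigl[1 - F_0(M') + F_0(-M')\bigr]$ for every $P\in\mathcal{P}_0$. Since $S_n \rightarrow_d G_P$ under $P$ by \cref{assump:stable-G}, applying the portmanteau theorem to the open set $\{|y|>M\}$ yields, for $Y_P\sim G_P$,
\[
\pr(|Y_P| > M) \le \liminf_n \pr_P(|S_n| > M) \le L\bigl[1 - F_0(M') + F_0(-M')\bigr].
\]
The right-hand side is free of $P$ and tends to $0$ as $M\to\infty$ because $F_0$ is a proper distribution function; therefore $\sup_{P\in\mathcal{P}_0}\pr(|Y_P|>M)\to 0$ as $M\to\infty$, i.e.\ $\{G_P : P\in\mathcal{P}_0\}$ is tight.

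With tightness established, Lemma~\ref{lem:uniform_dist} applies and delivers $\sup_{P\in\mathcal{P}_0}\|G_{n,P}-G_P\|_\infty\to 0$. I do not anticipate any genuine obstacle: the one point requiring attention is that the tightness bound must be uniform in $P$, and this is precisely why it matters that the final estimate involves only the fixed limit $F_0$ and not the $P$-dependent $F_{n,P}$ — the $P$-dependence is pushed into a vanishing remainder and removed by the $\limsup$/portmanteau step.
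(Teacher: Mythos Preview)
Your proposal is correct and follows essentially the same route as the paper: both reduce to Lemma~\ref{lem:uniform_dist} and verify tightness of $\{G_P:P\in\mathcal{P}_0\}$ via the bound $\pr_P(|S_n|>M)\le L\,\pr_P(|T_n^{(1)}|>M')$ from \cref{prop:tight}, then pass to the limit to obtain a $P$-free tail bound. Your use of the portmanteau theorem to justify the passage from $\pr_P(|S_n|>M)$ to $\pr(|Y_P|>M)$ is in fact more careful than the paper's somewhat terse ``taking suprema over $P$ and limits as $n\to\infty$''.
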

\begin{proof}
	By \cref{lem:uniform_dist} it suffices to show that for all $\epsilon > 0$, there exists $M \geq 0$ such that $\sup_{P \in \mathcal{P}_0} G_P(-M) < \epsilon$ and $\sup_{P \in \mathcal{P}_0} \{1-G_P(M)\} < \epsilon$.
	
	We have that given $\epsilon > 0$, there exists $M' \geq 0$ such that $F_0(-M') + F_0(M') < \epsilon / L$. Then, arguing similarly to the proof of \cref{prop:tight}, for all $P \in \mathcal{P}_0$,
	\[
	 \pr_P(|S_n| > M + |S(0, \ldots, 0)|)   \leq L  \pr_P\left(|T_n^{(1)}| > M\right) .
	\]
	Taking suprema over $P \in \mathcal{P}_0$ and limits as $n \to \infty$,
	\[
	\sup_{P \in \mathcal{P}_0} \{ G_P(-M') + G_P(M')\} < \epsilon,
	\]
	as required.
\end{proof}

\begin{proposition} \label{prop:CAN}
Let $W_1,\dots, W_n$ be iid from $P$. Suppose for $f(W; \beta_0)$ it holds that $\E f(W; \beta_0) =0$, $\sigma^2 := \E f^2(W; \beta_0) < \infty$ and that $f(W; \beta)$ permits a second order expansion in $\beta$ in a neighbourhood of $\beta_0$. Let $\hat{\beta}$ be a consistent, asymptotically linear estimator of $\beta_0$ from $W_1, \dots, W_n$. Then it holds that 
\[ \frac{\sum_i f(W_i; \hat{\beta})}{\sqrt{\sum_i f^2(W_i; \hat{\beta})}} \rightarrow_d \N(0,1). \]
\end{proposition}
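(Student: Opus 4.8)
The plan is to write the statistic as the ratio $\big(n^{-1/2}\sum_i f(W_i;\hat\beta)\big)\big/\big(n^{-1}\sum_i f^2(W_i;\hat\beta)\big)^{1/2}$ and show separately that the numerator converges in distribution to $\N(0,\sigma^2)$ and that the denominator converges in probability to $\sigma$ (assumed positive, else the statement is degenerate); Slutsky's theorem then gives $\N(0,1)$. The only facts I would use about $\hat\beta$ are those following from asymptotic linearity: consistency, $\hat\beta\to_p\beta_0$, and $\sqrt n(\hat\beta-\beta_0)=O_p(1)$.

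For the numerator, I would Taylor expand $f(W_i;\cdot)$ about $\beta_0$ with a second-order Lagrange remainder (valid because $f(W;\cdot)$ is twice differentiable near $\beta_0$), obtaining
\[
n^{-1/2}\sum_i f(W_i;\hat\beta)=\underbrace{n^{-1/2}\sum_i f(W_i;\beta_0)}_{(\mathrm I)}+\underbrace{\Big(n^{-1}\sum_i\nabla_\beta f(W_i;\beta_0)\Big)^{\T}\sqrt n(\hat\beta-\beta_0)}_{(\mathrm{II})}+\underbrace{\tfrac12\sqrt n(\hat\beta-\beta_0)^{\T}\Big(n^{-1}\sum_i\nabla^2_\beta f(W_i;\tilde\beta_i)\Big)(\hat\beta-\beta_0)}_{(\mathrm{III})},
\]
with $\tilde\beta_i$ on the segment between $\beta_0$ and $\hat\beta$. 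Term $(\mathrm I)$ converges to $\N(0,\sigma^2)$ by the Lindeberg--L\'evy CLT, using only $\E f(W;\beta_0)=0$ and $\sigma^2<\infty$. Term $(\mathrm{III})$ is $o_p(1)$: with probability tending to one $\hat\beta$, hence every $\tilde\beta_i$, lies in a fixed ball around $\beta_0$ on which the assumed second-order expansion supplies an integrable envelope for $\nabla^2_\beta f$, so $n^{-1}\sum_i\nabla^2_\beta f(W_i;\tilde\beta_i)=O_p(1)$ by the law of large numbers, and since $\|\hat\beta-\beta_0\|^2=O_p(n^{-1})$ we get $(\mathrm{III})=O_p(n^{-1/2})=o_p(1)$.

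The decisive step is $(\mathrm{II})$. By the law of large numbers $n^{-1}\sum_i\nabla_\beta f(W_i;\beta_0)\to_p\E\nabla_\beta f(W;\beta_0)$, so $(\mathrm{II})=o_p(1)$ exactly when $\E\nabla_\beta f(W;\beta_0)=0$; I would read the hypothesis that $f(W;\beta)$ admits a second-order expansion in $\beta$ near $\beta_0$ as delivering precisely this vanishing of the first-order term of $\beta\mapsto\E f(W;\beta)$ at $\beta_0$, i.e.\ the Neyman-orthogonality-type property invoked elsewhere in the paper (cf.\ \cref{eqs:sigma-dml}). Granting this, the numerator equals $(\mathrm I)+o_p(1)\to_d\N(0,\sigma^2)$. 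For the denominator, $n^{-1}\sum_i f^2(W_i;\beta_0)\to_p\sigma^2$ by the law of large numbers, while a one-term mean-value expansion of $\beta\mapsto f^2(W_i;\beta)$ bounds $n^{-1}\sum_i\{f^2(W_i;\hat\beta)-f^2(W_i;\beta_0)\}$ in modulus by $\|\hat\beta-\beta_0\|$ times an $O_p(1)$ envelope average, hence $o_p(1)$; so $\big(n^{-1}\sum_i f^2(W_i;\hat\beta)\big)^{1/2}\to_p\sigma$, and Slutsky gives the claim. The main obstacle, and the place where the exact regularity assumptions matter most, is $(\mathrm{II})$: without the orthogonality $\E\nabla_\beta f(W;\beta_0)=0$, plugging in $\hat\beta$ inflates the asymptotic variance of the numerator while the self-normalising denominator still converges to $\sigma$, producing a limit $\N(0,1+c)$ with $c>0$ rather than $\N(0,1)$, so the argument genuinely hinges on interpreting the second-order assumption as supplying this cancellation.
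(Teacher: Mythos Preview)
Your decomposition matches the paper's in form, but your treatment of term $(\mathrm{II})$ diverges from the paper's, and this is exactly where both arguments run into the same difficulty.

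The paper does \emph{not} read the second-order assumption as Neyman orthogonality; it treats it purely as a smoothness condition enabling the Taylor expansion. Rather than arguing $(\mathrm{II})=o_p(1)$, the paper explicitly retains the first-order contribution: writing $h(W):=\nabla_\beta f(W;\beta_0)$ and using that $\hat\beta$ has influence function $\gamma$, it substitutes $\sqrt n(\hat\beta-\beta_0)=n^{-1/2}\sum_i\gamma(W_i)+o_p(1)$ into $(\mathrm{II})$ and absorbs it into the numerator's linearisation, obtaining
\[
n^{-1/2}\sum_i f(W_i;\hat\beta)=n^{-1/2}\sum_i\varphi(W_i)+o_p(1),\qquad\varphi(W)=f(W;\beta_0)+(\E h)^\top\gamma(W).
\]
The paper then asserts that this asymptotic linearity ``suffices'' for the self-normalised statistic to be $\N(0,1)$, and stops.

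But this is precisely the gap you flag. The numerator converges to $\N(0,\E\varphi^2)$, while the denominator $\big(n^{-1}\sum_i f^2(W_i;\hat\beta)\big)^{1/2}\to_p\sigma=\big(\E f^2(W;\beta_0)\big)^{1/2}$, since $f(W_i;\hat\beta)-f(W_i;\beta_0)=O_p(n^{-1/2})$ pointwise. Without $\E h=0$ there is no reason for $\E\varphi^2=\sigma^2$, so the ratio is $\N(0,\E\varphi^2/\sigma^2)$ rather than $\N(0,1)$. Your instinct that the statement needs $\E\nabla_\beta f(W;\beta_0)=0$ (or some equivalent cancellation) to hold as written is correct; interpreting ``second-order expansion'' to \emph{mean} this is, however, non-standard and not what the paper does. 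In short: your diagnosis of the obstacle at $(\mathrm{II})$ is right; the paper takes a different route---absorbing rather than dropping the term---but does not resolve the mismatch between the numerator's asymptotic variance and the denominator's limit either.
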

\begin{proof}
In view of 
\[ \frac{\sum_i f(W_i; \hat{\beta})}{\sqrt{\sum_i f^2(W_i; \hat{\beta})}} = \frac{n^{-1/2} \sum_i f(W_i; \hat{\beta})}{\sqrt{n^{-1}\sum_i f^2(W_i; \hat{\beta})}}, \]
it suffices to show that $n^{-1} \sum_i f(W_i; \hat{\beta})$ is a consistent, asymptotically linear estimator of zero, i.e.,
\[ n^{-1/2} \sum_{i} f(W_i; \hat{\beta}) = n^{-1/2} \sum_i \varphi(W_i) + o_p(1), \]
for some $\varphi$ such that $\E \varphi(W) = 0$, $\E \varphi^2(W) < \infty$. Let $h(W):= (\partial f(W; \beta) / \partial \beta)|_{\beta = \beta_0}$. Using 
\[ f(W; \hat{\beta}) = f(W; \beta_0) + h(W)^{\T} (\hat{\beta} - \beta_0) + O(\|\hat{\beta} - \beta_0\|^2), \]
we have
\begin{align*}
n^{1/2} \sum_{i} f(W_i; \hat{\beta}) &= n^{-1/2} \sum_i \left[f(W_i; \beta_0) + h(W_i)^{\T} (\hat{\beta} - \beta_0) + O(\|\hat{\beta} - \beta_0\|^2) \right] \\
&= n^{-1/2} \sum_i f(W_i; \beta_0) + \sqrt{n}(\hat{\beta} - \beta_0)^{\T} \left[n^{-1} \sum_i h(W_i) \right] + O(n^{1/2} \|\hat{\beta} - \beta_0\|^2) \\
&= n^{-1/2} \sum_i f(W_i; \beta_0) + \sqrt{n} \left[\E h^{\T}(W) \right](\hat{\beta} - \beta_0) + O_p(n^{-1/2}) \\
&= n^{-1/2} \sum_i \left\{f(W_i; \beta_0) + \left[\E h^{\T}(W)\right] \gamma(W_i) \right\} + o_p(1),
\end{align*}
where $\gamma(W)$ is the influence function for $\hat{\beta}$. Hence, we have $\varphi(W) = f(W; \beta_0) + (\E h)^{\T} \gamma(W)$ and it is easy to check that $\E \varphi(W) = 0$ and $\E \varphi^2(W) < \infty$. 
\end{proof}

\section{First-order power analysis} \label{app:power-1st-order}
This section of the appendix is dedicated to proving \cref{thm:rank-asymp}, our main theorem for the first-order power behaviour of rank-transformed subsampling; we also prove \cref{thm:crit-ordinary} for the first-order power behaviour of ordinary subsampling. We sketch the proof of our main theorem here. If we compare the copula of $(T_n^{(1)}, \dots, T_n^{(L)})$ before and after the rank transform, the rank transform can be viewed as a random perturbation applied to each coordinate of the copula. It follows from the standard empirical process that the perturbation, when scaled by $\sqrt{n/m}$, converges to an average over exchangeable copies of a Brownian bridge (\cref{lem:bb}). To propagate this random limit to the change in distribution of $S(T_n^{(1)}, \dots, T_n^{(L)})$ after the rank transform, we rely on the functional delta method. As a crucial precursory step, in \cref{app:hadmard-diff} we establish the Hadamard differentiability of the distribution function of a monotone function of a random vector with respect to the coordinatewise perturbation of the random vector (\cref{prop:hada-aggregate}). This is achieved by firstly establishing the result for a single random variable (\cref{lem:hada-coor-cond}) and then extending it to a random vector through a conditioning argument (\cref{prop:hada-aggregate}). In \cref{app:rank-first-order}, we fully state and prove \cref{thm:rank-asymp} , which builds on \cref{cor:hada-aggregate}, itself a specialised version of \cref{prop:hada-aggregate} under symmetry. Then, in \cref{app:sub-first-order}, we prove \cref{thm:crit-ordinary} about the first-order behaviour of ordinary subsampling. In \cref{app:check-assump}, we check that the regularity assumptions required by \cref{thm:rank-asymp} are met by cases considered in this paper when the underlying copula is Gaussian. We illustrate our theory along with numerical results using a simple example in \cref{app:power-numerical}. Finally, various auxiliary results used through the arguments are stated and proved in \cref{app:aux-hadamard}.

\paragraph{Notation} For any real number $a$, we use the notation $a_+ = a \I\{a \geq 0\}$ and $a_- = -a\I\{a<0\}$, and similarly for real-valued functions.
Let $D[a,b]$ be the space of c\`adl\`ag functions on $[a,b]$ equipped with $\|\cdot\|_{\infty}$. 
For a function $f$, its modulus of continuity is $\Delta f(\delta):= \sup_{x,x': \|x-x'\|\leq \delta} |f(x) - f(x')|$ for $\delta \geq 0$. We use $a \vee b$ and $a \wedge b$ to denote $\max(a,b)$ and $\min(a,b)$, respectively. We use $\Id$ to denote the distribution function of $\unif(0,1)$. We use $\TV(X,Y)$ to denote the total variation distance between $X$'s distribution and $Y$'s distribution. 

\paragraph{Equicontinuity} Let $f_{\gamma}(x)$ be a class of real-valued functions defined on a Euclidean domain, indexed by $\gamma$. We say $\{f_{\gamma}(x): \gamma \in \Gamma\}$ is equicontinuous at $x_0$ if for any $\epsilon > 0$, there exists $\delta > 0$ such that $|f_{\gamma}(x) - f_{\gamma}(x_0)| < \epsilon$ holds for every $\gamma \in \Gamma$ whenever $\|x - x_0\|<\delta$.

\subsection{Hadamard differentiability of additive coordinatewise perturbation} \label{app:hadmard-diff}

\subsubsection{Univariate result} 

\begin{lemma} \label{lem:hada-coor-cond}
Let $X$ be a real-valued random variable that admits a density $f(x)$ with respect to the Lebesgue measure. Let $A$ be a subset of $\mathbb{R}$. Suppose $f_{\max} := \sup_{x \in A} f(x) < \infty$. Suppose $f(x)$ is uniformly continuous on $A$. Let $\mathcal{H}$ be the set of $\mathbb{R} \rightarrow \mathbb{R}$ functions such that the function takes value zero outside $A$. Then, for any $t \rightarrow 0$, $h_t, h \in \mathcal{H}$ such that $\|h_t - h\|_{\infty} \rightarrow 0$ and $h$ is uniformly continuous, it holds that 
\[\sup_{x \in A} \left|t^{-1}\left\{\pr\left(X + t h_t(X) \leq x \right) - \pr\left(X \leq x \right) \right\} + f(x)\,h(x) \right| \rightarrow 0.\]
\end{lemma}
\begin{proof}
Without loss of generality, we assume $t \searrow 0$ and $\|h_t\| > 0$.
Let $m_t := t\,\|h_t\|_{\infty}$, which is a positive sequence converging to zero by our assumption.
In light of 
\begin{equation} \label{eqs:event-decomp}
\begin{split}
\{X + t h_t(X) \leq x\} &= \{X \leq x - m_t\} \cup \left(\{x - m_t < X \leq x \} \cap \{X \leq x - t h_t(X)\} \right) \\
& \qquad \cup \left(\{x < X \leq x + m_t \} \cap \{X \leq x - t h_t(X)\} \right),  \\
\{X \leq x\} &= \{X \leq x - m_t\} \cup \{x - m_t < X \leq x\},
\end{split}
\end{equation}
we can write
\begin{multline}
\frac{\pr(X + t\,h_t(X) \leq x ) - \pr(X \leq x)}{t} = \frac{1}{t}\, \pr\left(\{x < X \leq x + m_t\} \cap \{X \leq x - t h_t(X)\} \right) \\
-\frac{1}{t}\, \pr\left(\{x - m_t < X \leq x\} \cap \{X > x - t h_t(X) \} \right).
\end{multline}

We claim that the first and second terms on the RHS converge to $f(x) \, h_{-}(x)$ and $- f(x) h_{+}(x)$ respectively, uniformly over $x \in A$. 
Using the fact that $h = h_{+} - h_{-}$, combining the two yields $-f(x) h(x)$ as desired.

We now prove the claim for the first term; the argument for the second term is similar. For any $x \in A$, we have
\[
\frac{1}{t}\, \pr\left(\{x < X \leq x + m_t\} \cap \{X \leq x - t h_t(X)\} \right) = \frac{1}{t} \int_{x}^{x+m_t} \I\{h_t(y) \leq (x-y)/t \} f(y) \dd y.\]
By a change of variable $y = x + m_t u$ for $u \in (0,1]$, we get
\begin{align*}
& \quad \frac{1}{t}\, \pr\left(\{x < X \leq x + m_t\} \cap \{X \leq x - t h_t(X)\} \right) \\
&= \frac{1}{t} \int_0^1 \I\{h_t(x + m_t u) \leq - (m_t / t) u \} f(x + m_t u ) \, m_t \dd u \\
&= \left(\frac{m_t}{t}\right) \int_0^1 \I\{\|h_t\|_{\infty} u \leq - h_t(x + m_t u) \} f(x + m_t u) \dd u \\
&= \|h_t\|_{\infty} \int_0^1 \I\{\|h_t\|_{\infty} u \leq - h_t(x + m_t u) \} f(x + m_t u) \dd u \\
&= \underbrace{\|h_t\|_{\infty} \int_0^1 \I\{\|h\|_{\infty} u \leq -h(x) \} f(x) \dd u}_{\text{(I)}} + \underbrace{\|h_t\|_{\infty} \int_0^1 \I\{\|h_t\|_{\infty} u \leq - h_t(x + m_t u) \} \left[ f(x + m_t u ) -  f(x) \right] \dd u}_{\text{(II)}} \\
& \qquad + \underbrace{\|h_t\|_{\infty} \int_0^1 \left[\I\{\|h_t\|_{\infty} u \leq - h_t(x + m_t u)\} - \I\{\|h\|_{\infty} u \leq - h(x)\} \right] f(x) \dd u}_{\text{(III)}}.
\end{align*}
The result follows from analyzing (I), (II) and (III) separately, detailed as below. 

\begin{description}
\item[Term (I):] When $\|h\|_{\infty}=0$, observe that 
\[ |\text{(I)}| \leq \|h_t\|_{\infty} f(x) \int_0^1 \dd u = \|h_t\|_{\infty} f(x) \leq \|h_t\|_{\infty} f_{\max} \rightarrow 0. \]
When $\|h\|_{\infty}>0$, we have
\begin{align*}
 \text{(I)} = \|h_t\|_{\infty} f(x) \left( \int_0^{h_{-}(x) / \|h\|_{\infty}} \dd u \right) &= f(x) \, h_{-}(x) \|h_t\|_{\infty} / \|h\|_{\infty} \\
&= f(x) \, h_{-}(x) + f(x) \, h_{-}(x) \left( \|h_t\|_{\infty} / \|h\|_{\infty} - 1 \right), 
\end{align*}
where the second term uniformly converges to zero because $h$ is bounded and $f(x) \leq f_{\max}$. Thus, in either case, (I) converges to $f(x) \, h_{-}(x)$ uniformly over $x \in A$. 

\item[Term (II):] We show that (II) converges to zero uniformly. Because $\|h_t\|_{\infty} \rightarrow \|h\|_{\infty}$ irrespective of $x$, it suffices to show 
\[ \underbrace{\int_0^1 \I\{\|h_t\|_{\infty} u \leq - h_t(x + m_t u) \} \left| f(x + m_t u) -  f(x) \right| \dd u}_{\text{(II)}'} \rightarrow 0 \]
uniformly over $x \in A$. If $x + m_t u \notin A$, $\text{(II)}' = 0$ by $h_t \in \mathcal{H}$; if $x + m_t u \in A$, we have 
\[\text{(II)}' \leq \int_0^1 \left| f(x + m_t u) -  f(x) \right| \dd u \leq \sup_{x,x'\in A: |x - x'| \leq m_t} |f(x) - f(x')| \rightarrow 0\]
by uniform continuity of $f$ on $A$. 

\item[Term (III):] We now show that (III) converges to zero uniformly. The case when $\|h\|_{\infty} = 0$ is straightforward: we have
\[ |\text{(III)}| \leq 2  \|h_t\|_{\infty} f(x) \int_0^1 \dd u \leq 2 f_{\max} \|h_t\|_{\infty} \rightarrow 0. \]
Now we suppose $\|h\|_{\infty} > 0$. 
We have that
\begin{align*}
\text{(III)} &= \|h_t\|_{\infty} \, f(x) \int_0^1 \left[\I\left\{u \leq - \frac{h_t(x + m_t u)}{\|h_t\|_{\infty}}\right\} - \I\left\{u \leq - \frac{h(x)}{\|h\|_{\infty}}\right\} \right] \dd u \\
&= \|h_t\|_{\infty} \, f(x) \underbrace{\int_0^1 \left[\I\left\{u \leq \left(\frac{h_t(x + m_t u)}{\|h_t\|_{\infty}}\right)_{-}\right\} - \I\left\{u \leq \left(\frac{h(x)}{\|h\|_{\infty}}\right)_{-}\right\} \right] \dd u}_{\text{(III)}'}.
\end{align*}

Because $|f(x)| \leq f_{\max}$ for $x \in A$, it suffices to show that $\text{(III)}'$ uniformly converges to zero. 
Observe that
\begin{align*}
\left|\left(\frac{h_t(x)}{\|h_t\|_{\infty}} \right)_{-} - \left(\frac{h(x)}{\|h\|_{\infty}} \right)_{-} \right| &\leq \left|\left(\frac{h_t(x)}{\|h_t\|_{\infty}} \right) - \left(\frac{h(x)}{\|h\|_{\infty}} \right) \right| \\
&\leq \|h\|_{\infty} (1/\|h_t\|_{\infty} - 1 / \|h\|_{\infty}) + \|h_t - h\|_{\infty} / \|h_t\|_{\infty}  \rightarrow 0,
\end{align*}
uniformly over $x$. Therefore, because $|m_t u| \leq m_t = t \|h_t\|_{\infty} \rightarrow 0$, given any $\epsilon > 0$, there exists $T'_{\epsilon} > 0$ such that for every $t < T'_{\epsilon}$, 
\[ \left\|\left(\frac{h_t}{\|h_t\|_{\infty}} \right)_{-} - \left(\frac{h}{\|h\|_{\infty}} \right)_{-} \right\|_{\infty} < \epsilon. \]
It follows that for $t < T'_{\epsilon}$,
\[ \text{(III)}' \leq \int_0^1 \left[\I\left\{u \leq \epsilon + \left(\frac{h(x + m_t u)}{\|h\|_{\infty}}\right)_{-}\right\} - \I\left\{u \leq \left(\frac{h(x)}{\|h\|_{\infty}}\right)_{-}\right\} \right] \dd u. \]
Further, because $h$ is uniformly continuous, there exists $T''_{\epsilon} > 0$ such that $|h(x+ m_t) - h(x)| < \epsilon \|h\|_{\infty} $ for all $t < T''_{\epsilon}$. Hence, for all $t < T'_{\epsilon} \wedge T''_{\epsilon}$, we have
\[ \text{(III)}' \leq \int_0^1 \left[\I\left\{u \leq 2\epsilon + \left(\frac{h(x)}{\|h\|_{\infty}}\right)_{-}\right\} - \I\left\{u \leq \left(\frac{h(x)}{\|h\|_{\infty}}\right)_{-}\right\} \right] \dd u \leq 2 \epsilon.\]
An analogous lower bound can be derived in a similar way. Because $T'_{\epsilon}, T''_{\epsilon}$ do not depend on $x$ and $\epsilon > 0$ is arbitrary, we conclude that $\text{(III)}'$ converges to zero uniformly as desired.
\end{description}

\end{proof}

\subsubsection{Multivariate result}
\begin{proposition}[Hadamard differentiability of additive coordinatewise perturbation] \label{prop:hada-aggregate}
Let $R(x_1, \dots, x_L): [0,1]^L \rightarrow \mathbb{R} \cup \{\pm \infty\}$ be a function that is non-decreasing in each coordinate. Let $[a,b]$ be an interval in the range of $R$. For $l=1,\dots,L$, define
\[ R_l^{-1}(x_{-l}; r) :=  \sup \{x_l \in [0,1]: R(x) \leq r\}, \quad (x_{-l}, r) \in [0,1]^{L-1} \times [a,b]. \]

Let $X = (X_1, \dots, X_L)$ be a random vector in $[0,1]^L$ with a distribution absolutely continuous with respect to the Lebesgue measure. Let $f(x_l \mid x_{-l}) := f(x_l \mid X_{-l} = x_{-l})$ for $l=1,\dots,L$ be the corresponding conditional densities with respect to the Lebesgue measure.

Let $D_{\varphi}$ be the subset of $D[0,1]$ consisting of $g$ such that $0 \leq u + g(u) \leq 1$ for $u \in [0,1]$ and $g(0) = g(1) = 0$.
For $g \in D_{\varphi}$, consider the following map
\[ \varphi: D_{\varphi} \rightarrow D[a,b], \quad \varphi(g) = r \mapsto \pr\left(R(X_1 + g(X_1), \dots, X_L + g(X_L)) \leq r \right). \]
Then, under \cref{assump:f-cont,assump:f-equi} below, the map $\varphi$ is Hadamard differentiable at $g=0$ tangentially to any uniformly continuous $h \in D[0,1]$ satisfying $h(0) = h(1) = 0$ with derivative
\[ \varphi'_{g=0}(h) = r \mapsto  -\sum_{l=1}^L \E\left[f(R_l^{-1}(X_{-l}; r) \mid X_{-l}) \, h(R_l^{-1}(X_{-l}; r))\right], \quad r \in [a,b]. \]
Further, the derivative is continuous on $[a,b]$. 
\end{proposition}

\begin{assumption} \label{assump:f-cont} 
For $l=1,\dots,L$ and Lebesgue almost every $x_{-l} \in [0,1]^{L-1}$, the map $x_l \mapsto f(x_l \mid x_{-l})$ is continuous on $[0,1]$.  
\end{assumption}

\begin{assumption} \label{assump:f-equi}
There exists $\delta > 0$ such that for $l=1,\dots,L$, $\{x_l \mapsto f(x_l \mid x_{-l}): x_{-l}\}$ is equicontinuous on $A_{\delta}^l := \{(x_l, x_{-l}): x_l \in A_{\delta,x_{-l}}^l, \; x_{-l} \in [0,1]^{L-1}\}$, where 
\[ A_{\delta, x_{-l}}^l := \left[\left(R_l^{-1}(x_{-l} + \delta; a) - \delta\right) \vee 0,\; \left(R_l^{-1}(x_{-l} - \delta; b) + \delta\right) \wedge 1 \right], \] 
where $x_{-l} \pm \delta$ is applied coordinatewise. 
\end{assumption}

The next corollary specialises \cref{prop:hada-aggregate} to the case where $R$ is a symmetric function and the random vector is exchangeable. 
\begin{corollary}[Under symmetry] \label{cor:hada-aggregate}
Let $R(c_1, \dots, c_L): [0,1]^L \rightarrow \mathbb{R} \cup \{\pm \infty\}$ be a symmetric function that is non-decreasing in each coordinate. Let $[a,b]$ be an interval in the range of $R$. Define
\[ R^{-1}(c_{-1}; r) :=  \sup \{c_1 \in [0,1]: R(c) \leq r\}, \quad (c_{-1}, r) \in [0,1]^{L-1} \times [a,b]. \]

Let $C = (C_1, \dots, C_L)$ be an exchangeable random vector in $[0,1]^L$ with a distribution absolutely continuous with respect to the Lebesgue measure. Let $f(c_1 \mid c_{-1}) := f(c_1 \mid C_{-1} = c_{-1})$ be the corresponding conditional density with respect to the Lebesgue measure.

Let $D_{\varphi}$ be the subset of $D[0,1]$ consisting of $g$ such that $0 \leq u + g(u) \leq 1$ for $u \in [0,1]$ and $g(0) = g(1) = 0$.
For $g \in D_{\varphi}$, consider the following map
\[ \varphi: D_{\varphi} \rightarrow D[a,b], \quad \varphi(g) = r \mapsto \pr\left(R(C_1 + g(C_1), \dots, C_L + g(C_L)) \leq r \right). \]
Then, under \cref{assump:f-cont-symm,assump:f-equi-symm} below, the map $\varphi$ is Hadamard differentiable at $g=0$ tangentially to any uniformly continuous $h \in D[0,1]$ satisfying $h(0) = h(1) = 0$ with derivative
\[ \varphi'_{g=0}(h) = r \mapsto  -L \E\left[f(R^{-1}(C_{-1}; r) \mid C_{-1}) \, h(R^{-1}(C_{-1}; r)) \right], \quad r \in [a,b]. \]
Further, the derivative is continuous on $[a,b]$. 
\end{corollary}
\begin{assumption} \label{assump:f-cont-symm}
For Lebesgue almost every $c_{-1} \in [0,1]^{L-1}$, the map $c_1 \mapsto f(c_1 \mid c_{-1})$ is continuous on $[0,1]$.
\end{assumption}
\begin{assumption} \label{assump:f-equi-symm}
There exists $\delta > 0$ such that $\{c_1 \mapsto f(c_1 \mid c_{-1}): c_{-1}\}$ is equicontinuous on $A_{\delta} := \{(c_1, c_{-1}): c_1 \in A_{\delta,c_{-1}}, \; c_{-1} \in [0,1]^{L-1}\}$, where 
\[ A_{\delta, c_{-1}} := \left[\left(R^{-1}(c_{-1} + \delta; a) - \delta\right) \vee 0,\; \left(R^{-1}(c_{-1} - \delta; b) + \delta\right) \wedge 1 \right], \] 
where $c_{-1} \pm \delta$ is applied coordinatewise.  
\end{assumption}

We now prove \cref{prop:hada-aggregate}, the key proposition that underpins the first-order power result. The proof relies several auxiliary results which can be found in \cref{app:aux-hadamard}.

\begin{proof}[of \cref{prop:hada-aggregate}]
First note that the claimed $\varphi'_{g=0}(h)$ is linear in $h$. 
Further, we claim that the linear map is continuous because $\|\varphi'_{g=0}(h)\|_{\infty} \leq L f_{\max} \, \|h\|_{\infty}$, where we define
\begin{equation} \label{eqs:pf-fmax}
 f_{\max} := \max_{1 \leq l \leq L} \sup_{x \in A_{\delta}^l} f(x_l \mid x_{-l}),  
\end{equation}
based on the region defined in \cref{assump:f-equi}. Using \cref{assump:f-equi} and the fact that every $A_{\delta,x_{-l}}^l \subset [0,1]$ is compact, we have $f_{\max} < \infty$.
To see the $f(R_l^{-1}(X_{-l}; r) \mid X_{-l}) \leq f_{\max}$, note that the monotonicity of $R$ implies the following property of $A_{\delta, x_{-l}}^l$ ($l=1,\dots,L$):
\begin{multline} \label{eqs:pf-region}
x_l \in [0,1],\; |x_l - R_l^{-1}(x_{-l}'; r)|\leq \delta \text{ for some $x_{-l}', r$ such that $\|x_{-l}' - x_{-l}\|_{\infty} \leq \delta$, $r \in [a,b]$} \\
 \implies x_l \in A_{\delta, x_{-l}}^l.
\end{multline}
Clearly, $R_l^{-1}(x_{-l}; r) \in A_{\delta, x_{-l}}^l$ for any $r \in [a,b]$.  

\medskip Fix any uniformly continuous $h \in D[0,1]$ with $h(0)=h(1)=0$. Consider $t \searrow 0$ and let $h_t$ be a sequence in $D[0,1]$ such that $\|h_t - h\|_{\infty} \rightarrow 0$ and $t h_t \in D_{\varphi}$, which implies $h_t(0) = h_t(1) = 0$. Without loss of generality, we can assume $\|h_t\|_{\infty} > 0$. Because $t \|h_t\|_{\infty} \rightarrow 0 $, fix any $T_{\delta} > 0$ with the following property, which will be useful later:
\begin{equation} \label{eqs:pf-T-delta}
t < T_{\delta} \implies t \|h_t\|_{\infty} < \delta.
\end{equation}

We want to show that 
\begin{multline*}
\sup_{a \leq r \leq b} \bigg|t^{-1}\left\{\pr\left(R(X_1 + t h_t(X_1), \dots, X_L + t h_t(X_L) \right) \leq r) - \pr\left(R(X_1, \dots, X_L) \leq r \right) \right\} \\
+ \sum_{l=1}^L \E\left[f(R_l^{-1}(X_{-l}; r) \mid X_{-l}) \, h(R_l^{-1}(X_{-l}; r)) \} \right]  \bigg| \rightarrow 0.
\end{multline*}
In light of the following telescoping sum of $L$ quotients
\begin{align*}
& \quad t^{-1}\left\{\pr\left(R(X_1 + t h_t(X_1), \dots, X_L + t h_t(X_L)) \leq r \right) - \pr\left(R(X_1, \dots, X_L) \leq r \right) \right\} \\
&= t^{-1}\big\{\pr\left(R(X_1 + t h_t(X_1), \dots, X_L + t h_t(X_L)) \leq r \right) \\
& \qquad \qquad - \pr\left(R(X_1, X_2 + t h_t(X_2), \dots, X_L + t h_t(X_L)) \leq r \right) \big\} \\
& \quad + t^{-1} \big\{\pr\left(R(X_1, X_2 + t h_t(X_2), \dots, X_L + t h_t(X_L)) \leq r \right) \\
& \qquad \qquad - \pr\left(R(X_1, X_2, X_3 + t h_t(X_3), \dots, X_L + t h_t(X_L)) \leq r \right)  \big\} \\
& \qquad +  \dots + t^{-1} \big\{\pr\left(R(X_1, \dots, X_{L-1}, X_L + t h_t(X_L)) \leq r \right) - \pr\left(R(X_1, \dots, X_{L-1}, X_L) \leq r \right) \big\},
\end{align*}
we prove our result by showing that, for $l=1,\dots,L$, 
\begin{multline*}
\sup_{a \leq r \leq b} \bigg|t^{-1}\big\{\pr\left(R(X_1, \dots, X_{l-1}, X_l + t h_t(X_l), \dots, X_L + t h_t(X_L)) \leq r \right) \\
- \pr\left(R(X_1, \dots, X_{l-1}, X_{l}, X_{l+1} + t h_t(X_{l+1}), \dots, X_L + t h_t(X_L)) \leq r \right) \big\} \\
+ \E\left[f(R_l^{-1}(X_{-l}; r) \mid X_{-l}) \, h(R_l^{-1}(X_{-l}; r))  \right]  \bigg| \rightarrow 0,
\end{multline*}
or, by conditioning on $X_{-l}:=(X_1, \dots, X_{l-1}, X_{l+1}, \dots, X_L)$ and using the monotonicity of $R$, equivalently, 
\begin{multline} \label{eqs:pf-desired}
\sup_{a \leq r \leq b} \bigg|\E \big[ t^{-1}\left\{\pr\left( X_l + t h_t(X_l) \leq R_l^{-1}(X_{-l}^{t}; r) \mid X_{-l} \right) - \pr\left(X_l \leq R_l^{-1}(X_{-l}^t; r) \mid X_{-l} \right) \right\} \\
+ f(R_l^{-1}(X_{-l}; r) \mid X_{-l}) \, h(R_l^{-1}(X_{-l}; r)) \} \big] \bigg| \rightarrow 0.
\end{multline}
In above, we define
\begin{equation} \label{eqs:pf-xlt}
X_{-l}^{t}:= \left(X_1, \dots, X_{l-1}, X_{l+1} + t h_t(X_{l+1}), \dots, X_L + t h_t(X_L) \right),
\end{equation}
which is a measurable function of $X_{-l}$ and satisfies
\begin{equation} \label{eqs:pf-xdiff}
\|X_{-l}^t - X_{-}\|_{\infty} \leq t \|h_t\|_{\infty}.
\end{equation}
We prove our desired \cref{eqs:pf-desired} by showing both (I) and (II) below converge to zero uniformly over $r \in [a,b]$: 
\begin{multline*}
\mathrm{(I)} := \E \big[ t^{-1}\left\{\pr\left( X_l + t h_t(X_l) \leq R_l^{-1}(X_{-l}^{t}; r) \mid X_{-l} \right) - \pr\left(X_l \leq R_l^{-1}(X_{-l}^t; r) \mid X_{-l} \right) \right\} \\
+ f(R_l^{-1}(X_{-l}^t; r) \mid X_{-l}) \, h(R_l^{-1}(X_{-l}^t; r)) \big], 
\end{multline*}
\[\mathrm{(II)} := \E \big[ f(R_l^{-1}(X_{-l}^t; r) \mid X_{-l}) \, h(R_l^{-1}(X_{-l}^t; r)) -  f(R_l^{-1}(X_{-l}; r) \mid X_{-l}) \, h(R_l^{-1}(X_{-l}; r)) \big]. \]

\paragraph{Bounding (I)} We write $\text{(I)} = \E J_{t,r}^l(X_{-l})$ with 
\begin{multline} \label{eqs:pf-J}
J^l_{t,r}(X_{-l}) := t^{-1}\left\{\pr\left( X_l + t h_t(X_l) \leq R_l^{-1}(X_{-l}^{t}; r) \mid X_{-l} \right) - \pr\left(X_l \leq R_l^{-1}(X_{-l}^t; r) \mid X_{-l} \right) \right\} \\
+ f(R_l^{-1}(X_{-l}^t; r) \mid X_{-l}) \, h(R_l^{-1}(X_{-l}^t; r)).
\end{multline}
It holds that
\begin{multline} \label{eqs:pf-Jt-tilde}
\sup_{r \in [a,b]} |J_{t,r}^l(X_{-l})| \leq \sup_{x_l \in [0,1]} \big| t^{-1}\left\{\pr\left( X_l + t h_t(X_l) \leq x_l \mid X_{-l} \right) - \pr\left(X_l \leq x_l \mid X_{-l} \right) \right\} \\
+ f(x_l \mid X_{-l}) \, h(x_l) \big| =: \tilde{J}_{t}^l(X_{-l}).
\end{multline}
We claim that 
\[ \sup_{r \in [a,b]} |J_{t,r}^l(X_{-l})| \leq \tilde{J}_{t}^l(X_{-l}) \rightarrow 0 \quad \text{a.e.}, \]
which follows from observing that $R_l^{-1}(X_{-l}^{t}; r)$ is a measurable function of $X_{-l}$ and applying \cref{lem:hada-coor-cond} with $A=[0,1]$ conditional on $X_{-l}$. 
To see that \cref{lem:hada-coor-cond} can be applied, note that for almost every $x_{-l}$, the map $x_l \mapsto f(x_l \mid x_{-l})$ is continuous on $A$ (\cref{assump:f-cont}), which, by compactness of $A$, further implies that $x_l \mapsto f(x_l \mid x_{-l})$ is uniformly continuous and is bounded. 

In light of 
\[ \sup_{r \in [a,b]} |\text{(I)}| = \sup_{r \in [a,b]} |\E J_{t,r}^l(X_{-l})| \leq \E \sup_{r \in [a,b]} |J_{t,r}^l(X_{-l})|, \]
to show that LHS converges to zero through showing $\E \sup_{r \in [a,b]} |J_{t,r}^l(X_{-l})| \rightarrow 0$, by DCT, it remains to show that 
$\sup_{r \in [a,b]} |J_{t,r}^l(X_{-l})|$ is dominated by an integrable random variable for all sufficiently small $t$. We have
\begin{multline*}
\sup_{r \in [a,b]} |J_{t,r}^l(X_{-l})| \leq \underbrace{\sup_{r \in [a,b]} t^{-1}\left|\pr\left( X_l + t h_t(X_l) \leq R_l^{-1}(X_{-l}^{t}; r) \mid X_{-l} \right) - \pr\left(X_l \leq R_l^{-1}(X_{-l}^t; r) \mid X_{-l} \right) \right|}_{\text{(I-a)}} \\
+ \underbrace{\sup_{r \in [a,b]} |f(R_l^{-1}(X_{-l}^t; r) \mid X_{-l}) \, h(R_l^{-1}(X_{-l}^t; r))|}_{\text{(I-b)}}.
\end{multline*}
First, by \cref{eqs:pf-T-delta,eqs:pf-xdiff,eqs:pf-region}, whenever $t < T_{\delta}$, we have $R_l^{-1}(X_{-l}^t; r) \in A_{\delta, x_{-l}}^l$ and hence $\text{(I-b)} \leq f_{\max} \|h\|_{\infty}$ by \cref{eqs:pf-fmax}. We now argue that $\text{(I-a)}$ is also bounded when $t < T_{\delta}$. To see the upper bound, observe that  
\begin{align*}
& \quad \sup_{r \in [a,b]} t^{-1}\left\{\pr\left( X_l + t h_t(X_l) \leq R_l^{-1}(X_{-l}^{t}; r) \mid X_{-l} \right) - \pr\left(X_l \leq R_l^{-1}(X_{-l}^t; r) \mid X_{-l} \right) \right\} \\
& \leq \sup_{r \in [a,b]} t^{-1}\left\{\pr\left( X_l \leq R_l^{-1}(X_{-l}^{t}; r) + t \|h_t\|_{\infty} \mid X_{-l} \right) - \pr\left(X_l \leq R_l^{-1}(X_{-l}^t; r) \mid X_{-l} \right) \right\} \\
& = \sup_{r \in [a,b]} f(X_l^{\ast}(r) \mid X_{-l}) \, \|h_t\|_{\infty} \leq f_{\max} \|h_t\|_{\infty}, \quad \forall\, t < T_{\delta},
\end{align*}
where we used the mean value theorem and $X_l^{\ast}(r) \in [R_l^{-1}(X_{-l}^{t}; r), (R_l^{-1}(X_{-l}^{t}; r) + t \|h_t\|_{\infty}) \wedge 1]$ is a random variable that depends on $X_{-l}$ and $r$. The final upper bound follows from the fact that $X_l^{\ast}(r) \in A_{\delta, X_{-l}}^l$ (see \cref{assump:f-equi}) almost surely for any $r \in [a,b]$ and every $t < T_{\delta}$. By a similar argument, $\text{(I-a)}$ is also lower bounded when $t < T_{\delta}$. Hence, we can apply DCT and conclude that (I) converges to zero uniformly over $r \in [a,b]$. 

\paragraph{Bounding (II)} We write $\text{(II)} = \E V_{t,r}^l(X_{-l})$ with  
\begin{equation} \label{eqs:pf-II-V}
V_{t,r}^l(X_{-l}) := f(R_l^{-1}(X_{-l}^t; r) \mid X_{-l}) \, h(R_l^{-1}(X_{-l}^t; r)) -  f(R_l^{-1}(X_{-l}; r) \mid X_{-l}) \, h(R_l^{-1}(X_{-l}; r)).
\end{equation}
We first show the pointwise convergence
\begin{equation} \label{eqs:pf-II-pointwise}
\E V_{t,r}^l(X_{-l}) \rightarrow 0, \quad r \in [a,b].
\end{equation}
Fix any $r \in [a,b]$. Recall that whenever $t < T_{\delta}$, we have $R_l^{-1}(X_{-l}^t; r) \in A_{\delta, x_{-l}}^l$ and hence $|f(R_l^{-1}(X_{-l}^t; r) \mid X_{-l})| \leq f_{\max}$, $|V_{t,r}^l(X_{-l})| \leq 2 f_{\max} \|h\|_{\infty}$.
By DCT, it remains to show that $V_{t,r}^l(X_{-l}) \rightarrow 0$ for almost every $X_{-l}$. 
By monotonicity of $R$, for every fixed $r \in [a,b]$, $x_{-l} \mapsto -R_l^{-1}(x_{-l}; r)$ is coordinatewise non-decreasing. By \cref{prop:monotone-cont-ae}, we conclude that for every $r \in [a,b]$, $x_{-l} \mapsto R_l^{-1}(x_{-l}; r)$ is continuous at Lebesgue almost every $x_{-l} \in [0,1]^{L-1}$. Hence, together with the continuity of $x_l \mapsto f(x_l \mid X_{-l})$ (\cref{assump:f-cont}) and the continuity of $h$, we have
\[ f(R_l^{-1}(X_{-l}^t; r) \mid X_{-l}) \rightarrow f(R_l^{-1}(X_{-l}; r) \mid X_{-l}), \quad h(R_l^{-1}(X_{-l}^t; r)) \rightarrow h(R_l^{-1}(X_{-l}; r)), \quad \text{a.e.}.\]

\medskip We now argue that \cref{eqs:pf-II-pointwise} can be strengthened to uniform convergence over $r \in [a,b]$. By \cref{lem:equicont-compact}, it suffices to show that, for $K'$ a dense subset of $[a,b]$,
\begin{equation} \label{eqs:pf-equi-V}
\left|\E V_{t,r_t}^l(X_{-l}) - \E V_{t,r}^l(X_{-l}) \right| \rightarrow 0, \quad \text{for any $r \in K'$ and any $[a,b] \ni r_t \rightarrow r$}.
\end{equation}
Fix an arbitrary $r \in K'$ and $r_t \rightarrow r$. We have 
\begin{equation} \label{eqs:pf-II-ab}
\begin{split}
& \quad \left|\E V_{t,r_t}^l(X_{-l}) - \E V_{t,r}^l(X_{-l}) \right| \\
&\leq \underbrace{\E \left|f(R_l^{-1}(X_{-l}^t; r_t) \mid X_{-l}) \, h(R_l^{-1}(X_{-l}^t; r_t)) - f(R_l^{-1}(X_{-l}^t; r) \mid X_{-l}) \, h(R_l^{-1}(X_{-l}^t; r))  \right|}_{\text{(II-a)}} \\
& \quad + \underbrace{\E \left|f(R_l^{-1}(X_{-l}; r_t) \mid X_{-l}) \, h(R_l^{-1}(X_{-l}; r_t)) - f(R_l^{-1}(X_{-l}; r) \mid X_{-l}) \, h(R_l^{-1}(X_{-l}; r))  \right|}_{\text{(II-b)}}. 
\end{split}
\end{equation}
First, we claim that (II-b) tends to zero, which follows from $R_l^{-1}(X_{-l}; r_t) \rightarrow R_l^{-1}(X_{-l}; r)$ for a.e. $X_{-l}$ (implied by \cref{prop:monotone-cont-ae} and monotonicity of $R_l^{-1}$; see also the proof of \cref{lem:pr-equi}) and DCT. Then, for $t < T_{\delta}$, note that (II-a) is further bounded by 
\begin{multline*}
\text{(II-a)} \leq \|h\|_{\infty}\, \E \left|f(R_l^{-1}(X_{-l}^t; r_t) \mid X_{-l}) - f(R_l^{-1}(X_{-l}^t; r) \mid X_{-l}) \right| \\
+ f_{\max}\, \E \left|h(R_l^{-1}(X_{-l}^t; r_t)) - h(R_l^{-1}(X_{-l}^t; r)) \right|.
\end{multline*}
We shall prove \cref{eqs:pf-equi-V} by showing that
\begin{align}
\E \left|f(R_l^{-1}(X_{-l}^t; r_t) \mid X_{-l}) - f(R_l^{-1}(X_{-l}^t; r) \mid X_{-l}) \right| &\rightarrow 0, \label{eqs:pf-equi-f-expectation} \\
\E \left|h(R_l^{-1}(X_{-l}^t; r_t)) - h(R_l^{-1}(X_{-l}^t; r)) \right| &\rightarrow 0. \label{eqs:pf-equi-h-expectation}
\end{align}

We now show \cref{eqs:pf-equi-f-expectation}.  Recall that $X_{-l}^t$ is given by \cref{eqs:pf-xlt} and the expectation in \cref{eqs:pf-equi-f-expectation} is taken over $X_{-l}$. For any $\epsilon > 0$, let 
\[\mathcal{E}_{\epsilon,t} := \{|R_l^{-1}(X_{-l}^t; r_t) - R_l^{-1}(X_{-l}^t; r)| > \epsilon \}.\] Considering the integral over $\mathcal{E}_{\epsilon,t}$ and its complement, for $t < T_{\delta}$, we have 
\begin{align*}
 \E \left|f(R_l^{-1}(X_{-l}^t; r_t) \mid X_{-l}) - f(R_l^{-1}(X_{-l}^t; r) \mid X_{-l}) \right| &\leq 2 f_{\max} \pr(\mathcal{E}_{\epsilon,t}) + \E\left[\Delta f_l(\epsilon \mid X_{-l}) \mid \mathcal{E}_{\epsilon,t}^c \right] \pr(\mathcal{E}_{\epsilon,t}^c) \\
& \leq 2 f_{\max} \pr(\mathcal{E}_{\epsilon,t}) + \E\left[\Delta f_l(\epsilon \mid X_{-l}) \mid \mathcal{E}_{\epsilon,t}^c \right],
\end{align*}
where $\Delta f_l(\cdot \mid x_{-l})$ is the modulus of continuity of $x_l \mapsto f(x_l \mid x_{-l})$ on $A_{\delta, x_{-l}}^l$. Applying the reverse Fatou's lemma, we derive 
\begin{align*}
& \quad \limsup \E \left|f(R_l^{-1}(X_{-l}^t; r_t) \mid X_{-l}) - f(R_l^{-1}(X_{-l}^t; r) \mid X_{-l}) \right| \\
& \leq 2 f_{\max} \limsup \pr(\mathcal{E}_{\epsilon,t}) + \limsup  \E\left[\Delta f_l(\epsilon \mid X_{-l}) \mid \mathcal{E}_{\epsilon,t}^c \right] \\
&= \limsup  \E\left[\Delta f_l(\epsilon \mid X_{-l}) \mid \mathcal{E}_{\epsilon,t}^c \right] \\
&\leq \Delta f_l(\epsilon), \quad \text{for all sufficiently small $\epsilon > 0$}.
\end{align*}
For the penultimate step, we used \cref{lem:pr-equi} to conclude $\pr(\mathcal{E}_{\epsilon,t}) \rightarrow 0$. For the final step, we used the equicontinuity of $f$ on $A_{\delta}^l$ (\cref{assump:f-equi}) to conclude that $\Delta f_l(\epsilon \mid x_{-l})$ for $x_{-l} \in [0,1]^{L-1}$ admits an upper bound $\Delta f_l(\epsilon)$ for all sufficiently small $\epsilon > 0$. Because $\Delta f_l(\epsilon)$ can be made arbitrarily small by choosing $\epsilon \searrow 0$, \cref{eqs:pf-equi-f-expectation} is proven. 

Finally, \cref{eqs:pf-equi-h-expectation} can be shown by a similar argument, which together with \cref{eqs:pf-equi-f-expectation} proves \cref{eqs:pf-equi-V}. To complete the proof, observe that the continuity of the derivative $\varphi'_{g=0}(h)$ on $[a,b]$ follows from the fact that (II-b) in \cref{eqs:pf-II-ab} tends to zero. 
\end{proof}

\subsection{First-order behaviour of rank-transformed subsampling} \label{app:rank-first-order}
For technical reasons, \cref{thm:rank-asymp} is established for $\tilde{G}_n$ computed from a variant of \cref{alg:agg-test} that has two independent copies of the data, $X$ and $X'$. The variant first uses $X$ to define the rank transform, and then uses $X'$ to approximate the distribution of the rank-transformed statistic, whereas the original \cref{alg:agg-test} uses the same $X$ for both purposes. In this variant described below, the randomness pertaining to the first and the second subsampling are independent and this greatly simplifies our analysis. 

\begin{description}
\item[First subsampling (for defining the rank transform)]
Recall that we choose $B = \JJ\, \floor{n/m}$ for a fixed integer $\JJ$. Also recall that $\hat{\mathbf{H}} = (\hat{H}_{b,l})$ is a $B \times L$ matrix consisting of rows 
\[\hat{\mathbf{H}}_{b,\cdot} := \left( T_m^{(1)}(X_{i_{1,b}}, \dots, X_{i_{m,b}}), \quad \dots \quad , T_m^{(L)}(X_{i_{1,b}}, \dots, X_{i_{m,b}}) \right), \quad b=1,\dots,B. \]
Suppose the $B$ rows are arranged into $J$ blocks of $\floor{n/m}$ rows so that $b=(j-1) \floor{n/m} + 1, \dots, j \floor{n/m}$ consists of $\floor{n/m}$ independent subsamples. Let $\mathbb{F}_{\hat{\mathbf{H}}}$ be the empirical distribution function of $\{\hat{H}_{b,l}: b=1,\dots,B,\,l=1,\dots, L\}$. The rank-transform is the following map from $T_m$ to $\tilde{T}_m$:
\begin{equation} \label{eqs:rank-trans}
\tilde{T}_m = F_0^{-1}\left(\frac{BL}{BL+1} \mathbb{F}_{\hat{\mathbf{H}}}(T_m) + \frac{1/2}{BL + 1}\right),
\end{equation}
where the normalised rank is bounded away from zero and one by $(1/2) / (BL+1)$. 

\item[Second subsampling (for estimating the distribution of rank-transformed statistic)]
We study the version of the procedure where the second subsampling is performed on $X'$, an independent copy of the data. The rank-transformed matrix $\tilde{\mathbf{H}} = (\tilde{H}_{b,l})$ consists of rows 
\[\tilde{\mathbf{H}}_{b,\cdot} := \left( \tilde{T}_m^{(1)}(X'_{i_{1,b}}, \dots, X'_{i_{m,b}}), \quad \dots \quad , \tilde{T}_m^{(L)}(X'_{i_{1,b}}, \dots, X'_{i_{m,b}}) \right), \quad b=1,\dots,B, \]
where $\tilde{T}_m^{(l)}(X'_{i_{1,b}}, \dots, X'_{i_{m,b}})$ is the image of $T_m^{(l)}(X'_{i_{1,b}}, \dots, X'_{i_{m,b}})$ under the rank transform. 
Passing them through the aggregation function $S$, we get
\[ \tilde{S}_b := S(\tilde{H}_{b,1}, \dots, \tilde{H}_{b,L}), \quad b=1,\dots,B \]
Let $\tilde{G}_{n}(x) := \mathbb{F}_{\{\tilde{S}_b\}}(x)$. For $\alpha \in (0,1)$, the rank-transformed critical value is $\tilde{G}_n^{-1}(1-\alpha)$. We compare this critical value to the oracle critical value $G_{P_0}^{-1}(1-\alpha)$, where $G$ is the asymptotic null distribution function of $S$. 

\end{description}

The full version of \cref{thm:rank-asymp} is stated as follows. 

\begin{theorem}[Full statement of \cref{thm:rank-asymp}] \label{thm:rank-asymp-full}
Suppose \cref{cond:pivotal} holds and $(T_n^{(1)}, \dots, T_n^{(L)})$ is exchangeable.
Consider a sequence $P_n \in \mathcal{P}$ that converges in copula to some $P_0 \in \mathcal{P}_0$ in the sense of \cref{def:copula-conv} such that 
\begin{equation} \label{eqs:copula-conv-tv}
\TV(U_m, C) = o(\sqrt{m/n}).
\end{equation}
Suppose the distribution of $C$ is absolutely continuous with respect to the Lebesgue measure. 
Let $S$ be a symmetric aggregation function that is non-decreasing in each coordinate. 
Let $\tilde{G}_n$ denote the rank-transformed subsampling distribution function obtained with a variant of \cref{alg:agg-test} that uses two independent copies of the data under $(X,X',\Omega) \sim P_n^n \times P_n^n \times P_{\Omega}$.

Suppose \cref{assump:stable-G} holds and fix $\alpha \in (0,1)$ such that the density $G'_{P_0}$ is strictly positive and continuous in a neighbourhood of $G_{P_0}^{-1}(1-\alpha)$.
With $c:=(c_1, \dots, c_L)$, define
\[ R(c_1,\dots,c_L):=S(F_0^{-1}(c_1), \dots, F_0^{-1}(c_L)), \quad c \in [0,1]^L. \]
Then, under \cref{assump:f-equi-symm,assump:f-cont-symm}, it holds that 
\begin{equation} \label{eqs:crit-val-asymp-rank}
\E \left[-M \vee \sqrt{n/m} \, \left(\tilde{G}_n^{-1}(1-\alpha) - G_{P_0}^{-1}(1-\alpha) \right) \wedge M \right] \rightarrow 0.
\end{equation}

Further, let $G_{n,P_n}$ be the distribution function of $S(T_n^{(1)}, \dots, T_n^{(L)})$ under $P_n^n \times P_{\Omega}$. Suppose $\|G_{n,P_n} - G_{\text{alt}}\|_{\infty} = o(\sqrt{m/n})$ holds for some distribution function $G_{\text{alt}}$ that is differentiable at $G_{P_0}^{-1}(1-\alpha)$. Then, under the same set of assumptions, for any $M>0$, we also have 
\begin{equation} \label{eqs:pow-asymp-rank}
\E \left[-M \vee \sqrt{n/m} \, \left(G_{n,P_n}(\tilde{G}_n^{-1}(1-\alpha)) - G_{n,P_n}(G_{P_0}^{-1}(1-\alpha)) \right) \wedge M \right] \rightarrow 0.
\end{equation}
\end{theorem}

\begin{proof}
By \cref{lem:lim-symm}, we prove \cref{eqs:crit-val-asymp-rank} by showing given any subsequence $n_k$ of $n$, $n_k$ admits a further subsequence $n_{\iota}$ such that $\sqrt{n_{\iota}/m_{\iota}} \, \left(\tilde{G}_{n_{\iota}}^{-1}(1-\alpha) - G_{P_0}^{-1}(1-\alpha)\right)$ converges to a symmetric law. Similarly, we establish \cref{eqs:pow-asymp-rank} by showing that 
\[\sqrt{n/m} \, \left(G_{n,P_n}(G_{P_0}^{-1}(1-\alpha)) - G_{n,P_n}(\tilde{G}_n^{-1}(1-\alpha)) \right)\] converges to a symmetric limit along the subsequence indexed by $\iota$. We will first prove \cref{eqs:crit-val-asymp-rank}, after which \cref{eqs:pow-asymp-rank} follows easily. 
In what follows, let $n_k$ be an arbitrary subsequence of $n$. 

\paragraph{First subsampling} Let 
\[ U_{b,l} := F_{m,P_n}(\hat{H}_{b,l}) = F_{m,P_n}\left(T(X_{i_{1,b}}, \dots, X_{i_{m,b}}; \Omega^{(b,l)}) \right), \quad b=1,\dots,B,\; l=1,\dots,L. \]
By construction, $\hat{H}_{b,l} = F_{m,P_n}^{-1}(U_{b,l})$. 
Because $\hat{H}_{b,l} =  F_{m,P_n}^{-1}(U_{b,l}) \leq x$ iff $U_{b,l} \leq F_{m,P_n}(x)$ for every $x$ \citep[Lemma 21.1]{van2000asymptotic}, $\mathbb{F}_{\hat{\mathbf{H}}}$ can be written as 
\begin{equation} \label{eqs:proof-U}
\mathbb{F}_{\hat{\mathbf{H}}}(x) = \mathbb{F}_{\mathbf{U},n} \circ F_{m,P_n}(x).
\end{equation}
With reindexing 
\[ U_{i,j,l} := U_{(j-1) \floor{n/m} + i,\, l}, \quad i=1,\dots,\floor{n/m}, \; j=1,\dots,J, \; l=1,\dots,L, \]
observe that for each $(j,l)$, $U_{1,j,l}, \dots, U_{\floor{n/m},j,l}$ are iid with a common distribution function, which we denote as $F_{U_{m}^{(1)}, P_n}$. Using \cref{eqs:copula-conv-tv}, we have
\[ \|F_{U_{m}^{(1)}, P_n} - \Id\|_{\infty} \leq \TV(U_m^{(1)}, C_1) \leq \TV(U_m, C) = o(\sqrt{m/n}). \]
Applying \cref{lem:bb} to the subsequence $n_k$, we conclude that $n_k$ admits a further a subsequence $n_j$ such that 
\begin{equation} \label{eqs:proof-subseq-1}
\sqrt{n_j/m_j} \left(\mathbb{F}_{\mathbf{U},n_j} - \Id \right) \rightsquigarrow \xi_1 \quad \text{in $D[0,1]$},
\end{equation}
where $\xi_1$ is almost surely a uniformly continuous with $\xi_1(0) = \xi_1(1) = 0$ and satisfies $\xi_1 =_d -\xi_1$.

\paragraph{Second subsampling, population version}
Define
\begin{equation} \label{eqs:pf-gn}
g_n(u) := (\mathbb{F}_{\mathbf{U},n} - \Id)(u) - \frac{1}{BL+1} \mathbb{F}_{\mathbf{U},n}(u) + \frac{1/2}{BL + 1},
\end{equation}
which is a random element in $D_{\varphi} \subset D[0,1]$ (see \cref{cor:hada-aggregate} for definition). 
For $l=1,\dots,L$, the rank transform can be written as $\tilde{T}_{m}^{(l)} = F_0^{-1}(\tilde{U}_m^{(l)})$ with 
\begin{align*}
\tilde{U}_m^{(l)}&=\frac{BL}{BL+1}\,\mathbb{F}_{\hat{\mathbf{H}}}(T_m^{(l)})  + \frac{1/2}{BL + 1}\\
&\stackrel{(i)}{=} \frac{BL}{BL+1}\, \mathbb{F}_{\mathbf{U},n} \circ F_{m,P_n} \circ F_{m,P_n}^{-1}(U_m^{(l)})  + \frac{1/2}{BL + 1} \\
&\stackrel{(ii)}{=}  \frac{BL}{BL+1}\, \mathbb{F}_{\mathbf{U},n}(U_m^{(l)})  + \frac{1/2}{BL + 1} \\
&= U_m^{(l)} + g_n(U_m^{(l)}),
\end{align*}
where (i) uses \cref{eqs:copula-m,eqs:proof-U}, (ii) uses the fact that $U_m^{(l)}$ is in the range of $F_{m,P_n}$ by construction \citep[Lemma 21.1]{van2000asymptotic}. In above, by the independence between $X$ and $X'$, we know $T_m^{(l)}$ (and hence $U_m^{(l)}$) is independent of $\mathbb{F}_{\mathbf{U},n}$ (and hence $g_n$). 
Further, the rank-transformed aggregated statistic 
\[ \tilde{S}_m = S(\tilde{T}_{m}^{(1)}, \dots, \tilde{T}_{m}^{(L)}) = R(\tilde{U}_m^{(1)}, \dots, \tilde{U}_m^{(L)}), \]
has its population distribution given by 
\begin{align*}
\tilde{G}_{\mathbb{F}_{\mathbf{U},n}}(r) &:= \pr\left(R(\tilde{U}_m^{(1)}, \dots, \tilde{U}_m^{(L)}) \leq r \mid \mathbb{F}_{\mathbf{U},n} \right) \\
&= \pr\left(R(U_m^{(1)} + g_n(U_m^{(1)}), \dots, U_m^{(L)} + g_n(U_m^{(L)})) \leq r \mid \mathbb{F}_{\mathbf{U},n} \right).
\end{align*}

Observe that $G_{P_0}$ is simply the distribution function of $S(F_0^{-1}(C_1), \dots, F_0^{-1}(C_L))$ under \cref{cond:pivotal,assump:stable-G}. 
Using the definition of $R(\cdot)$, we can write
\[ G_{P_0}(r) = \pr\left(R(C_1, \dots, C_L) \leq r \right). \]
Using \cref{eqs:proof-subseq-1} and $B = J \floor{n/m}$, we have
\begin{equation} \label{eqs:pf-gn-xi1}
 \sqrt{n_j / m_j} (g_{n_j} - 0) \rightsquigarrow \xi_1. 
\end{equation}
By our assumption that $S$ is symmetric and monotone, $R$ is symmetric and non-decreasing in each coordinate. Also, by our assumption, $C$ has a distribution that is absolutely continuous w.r.t. the Lebesgue measure. Given \cref{assump:f-cont-symm,assump:f-equi-symm}, we can apply \cref{cor:hada-aggregate} and conclude that the map 
\[ \varphi: D_{\varphi} \rightarrow D[a,b], \quad \varphi(g) = r \mapsto \pr\left(R(C_1 + g(C_1), \dots, C_L + g(C_L)) \leq r \right) \]
is Hadamard differentiable at $g=0$ tangentially to any uniformly continuously $h \in D[0,1]$ with $h(0) = h(1) = 0$. These conditions are satisfied almost surely by $\xi_1$ in \cref{eqs:pf-gn-xi1}. 
Applying the functional delta method \citep[Theorem 20.8]{van2000asymptotic} along the subsequence $n_j$, we have the following weak convergence in $D[a,b]$:
\begin{multline} \label{eqs:pf-eta-xi-1}
r \mapsto \sqrt{n_j / m_j} \left\{\pr(R\left(C_1 + g_{n_j}(C_1), \dots, C_L + g_{n_j}(C_L)) \leq r \right) - G_{P_0}(r) \right\} \\
\rightsquigarrow \eta(\xi_1) := \left( r \mapsto -L \E\left[f(R^{-1}(C_{-1}; r) \mid C_{-1}) \, \xi_1(R^{-1}(C_{-1}; r))  \mid \xi_1 \right] \right),
\end{multline}
where the notation $\E[\cdot \mid \xi_1]$ highlights that the expectation is only taken over $C$. In addition, by \cref{cor:hada-aggregate}, $\eta(\xi_1)$ is almost surely a continuous function of $r$ on $[a,b]$.

Note that the total variation bound in \cref{eqs:copula-conv-tv} implies 
\begin{multline*}
\sup_{r} \bigg|\pr\left(R(U_m^{(1)} + g_{n}(U_m^{(1)}), \dots, U_m^{(L)} + g_{n}(U_m^{(L)})) \leq r \mid \mathbb{F}_{\mathbf{U},n}  \right) \\
- \pr\left(R(C_1 + g_{n}(C_1), \dots, C_L + g_{n}(C_L)) \leq r \right) \bigg| = o(\sqrt{m/n})
\end{multline*}
and it follows that 
\begin{equation} \label{eqs:pf-xi-1}
\sqrt{n_j / m_j} \left(\tilde{G}_{\mathbb{F}_{\mathbf{U},n}} - G_{P_0} \right) \rightsquigarrow \eta(\xi_1) \quad \text{in $D[a,b]$},
\end{equation}
where $\eta(\xi_1) =_{d} -\eta(\xi_1)$ because $\xi_1 =_d -\xi_1$ and $\eta(\xi_1)$ is linear in $\xi_1$.

\paragraph{Second subsampling, sample version} Now consider the second subsample $\{\tilde{S}_b:b=1,\dots,B\}$ with $\tilde{S}_b = S(\tilde{H}_{b,1}, \dots, \tilde{H}_{b,L})$. For every $\tilde{S}_b$, draw a random $U_b'$ independently according to
\[ U_b' \mid \tilde{S}_b, \; \mathbb{F}_{\mathbf{U},n} \sim \unif[\tilde{G}_{\mathbb{F}_{\mathbf{U},n}}(S_b-), \; \tilde{G}_{\mathbb{F}_{\mathbf{U},n}}(S_b)]. \]
Then, by construction, 
\[  U_b' \sim \unif(0,1), \quad \tilde{S}_b = \tilde{G}^{-1}_{\mathbb{F}_{\mathbf{U},n}}(U_b'), \]
and consequently, the rank-transformed subsampling critical value can be expressed as
\begin{equation} \label{eqs:proof-quant-repre}
\tilde{G}_n^{-1}(1-\alpha) =  \tilde{G}^{-1}_{\mathbb{F}_{\mathbf{U},n}} \circ \mathbb{F}_{\mathbf{U}',n}^{-1}(1-\alpha), \quad \alpha \in (0,1),
\end{equation}
where we use $\mathbb{F}_{\mathbf{U}',n}$ to denote the empirical distribution function of $\{U_b'\}$. If we reindex $U_{i,j}' := U'_b$ for $b = (j-1) \floor{n/m} + i$, we see that $U_{1,j}', \dots, U_{\floor{n/m},j}'$ are iid $\unif(0,1)$ for every $j$ irrespective of $\mathbb{F}_{\mathbf{U},n}$. It follows from \cref{lem:bb,eqs:pf-xi-1} that there exists a further subsequence $n_{\iota}$ of $n_j$, such that
\begin{equation} \label{eqs:pf-joint}
 \sqrt{n_{\iota}/m_{\iota}} \left(\mathbb{F}_{\mathbf{U}',n_{\iota}} - \Id,\, \tilde{G}_{\mathbb{F}_{\mathbf{U},n_\iota}} - G_{P_0} \right) \rightsquigarrow (\xi_2, \,\eta(\xi_1)) \quad \text{in $D[0,1] \times D[a,b]$},
\end{equation}
where $\xi_2$ is a random element in $D[0,1]$ such that $\xi_2 =_{d} -\xi_2$ and $\xi_2$ is uniformly continuous almost surely. Further, because the second subsampling is performed on a separate sample $X'$, we have $\xi_1 \indep \xi_2$. 

Using the first limit in \cref{eqs:pf-joint}, by \citet[Lemma 21.3]{van2000asymptotic} we have 
\begin{equation} \label{eqs:proof-q-limit}
\sqrt{n_{\iota}/m_{\iota}} \left(\mathbb{F}^{-1}_{\mathbf{U}',n_{\iota}}(1-\alpha) - (1-\alpha) \right) \rightarrow_{d} -\xi_2(1-\alpha) 
\end{equation}
and consequently, 
\begin{equation} \label{eqs:proof-lim-2}
\sqrt{n_{\iota}/m_{\iota}} \left(G^{-1}(\mathbb{F}^{-1}_{\mathbf{U}',n_{\iota}}(1-\alpha)) - G_{P_0}^{-1}(1-\alpha) \right) \rightarrow_d \frac{-\xi_2(1-\alpha)}{g_{P_0}(G_{P_0}^{-1}(1-\alpha))}
\end{equation}
by the delta method. 

Further, by our assumption, because the density $g_{P_0}:=G_{P_0}'$ is strictly positive and continuous in a neighbourhood $[a,b] \supset
 [G_{P_0}^{-1}(1-\alpha) - \delta, G_{P_0}^{-1}(1-\alpha) + \delta]$ for some $\delta>0$, using \citet[Lemma 21.4]{van2000asymptotic} and the functional delta method, from the second limit in \cref{eqs:pf-joint} we derive
\begin{equation} \label{eqs:pf-lim-G-inv}
\sqrt{n_\iota / m_\iota}\, (\tilde{G}^{-1}_{\mathbb{F}_{\mathbf{U},n_\iota}} - G_{P_0}^{-1}) \rightsquigarrow -(\eta(\xi_1) / g_{P_0}) \circ G_{P_0}^{-1} \quad \text{in $\ell^{\infty}[1-\alpha-\delta',1-\alpha+\delta']$}
\end{equation}
for some $\delta'>0$. 

In fact, \cref{eqs:proof-q-limit,eqs:proof-lim-2,eqs:pf-lim-G-inv} hold jointly in light of \cref{eqs:pf-joint}, which implies
\[ \left(\sqrt{n_\iota / m_\iota}\, (\tilde{G}^{-1}_{\mathbb{F}_{\mathbf{U},n_\iota}} - G_{P_0}^{-1}),\,\mathbb{F}^{-1}_{\mathbf{U}',n_{\iota}}(1-\alpha)  \right) \rightsquigarrow \left(-(\eta(\xi_1) / g_{P_0}) \circ G_{P_0}^{-1},\, 1-\alpha \right) \]
in $\ell^{\infty}[1-\alpha-\delta',1-\alpha+\delta'] \times \mathbb{R}$. 
Observe that $(h,z) \mapsto h(z)$, as a $\ell^{\infty} \times \mathbb{R} \rightarrow \mathbb{R}$ map, is continuous when $h$ is continuous at $z$. Recall from \cref{eqs:pf-eta-xi-1} that $\eta(\xi_1)$ is almost surely a continuous function of $r$ on $[a,b]$, which verifies the continuity of $-(\eta(\xi_1) / g_{P_0}) \circ G_{P_0}^{-1}$ at $1-\alpha$. 
Hence, we can apply the continuous mapping theorem \citep[Theorem 18.11]{van2000asymptotic} and conclude 
\begin{equation} \label{eqs:proof-lim-1}
\sqrt{n_{\iota} / m_{\iota}}\, (\tilde{G}^{-1}_{\mathbb{F}_{\mathbf{U},n_{\iota}}} - G_{P_0}^{-1}) \circ \mathbb{F}^{-1}_{\mathbf{U}',n_{\iota}}(1-\alpha) \rightarrow_{d}  -(\eta(\xi_1) / g_{P_0}) \circ G_{P_0}^{-1}(1-\alpha).
\end{equation}
Finally, from \cref{eqs:proof-quant-repre,eqs:proof-lim-2,eqs:proof-lim-1} it follows that
\begin{equation} \label{eqs:pf-crit-val-limit}
\begin{split}
& \quad \sqrt{n_{\iota}/m_{\iota}}\, \left(\tilde{G}_{n_{\iota}}^{-1}(1-\alpha) - G_{P_0}^{-1}(1-\alpha)\right) \\
&= \sqrt{n_{\iota}/m_{\iota}} \left(\tilde{G}^{-1}_{\mathbb{F}_{\mathbf{U},n_{\iota}}} - G_{P_0}^{-1} \right) \circ \mathbb{F}_{\mathbf{U}',n_{\iota}}^{-1}(1-\alpha) + \sqrt{n_{\iota}/m_{\iota}}\, \left(G_{P_0}^{-1}(\mathbb{F}^{-1}_{\mathbf{U}',n_{\iota}}(1-\alpha)) - G_{P_0}^{-1}(1-\alpha) \right) \\
&\rightarrow_{d} -(\eta(\xi_1) / g_{P_0}) \circ G_{P_0}^{-1}(1-\alpha) -\frac{\xi_2(1-\alpha)}{g_{P_0}(G_{P_0}^{-1}(1-\alpha))}.
\end{split}
\end{equation}
Observe that the limit is symmetric because $\eta(\xi_1) =_{d} -\eta(\xi_1)$, $\xi_2 =_{d} -\xi_2$ and $\xi_1 \indep \xi_2$. As this limit holds for $n_{\iota}$, which is a subsequence of $n_j$ and hence of $n_k$, this establishes our first result \cref{eqs:crit-val-asymp-rank}.  

We now show \cref{eqs:pow-asymp-rank} under the additional assumption that $\|G_{n,P_n} - G_{\text{alt}}\|_{\infty} = o(\sqrt{m/n})$ for a distribution function $G_{\text{alt}}$ that is differentiable at $G_{P_0}^{-1}(1-\alpha)$. Again, by \cref{lem:lim-symm}, it suffices to show that 
\[\sqrt{n_{\iota}/m_{\iota}} \, \left(G_{{n_{\iota}},P_{n_{\iota}}}(G_{P_0}^{-1}(1-\alpha)) - G_{n_{\iota},P_{n_{\iota}}}(\tilde{G}_{n_{\iota}}^{-1}(1-\alpha)) \right)\]
converges to a symmetric limit law. This simply follows from 
\begin{align*}
& \quad \sqrt{n_{\iota}/m_{\iota}} \, \left(G_{{n_{\iota}},P_{n_{\iota}}}(G_{P_0}^{-1}(1-\alpha)) - G_{n_{\iota},P_{n_{\iota}}}(\tilde{G}_{n_{\iota}}^{-1}(1-\alpha)) \right)\\
& = \sqrt{n_{\iota}/m_{\iota}} \, \left(G_{\text{alt}}(G_{P_0}^{-1}(1-\alpha)) - G_{\text{alt}}(\tilde{G}_{n_{\iota}}^{-1}(1-\alpha)) \right) + O_p\left(\sqrt{n_{\iota}/m_{\iota}} \|G_{n_{\iota},P_{n_{\iota}}} - G_{\text{alt}}\|_{\infty} \right) \\
& = \sqrt{n_{\iota}/m_{\iota}} \, \left(G_{\text{alt}}(G_{P_0}^{-1}(1-\alpha)) - G_{\text{alt}}(\tilde{G}_{n_{\iota}}^{-1}(1-\alpha)) \right) + o_p(1) 
\end{align*}
and an application of the delta method based on \cref{eqs:pf-crit-val-limit}.
\end{proof}

\subsection{First-order behaviour of ordinary subsampling} \label{app:sub-first-order}
The result below formalises the first-order bias in approximating the oracle critical value with ordinary subsampling and the ensuing power loss; see also \cref{app:power-numerical} for numerical demonstration of the bias.  
\begin{theorem}[First-order behaviour of ordinary subsampling] \label{thm:crit-ordinary}
Consider a sequence $P_n \in \mathcal{P}$. Let $G_{m,P_n}$ be the distribution function of $S(T_m^{(1)}, \dots, T_m^{(L)})$ under $P_n^m \times P_{\Omega}$. Fix $\alpha \in (0,1)$ such that
\begin{equation} \label{eqs:G-cont-convergence}
 g_{m,P_n}(G_{m,P_n}^{-1}(x_n)) \rightarrow c > 0, \quad \text{for every sequence $x_n \rightarrow 1-\alpha$},
\end{equation}
where $g_{m,P_n} := G'_{m,P_n}$. 
Let $\hat{G}_{n} := \mathbb{F}_{\{\hat{S}_{b}: b=1,\dots,B\}}$ be the ordinary subsampling (without the rank transform; see \cref{sec:rank-transformed-sub}) estimate of $G_{P_0}$. Then, for every $M>0$, it holds that 
\begin{equation} \label{eqs:crit-val-asymp-ordinary}
\E \left[-M \vee \sqrt{n/m} \, \left(\hat{G}_n^{-1}(1-\alpha) - G_{m,P_n}^{-1}(1-\alpha) \right) \wedge M \right] \rightarrow 0,
\end{equation}
where the expectation is taken under $P_n^n \times P_{\Omega}$.

Further, suppose that for some $\beta \in (0, 1/2]$, we have $(n/m)^\beta(G_{m,P_n}^{-1}(1-\alpha) - G_{P_0}^{-1}(1-\alpha)) \rightarrow \tau > 0$.
Suppose $\|G_{n,P_n} - G_{\text{alt}}\|_{\infty} = o((m/n)^{\beta})$ holds for some distribution function $G_{\text{alt}}$ continuously differentiable at $G_{P_0}^{-1}(1-\alpha)$. Then, under the same set of assumptions, for any $M>0$, we also have 
\begin{multline} \label{eqs:power-asymp-ordinary}
\E \bigg\{-M \vee \bigg[(n/m)^{\beta} \, \left\{ G_{n,P_n}(\hat{G}_n^{-1}(1-\alpha)) - G_{n,P_n}(G_{P_0}^{-1}(1-\alpha)) \right\} \\
- \tau \, G'_{\text{alt}}(G_{P_0}^{-1}(1-\alpha)) \bigg] \wedge M \bigg\} \rightarrow 0.
\end{multline}
\end{theorem}
\begin{proof}
Using \cref{lem:lim-symm}, we prove \cref{eqs:crit-val-asymp-ordinary} by showing that given any subsequence $n_k$ of $n$, $n_k$ admits a further subsequence $n_{\iota}$ such that $\sqrt{n_{\iota}/m_{\iota}} \, \left(\hat{G}_{n_{\iota}}^{-1}(1-\alpha) - G_{m_{\iota},P_{n_{\iota}}}^{-1}(1-\alpha) \right)$ converges to a symmetric law. In what follows, let $n_k$ be an arbitrary subsequence of $n$. 

With $\hat{U}_{b,n} := G_{m,P_n}(\hat{S}_{b,n})$ for $b=1,\dots,B$, we have $\hat{S}_{b,n} = G_{m,P_n}^{-1}(\hat{U}_{b,n})$ and 
\begin{equation} \label{eqs:pf-G-hat}
\hat{G}_n^{-1}(1-\alpha) = G_{m,P_n}^{-1} \circ \mathbb{F}_{\{\hat{U}_{b,n}\}}^{-1}(1-\alpha), 
\end{equation}
where $\mathbb{F}_{\{\hat{U}_{b,n}\}}$ is the empirical distribution of $\{\hat{U}_{b,n}\}$. 
Further, if we reindex $\hat{U}_{i,j,n}' := \hat{U}_{b,n}$ for $b = (j-1) \floor{n/m} + i$, we see that $\hat{U}_{1,j,n}', \dots, U_{\floor{n/m},j,n}'$ are iid $\unif(0,1)$ for every $j=1,\dots,J$. By \cref{lem:bb}, there exists a further subsequence $n_{\iota}$ of $n_k$ such that
\[\sqrt{n_{\iota}/m_{\iota}} \left(\mathbb{F}_{\{\hat{U}_{b,n_\iota}\}} - \Id \right) \rightsquigarrow \xi \quad \text{in $D[0,1]$},\]
where $\xi =_{d} -\xi$. It then follows from \citet[Lemma 21.3]{van2000asymptotic} that 
\begin{equation} \label{eqs:pf-q-limit-2}
\sqrt{n_{\iota}/m_{\iota}} \left(\mathbb{F}^{-1}_{\{\hat{U}_{b,n_\iota}\}}(1-\alpha) - (1-\alpha) \right) \rightarrow_{d} -\xi(1-\alpha),
\end{equation}
where the limit law is symmetric. 

Using $(G_{m,P_n}^{-1})' = 1 / \left(g_{m,P_n} \circ G_{m,P_n}^{-1} \right)$ and the mean value theorem, from \cref{eqs:pf-G-hat} we derive 
\[ \sqrt{n_{\iota}/m_{\iota}} \left(\hat{G}_{n_\iota}^{-1}(1-\alpha) - G_{m_\iota,P_{n_\iota}}^{-1}(1-\alpha) \right) = C_{n_\iota}^{-1} \, \sqrt{n_\iota/m_\iota} \left[\mathbb{F}_{\{\hat{U}_{b,n_\iota}\}}^{-1}(1-\alpha) - (1-\alpha)\right] \]
for $C_{n_\iota} := g_{m_{\iota},P_{n_\iota}} \circ G_{m_{\iota},P_{n_{\iota}}}^{-1}(\zeta_{n_\iota})$ and some $\zeta_{n_\iota} \rightarrow_{p} 1-\alpha$. By our assumption \cref{eqs:G-cont-convergence}, we have $C_{n_\iota} \rightarrow_p c > 0$ by \citet[Theorem 18.11]{van2000asymptotic} and hence
\begin{equation} \label{eqs:pf-limit-ordinary}
\sqrt{n_{\iota}/m_{\iota}} \left(\hat{G}_{n_{\iota}}^{-1}(1-\alpha) - G_{m_{\iota},P_{n_{\iota}}}^{-1}(1-\alpha) \right) \rightarrow_{d} -c^{-1} \, \xi(1-\alpha),  
\end{equation}
where the limit law is symmetric. Because $n_{\iota}$ is a subsequence of $n_k$, this proves \cref{eqs:crit-val-asymp-ordinary}.

\medskip Finally, we show \cref{eqs:power-asymp-ordinary} under the given additional assumptions. Let $N$ be the neighbourhood of $G_{P_0}^{-1}(1-\alpha)$ on which $G_{\text{alt}}$ is differentiable. By the mean value theorem, we have that
\begin{equation} \label{eqs:pf-mean-val}
G_{\text{alt}} \circ \hat{G}_{n_{\iota}}^{-1}(1-\alpha) - G_{\text{alt}} \circ G_{m_{\iota},P_{n_{\iota}}}^{-1}(1-\alpha)  =  G_{\text{alt}}'(\xi_n) \left(\hat{G}_{n_{\iota}}^{-1}(1-\alpha) - G_{m_{\iota},P_{n_{\iota}}}^{-1}(1-\alpha) \right)
\end{equation}
on the sequence of events $\Omega_{n} := \{\xi_n \in N\}$, where $\xi_n$ is a random variable lying between $\hat{G}_{n_{\iota}}^{-1}(1-\alpha)$ and $G_{m_{\iota},P_{n_{\iota}}}^{-1}(1-\alpha)$. By~\cref{eqs:pf-limit-ordinary} and 
as
\begin{equation} \label{eqs:GP_n_G_P_0ass}
(n/m)^\beta(G_{m,P_n}^{-1}(1-\alpha) - G_{P_0}^{-1}(1-\alpha)) \rightarrow \tau > 0,
\end{equation}
 we have that $\xi_n \rightarrow_{p}  G_{P_0}^{-1}(1-\alpha)$. Thus $\mathbb{P}(\Omega_n) \to 1$ and by the continuous mapping theorem, $G_{\text{alt}}'(\xi_n) \rightarrow_{p} G_{\text{alt}}' \circ G_{P_0}^{-1}(1-\alpha)$. Thus, multiplying both sides of \cref{eqs:pf-mean-val} by $\sqrt{n_{\iota} / m_{\iota}}$, using Slutsky's lemma and \cref{eqs:pf-limit-ordinary} we have
\begin{equation} \label{eqs:pf-ord-limit}
\sqrt{\frac{n_{\iota}}{m_{\iota}}}  \left(G_{\text{alt}} \circ \hat{G}_{n_{\iota}}^{-1}(1-\alpha) - G_{\text{alt}} \circ G_{m_{\iota},P_{n_{\iota}}}^{-1}(1-\alpha)  \right) \rightarrow_{d}  -c^{-1} G_{\text{alt}}' \circ G_{P_0}^{-1}(1-\alpha) \, \xi(1-\alpha).
\end{equation}
Next, note that by our assumption \cref{eqs:GP_n_G_P_0ass} and the mean value theorem, we have that for all $n$ sufficiently large,
\begin{align*}
(n/m)^\beta\left\{G_{\text{alt}}(G_{m,P_{n}}^{-1}(1-\alpha)) - G_{\text{alt}}(G_{P_0}^{-1}(1-\alpha)) \right\} &= (n/m)^\beta \{ G_{m,P_{n}}^{-1}(1-\alpha) - G_{P_0}^{-1}(1-\alpha)\} G_{\text{alt}}'(\zeta_n) \\
&\to \tau \,G_{\text{alt}}'(G_{P_0}^{-1}(1-\alpha)),
\end{align*}
where $\zeta_n$ lies between $G_{P_0}^{-1}(1-\alpha)$ and $G_{m,P_{n}}^{-1}(1-\alpha)$. 
Using the fact that $\|G_{n,P_n} - G_{\text{alt}}\|_{\infty} = o((m/n)^\beta)$, we have
\begin{align*}
& \quad (n/m)^\beta \left\{ G_{n,P_n}(\hat{G}_n^{-1}(1-\alpha)) - G_{n,P_n}(G_{P_0}^{-1}(1-\alpha)) \right\} - \tau \,G_{\text{alt}}'(G_{P_0}^{-1}(1-\alpha)) \\
& = (n/m)^\beta \left\{ G_{\text{alt}}(\hat{G}_n^{-1}(1-\alpha)) - G_{\text{alt}}(G_{P_0}^{-1}(1-\alpha)) \right\} - \tau \,G_{\text{alt}}'(G_{P_0}^{-1}(1-\alpha)) + o(1) \\
&= (n/m)^\beta \left\{ G_{\text{alt}}(\hat{G}_n^{-1}(1-\alpha)) - G_{\text{alt}}(G_{m,P_n}^{-1}(1-\alpha)) \right\} \\
& \qquad + (n/m)^\beta \left\{ G_{\text{alt}}(G_{m,P_n}^{-1}(1-\alpha)) - G_{\text{alt}}(G_{P_0}^{-1}(1-\alpha)) \right\} - \tau \,G_{\text{alt}}'(G_{P_0}^{-1}(1-\alpha)) + o(1) \\
&= (n/m)^\beta \left\{ G_{\text{alt}}(\hat{G}_n^{-1}(1-\alpha)) - G_{\text{alt}}(G_{m,P_n}^{-1}(1-\alpha)) \right\}  + o_p(1).
\end{align*}
Using \cref{eqs:pf-ord-limit}, along the subsequence $n_{\iota}$, we have 
\begin{multline*}
(n_{\iota}/m_{\iota})^\beta \left\{ G_{n_{\iota},P_{n_{\iota}}}(\hat{G}_{n_{\iota}}^{-1}(1-\alpha)) - G_{{n_{\iota}},P_{n_{\iota}}}(G_{P_0}^{-1}(1-\alpha)) \right\} - \tau \,G_{\text{alt}}'(G_{P_0}^{-1}(1-\alpha)) \\
\rightarrow_{d} \begin{cases}
-c^{-1} G_{\text{alt}}' \circ G_{P_0}^{-1}(1-\alpha) \, \xi(1-\alpha), &\quad \beta = 1/2 \\
0, & \quad \beta < 1/2.
\end{cases}
\end{multline*}
The result then follows from applying \cref{lem:lim-symm}. 
\end{proof}

\subsection{Checking assumptions} \label{app:check-assump}
In this section, we check that for choices of $F_0$ and $S$ considered in this paper, \cref{assump:f-cont-symm,assump:f-equi-symm} are satisfied by a Gaussian copula. Consider a symmetric, Gaussian copula with a non-negative correlation (see \cref{fig:copula-normal}).
Observe that $c \mapsto f(c_1 \mid c_{-1})$ is continuous except for $(0,\dots,0)$ and $(1,\dots,1)$, the two points where $f$ diverges. This verifies \cref{assump:f-cont-symm}. Further, for $R^{-1}$ corresponding to choices of $F_0$ and $S$ considered in this paper (detailed below), it suffices to consider a region defined in \cref{assump:f-equi-symm} that is bounded away from $(0,\dots,0)$ and $(1,\dots,1)$. Because $c \mapsto f(c_1 \mid c_{-1})$ is uniformly continuous on this compact region, \cref{assump:f-equi-symm} holds.

We give $R^{-1}$ corresponding to choices of $F_0$ and $S$ considered in this paper. In terms of p-values ($F_0 = \Id$), the average p-value and the minimum p-value correspond to
\[R^{-1}_{\text{avg}}(c_{-1}; r) = 0 \vee \left(Lr - \sum_{i>1} c_i \right) \wedge 1, \quad R^{-1}_{\text{min}}(c_{-1}; r) = \begin{cases} 1, &\quad \min(c_{-1}) \leq r \\ r, & \quad \min(c_{-1}) > r \end{cases}.  \]
For aggregating Z-scores ($F_0 = \Phi$), we have
\[R^{-1}_{\text{avg}}(c_{-1}; r) = \Phi\left(Lr - \sum_{i>1} \Phi^{-1}(c_i) \right), \quad R^{-1}_{\text{min}}(c_{-1}; r) = \begin{cases} 1, &\quad \min(c_{-1}) \leq \Phi(r) \\ \Phi(r), & \quad \min(c_{-1}) > \Phi(r) \end{cases}.  \]

\begin{figure}[!htb]
\centering
\includegraphics[width=0.45\textwidth]{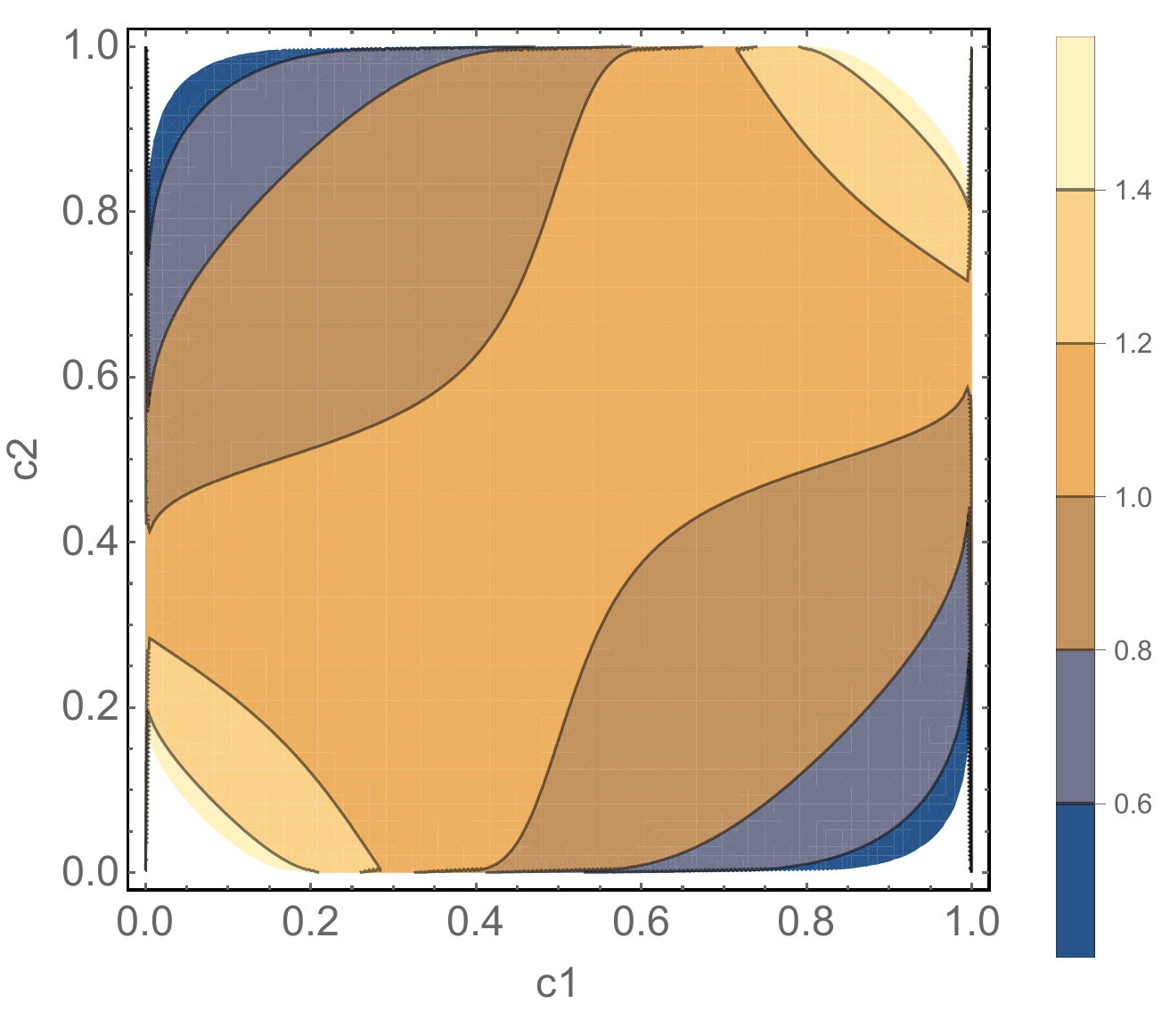}
\includegraphics[width=0.39\textwidth]{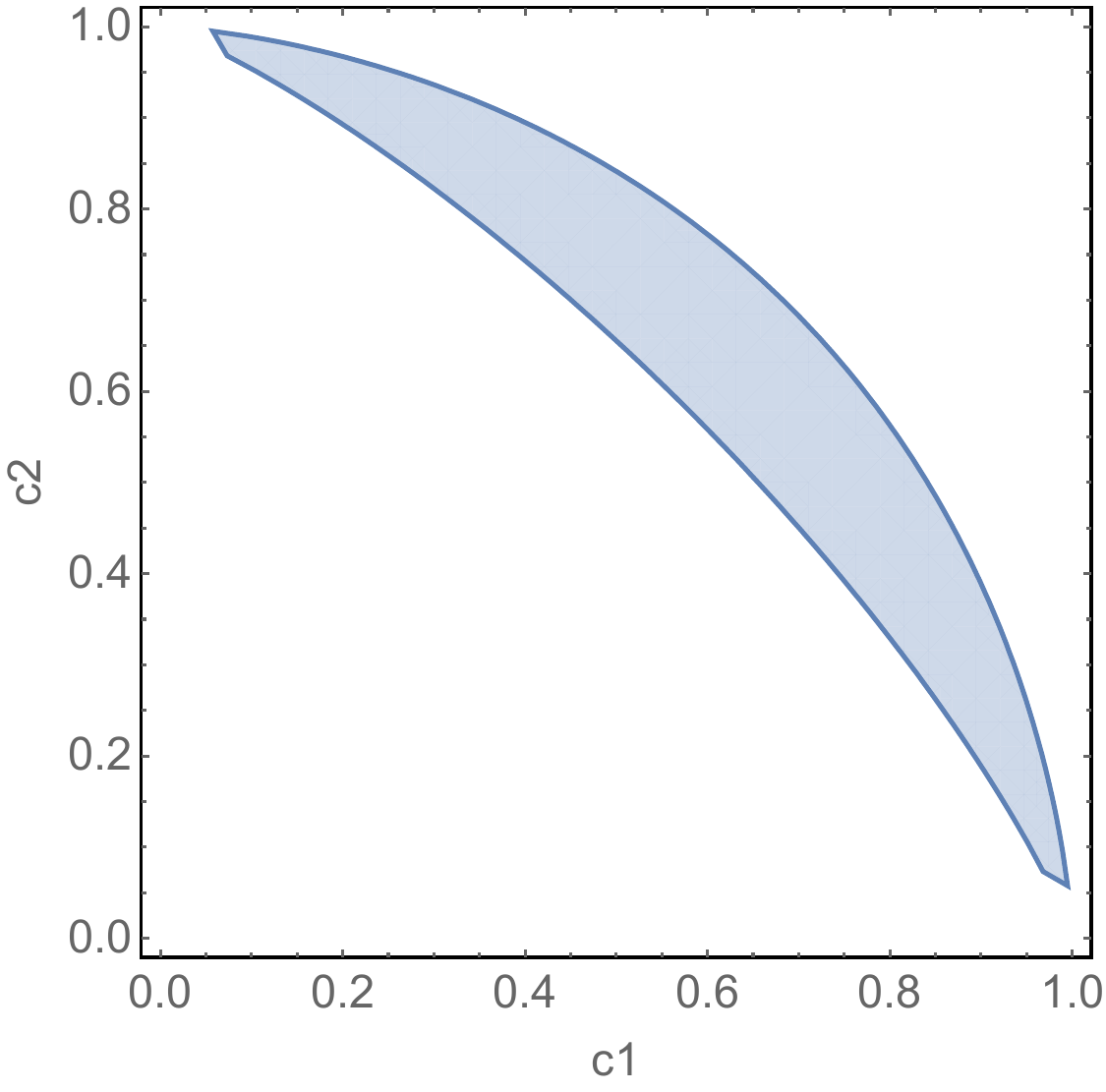}
\caption{Left: $f(c_1 \mid c_2)$ under Gaussian copula with $L=2$ and $\rho = 0.2$, where the function tends to $\infty$ at $(0,0)$ and $(1,1)$. Right: region defined in \cref{assump:f-equi-symm}, on which $f(c_1 \mid c_2)$ is continuous. }
\label{fig:copula-normal}
\end{figure}

\subsection{An illustrative example} \label{app:power-numerical}
In this section, we consider a simple toy example proposed by an anonymous referee to illustrate an application of our theory and the degree to which it matches numerical results. The numerical results also indicate that the conclusion of \cref{thm:rank-asymp} should reasonably hold for the original \cref{alg:agg-test}, although it is proved for a variant of the algorithm that uses two independent copies of the data.

\begin{example} \label{ex:app-normal}
Let $X_1, \dots, X_n$ be drawn iid from $\N(\mu, 1)$. Suppose that we want to test $H_0: \mu = 0$ against $H_1: \mu > 0$. We consider the single-split test statistic 
\[ T_n^{(1)}(X) := (np)^{-1/2} \sum_{i \in I^{(1)}} X_i,\]
where $I$ is a subset of size $\floor{np}$ drawn uniformly at random from $\{1,\dots,n\}$.  Clearly, $T_n^{(1)} \rightarrow_d \N(0,1)$ under $H_0$.

Let us consider the aggregated statistic $S_n := L^{-1} \sum_l T_n^{(l)}$. Under local alternatives $P_n = P_{\mu_n}$ with $\mu_n = c n^{-1/2}$ and $c>0$, the aggregated statistic satisfies
\begin{equation} \label{eqs:app-ex-S}
S_n \rightarrow_d \N(c \sqrt{p}, p + (1-p) / L).
\end{equation}
It follows that the oracle critical value is 
\[ C_{\alpha} := G_{P_0}^{-1}(1-\alpha) = z_{\alpha} \sqrt{p + (1-p) / L},  \]
where $z_{\alpha} = \Phi^{-1}(1-\alpha)$. Meanwhile, we have $G_{m,P_n} \approx \N(c \sqrt{p} \sqrt{m/n}, p + (1-p)/L)$ and hence 
\[ G_{m,P_n}^{-1}(1-\alpha) \approx c \sqrt{p} \sqrt{m/n} + G_{P_0}^{-1}(1-\alpha). \]
Roughly speaking, by taking $M$ to be a large constant, \cref{thm:rank-asymp,thm:crit-ordinary} imply
\begin{align*}
\E \sqrt{n/m}\, (\hat{C}_{\alpha}(\text{rank}) - C_{\alpha}) &= 0, \\
\E \sqrt{n/m}\, (\hat{C}_{\alpha}(\text{ord. sub.}) - C_{\alpha}) &= c \sqrt{p}.
\end{align*}
These results agree with our simulations plotted in the top and middle panels of \cref{fig:simu-first-order}: the top panel is from the two-sample variant of \cref{alg:agg-test} and the middle panel is from the original \cref{alg:agg-test}. A similar result also holds for other choices of $S$; see the bottom panel of the same figure. 

Moreover, we can compute the power functions as follows. Let $G_{n,P_n}$ be the distribution function of $S_n$. Given that the convergence rate in \cref{eqs:app-ex-S} should be faster than $\sqrt{m/n}$ (Berry--Esseen), the power of the oracle test is 
\begin{equation*}
\begin{split}
\text{pow}(\text{oracle}) = 1 - G_{n,P_n}(G_{P}^{-1}(1-\alpha)) &=  1 - \Phi\left( z_{\alpha} - c \sqrt{\frac{p}{p+(1-p)/L}}\right) + o(\sqrt{m/n})\\
\text{(when $L$ is large)} \quad & \approx \boxed{1 - \Phi(z_{\alpha} - c) + o(\sqrt{m/n}).}
\end{split}
\end{equation*}
Using \cref{thm:rank-asymp} (and taking $M$ to be a large constant), the power of rank-transformed subsampling is 
\begin{equation*} 
\begin{split}
\text{pow}(\text{rank}) &= 1 - \E G_{n,P_n}(\tilde{G}_n^{-1}(1-\alpha)) \\
&= \text{pow}(\text{oracle}) + o(\sqrt{m/n}) \\
&= 1 - \Phi\left( z_{\alpha} - c \sqrt{\frac{p}{p+(1-p)/L}}\right) + o(\sqrt{m/n}) \\
\text{(when $L$ is large)} \quad &\approx \boxed{1 - \Phi(z_{\alpha} - c) + o(\sqrt{m/n}).}
\end{split}
\end{equation*}
By \cref{thm:crit-ordinary}, where we use $\tau = \lim \sqrt{n/m}(G_{m,P_n}^{-1}(1-\alpha) - G_{P_0}^{-1}(1-\alpha)) = c \sqrt{p}$ and $G_{\text{alt}} = \N(c \sqrt{p}, p + (1-p)/L)$, the power of ordinary subsampling is 
\begin{equation*} 
\begin{split}
\text{pow}(\text{ord. sub.}) &= 1 - \E G_{n,P_n}(\hat{G}_n^{-1}(1-\alpha)) \\
&= 1 - G_{n,P_n}(G_{P}^{-1}(1-\alpha)) - \sqrt{m/n} \, \tau \, G'_{\text{alt}}(G_{P}^{-1}(1-\alpha)) + o(\sqrt{m/n}) \\
&= 1 - \Phi\left( z_{\alpha} - c \sqrt{\frac{p}{p+(1-p)/L}}\right) \\
& \quad - \sqrt{m/n} \, \phi \left( z_{\alpha} - c \sqrt{\frac{p}{p+(1-p)/L}}\right) c  \sqrt{\frac{p}{p+(1-p)/L}} + o(\sqrt{m/n})\\
\text{(when $L$ is large)} \quad &\approx \boxed{1 - \Phi(z_{\alpha} - c) - \sqrt{m/n} \, \phi(z_{\alpha} - c)\, c  +  o(\sqrt{m/n}).}
\end{split}
\end{equation*}
Note that $\phi(z_{\alpha} - c)\, c$ is increasing in $c$ for $c \in (0, z_{\alpha})$. 

\begin{figure}[t]
\centering
\includegraphics[width=0.85\textwidth]{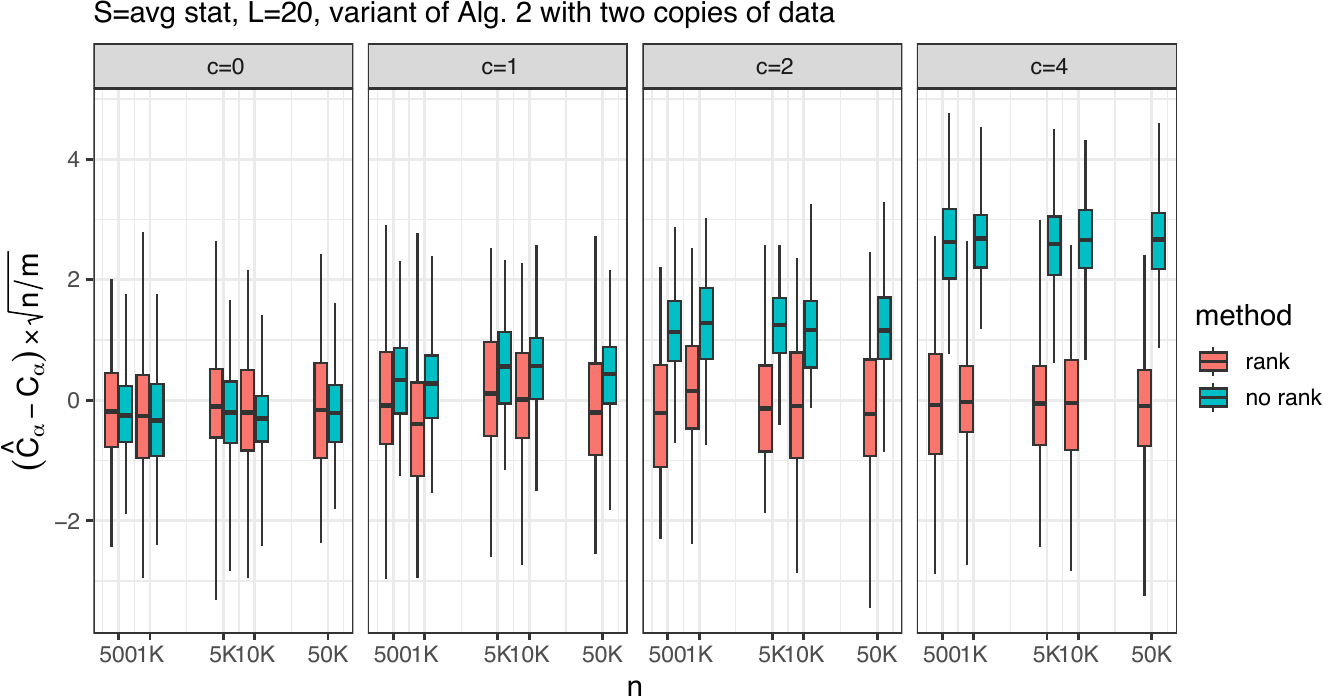}
\includegraphics[width=0.85\textwidth]{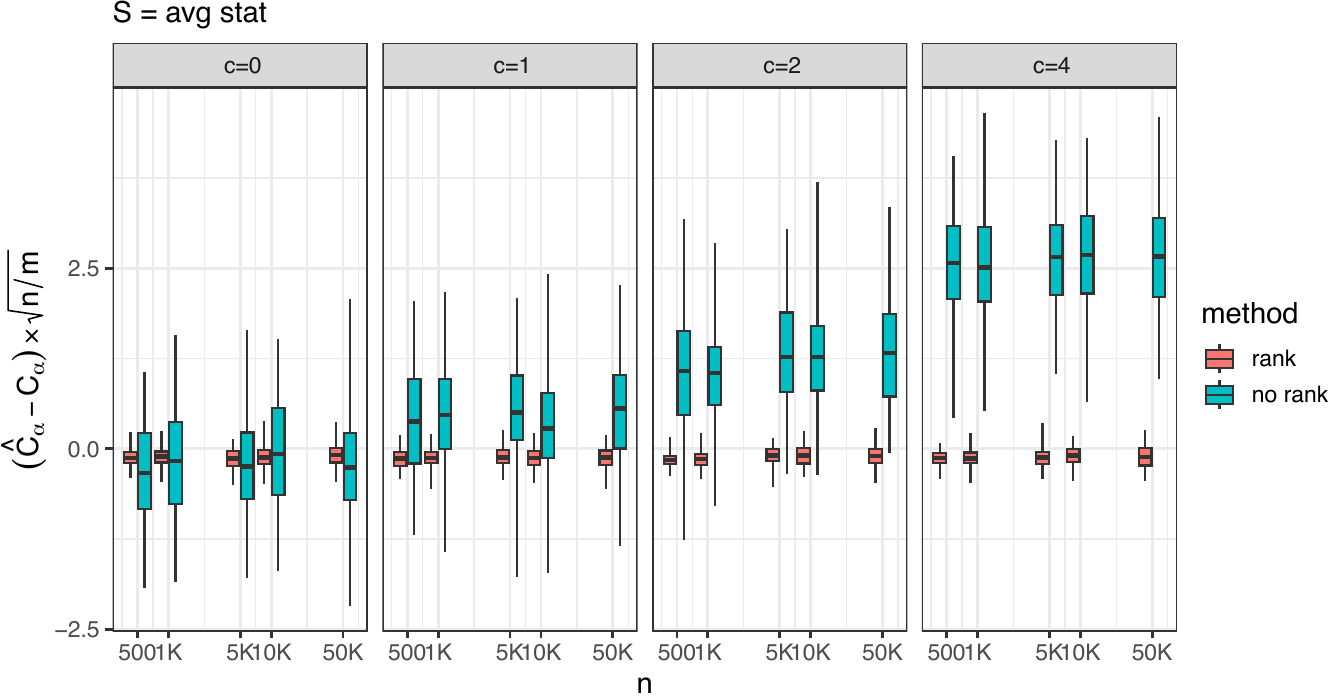}
\includegraphics[width=0.85\textwidth]{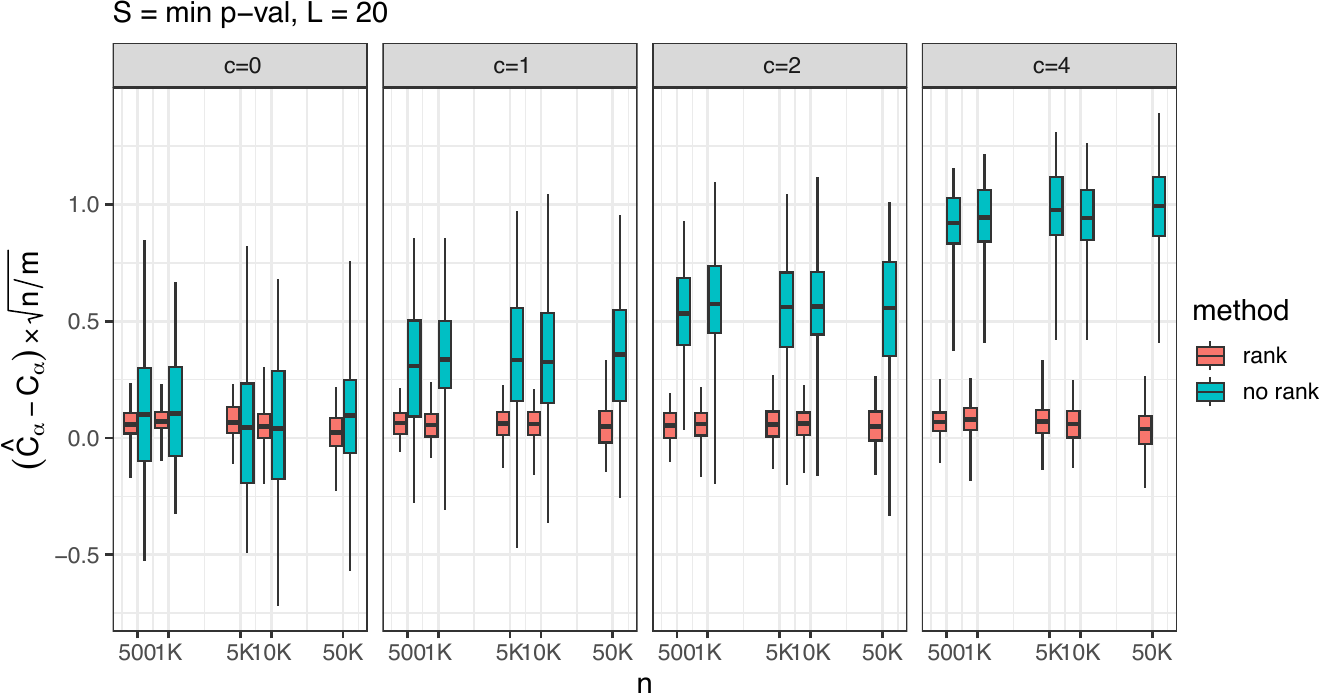}
\caption{Numerical results for \cref{ex:app-normal}: first-order bias in approximating the oracle critical value under $\mu = c n^{-1/2}$ (`\texttt{rank}': rank-transformed subsampling, `\texttt{no rank}': ordinary subsampling). The thick middle line in the box plots represents the mean value of the first-order bias $\sqrt{n/m} (\hat{C}_{\alpha} - C_{\alpha})$. Top: the variant of \cref{alg:agg-test} that uses two independent copies of $X$ with $S = \text{avg}$; Middle: \cref{alg:agg-test} with $S = \text{avg}$; Bottom:  \cref{alg:agg-test} with $S$ being (one minus) the minimum p-value.}
\label{fig:simu-first-order}
\end{figure}
\end{example}

\subsection{Auxiliary results} \label{app:aux-hadamard}
\begin{lemma} \label{lem:equicont-compact}
Let $f_n(x)$ be a sequence of real-valued functions on a compact, Euclidean set $K$. Suppose $f_n \rightarrow f$ pointwise on $K$. Suppose that for a dense subset $K' \subseteq K$,
\begin{equation} \label{eqs:local-equi}
|f_n(x_n) - f_n(x)| \rightarrow 0, \quad \text{for every $x \in K'$ and every sequence $(x_n)_{n \in \mathbb{N}} \subseteq K$ with $x_n \rightarrow x$}.
\end{equation}
Then, $f_n \rightarrow f$ uniformly on $K$. 
\end{lemma}
\begin{proof}
Without loss of generality, by subtracting $f$ from $f_n$ we can assume $f = 0$. 
We prove uniform convergence by showing that for any $\epsilon > 0$, there exists $N'(\epsilon) > 0$ such that $|f_n(x)| < \epsilon$ for every $x \in K$ and every $n > N'(\epsilon)$. 
First, as $f_n \rightarrow 0$ pointwise on $K$, for every $x \in K$, there exists $M(x, \epsilon) > 0$ such that $|f_n(x)| < \epsilon$ for every $n > M(x, \epsilon)$. Second, observe that condition \cref{eqs:local-equi} is equivalent to
\begin{multline*}
\forall x \in K', \epsilon > 0, \quad \exists\, \delta(x,\epsilon) > 0, N(x, \epsilon) > 0 \;\text{ s.t.,}\\
|f_n(x') - f_n(x)| < \epsilon, \quad \forall\, x' \in  K,\, n: \|x'-x\| < \delta(x, \epsilon),\; n > N(x, \epsilon).
\end{multline*}
Consider $\cup_{x \in K'} B_{\delta(x, \epsilon/2)}(x)$, which is an open cover of $K$ as $K'$ is dense in $K$. Because $K$ is compact, it must admit a finite subcover of $K$, which we denote as $\cup_{i \in I} B_{\delta(x_i, \epsilon/2)}(x_i)$ with every $x_i \in K$ and $|I| < \infty$. Hence, for every $x \in K$, there exists $\hat{x} \in \{x_i: i \in I\}$ such that $x \in B_{\delta(\hat{x},\epsilon/2)}(\hat{x})$. Then, for every $n > \max_{i \in I} N(x_i, \epsilon / 2) \vee \max_{i \in I} M(x_i, \epsilon / 2)$, we have
\[ |f_n(x)| \leq |f_n(x) - f_n(\hat{x})| + |f_n(\hat{x})| \leq \epsilon / 2 + \epsilon / 2 = \epsilon, \quad \forall x \in K,\]
which completes the proof.
\end{proof}

\begin{proposition} \label{prop:monotone-cont-ae}
The set of discontinuity points of a coordinatewise non-decreasing function $g:\prod_{k=1}^d [a_k, b_k] \to \R$ has Lebesgue measure zero, where $a_k < b_k$ for $k=1,\dots,d$.
\end{proposition}
\begin{proof}
	It suffices to show that for all $\epsilon > 0$,
	\[
	A_{\epsilon} := \bigcap_{\delta >0} \left\{x : \sup_{y \in B_{\delta}(x)} |g(y)-g(x)| \geq \epsilon \right\}
	\]
	is a Lebesgue  null set, where $B_\delta(x)$ denotes the $\ell_2$-norm closed ball of radius $\delta$ centred at $x$. Indeed, if $x$ is a discontinuity point, then there exists some $n \in \mathbb{N}$ such that  $x \in A_{1/k}$ for all integer $k \geq n$.

	Let $\epsilon > 0$ be given. Now for $j \in \mathbb{Z}$, let $D_j := g^{-1}((-\infty,\epsilon j])$ and denote by $\partial D_j$ its boundary. We claim that $\partial D_j$ is null. To see this, take $x \in \partial D_j$ and note that as $f$ is coordinatewise non-decreasing, $C:=\prod_{k=1}^d [a_k, x_k) \cup \prod_{k=1}^d (x_k, b_k]$ is disjoint from $\partial D_j$. But then, for all $r>0$, there exists $y$ such that $B_{cr}(y) \subset B_r(x) \cap (\partial D_j)^c$, so for all $r > 0$,
	\[
	\frac{\mu(\partial D_j \cap B_r(x))}{\mu(B_r(x))} = 1 - \frac{\mu((\partial D_j)^c \cap B_r(x))}{\mu(B_r(x))} \leq 1 - \frac{\mu(B_{cr}(y))}{\mu(B_r(x))} = 1 - c^d.
	\]
	Thus each $x \in \partial D_j$ has Lebesgue density (if it exists) strictly less than $1$. However, by the Lebesgue density theorem \citep[Cor.~2.14]{mattila1995geometry}, almost all $x \in \partial D_j$ have Lebesgue density
	\[
	\lim_{r \searrow 0} \frac{\mu(\partial D_j \cap B_r(x))}{\mu(B_r(x))} =1,
	\]
	so $\partial D_j$ must be a null set. Thus $D:=\cup_j \partial D_j$ is also a null set.
	
	We now claim that $D \supseteq A_{\epsilon}$, which will give the result. Take $x \in D^c$, so $x \in \text{int}(D_j)$, for some $j$; we take the minimal such $j$. Then for some $\delta > 0$, $B_{\delta}(x) \subset D_j \setminus \overline{D}_{j-1}$, where the overline denotes the closure. Thus for any $y \in B_{\delta}(x)$, we have
	\[
	\epsilon(j-1) < g(x), \, g(y) \leq \epsilon j,
	\]
	so $|g(y) - g(x)| < \epsilon$ and hence $x \notin A_\epsilon$, which completes the proof.
\end{proof}

\begin{lemma} \label{lem:pr-equi}
Under the assumptions of \cref{prop:hada-aggregate}, there exists a dense subset $K'$ of $[a,b]$, such that for any $\epsilon > 0$, $r \in K'$ and $[a,b] \ni r_t \rightarrow r$ as $t \rightarrow 0$, it holds that 
\[ \pr\left(|R_l^{-1}(X_{-l}^t; r_t) - R_l^{-1}(X_{-l}^t; r)| > \epsilon \right) \rightarrow 0. \]
\end{lemma}
\begin{proof}
By monotonicity of $R$, $R_l^{-1}(x_{-l}; r)$ is coordinatewise non-decreasing in $(-x_{-l}, r)$. From \cref{prop:monotone-cont-ae}, it follows that $(x_{-l}, r) \mapsto R_l^{-1}(x_{-l}; r)$ is continuous at almost every $(x_{-l}, r) \in [0,1]^{L-1} \times [a,b]$.
Stated equivalently, for every $r \in K'$, where $K'$ equals $[a,b]$ excluding a null set, and almost every $x_{-l} \in [0,1]^{L-1}$ (where the null set excluded can depend on $r$), $R_l^{-1}$ is continuous at $(x_{-l}, r)$. Clearly, $K'$ is dense in $[a,b]$.

Fix $\epsilon > 0$, $r \in K'$ and $[a,b] \ni r_t \rightarrow r$. By the reverse Fatou's lemma, we have 
\begin{equation*} 
\begin{split}
\limsup \pr\left(|R_l^{-1}(X_{-l}^t; r_t) - R_l^{-1}(X_{-l}^t; r)| > \epsilon \right) &\leq \pr\left(\{|R_l^{-1}(X_{-l}^t; r_t) - R_l^{-1}(X_{-l}^t; r)| > \epsilon\} \;\text{i.o.} \right) \\
&= \pr\left( \limsup |R_l^{-1}(X_{-l}^t; r_t) - R_l^{-1}(X_{-l}^t; r)| > \epsilon \right).
\end{split}
\end{equation*}
By the fact that $\|x_{-l}^t - x_{-l}\|_{\infty} \leq t \|h_t\|_{\infty} \rightarrow 0$ and the continuity of $R_l^{-1}$ argued above, for Lebesgue a.e. $x_{-l}$, we have $R_l^{-1}(x_{-l}^t; r_t) \rightarrow R_l^{-1}(x_{-l}; r)$, $R_l^{-1}(x_{-l}^t; r) \rightarrow R_l^{-1}(x_{-l}; r)$ and hence
\[\limsup |R_l^{-1}(x_{-l}^t; r_t) - R_l^{-1}(x_{-l}^t; r)| = \lim |R_l^{-1}(x_{-l}^t; r_t) - R_l^{-1}(x_{-l}^t; r)| = 0. \]
Finally, because $X_{-l}$ has a distribution absolutely continuous w.r.t. the Lebesgue measure, the result follows from $\pr\left( \limsup |R_l^{-1}(X_{-l}^t; r_t) - R_l^{-1}(X_{-l}^t; r)| > \epsilon \right) = 0$. 
\end{proof}

\begin{lemma} \label{lem:lim-symm}
Let $X_n$ be a sequence of real-valued random variables. If for every subsequence $n_k$, there is a further subsequence $n_{k(j)}$ such that $X_{n_{k(j)}}$ converges to a symmetric law (the law may depend on the subsequence), then we have
\[ \E[-M \vee X_n \wedge M] \rightarrow 0, \quad \text{for every $M>0$}. \]
\end{lemma}
\begin{proof}
Fix any $M>0$. By our assumption, for every subsequence $n_k$, there exists a further subsequence $n_{k(j)}$ such that $-M \vee X_{n_{k(j)}} \wedge M$ converges to a symmetric law. Because it is bounded, we have $\E[-M \vee X_{n_{k(j)}} \wedge M] \rightarrow 0$. The result follows because the limit does not depend on the subsequence. 
\end{proof}

\begin{lemma} \label{lem:bb}
Let $\{V_{i,k}: i=1,\dots,n,\; k=1,\dots, K,\}$ be random variables such that 
\[V_{1,k}, \dots, V_{n,k} \iid F_n, \quad k=1,\dots,K,  \]
where $F_n$ is a distribution function satisfying
\[ \sqrt{n} \|F_n - \Id\|_{\infty} \rightarrow 0. \]

Let $\mathbb{V}_n$ be the empirical distribution function of $\{V_{i,k}: i=1,\dots, n, \; k=1,\dots,K\}$. Then, the empirical process $\sqrt{n} (\mathbb{V}_n - \Id)$, as a random bounded function on $[0,1]$, is asymptotically tight and asymptotically measurable. Further, there exists a subsequence $n_j$ along which 
\[ \sqrt{n_j} (\mathbb{V}_{n_j} - \Id) \rightsquigarrow \xi \quad \text{in $D[0,1]$}, \]
where $\xi =_{d} -\xi$ and $\xi$ is almost surely uniformly continuous (the distribution of $\xi$ can depend on the subsequence) with $\xi(0) = \xi(1) = 0$. 
\end{lemma}
\begin{proof}
Let $\mathbb{V}_{n,k}$ be the empirical distribution function of $\{V_{i,k}: i=1,\dots,n\}$. For each $k$, it holds that $(V_{1,k}, \dots, V_{n,k}) =_d (F_n^{-1}(U_1), \dots, F_n^{-1}(U_n))$ for iid $U_1,\dots,U_n \sim \unif(0,1)$. Therefore, we have
\[ \mathbb{V}_{n,k}(x) =_d n^{-1} \sum_{i} \I\{F_n^{-1}(U_i) \leq x \} = n^{-1} \sum_{i} \I\{U_i \leq F_n(x) \} = \mathbb{U}_n \circ F_n(x),\]
where $\mathbb{U}_n$ is the empirical distribution function of $\{U_i\}$. It follows that 
\[ \sqrt{n}(\mathbb{V}_{n,k} - \Id) =_d \sqrt{n}(\mathbb{U}_{n} - \Id) + \sqrt{n}(F_n - \Id) + \sqrt{n}(\mathbb{U}_n - \Id)(F_n - \Id).\]
The empirical process $\sqrt{n}(\mathbb{U}_{n} - \Id)$ converges weakly to a standard Brownian bridge. By our assumption, $\sqrt{n} \|F_n - \Id\|_{\infty} \rightarrow 0$. By continuous mapping theorem, $\|\sqrt{n}(\mathbb{U}_n - \Id)(F_n - \Id)\|_{\infty} \rightarrow_p 0$. Therefore, we conclude that $\sqrt{n}(\mathbb{V}_{n,k} - \Id)$ converges weakly to a standard Brownian bridge. 
It then follows that the sequence $\sqrt{n} (\mathbb{V}_{n,k} - \Id)$ is asymptotically tight and asymptotically measurable. Therefore, 
\[ \sqrt{n}(\mathbb{V}_{n,1} - \Id, \dots, \mathbb{V}_{n,K} - \Id), \]
as a random bounded $[0,1] \rightarrow \mathbb{R}^K$ function, is also asymptotically tight and asymptotically measurable. By Prohorov's theorem \citep[18.12]{van2000asymptotic}, there exists a subsequence $n_j$ such that 
\[\sqrt{n_j}(\mathbb{V}_{n_j,1} - \Id, \dots, \mathbb{V}_{n_j,K} - \Id) \rightsquigarrow (\xi_1, \dots, \xi_K), \]
where every $\xi_k$ is a standard Brownian bridge. By the continuous mapping theorem, their average 
\[ \sqrt{n_j}(\mathbb{V}_{n_j} - \Id) \rightsquigarrow (\xi_1 + \dots + \xi_K) / K =: \xi.  \]
Because $\xi_k =_{d} - \xi_k $ for each $k$, we know $\xi =_{d} -\xi$. Because each $\xi_k$ is uniformly continuous and satisfies $\xi_k(0) = \xi_k(1) = 0$ almost surely \citep[\S19.1]{van2000asymptotic}, $\xi$ also almost surely satisfies these properties. 
\end{proof}
 
\clearpage
\section{Calibration of cross-fitted DML} \label{app:dml}
We first provide details for the numerical example in \cref{sec:dml}. In the partially linear model, we let $X \sim \N(0,1)$ and $s(X):=4 \sqrt{|X|}$. The heteroscedastic errors are drawn as 
\[ V = s(X)\, \epsilon_V, \quad \xi = s(X)\, \epsilon_{\xi} \]
with $\epsilon_V \sim \gam(0.5, 1) - 0.5$ and $\epsilon_V \sim \gam(0.3, 1) - 0.3$. The regression functions are chosen as
\begin{align*}
m_0(x) &= x + \cos(x) + \exp(x) / (1 + \exp(x)), \\
g_0(x) &= \left[ -10 x + 3 \cos(4x) x^2 / (1 + \exp(x / 6)) \right] / 10.
\end{align*}

\medskip Below is an example of performing DML when one of the two (parametric) nuisance models is misspecified. Because of the double robustness of the debiased score, the resulting cross-fitted DML estimator is still consistent and in fact asymptotically normal. However, unlike the scenario discussed in \cref{sec:dml} where the per-fold estimators $\hat{\theta}^{(1)}, \dots, \hat{\theta}^{(L)}$ are asymptotically uncorrelated (so correlation is only a finite-sample phenomenon), because of the misspecification, $\hat{\theta}^{(1)}$ and $\hat{\theta}^{(2)}$ in the example below are negatively correlated even in large sample. Hence, the standard plugin estimator for the asymptotic variance (see \cref{eqs:sigma-dml}) is inconsistent. Nevertheless, the estimator based on rank-transformed subsampling is still consistent. 

\begin{example}[DML under misspecification] \label{ex:dml-misspec}
Consider the following specification of the partially linear model considered in \cref{sec:dml}:
\begin{equation*}
\begin{split}
D = \beta_1 X + \beta_2 X^2 + V, \quad &\E[V \mid X] = 0, \\
Y = \theta_0 D + \gamma_1 X + \gamma_2 X^2 + \xi, \quad &\E[\xi \mid D, Y] = 0,
\end{split}
\end{equation*}
where $\theta_0$ is the parameter of interest. For simplicity, we assume $X, \epsilon_X, \epsilon_Y$ are drawn independently from $\N(0,1)$. The DML estimator of $\theta_0$, as the solution to the empirical Robinson's score function, is the least squares coefficient of regressing the residual $r_Y := Y - \hat{E}[Y \mid X]$ on the residual $r_D := D - \hat{E}[D \mid X]$, where the estimation of regression functions and the final least squares are performed on two separate parts of the sample. 

Suppose that $\hat{E}[D \mid X]$ is well-specified but $\hat{E}[Y \mid X]$ is misspecified as $\eta_1 X$ (missing $X^2$), whereas the true model is $\eta_1 X + \eta_2 X^2$ with $\eta_1  = \theta_0 \beta_1 + \gamma_1$ and $\eta_2  = \theta_0 \beta_2 + \gamma_2$. Consider the DML estimator $\hat{\theta}_\text{DML} = (\hat{\theta}^{(1)} + \hat{\theta}^{(2)}) / 2$ for the case of $L=2$ folds with 
\[ \hat{\theta}^{(l)} = \frac{\sum_{i \in I^{(l)}} r_{Y_i} r_{D_i}}{\sum_{i \in I^{(-l)}} r_{D_i}^2 }, \quad l=1,2, \]
where $(I^{(1)},  I^{(2)})$ is a random split of the sample into two parts of equal size. By an asymptotic linear expansion of both numerator and denominator and applying the delta method, it can be shown that $\hat{\theta}^{(1)}, \hat{\theta}^{(2)}$ admit the following asymptotic expansion:
\begin{equation*}
\begin{split}
\sqrt{n/2}\, (\hat{\theta}^{(1)} - \theta_0) &= \frac{1}{\sqrt{n/2}} \sum_{i \in I^{(1)}} \left(\eta_2 X_i^2 V_i + V_i \xi_i \right) - \frac{1}{\sqrt{n/2}} \sum_{i \in I^{(2)}} \eta_2 X_i^2 V_i + o_p(1), \\
\sqrt{n/2}\, (\hat{\theta}^{(2)} - \theta_0) &= \frac{1}{\sqrt{n/2}} \sum_{i \in I^{(2)}} \left(\eta_2 X_i^2 V_i + V_i \xi_i \right) - \frac{1}{\sqrt{n/2}} \sum_{i \in I^{(1)}} \eta_2 X_i^2 V_i + o_p(1).
\end{split}
\end{equation*}
Applying CLT to the RHS, we derive 
\[ \sqrt{n/2} \begin{pmatrix} \hat{\theta}^{(1)} - \theta_0 \\ \hat{\theta}^{(2)} - \theta_0 \end{pmatrix} \rightarrow_{d} \N\left(0, \begin{pmatrix} \sigma^2 & \rho \sigma^2 \\ \rho \sigma^2 & \sigma^2 \end{pmatrix} \right), \]	
where 
\begin{equation} \label{eqs:misspec-rho}
\sigma^2 = 6 \eta_2^2 + 1, \quad \rho = - 6 \eta_2^2 / (1 + 6 \eta_2^2). 
\end{equation}
Note that $-1 < \rho <0 $ whenever $\eta_2 \neq 0$, i.e., when $\hat{\E}[Y \mid X]$ is misspecified. It then follows that 
\[ \sqrt{n}\, (\hat{\theta}_\text{DML} - \theta_0) \rightarrow_d \N\left(0, (1+\rho) \sigma^2 \right). \]

In the below, we take $(\beta_1, \beta_2) = (1, 1/2)$, $(\gamma_1, \gamma_2) = (1,1)$, $\theta_0 = 1$ and let $V, \xi \sim \N(0,1)$. Asymptotic values of correlation and variance, along with values under finite sample, are compared to their estimates in \cref{tab:dml-misspec}. As can be seen from the table, the rank-transformed subsampling estimates are consistent while the standard DML plugin estimate is too large. See also \cref{tab:simu-dml-misspec} for the coverage of confidence intervals under $L=2$ and $L=5$. 

\begin{table}[!htb]
\centering
\caption{Correlation, standard deviation, their rank-transformed subsampling estimates, as well as the standard plugin estimate (via \cref{eqs:sigma-dml}) under model misspecification ($L=2$). The $n=\infty$ row refers to asymptotic values computed from \cref{eqs:misspec-rho}.} \label{tab:dml-misspec}
\begin{tabular}{cccccc}
\toprule
$n$ & $\rho$ & $\tilde{\rho}$ & $\sd \sqrt{n}(\hat{\theta}_{\text{DML}} - \theta_0)$ & $\hat{\sigma}_{\text{ls}} \sqrt{1 + \tilde{\rho}}$ & $\hat{\sd}_{\text{plugin}} \sqrt{n}(\hat{\theta}_{\text{DML}} - \theta_0)$ \\
\midrule
500 & -0.82 & -0.53 & 1.63 & 2.85 &  3.14 \\
1000 & -0.86 & -0.66 & 1.39 & 2.34 & 3.14 \\
5000 & -0.92 & -0.85 & 1.08 & 1.51 & 3.12 \\
10000 & -0.92 & -0.89 & 1.06 & 1.30 & 3.12 \\
50000 & -0.93 & -0.92 & 1.05 & 1.05 & 3.12 \\
\midrule
$\infty$ & -0.93 & & 1 & \\
\bottomrule
\end{tabular}
\end{table}

\begin{table}[!htb]
\begin{center}
\caption{Coverage of nominal $95\%$ confidence intervals under model misspecification (brackets: median width of intervals multiplied by $\sqrt{n}$)} \label{tab:simu-dml-misspec}
\begin{tabular}{cccccccccc}
\toprule
\multicolumn{1}{c}{}&&\multicolumn{2}{c}{$L=2$}&&\multicolumn{2}{c}{$L=5$}\\ 
\cline{3-4}\cline{6-7}
$n$ && \cref{eqs:rank-dml-CI} & DML && \cref{eqs:rank-dml-CI} & DML\\ 
\midrule
500 && ~0.994 [11.6] & 0.999 [12.3] && 0.988 [6.1]& 1.000 [12.2] \\
1000 && 0.999 [9.5] & 1.000 [12.3] && 0.975 [5.3]& 1.000 [12.2] \\
5000 && 0.996 [6.0] & 1.000 [12.2] && 0.968 [4.2]& 1.000 [12.2] \\
10000 && 0.990 [5.1] & 1.000 [12.2] && 0.956 [3.9]& 1.000 [12.2] \\
50000 && 0.961 [4.1] & 1.000 [12.2] && 0.941 [3.8]& 1.000 [12.2] \\
\bottomrule
\end{tabular}
\end{center}
\end{table}

\end{example}

\section{On the bootstrap} \label{app:bootstrap}
A curious reader may wonder if the rank transform can be applied to the bootstrap instead of subsampling to achieve the same goal. The short answer is no, if we want to maintain type-I error control under minimal assumptions. In contrast to subsampling, which only requires minimal assumptions (the existence of a non-degenerate limit law) to be consistent, the standard $n$-out-of-$n$ bootstrap also requires the regularity of a ``root'', namely a function of both the test statistic and parameters of the underlying distribution such that its distribution is locally pivotal; see \citet{beran1997diagnosing} and \citet[\S1.6]{politis1999subsampling}. This is considerably stronger than our \cref{cond:pivotal,assump:stable-G} required for pointwise level control, which only concern the behaviour of the test statistic (instead of a root) under each fixed null (instead of neighbourhoods of each null). In fact, for multiple-split, hunt-and-test procedures considered in this paper, it is often unclear whether such a regular root (as an $L$-dimensional vector) exists, and if so, how to construct one.
Consider the test for the null of multivariate
unimodality in \cref{sec:unimodal}, for example. The existence of such a regular root would mean that were data to come from a (slightly) non-unimodal distribution, a certain transformation could be applied to data to restore its unimodality --- such a transformation seems rather hard to construct.
Thus, we use subsampling to avoid these limitations; replacing it with the bootstrap can fail to control the type-I error unless $(T_n^{(1)}, \dots, T_n^{(L)})$ itself is a regular root.

\section{Relation to prepivoting} \label{app:prepivot}
In resampling inference, ranks also come up when describing prepivoting \citep{beran1987prepivoting,beran1988prepivoting}, which refers to the technique of applying the probability integral transform to a statistic using its bootstrap null distribution, i.e., turning a statistic into its bootstrap p-value. Prepivoting is most useful when the asymptotic null distribution of the statistic contains unknown parameters. To achieve improvement for such cases, prepivoting must be iterated twice or more times with the nested bootstrap, before comparing the final transformed pivot to $\unif(0,1)$. Prepivoting reduces the dependency of the sampling distribution of the statistic on the underlying data distribution and can offer higher-order refinements to the bootstrap. We argue that our use of ranks is rather different: (1) while prepivoting improves level control, our rank transform is designed to improve power; (2) the rank transform is applied to the subsample statistics (instead of the test statistic) and applied only once; (3) we use ranks to enforce the marginals of a multivariate statistic but prepivoting is only applicable to a univariate statistic.

\section{Additional numerical results} \label{app:num-extra}
\subsection{Derandomization of \cref{ex:kim-ramdas}} \label{app:non-rep}
\cref{fig:non-rep-KR} shows probability of non-replication (i.e. two applications of the same test on the same data leads to one acceptance and one rejection) for the numerical experiment in \cref{sec:revisit}.

\begin{figure}[!htb]
\centering
\includegraphics[width=0.95\textwidth]{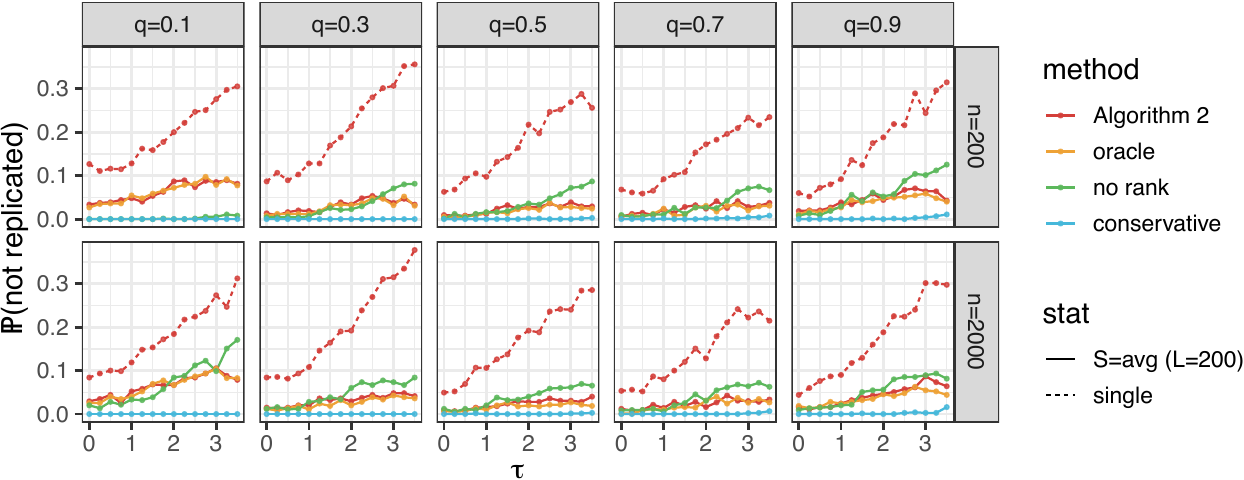}
\caption{Probability of non-replication for testing $\mu = \mathbf{0}$ in \cref{ex:kim-ramdas} at level $\alpha=0.05$, where $\mu = \tau n^{-1/2} v_1$. Tests based on $S_n = (T_n^{(1)} + \dots + T_n^{(L)}) / L$ (solid line) significantly reduces the chance of non-replication from using the single-split test.}
\label{fig:non-rep-KR}
\end{figure}
\subsection{Goodness-of-fit testing for quantile regression} \label{app:quant}
In this section, we include additional numerical results for goodness-of-fit testing of quantile regression considered in \cref{sec:gof}. \cref{fig:quant-nl-2} reports the results when the non-linear function in \cref{eqs:quant-spec} is chosen as $\eta(X) = 2 \exp(-1-X_2-X_3)$.

\begin{figure}[!htb]
\centering
\includegraphics[width=.9\textwidth]{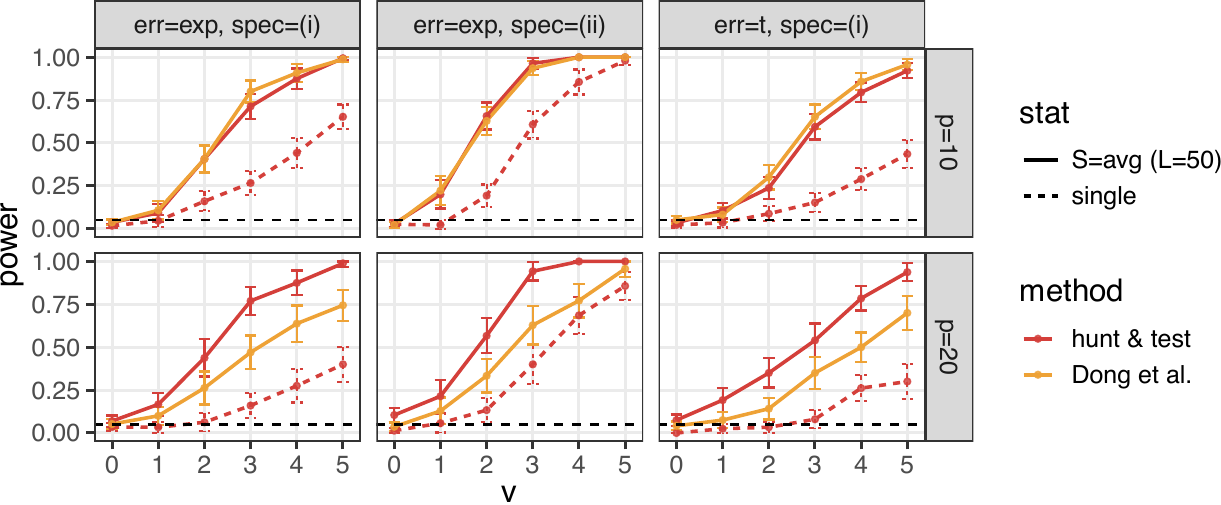}
\caption{Testing goodness-of-fit of a quantile regression model $q_{0.5}(X) = \beta_0 + \beta^{\T} X$: power (95\% CI) at level $\alpha=0.05$ (dashed horizontal) under $n=1000$. The model is well-specified if and only if $v=0$. The non-linear function in \cref{eqs:quant-spec} is $\eta(X) = 2 \exp(-1-X_2-X_3)$.}
\label{fig:quant-nl-2}
\end{figure}
\subsection{Testing generalised conditional independence in a more challenging setting} \label{app:verma}
Consider changing the specification of $Y$ in \cref{sec:verma} to 
\begin{equation} \label{eqs:verma-difficult}
Y = - A_2 + \beta_{H,Y}^{\T} H + (1 - A_1) \epsilon_Y + A_1 \xi_{\nu}, \quad \nu \in (0, \infty],
\end{equation}
where $\xi_{\nu}$ is an independent $t_{\nu}$-distributed random variable. The null hypothesis holds when $\nu = \infty$ and hence $\xi_{\nu} =_d \epsilon_Y$. Because $\cov_Q(A_1,Y) = 0$ under the alternative, using it as the statistic leads to trivial power. Instead, we employ the maximum mean discrepancy (MMD) statistic \citep{gretton2012kernel} between $Y \mid A_1 = 0$ and $Y \mid A_1 = 1$ under $Q$,
with the Gaussian kernel and bandwidth chosen by the median heuristic \citep[\S 8]{gretton2012kernel},
which is able to detect distributional discrepancy beyond the mean.
Under the null hypothesis, the asymptotic distribution of the MMD statistic 
depends on unknown parameters \citep[Theorem 12]{gretton2012kernel} that are very difficult to handle under IPW. Nevertheless, the aggregated post-resampling permutation tests just works out of the box; see \cref{fig:verma-difficult}.

\begin{figure}[!htb]
\centering
\includegraphics[width=.9\textwidth]{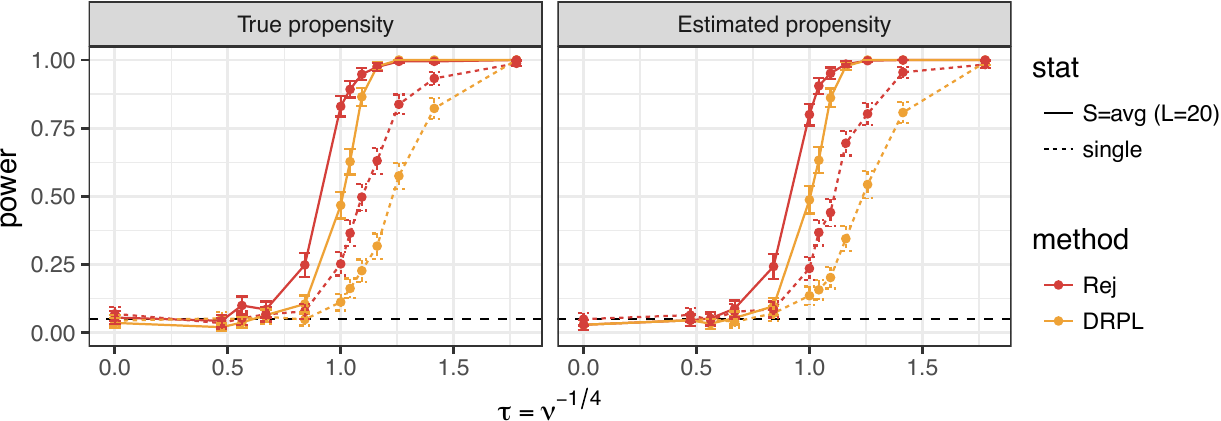}
\caption{Power (95\% CI) for testing no individual direct effect of $A_1$ on $Y$ ($\g$ of \cref{fig:verma}) under the more difficult setting in \cref{app:verma}, for which tests based on $|\cov_Q(A_1,Y)|$ cannot detect the effect. The null hypothesis corresponds to $\tau = 0$. Here `single' refers to post-rejection-sampling (\texttt{Rej}) and post-DRPL (\texttt{DRPL}) permutation test of the MMD statistic; `avg' refers to the corresponding aggregated test based on the average of 20 p-values, calibrated by \cref{alg:agg-test}. Sample size is $n=1000$ and level is 0.05 (dashed horizontal). See also \cref{fig:verma-difficult-vs-conservative} in the appendix.}
\label{fig:verma-difficult}
\end{figure}

\subsection{Comparison with conservative $p$-value merging} \label{app:comp-conservative}
In this section, we include numerical results that compare our method to various conservative, deterministic merging methods for non-independent p-values introduced by \citet{vovk2020combining}. In what follows, \texttt{single} refers to the p-value resulting from a single application of a randomized test (e.g., one single data split). Labels \texttt{avg} and \texttt{min} respectively refer to subsampling based inference for the average p-value and the minimum p-value. Conservatively merged p-values are labelled as \texttt{M.xx} for the following merging functions $p = f(p_1, \dots, p_L)$:
\begin{description}
\item[Arithmetic (\texttt{M.arith})] $p = 2\, (p_1 + \dots + p_L) / L$,
\item[Geometric (\texttt{M.geom})] $p = e\, (p_1 \dots p_L)^{1/L}$,
\item[Bonferroni (\texttt{M.Bonf})] $p = L\, \min(p_1, \dots, p_L)$,
\item[Compound Bonferroni-geometric (\texttt{M.Bonf-geom})] $p = 2 \min \{L\min(p_1, \dots, p_L),   e\, (p_1 \dots p_L)^{1/L} \}$.
\end{description}
These merging functions are precise or asymptotically precise, which roughly mean that the multiplicative constants in their definitions cannot be improved in general; see \citet{vovk2020combining}.

In the below, \cref{fig:unimodal-vs-conservative} shows results for detecting multivariate unimodality considered in \cref{sec:unimodal}; 
\cref{fig:quant-vs-conservative} shows results for goodness-of-fit testing for quantile regression considered in \cref{sec:gof};
\cref{fig:verma-vs-conservative,fig:verma-difficult-vs-conservative} show results for testing generalised conditional independence in the two settings considered in \cref{sec:verma}.

\begin{figure}[!htb]
\centering
\includegraphics[width=.95\textwidth]{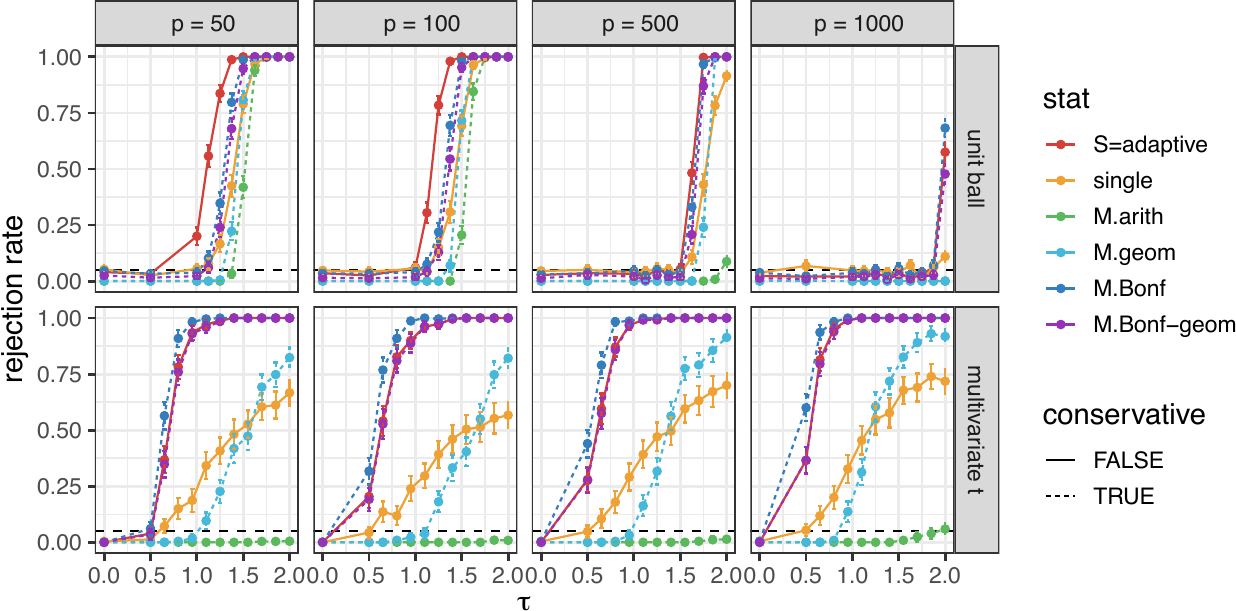}
\caption{Supplement to \cref{fig:unimodal}: comparison of single-split p-value (\texttt{single}), subsampling p-value (\texttt{S=adaptive}) and conservative $p$-value merging functions (\texttt{M.xx}, dashed curves) for testing multivariate unimodality in $p$ dimensions considered in  \cref{sec:unimodal}. The null hypothesis holds when $\tau = 0$ and the level is 0.05. Subsampling and conservative merging are based on $L=50$ exchangeable p-values.}
\label{fig:unimodal-vs-conservative}
\end{figure}

\begin{figure}[!htb]
\centering
\includegraphics[width=1.\textwidth]{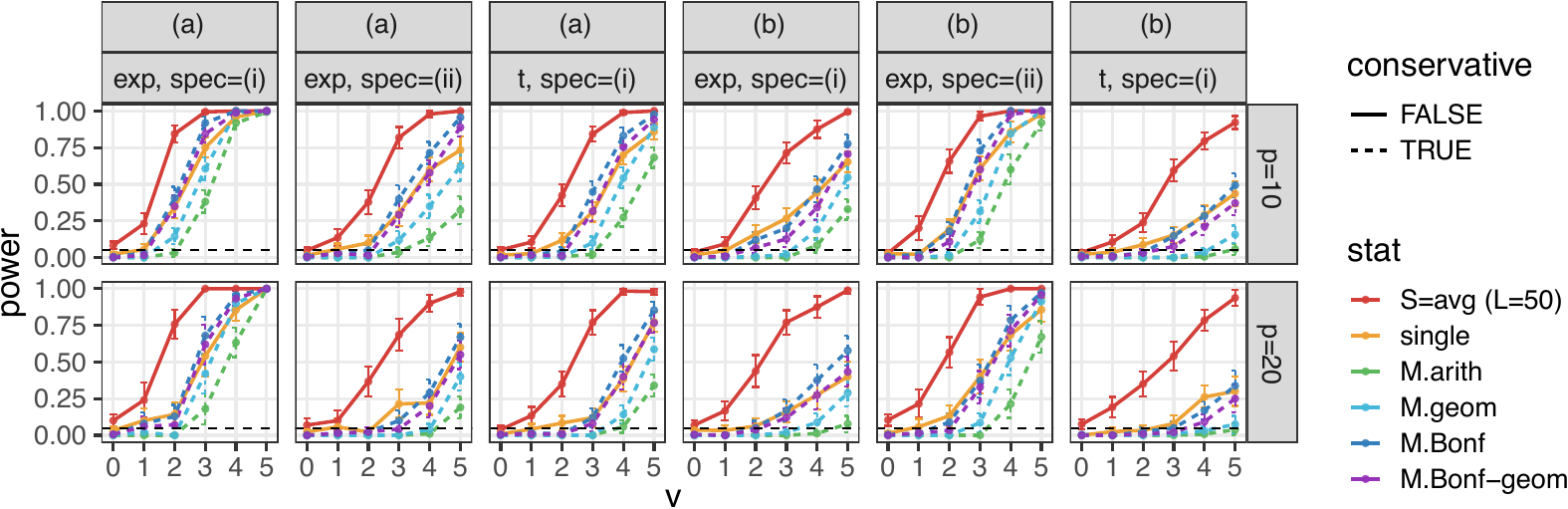}
\caption{Supplement to \cref{fig:quant,fig:quant-nl-2}: comparison of single-split p-value (\texttt{single}), subsampling p-value (\texttt{S=avg}) and conservative $p$-value merging functions (\texttt{M.xx}, dashed curves) for goodness-of-fit testing of quantile regression considered in \cref{sec:gof}. The null hypothesis holds when $v = 0$ and the level is 0.05. Subsampling and conservative merging are based on $L=50$ exchangeable p-values. The non-linear function in \cref{eqs:quant-spec} is: (a) $\eta(X) = 4\sqrt{X_1^2 + X_2^2}$, (b) $\eta(X) = 2 \exp(-1-X_2-X_3)$.}
\label{fig:quant-vs-conservative}
\end{figure}

\begin{figure}[!htb]
\centering
\includegraphics[width=.8\textwidth]{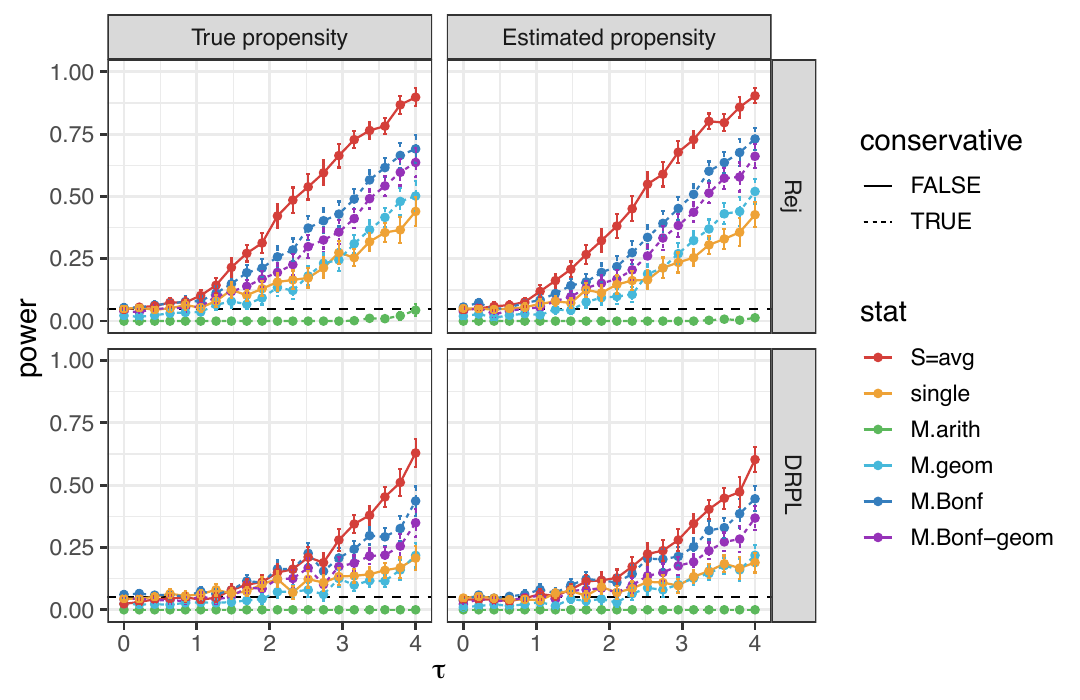}
\caption{Supplement to \cref{fig:simu-verma}: comparison of single-split p-value (\texttt{single}), subsampling averaged p-value (\texttt{avg}) and conservative $p$-value merging functions (\texttt{M.xx}, dashed curves) for testing a generalised conditional independence; see \cref{sec:verma} for the setting. The null hypothesis holds when $\tau = 0$ and the level is 0.05. Subsampling and conservative merging are based on $L=20$ exchangeable p-values.}
\label{fig:verma-vs-conservative}
\end{figure}

\begin{figure}[!htb]
\centering
\includegraphics[width=.8\textwidth]{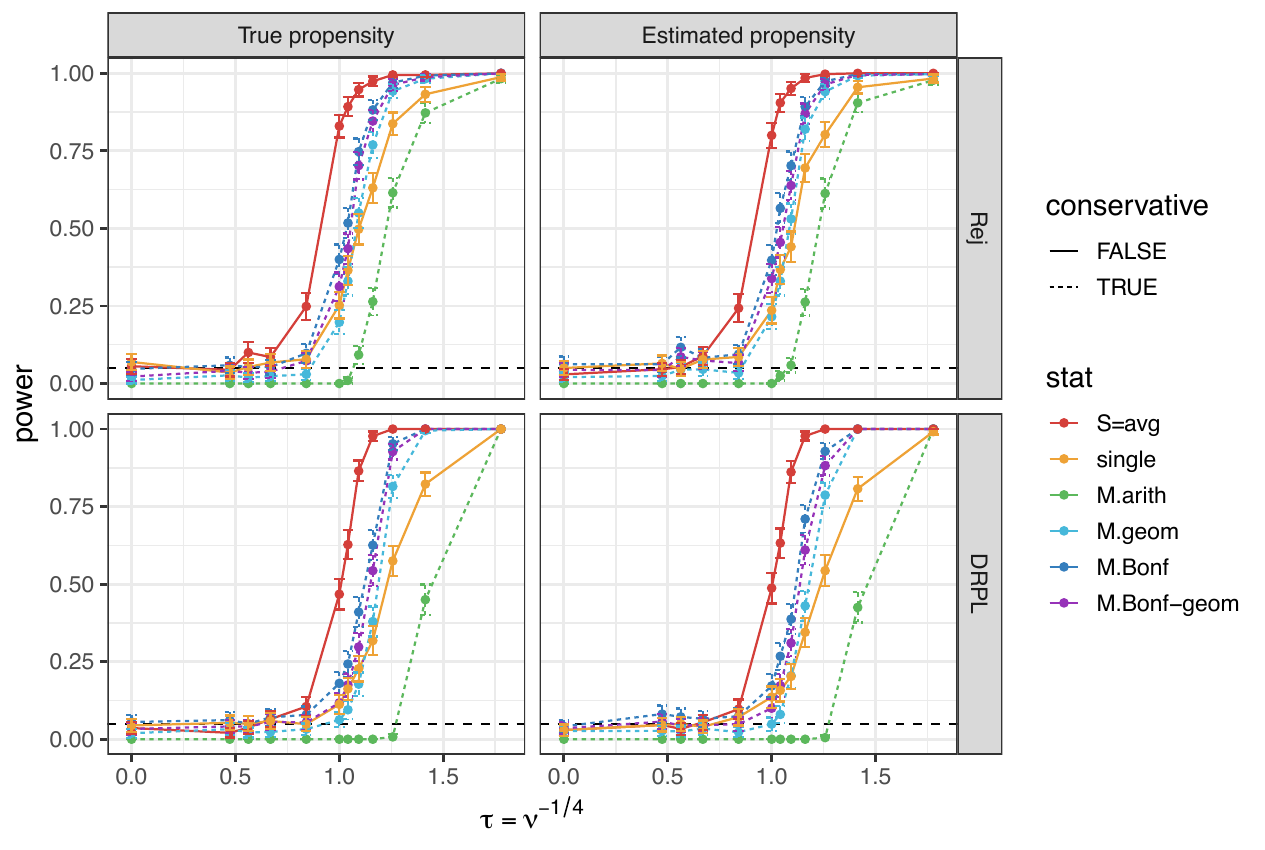}
\caption{Supplement to \cref{fig:verma-difficult}: comparison of single-split p-value (\texttt{single}), subsampling averaged p-value (\texttt{avg}) and conservative $p$-value merging functions (\texttt{M.xx}, dashed curves) for testing a generalised conditional independence under the more difficult setting \cref{eqs:verma-difficult}; see \cref{sec:verma}. The null hypothesis holds when $\tau = 0$ and the level is 0.05. Subsampling and conservative merging are based on $L=20$ exchangeable p-values.}
\label{fig:verma-difficult-vs-conservative}
\end{figure}


\begin{thebibliography}{98}
\providecommand{\natexlab}[1]{#1}
\providecommand{\url}[1]{\texttt{#1}}
\expandafter\ifx\csname urlstyle\endcsname\relax
  \providecommand{\doi}[1]{doi: #1}\else
  \providecommand{\doi}{doi: \begingroup \urlstyle{rm}\Url}\fi

\bibitem[Adolfsson et~al.(2019)Adolfsson, Ackerman, and
  Brownstein]{adolfsson2019cluster}
Andreas Adolfsson, Margareta Ackerman, and Naomi~C Brownstein.
\newblock To cluster, or not to cluster: An analysis of clusterability methods.
\newblock \emph{Pattern Recognition}, 88:\penalty0 13--26, 2019.

\bibitem[Ahmed and Walther(2012)]{ahmed2012investigating}
Murat~O Ahmed and Guenther Walther.
\newblock Investigating the multimodality of multivariate data with principal
  curves.
\newblock \emph{Computational Statistics \& Data Analysis}, 56\penalty0
  (12):\penalty0 4462--4469, 2012.

\bibitem[Baker(2016)]{baker2016lid}
Monya Baker.
\newblock 1,500 scientists lift the lid on reproducibility.
\newblock \emph{Nature}, 533\penalty0 (7604), 2016.

\bibitem[Barber and Janson(2022)]{barber2022testing}
Rina~Foygel Barber and Lucas Janson.
\newblock Testing goodness-of-fit and conditional independence with approximate
  co-sufficient sampling.
\newblock \emph{The Annals of Statistics}, 50\penalty0 (5):\penalty0
  2514--2544, 2022.

\bibitem[Bates et~al.(2022)Bates, Kennedy, Tibshirani, Ventura, and
  Wasserman]{bates2022causal}
Stephen Bates, Edward Kennedy, Robert Tibshirani, Valerie Ventura, and Larry
  Wasserman.
\newblock Causal inference with orthogonalized regression adjustment: Taming
  the phantom.
\newblock \emph{arXiv preprint arXiv:2201.13451}, 2022.

\bibitem[Benkeser et~al.(2017)Benkeser, Carone, Laan, and
  Gilbert]{benkeser2017doubly}
David Benkeser, Marco Carone, MJ~Van~Der Laan, and Peter~B Gilbert.
\newblock Doubly robust nonparametric inference on the average treatment
  effect.
\newblock \emph{Biometrika}, 104\penalty0 (4):\penalty0 863--880, 2017.

\bibitem[Beran(1987)]{beran1987prepivoting}
Rudolf Beran.
\newblock Prepivoting to reduce level error of confidence sets.
\newblock \emph{Biometrika}, 74\penalty0 (3):\penalty0 457--468, 1987.

\bibitem[Beran(1988)]{beran1988prepivoting}
Rudolf Beran.
\newblock Prepivoting test statistics: a bootstrap view of asymptotic
  refinements.
\newblock \emph{Journal of the American Statistical Association}, 83\penalty0
  (403):\penalty0 687--697, 1988.

\bibitem[Beran(1997)]{beran1997diagnosing}
Rudolf Beran.
\newblock Diagnosing bootstrap success.
\newblock \emph{Annals of the Institute of Statistical Mathematics},
  49:\penalty0 1--24, 1997.

\bibitem[Berg et~al.(2010)Berg, McMurry, and Politis]{berg2010subsampling}
Arthur Berg, Timothy~L McMurry, and Dimitris~N Politis.
\newblock Subsampling p-values.
\newblock \emph{Statistics \& Probability Letters}, 80\penalty0
  (17-18):\penalty0 1358--1364, 2010.

\bibitem[Berrett et~al.(2020)Berrett, Wang, Barber, and
  Samworth]{berrett2020conditional}
Thomas~B Berrett, Yi~Wang, Rina~Foygel Barber, and Richard~J Samworth.
\newblock The conditional permutation test for independence while controlling
  for confounders.
\newblock \emph{Journal of the Royal Statistical Society: Series B (Statistical
  Methodology)}, 82\penalty0 (1):\penalty0 175--197, 2020.

\bibitem[Bickel and Sakov(2008)]{bickel2008choice}
Peter~J Bickel and Anat Sakov.
\newblock On the choice of m in the m out of n bootstrap and confidence bounds
  for extrema.
\newblock \emph{Statistica Sinica}, pages 967--985, 2008.

\bibitem[Breiman(2001)]{breiman2001random}
Leo Breiman.
\newblock Random forests.
\newblock \emph{Machine Learning}, 45\penalty0 (1):\penalty0 5--32, 2001.

\bibitem[B{\"u}hlmann and Yu(2002)]{buhlmann2002analyzing}
Peter B{\"u}hlmann and Bin Yu.
\newblock Analyzing bagging.
\newblock \emph{The Annals of Statistics}, 30\penalty0 (4):\penalty0 927--961,
  2002.

\bibitem[Cai et~al.(2022)Cai, Lei, and Roeder]{cai2022model}
Zhanrui Cai, Jing Lei, and Kathryn Roeder.
\newblock Model-free prediction test with application to genomics data.
\newblock \emph{Proceedings of the National Academy of Sciences}, 119\penalty0
  (34):\penalty0 e2205518119, 2022.

\bibitem[Cheng and Hall(1998)]{cheng1998calibrating}
M-Y Cheng and Peter Hall.
\newblock Calibrating the excess mass and dip tests of modality.
\newblock \emph{Journal of the Royal Statistical Society: Series B (Statistical
  Methodology)}, 60\penalty0 (3):\penalty0 579--589, 1998.

\bibitem[Chernozhukov et~al.(2018)Chernozhukov, Chetverikov, Demirer, Duflo,
  Hansen, Newey, and Robins]{chernozhukov2018dml}
Victor Chernozhukov, Denis Chetverikov, Mert Demirer, Esther Duflo, Christian
  Hansen, Whitney Newey, and James Robins.
\newblock {Double/debiased machine learning for treatment and structural
  parameters}.
\newblock \emph{The Econometrics Journal}, 21\penalty0 (1):\penalty0 C1--C68,
  01 2018.
\newblock ISSN 1368-4221.
\newblock \doi{10.1111/ectj.12097}.

\bibitem[Choi and Kim(2022)]{choi2022avg}
Woohyun Choi and Ilmun Kim.
\newblock Averaging p-values under exchangeability.
\newblock \emph{Statistics \& Probability Letters}, page 109748, 2022.
\newblock \doi{https://doi.org/10.1016/j.spl.2022.109748}.

\bibitem[Conde-Amboage et~al.(2015)Conde-Amboage, S{\'a}nchez-Sellero, and
  Gonz{\'a}lez-Manteiga]{conde2015lack}
Mercedes Conde-Amboage, C{\'e}sar S{\'a}nchez-Sellero, and Wenceslao
  Gonz{\'a}lez-Manteiga.
\newblock A lack-of-fit test for quantile regression models with
  high-dimensional covariates.
\newblock \emph{Computational Statistics \& Data Analysis}, 88:\penalty0
  128--138, 2015.

\bibitem[Cox(1975)]{cox1975note}
David~R Cox.
\newblock A note on data-splitting for the evaluation of significance levels.
\newblock \emph{Biometrika}, 62\penalty0 (2):\penalty0 441--444, 1975.

\bibitem[Dai et~al.(2022)Dai, Shen, and Pan]{dai2022significance}
Ben Dai, Xiaotong Shen, and Wei Pan.
\newblock Significance tests of feature relevance for a black-box learner.
\newblock \emph{IEEE Transactions on Neural Networks and Learning Systems},
  2022.

\bibitem[Dharmadhikari and Joag-Dev(1988)]{dharmadhikari1988unimodality}
Sudhakar Dharmadhikari and Kumar Joag-Dev.
\newblock \emph{Unimodality, Convexity, and Applications}.
\newblock Academic Press, San Diego, 1988.

\bibitem[D{\'\i}az(2020)]{diaz2020machine}
Iv{\'a}n D{\'\i}az.
\newblock Machine learning in the estimation of causal effects: targeted
  minimum loss-based estimation and double/debiased machine learning.
\newblock \emph{Biostatistics}, 21\penalty0 (2):\penalty0 353--358, 2020.

\bibitem[DiCiccio(2018)]{diciccio2018hypothesis}
Cyrus~J DiCiccio.
\newblock \emph{Hypothesis Testing Using Multiple Data Splitting}.
\newblock Stanford University, 2018.

\bibitem[DiCiccio et~al.(2020)DiCiccio, DiCiccio, and
  Romano]{diciccio2020exact}
Cyrus~J DiCiccio, Thomas~J DiCiccio, and Joseph~P Romano.
\newblock Exact tests via multiple data splitting.
\newblock \emph{Statistics \& Probability Letters}, 166:\penalty0 108865, 2020.

\bibitem[Dong et~al.(2019)Dong, Li, and Feng]{dong2019lack}
Chen Dong, Guodong Li, and Xingdong Feng.
\newblock Lack-of-fit tests for quantile regression models.
\newblock \emph{Journal of the Royal Statistical Society: Series B (Statistical
  Methodology)}, 81\penalty0 (3):\penalty0 629--648, 2019.

\bibitem[Duin(1976)]{duin1976choice}
R.~P.~W Duin.
\newblock On the choice of smoothing parameters for {Parzen} estimators of
  probability density functions.
\newblock \emph{IEEE Transactions on Computers}, 25\penalty0 (11):\penalty0
  1175--1179, 1976.

\bibitem[Dunn et~al.(2024)Dunn, Gangrade, Wasserman, and
  Ramdas]{dunn2021universal}
Robin Dunn, Aditya Gangrade, Larry Wasserman, and Aaditya Ramdas.
\newblock Universal inference meets random projections: A scalable test for
  log-concavity.
\newblock \emph{Journal of Computational and Graphical Statistics}, pages
  1--13, 2024.
\newblock \doi{10.1080/10618600.2024.2347338}.

\bibitem[Dvoretzky et~al.(1956)Dvoretzky, Kiefer, and
  Wolfowitz]{dvoretzky1956asymptotic}
Aryeh Dvoretzky, Jack Kiefer, and Jacob Wolfowitz.
\newblock Asymptotic minimax character of the sample distribution function and
  of the classical multinomial estimator.
\newblock \emph{The Annals of Mathematical Statistics}, pages 642--669, 1956.

\bibitem[Escanciano and Goh(2014)]{escanciano2014specification}
Juan~Carlos Escanciano and Sze-Chuan Goh.
\newblock Specification analysis of linear quantile models.
\newblock \emph{Journal of Econometrics}, 178:\penalty0 495--507, 2014.

\bibitem[Escanciano and Velasco(2010)]{escanciano2010specification}
Juan~Carlos Escanciano and Carlos Velasco.
\newblock Specification tests of parametric dynamic conditional quantiles.
\newblock \emph{Journal of Econometrics}, 159\penalty0 (1):\penalty0 209--221,
  2010.

\bibitem[Friedman(2001)]{friedman2001greedy}
Jerome~H Friedman.
\newblock Greedy function approximation: a gradient boosting machine.
\newblock \emph{Annals of Statistics}, pages 1189--1232, 2001.

\bibitem[Good and Gaskins(1980)]{good1980density}
IJ~Good and RA~Gaskins.
\newblock Density estimation and bump-hunting by the penalized likelihood
  method exemplified by scattering and meteorite data.
\newblock \emph{Journal of the American Statistical Association}, 75\penalty0
  (369):\penalty0 42--56, 1980.

\bibitem[Gretton et~al.(2012)Gretton, Borgwardt, Rasch, Sch{{\"o}}lkopf, and
  Smola]{gretton2012kernel}
Arthur Gretton, Karsten~M. Borgwardt, Malte~J. Rasch, Bernhard Sch{{\"o}}lkopf,
  and Alexander Smola.
\newblock A kernel two-sample test.
\newblock \emph{Journal of Machine Learning Research}, 13\penalty0
  (25):\penalty0 723--773, 2012.
\newblock URL \url{http://jmlr.org/papers/v13/gretton12a.html}.

\bibitem[Guidoum(2015)]{guidoum2015kedd}
Arsalane~Chouaib Guidoum.
\newblock \emph{kedd: Kernel estimator and bandwidth selection for density and
  its derivatives.}, 2015.
\newblock URL \url{http://CRAN.R-project.org/package=kedd}.
\newblock R package version 1.0.3.

\bibitem[Habbema et~al.(1974)Habbema, Hermans, and van~den
  Broek]{habbema1974stepwise}
JDF Habbema, J.~Hermans, and K~van~den Broek.
\newblock A stepwise discriminant analysis program using density estimation.
\newblock In \emph{Compstat 1974: Proceedings in Computational Statistics},
  Vienna, 1974. Physica Verlag.

\bibitem[Hartigan and Hartigan(1985)]{hartigan1985dip}
John~A Hartigan and Pamela~M Hartigan.
\newblock The dip test of unimodality.
\newblock \emph{The Annals of Statistics}, pages 70--84, 1985.

\bibitem[He and Zhu(2003)]{he2003lack}
Xuming He and Li-Xing Zhu.
\newblock A lack-of-fit test for quantile regression.
\newblock \emph{Journal of the American Statistical Association}, 98\penalty0
  (464):\penalty0 1013--1022, 2003.

\bibitem[Helgeson et~al.(2021)Helgeson, Vock, and
  Bair]{helgeson2021nonparametric}
Erika~S Helgeson, David~M Vock, and Eric Bair.
\newblock Nonparametric cluster significance testing with reference to a
  unimodal null distribution.
\newblock \emph{Biometrics}, 77\penalty0 (4):\penalty0 1215--1226, 2021.

\bibitem[Hern\'{a}n and Robins(2020)]{hernan2020causal}
M.~A. Hern\'{a}n and J.~M. Robins.
\newblock \emph{Causal Inference: What If}.
\newblock Chapman \& Hall/CRC, Boca Raton, 2020.

\bibitem[Horowitz and Spokoiny(2002)]{horowitz2002adaptive}
Joel~L Horowitz and Vladimir~G Spokoiny.
\newblock An adaptive, rate-optimal test of linearity for median regression
  models.
\newblock \emph{Journal of the American Statistical Association}, 97\penalty0
  (459):\penalty0 822--835, 2002.

\bibitem[Huang et~al.(2015{\natexlab{a}})Huang, Liu, Hayes, Nobel, Marron, and
  Hennig]{huang2015significance}
Hanwen Huang, Yufeng Liu, David~Neil Hayes, Andrew Nobel, J.~S Marron, and
  Christian Hennig.
\newblock \emph{Significance testing in clustering}, pages 315--335.
\newblock Chapman and Hall/CRC, 2015{\natexlab{a}}.

\bibitem[Huang et~al.(2015{\natexlab{b}})Huang, Liu, Yuan, and
  Marron]{huang2015statistical}
Hanwen Huang, Yufeng Liu, Ming Yuan, and J.~S Marron.
\newblock Statistical significance of clustering using soft thresholding.
\newblock \emph{Journal of Computational and Graphical Statistics}, 24\penalty0
  (4):\penalty0 975--993, 2015{\natexlab{b}}.

\bibitem[Huang et~al.(2022)Huang, Liu, and Marron]{huang2022R}
Hanwen Huang, Yufeng Liu, and J.~S Marron.
\newblock \emph{{SigClust}: Statistical Significance of Clustering}, 2022.
\newblock URL \url{https://CRAN.R-project.org/package=sigclust}.
\newblock R package version 1.1.0.1.

\bibitem[{ICGC/TCGA Pan-Cancer Analysis of Whole Genomes
  Consortium}(2020)]{campbell:2020aa}
{ICGC/TCGA Pan-Cancer Analysis of Whole Genomes Consortium}.
\newblock Pan-cancer analysis of whole genomes.
\newblock \emph{Nature}, 578\penalty0 (7793):\penalty0 82--93, 2020.
\newblock \doi{10.1038/s41586-020-1969-6}.
\newblock URL \url{https://doi.org/10.1038/s41586-020-1969-6}.

\bibitem[Ioannidis(2005)]{ioannidis2005most}
John~PA Ioannidis.
\newblock Why most published research findings are false.
\newblock \emph{PLoS Medicine}, 2\penalty0 (8):\penalty0 e124, 2005.

\bibitem[Jankov{\'a} et~al.(2020)Jankov{\'a}, Shah, B{\"u}hlmann, and
  Samworth]{jankova2020goodness}
Jana Jankov{\'a}, Rajen~D Shah, Peter B{\"u}hlmann, and Richard~J Samworth.
\newblock Goodness-of-fit testing in high dimensional generalized linear
  models.
\newblock \emph{Journal of the Royal Statistical Society: Series B (Statistical
  Methodology)}, 82\penalty0 (3):\penalty0 773--795, 2020.

\bibitem[Jiang et~al.(2022)Jiang, Mukherjee, Sen, and Sur]{jiang2022new}
Kuanhao Jiang, Rajarshi Mukherjee, Subhabrata Sen, and Pragya Sur.
\newblock A new central limit theorem for the augmented ipw estimator: Variance
  inflation, cross-fit covariance and beyond.
\newblock \emph{arXiv preprint arXiv:2205.10198}, 2022.

\bibitem[Kallenberg(2005)]{kallenberg2005probabilistic}
Olav Kallenberg.
\newblock \emph{Probabilistic Symmetries and Invariance Principles}.
\newblock Springer, New York, 2005.

\bibitem[Kennedy(2022)]{kennedy2022semiparametric}
Edward~H Kennedy.
\newblock Semiparametric doubly robust targeted double machine learning: a
  review.
\newblock \emph{arXiv preprint arXiv:2203.06469}, 2022.

\bibitem[Khintchine(1938)]{khintchine1938unimodal}
A~Ya Khintchine.
\newblock On unimodal distributions.
\newblock \emph{Izvestiya Nauchno-Issledovatel'skogo Instituta Matematiki i
  Mekhaniki}, 2\penalty0 (2):\penalty0 1--7, 1938.

\bibitem[Kim and Ramdas(2024)]{kim2020dimension}
Ilmun Kim and Aaditya Ramdas.
\newblock {Dimension-agnostic inference using cross U-statistics}.
\newblock \emph{Bernoulli}, 30\penalty0 (1):\penalty0 683 -- 711, 2024.
\newblock \doi{10.3150/23-BEJ1613}.
\newblock URL \url{https://doi.org/10.3150/23-BEJ1613}.

\bibitem[Koenker(2005)]{koenker2005book}
Roger Koenker.
\newblock \emph{Quantile Regression}.
\newblock Econometric Society Monographs. Cambridge University Press, 2005.

\bibitem[Koenker(2022)]{koenker2022R}
Roger Koenker.
\newblock \emph{quantreg: Quantile Regression}, 2022.
\newblock URL \url{https://CRAN.R-project.org/package=quantreg}.
\newblock R package version 5.94.

\bibitem[Lehmann and Romano(2005)]{lehmann2005testing}
E.~L. Lehmann and Joseph~P. Romano.
\newblock \emph{Testing Statistical Hypotheses}.
\newblock Springer Texts in Statistics. Springer, New York, third edition,
  2005.
\newblock ISBN 0-387-98864-5.

\bibitem[Lei et~al.(2018)Lei, G'Sell, Rinaldo, Tibshirani, and
  Wasserman]{lei2018distribution}
Jing Lei, Max G'Sell, Alessandro Rinaldo, Ryan~J Tibshirani, and Larry
  Wasserman.
\newblock Distribution-free predictive inference for regression.
\newblock \emph{Journal of the American Statistical Association}, 113\penalty0
  (523):\penalty0 1094--1111, 2018.

\bibitem[Liu et~al.(2022)Liu, Yu, and Li]{liu2022multiple}
Wanjun Liu, Xiufan Yu, and Runze Li.
\newblock Multiple-splitting projection test for high-dimensional mean vectors.
\newblock \emph{Journal of Machine Learning Research}, 23\penalty0
  (71):\penalty0 1--27, 2022.

\bibitem[Liu and Xie(2020)]{liu2020cauchy}
Yaowu Liu and Jun Xie.
\newblock Cauchy combination test: a powerful test with analytic p-value
  calculation under arbitrary dependency structures.
\newblock \emph{Journal of the American Statistical Association}, 115\penalty0
  (529):\penalty0 393--402, 2020.

\bibitem[Liu et~al.(2008)Liu, Hayes, Nobel, and Marron]{liu2008statistical}
Yufeng Liu, David~Neil Hayes, Andrew Nobel, and J.~S Marron.
\newblock Statistical significance of clustering for high-dimension,
  low--sample size data.
\newblock \emph{Journal of the American Statistical Association}, 103\penalty0
  (483):\penalty0 1281--1293, 2008.
\newblock URL \url{https://doi.org/10.1198/016214508000000454}.

\bibitem[Lundborg et~al.(2022)Lundborg, Kim, Shah, and
  Samworth]{lundborg2022projected}
Anton~Rask Lundborg, Ilmun Kim, Rajen~D Shah, and Richard~J Samworth.
\newblock The projected covariance measure for assumption-lean variable
  significance testing.
\newblock \emph{arXiv preprint arXiv:2211.02039}, 2022.

\bibitem[Maechler(2021)]{maechler2021diptest}
Martin Maechler.
\newblock \emph{diptest: Hartigan's Dip Test Statistic for Unimodality -
  Corrected}, 2021.
\newblock URL \url{https://CRAN.R-project.org/package=diptest}.
\newblock R package version 0.76-0.

\bibitem[Maitra et~al.(2012)Maitra, Melnykov, and
  Lahiri]{maitra2012bootstrapping}
Ranjan Maitra, Volodymyr Melnykov, and Soumendra~N Lahiri.
\newblock Bootstrapping for significance of compact clusters in
  multidimensional datasets.
\newblock \emph{Journal of the American Statistical Association}, 107\penalty0
  (497):\penalty0 378--392, 2012.

\bibitem[Massart(1990)]{massart1990tight}
Pascal Massart.
\newblock The tight constant in the {Dvoretzky--Kiefer--Wolfowitz} inequality.
\newblock \emph{The Annals of Probability}, pages 1269--1283, 1990.

\bibitem[Mattila(1995)]{mattila1995geometry}
Pertti Mattila.
\newblock \emph{Geometry of Sets and Measures in Euclidean Spaces: Fractals and
  Rectifiability}.
\newblock Cambridge Studies in Advanced Mathematics. Cambridge University
  Press, 1995.

\bibitem[McMurry et~al.(2012)McMurry, Politis, and
  Romano]{mcmurry2012subsampling}
Timothy~L McMurry, Dimitris~N Politis, and Joseph~P Romano.
\newblock Subsampling inference with k populations and a non-standard
  behrens--fisher problem.
\newblock \emph{International Statistical Review}, 80\penalty0 (1):\penalty0
  149--175, 2012.

\bibitem[Meinshausen and B{\"u}hlmann(2010)]{meinshausen2010stability}
Nicolai Meinshausen and Peter B{\"u}hlmann.
\newblock Stability selection.
\newblock \emph{Journal of the Royal Statistical Society: Series B (Statistical
  Methodology)}, 72\penalty0 (4):\penalty0 417--473, 2010.

\bibitem[Meinshausen et~al.(2009)Meinshausen, Meier, and
  B{\"u}hlmann]{meinshausen2009p}
Nicolai Meinshausen, Lukas Meier, and Peter B{\"u}hlmann.
\newblock P-values for high-dimensional regression.
\newblock \emph{Journal of the American Statistical Association}, 104\penalty0
  (488):\penalty0 1671--1681, 2009.

\bibitem[Meng(1994)]{meng1994posterior}
Xiao-Li Meng.
\newblock {Posterior Predictive $p$-Values}.
\newblock \emph{The Annals of Statistics}, 22\penalty0 (3):\penalty0 1142 --
  1160, 1994.

\bibitem[Moran(1973)]{moran1973dividing}
Patrick~AP Moran.
\newblock Dividing a sample into two parts a statistical dilemma.
\newblock \emph{Sankhy{\=a}: The Indian Journal of Statistics, Series A}, pages
  329--333, 1973.

\bibitem[M{\"u}ller and Stoyan(2002)]{muller2002comparison}
Alfred M{\"u}ller and Dietrich Stoyan.
\newblock \emph{Comparison Methods for Stochastic Models and Risks}.
\newblock Wiley Series in Probability and Statistics. John Wiley \& Sons Ltd.,
  Chichester, 2002.

\bibitem[Naimi et~al.(2017)Naimi, Cole, and Kennedy]{naimi2017introduction}
Ashley~I Naimi, Stephen~R Cole, and Edward~H Kennedy.
\newblock An introduction to g methods.
\newblock \emph{International Journal of Epidemiology}, 46\penalty0
  (2):\penalty0 756--762, 2017.

\bibitem[Newey and Robins(2018)]{newey2018cross}
Whitney~K Newey and James~R Robins.
\newblock Cross-fitting and fast remainder rates for semiparametric estimation.
\newblock \emph{arXiv preprint arXiv:1801.09138}, 2018.

\bibitem[{Open Science Collaboration}(2015)]{open2015estimating}
{Open Science Collaboration}.
\newblock Estimating the reproducibility of psychological science.
\newblock \emph{Science}, 349\penalty0 (6251):\penalty0 aac4716, 2015.

\bibitem[Politis and Romano(1994)]{politis1994large}
Dimitris~N Politis and Joseph~P Romano.
\newblock Large sample confidence regions based on subsamples under minimal
  assumptions.
\newblock \emph{The Annals of Statistics}, pages 2031--2050, 1994.

\bibitem[Politis et~al.(1999)Politis, Romano, and Wolf]{politis1999subsampling}
Dimitris~N Politis, Joseph~P Romano, and Michael Wolf.
\newblock \emph{Subsampling}.
\newblock Springer Science \& Business Media, 1999.

\bibitem[Richardson et~al.(2023)Richardson, Evans, Robins, and
  Shpitser]{richardson2017nested}
Thomas~S Richardson, Robin~J Evans, James~M Robins, and Ilya Shpitser.
\newblock Nested markov properties for acyclic directed mixed graphs.
\newblock \emph{The Annals of Statistics}, 51\penalty0 (1):\penalty0 334--361,
  2023.

\bibitem[Robins(1986)]{robins1986new}
James Robins.
\newblock A new approach to causal inference in mortality studies with a
  sustained exposure period---application to control of the healthy worker
  survivor effect.
\newblock \emph{Mathematical Modelling}, 7\penalty0 (9-12):\penalty0
  1393--1512, 1986.

\bibitem[Robins(1999)]{robins1999testing}
James~M Robins.
\newblock Testing and estimation of direct effects by reparameterizing directed
  acyclic graphs with structural nested models.
\newblock \emph{Computation, Causation, and Discovery}, pages 349--405, 1999.

\bibitem[Romano and DiCiccio(2019)]{romano2019multiple}
Joseph~P Romano and Cyrus DiCiccio.
\newblock Multiple data splitting for testing.
\newblock Technical report, Department of Statistics, Stanford University,
  2019.
\newblock URL \url{https://purl.stanford.edu/fb041jg0790}.

\bibitem[Rotnitzky et~al.(2021)Rotnitzky, Smucler, and
  Robins]{rotnitzky2021characterization}
Andrea Rotnitzky, Ezequiel Smucler, and James~M Robins.
\newblock Characterization of parameters with a mixed bias property.
\newblock \emph{Biometrika}, 108\penalty0 (1):\penalty0 231--238, 2021.

\bibitem[R{\"u}schendorf(1982)]{ruschendorf1982random}
Ludger R{\"u}schendorf.
\newblock Random variables with maximum sums.
\newblock \emph{Advances in Applied Probability}, 14\penalty0 (3):\penalty0
  623--632, 1982.

\bibitem[Scheidegger et~al.(2022)Scheidegger, H{\"o}rrmann, and
  B{\"u}hlmann]{scheidegger2021weighted}
Cyrill Scheidegger, Julia H{\"o}rrmann, and Peter B{\"u}hlmann.
\newblock The weighted generalised covariance measure.
\newblock \emph{Journal of Machine Learning Research}, 23\penalty0
  (273):\penalty0 1--68, 2022.

\bibitem[Shah and B{\"u}hlmann(2018)]{shah2018goodness}
Rajen~D Shah and Peter B{\"u}hlmann.
\newblock Goodness-of-fit tests for high dimensional linear models.
\newblock \emph{Journal of the Royal Statistical Society: Series B (Statistical
  Methodology)}, 80\penalty0 (1):\penalty0 113--135, 2018.

\bibitem[Shah and Samworth(2013)]{shah2013variable}
Rajen~D Shah and Richard~J Samworth.
\newblock Variable selection with error control: another look at stability
  selection.
\newblock \emph{Journal of the Royal Statistical Society: Series B (Statistical
  Methodology)}, 75\penalty0 (1):\penalty0 55--80, 2013.

\bibitem[Shimodaira(2000)]{shimodaira2000improving}
Hidetoshi Shimodaira.
\newblock Improving predictive inference under covariate shift by weighting the
  log-likelihood function.
\newblock \emph{Journal of Statistical Planning and Inference}, 90\penalty0
  (2):\penalty0 227--244, 2000.

\bibitem[Solari and Djordjilovi{\'c}(2022)]{solari2022multi}
Aldo Solari and Vera Djordjilovi{\'c}.
\newblock Multi split conformal prediction.
\newblock \emph{Statistics \& Probability Letters}, 184:\penalty0 109395, 2022.

\bibitem[Tansey et~al.(2022)Tansey, Veitch, Zhang, Rabadan, and
  Blei]{tansey2022holdout}
Wesley Tansey, Victor Veitch, Haoran Zhang, Raul Rabadan, and David~M Blei.
\newblock The holdout randomization test for feature selection in black box
  models.
\newblock \emph{Journal of Computational and Graphical Statistics}, 31\penalty0
  (1):\penalty0 151--162, 2022.

\bibitem[Thams et~al.(2023)Thams, Saengkyongam, Pfister, and
  Peters]{thams2023statistical}
Nikolaj Thams, Sorawit Saengkyongam, Niklas Pfister, and Jonas Peters.
\newblock Statistical testing under distributional shifts.
\newblock \emph{Journal of the Royal Statistical Society Series B: Statistical
  Methodology}, 85\penalty0 (3):\penalty0 597--663, 2023.

\bibitem[Tian et~al.(2023)Tian, Chen, Katsevich, Goeman, and
  Ramdas]{tian2021large}
Jinjin Tian, Xu~Chen, Eugene Katsevich, Jelle Goeman, and Aaditya Ramdas.
\newblock Large-scale simultaneous inference under dependence.
\newblock \emph{Scandinavian Journal of Statistics}, 50\penalty0 (2):\penalty0
  750--796, 2023.

\bibitem[{van der Vaart}(2000)]{van2000asymptotic}
Aad {van der Vaart}.
\newblock \emph{Asymptotic Statistics}.
\newblock Cambridge University Press, 2000.

\bibitem[Vassilvitskii and Arthur(2006)]{vassilvitskii2006k}
Sergei Vassilvitskii and David Arthur.
\newblock k-means++: The advantages of careful seeding.
\newblock In \emph{Proceedings of the eighteenth annual ACM-SIAM symposium on
  Discrete algorithms}, pages 1027--1035, 2006.

\bibitem[Vershynin(2018)]{vershynin2018high}
Roman Vershynin.
\newblock \emph{High-Dimensional Probability: an Introduction with Applications
  in Data Science}, volume~47.
\newblock Cambridge University Press, 2018.

\bibitem[von Neumann(1951)]{von195113}
John von Neumann.
\newblock Various techniques used in connection with random digits.
\newblock \emph{Applied Math Series}, 12\penalty0 (36-38):\penalty0 3, 1951.

\bibitem[Vovk and Wang(2020)]{vovk2020combining}
Vladimir Vovk and Ruodu Wang.
\newblock Combining p-values via averaging.
\newblock \emph{Biometrika}, 107\penalty0 (4):\penalty0 791--808, 2020.

\bibitem[Vovk and Wang(2021)]{vovk2021values}
Vladimir Vovk and Ruodu Wang.
\newblock E-values: Calibration, combination and applications.
\newblock \emph{The Annals of Statistics}, 49\penalty0 (3):\penalty0
  1736--1754, 2021.

\bibitem[Wang and Shah(2020)]{wang2020debiased}
Yuhao Wang and Rajen~D Shah.
\newblock Debiased inverse propensity score weighting for estimation of average
  treatment effects with high-dimensional confounders.
\newblock \emph{arXiv preprint arXiv:2011.08661}, 2020.

\bibitem[Williamson et~al.(2023)Williamson, Gilbert, Simon, and
  Carone]{williamson2020unified}
Brian~D Williamson, Peter~B Gilbert, Noah~R Simon, and Marco Carone.
\newblock A general framework for inference on algorithm-agnostic variable
  importance.
\newblock \emph{Journal of the American Statistical Association}, 118\penalty0
  (543):\penalty0 1645--1658, 2023.

\bibitem[Wright and Ziegler(2017)]{wright2017ranger}
Marvin~N. Wright and Andreas Ziegler.
\newblock {ranger}: A fast implementation of random forests for high
  dimensional data in {C++} and {R}.
\newblock \emph{Journal of Statistical Software}, 77\penalty0 (1):\penalty0
  1--17, 2017.
\newblock \doi{10.18637/jss.v077.i01}.

\end{thebibliography}
\end{document}